\newcolumntype{P}[1]{>{\centering\arraybackslash}p{#1}}
\newcolumntype{M}[1]{>{\centering\arraybackslash}m{#1}}
\newtheorem{definition}{Definition}
\newtheorem{theorem}{Theorem}
\newtheorem{lemma}{Lemma}
\newtheorem{corollary}{Corollary}
\newtheorem*{theorem*}{Theorem 1}
\newtheorem*{lemma1*}{Lemma 1}
\newtheorem*{lemma2*}{Lemma 2}
\newtheorem*{lemma3*}{Lemma 3}
\newtheorem*{lemma4*}{Lemma 4}
\newtheorem*{lemma5*}{Lemma 5}
\newtheorem*{lemma6*}{Lemma 6}
\newtheorem*{lemma7*}{Lemma 7}
\newcommand{\dpg}[1]{{\color{black}{#1}}}	
\newcommand{\dpgg}[1]{{\color{black}{#1}}}	
\newcommand{\dpggg}[1]{{\color{black}{#1}}}	
\begin{document}
\interfootnotelinepenalty=10000

\title{Multiphoton and side-channel attacks in mistrustful quantum cryptography}

\author{Mathieu Bozzio}
\affiliation{Faculty of Physics, University of Vienna, VCQ, Boltzmanngasse 5, 1090 Vienna, Austria}
\affiliation{Sorbonne Universit\'e, CNRS, LIP6, 4 Place Jussieu, F-75005 Paris, France}

\author{Adrien Cavaill\`es}
\affiliation{Sorbonne Universit\'e, CNRS, LIP6, 4 Place Jussieu, F-75005 Paris, France}

\author{Eleni Diamanti}
\affiliation{Sorbonne Universit\'e, CNRS, LIP6, 4 Place Jussieu, F-75005 Paris, France}

\author{Adrian Kent}
\affiliation{Centre for Quantum Information and Foundations, DAMTP,
Centre for Mathematical Sciences, University of Cambridge,
Wilberforce Road, Cambridge, CB3 0WA, United Kingdom}
\affiliation{Perimeter Institute for Theoretical Physics, 31 Caroline Street North, Waterloo, ON N2L 2Y5, Canada}

\author{Dami\'an Pital\'ua-Garc\'ia}
\email[]{D.Pitalua-Garcia@damtp.cam.ac.uk}
\affiliation{Centre for Quantum Information and Foundations, DAMTP,
Centre for Mathematical Sciences, University of Cambridge,
Wilberforce Road, Cambridge, CB3 0WA, United Kingdom}

\date{\today}

\begin{abstract}

Mistrustful cryptography includes important tasks like bit commitment, oblivious transfer, coin flipping, secure computations, position authentication, digital signatures and secure unforgeable tokens. Practical quantum implementations presently use photonic setups. In many such implementations, Alice sends photon pulses encoding quantum states and Bob chooses measurements on these states. In practice, Bob generally uses single photon threshold detectors, which cannot distinguish the number of photons in detected pulses. Also, losses and other imperfections require Bob to report the detected pulses. Thus, malicious Alice can send and track multiphoton pulses and thereby gain information about Bob's measurement choices, violating the protocols' security. Here, we provide a theoretical framework for analysing such multiphoton attacks, and present known and new attacks.  We illustrate the power of these attacks with an experiment, and study their application to earlier experimental demonstrations of mistrustful quantum cryptography. We analyse countermeasures based on selective reporting and prove them inadequate. We also discuss side-channel attacks where Alice controls further degrees of freedom or sends other physical systems.
\end{abstract}

\maketitle


\section{Introduction}
Quantum cryptography promises that cryptographic
tasks can be implemented with provable security, assuming
only the validity of quantum theory.   
As with most guarantees, though, one needs to 
study the small print.   Security proofs are based
on idealized models of quantum cryptosystems, which
do not necessarily characterize the behaviour of
real world equipment.   Hence
apparently faithful implementations of provably
secure protocols can be vulnerable to attacks.   
A wide range of attacks on practical quantum key distribution 
systems have been noted \dpgg{\cite{GFKZR06,MAS06,ZFQCL08,FTLM09,M09,LWWESM10,LWHJCYLZLZGBH11,SRKBPMLM15,JSKEML16,QHWCYGH18,QHWCYGH19,HQWCYGH20,XMZLP20}}, along with countermeasures.   
Less attention has so far been paid to attacks on practical implementations
of quantum protocols for mistrustful 
cryptographic tasks.   We consider such attacks here.  

In mistrustful cryptography, two or more parties collaborate to implement a cryptographic task without trusting each other. Significant mistrustful cryptographic 
tasks include bit commitment \cite{K99,K05.2,K11.2,K12,LKBHTKGWZ13,LCCLWCLLSLZZCPZCP14,LKBHTWZ15,CCL15,AK15.1,AK15.2,VMHBBZ16}, various types of oblivious transfer  \cite{R81,EGL85,C88,PG15.1,PGK18,PG19,ASMMDWA20}, coin flipping \cite{B83,K99.2,M07NATURE,BBBGST11,PJLCLTKD14,ACGKM16,BCKD20,PG21}, secure computations \cite{Y82}, position authentication \cite{KMS11,pbqc,K11.1}, digital signature schemes \cite{WDKA15,AWKA16} and money
or secure unforgeable tokens \cite{wiesner1983conjugate,G12,PYLLC12,GK15,BOVZKD18,KSmoney,BDG19,KPG20,KLPGR21}. Some of these tasks cannot be implemented with unconditional security using standard quantum cryptography \cite{M97,LC97,LC98,Kitaev02,ABDR04}, but
can be when relativistic signalling constraints are taken into
account \cite{K99,K99.2,K05.2,K11.2,K12,LKBHTKGWZ13,LCCLWCLLSLZZCPZCP14,LKBHTWZ15,CCL15,AK15.1,AK15.2,VMHBBZ16,PG21}. For others, even relativistic quantum protocols cannot
provide unconditional security \cite{L97,R02NATURE,CK06,C07,pbqc,BCS12}, but there may be  protocols that are provably unbreakable when realistic practical assumptions are made on the amounts of entanglement \cite{KMS11,pbqc}, quantum memory \cite{DFSS08,WST08,NJMKW12,ENGLWW14}, or other technologically challenging resources. Some tasks are only properly defined in a relativistic setting \cite{KMS11,pbqc,K11.3,K12.1,K13,PG15.1,HM16,AK15.3,PGK18,PG19}.

At present, quantum implementations of mistrustful cryptography
generally use photonic systems. 
A common step in these protocols is for (say) Alice to send
Bob quantum states encoding some secret data of hers
in photonic degrees of freedom (usually polarization), and for Bob
to apply quantum measurements on the received states,
chosen from a predetermined set, where his choice encodes
secret data of his.  
Idealized protocols often assume that Alice has ideal 
single photon sources, the channel is lossless, and Bob has perfectly efficient
ideal detectors. In practice, there are preparation and measurement errors and 
losses.   Moreover, most implementations use weak coherent pulses
rather than near-perfect single photon sources. 
Another issue is that Bob generally uses threshold 
photon detectors, which cannot
distinguish the number of photons in pulses activating a
detection and which are not close to perfectly efficient.
Because of losses and imperfect detectors, 
realistic implementations tend to require Bob to report during the protocol the labels of pulses 
activating a detection.

Realistic security analyses need to take all these points into account.
The full range of attacks that they allow in mistrustful quantum cryptographic scenarios seems not to have been appreciated. In such scenarios, Bob cannot assume that Alice sends independent
light pulses with identically randomly distributed
photon numbers, or the pulses have similar
frequencies, or any variations are statistical fluctuations
that Alice has no more information about than he does. 
If Alice can advantageously vary the distributions in 
a controlled way or obtain information about individual 
pulses, then she might.   An unconditionally secure real world
implementation must allow for these possibilities and
still provide security guarantees. Bob also cannot assume
that his detectors have precisely equal efficiencies, nor that
Alice has no information about their efficiencies. 
Even if the latter were true at the start of a protocol,
Alice can learn information about their efficiencies during
the protocol. In this respect she has advantages over
Bob, since she knows the states sent, while he does not, and
can send states other than those prescribed by the protocol.

Here we analyse multiphoton attacks, in which Alice controls the number of photons in the transmitted pulses and uses Bob's message reporting the successful measurements to obtain information about his measurement bases. A related attack in quantum key distribution (QKD) comprises Eve sending multiphoton pulses to Bob to obtain information about the key generated by Alice and Bob. Squashing models \cite{BML08}\dpggg{, analyses of double click rates \cite{ZCWLL21,T20},} and measurement-device-independent protocols \cite{LCQ12} have been proposed as countermeasures against  these attacks in QKD. However, security analyses in mistrustful quantum cryptography are not in general equivalent to QKD security analyses. An important reason for this is that Alice and Bob trust each other in QKD, while in mistrustful quantum cryptography they do not.

We analyse various strategies of Bob to report the labels of pulses
activating a detection, some of which were considered before, and discuss their vulnerability to multiphoton attacks. The analysed reporting strategies fit within a broad class of probabilistic reporting strategies introduced here in which Bob reports the label of a pulse with a probability that depends on which of his detectors are activated. Our main result (Theorem \ref{lemma3}) states that the only reporting strategy within this class that provides perfect protection against arbitrary multiphoton attacks when Bob's detection efficiencies are different in standard setups with single photon threshold detectors is a trivial strategy in which Bob reports all detection events with the same probability. This implies that the strategy of symmetrization of losses \cite{NJMKW12}, which is commonly used (e.g. \cite{NJMKW12,LKBHTKGWZ13,ENGLWW14,PJLCLTKD14}), does not protect against arbitrary multiphoton attacks. We discuss how multiphoton attacks apply to the experimental demonstrations of mistrustful quantum cryptography of Refs. \cite{NJMKW12,LKBHTKGWZ13,LCCLWCLLSLZZCPZCP14,ENGLWW14,PJLCLTKD14}. We report an experiment suggesting that multiphoton attacks can be implemented in practice. We also discuss side-channel attacks in mistrustful quantum cryptography, where Alice controls degrees of freedom not previously agreed with Bob. 
We discuss possible countermeasures against multiphoton and side-channel attacks, including the use of photon-number-resolving detectors, measurement device independent protocols, fully device independent protocols, the use of teleportation to filter received pulses, and using near-perfect sources and near-ideal detectors.   All of these options either fail to guarantee security or are practically challenging; our analyses suggest that the last is the most promising option at
present and that teleportation could provide a good solution in the future.


\section{Private measurement of an unknown qubit state}
\label{PMQS}
Many interesting protocols in mistrustful quantum cryptography (e.g. \cite{K12,NJMKW12,LKBHTKGWZ13,LCCLWCLLSLZZCPZCP14,ENGLWW14,PJLCLTKD14,PGK18,KSmoney,PG19,KPG20,KLPGR21}) use
  some version of a task we call \emph{private measurement of an
    unknown qubit state}. An
  \emph{ideal protocol} to implement this task is the following:
\begin{enumerate}
\item Alice prepares a qubit state $\lvert \psi\rangle$ randomly from a set $\mathcal{S}=\{\lvert \psi_{ij}\rangle\}_{(i,j)\in\{0,1\}^2}$ and sends it to Bob.
\item Bob generates a random bit $\beta\in\{0,1\}$ privately and measures $\lvert\psi\rangle$ in a qubit orthogonal  basis $\mathcal{B}_{\beta}=\{\lvert\psi_{0\beta}\rangle,\lvert\psi_{1\beta}\rangle\}$.
\end{enumerate}
Commonly,
  $\mathcal{B}_{0}$ and $\mathcal{B}_{1}$ are the computational and
  Hadamard bases, respectively, and $\mathcal{S}$ is the set
of Bennett-Brassard 1984 \cite{BB84} (BB84) states.  We consider here the more
  general situation in which $\mathcal{B}_{0}$ and $\mathcal{B}_{1}$
  are arbitrary distinct qubit orthogonal bases. We assume that Alice and Bob know these bases precisely.

We are primarily interested in the security attainable against
Alice in various realistic implementations of the task.
This is parametrised by Alice's probability $P_{\text{guess}}$
to guess Bob's chosen bit $\beta$, assuming Bob honestly follows
the version of the protocol defined for the given implementation.
We say the protocol is $\epsilon_{\text{guess}}$-secure against Alice if
\begin{equation}
\label{security}
P_{\text{guess}}\leq \frac{1}{2}+\epsilon_{\text{guess}},
\end{equation}
for any possible strategy (not necessarily honestly following
the protocol) of Alice. 
We say it is {\it secure} against Alice if $\epsilon_{\text{guess}}
  \rightarrow 0$ as some protocol security parameter is increased. 
In general, a dishonest Alice may deviate in any way from 
the protocol.  For example, she may send Bob quantum states
that are not only outside the agreed set $\mathcal{S}$ but 
outside its Hilbert space.

Different experimental setups correspond to different
  protocols to implement versions of this task. Here we consider setups and
  protocols with photonic systems, where Alice encodes quantum states
  in degrees of freedom of photons, for example in polarization, and
  Bob measures quantum states using single photon detectors.

We focus here on attacks by Alice, and assume that Bob
honestly follows the given protocols.   However, we have in mind 
applications in which these are sub-protocols for mistrustful cryptographic
tasks in which 
cheating by Bob is equally relevant.   
These applications motivate two \emph{correctness} criteria:

\begin{enumerate}
\item If Alice and Bob follow the protocol, the pulse sent by Alice must produce a measurement outcome with probability $P_{\text{det}}$ satisfying
\begin{equation}
\label{new13}
P_{\text{det}}\geq \delta_{\text{det}},
\end{equation}
for some $\delta_{\text{det}}>0$ predetermined by Alice and Bob. 
\item If Alice and Bob follow the protocol, Bob measures
  the received qubit in the basis of Alice's prepared state, and 
Bob gets a measurement outcome, then the outcome is the state $\lvert \psi\rangle$ prepared by Alice,
  with probability $1-P_{\text{error}}$, where
\begin{equation}
\label{new10}
P_{\text{error}}\leq \delta_{\text{error}},
\end{equation}
for some $\delta_{\text{error}} \geq 0$ predetermined by Alice and Bob.
\end{enumerate}

In an \emph{ideal setup}, Alice's and Bob's laboratories are
  perfectly secure, their preparation and measurement devices are
  perfect, Alice has a perfect single photon source, the probability
  that a transmitted photon is lost in the quantum channel is zero,
  Bob has a perfect random number generator and single photon
  detectors with unit detection efficiency and without dark counts. 
  
Since Bob's detectors are ideal and the quantum channel is lossless, Bob obtains a measurement outcome
with unit probability if Alice sends a single photon: precisely one of
his detectors clicks for each photon sent. 
Thus, $P_{\text{det}}=1$ and condition (\ref{new13}) is trivially satisfied.
Since the preparation and measurement devices are perfect, 
$P_\text{error}=0$ and (\ref{new10}) is also trivially satisfied.

In the setups we consider Bob has at least two detectors. If Alice sends something other than a single photon state,
Bob may get zero, two or more clicks.  The number of clicks
may depend statistically on his measurement basis.   However, since
he does not report anomalous results, his laboratory is secure, 
and his basis choices are perfectly random, Alice still learns no
information about his chosen basis.   
Thus, the ideal setup allows us to effectively implement
the task of private measurement of an unknown qubit state, with
perfect security against Alice ($\epsilon_\text{guess}=0$) and perfect correctness ($P_\text{det}=1$ and $P_\text{error}=0$).

However, in practical implementations,  
the ideal protocol and setup have to be altered to allow for imperfect
sources, channels and measuring devices. We show in this paper that this allows attacks by Alice in practical
implementations of the task using the standard setups described below with single
photon threshold detectors.

\section{Practical implementations}

\subsection{A practical protocol}
\label{APP}
 
We  make the standard cryptographic assumption
that Alice's and Bob's laboratories are secure. 
We consider the realistic case in which their preparation and
measurement devices are imperfect, the 
quantum channel is lossy, Bob's
single photon detectors have nonunit efficiencies and
nonzero dark count probabilities. We assume that Bob's random number generator is perfect, as the attacks we discuss do not depend on Bob having an imperfect random number generator.

We assume Alice has what is usually called a single photon source, although in fact only approximates to one, i.e. the source emits a pulse with $k$ photons with
probability $p_k$, for $k\in\{0,1,2,\ldots\}$, where
$p_1>>\sum_{k=2}^\infty p_k$. 
For the moment we suppose that Alice and Bob know the probabilities
$p_k$ but neither of them has any further information
about the number of photons $k$ in any given pulse. 
Most commonly, in implementations to date, Alice uses a weak coherent
source (e.g. \cite{LKBHTKGWZ13,LCCLWCLLSLZZCPZCP14,PJLCLTKD14,BOVZKD18}), which emits a pulse of $k$ photons with probability
$p_k=e^{-\mu}\mu^k/k!$, where $\mu$ is the average photon number,
chosen by Alice and agreed with Bob, with
$0<\mu<<1$.   This emits an empty pulse
with probability $p_0=e^{-\mu}\approx 1$, 
but nonempty pulses are likely to be single photons since
$p_1>>\sum_{k=2}^\infty p_k$. 
Another possibility is a source of pairs of entangled photons in
which Alice measures one of the photons and sends the other one to
Bob, with transmissions considered valid if Alice obtains a
measurement outcome (e.g. \cite{NJMKW12,ENGLWW14}).  This gives $p_0<<1$, 
$p_1\approx 1$ and so $p_1>>\sum_{k=2}^\infty p_k$. We note that quantum-dot single-photon
 sources also satisfy $p_1\approx 1$ \cite{Michlerbook}, with $p_2$ reaching a few percent of $p_1$
in practice \cite{SSW17}.

In either case there is a nonzero probability that none of Bob's detectors
  click when a nonempty pulse is sent, because they are not perfectly
efficient and the quantum channel is lossy. 
There is also a nonzero probability that more than one of Bob's
detectors click, because their dark counts are nonzero and the
transmitted pulse may have more than one photon.
As we explain below, a realistic protocol generally requires an algorithm 
determining a message $m$ as a function (which may be probabilistic)
of Bob's detection results. This algorithm may in general
  depend on various experimental parameters known to Bob, which may include his detectors' efficiencies, for instance. Bob sends Alice $m=1$ ($m=0$) 
to report a successful (unsuccessful) measurement. 
As usual in cryptography, we assume the full protocol, and 
hence the algorithm, is known to both parties.  
It defines Bob's \emph{reporting strategy}.

The parameter $\delta_\text{det}$ must be sufficiently small to satisfy
(\ref{new13}). Bob's detection efficiencies are often small. For example, 
  Refs. \cite{NJMKW12,LKBHTKGWZ13,ENGLWW14,PJLCLTKD14}
  report detection efficiencies of the order of  $0.06$,
  $0.08$, $0.13$ and $0.015$, respectively. 
Ref. \cite{LCCLWCLLSLZZCPZCP14} reports considerably higher detection efficiencies, of the order of $0.45$. Note that in this paper, we include in the term `detection efficiency' the transmission efficiency of the quantum channel and the quantum efficiency of the detectors.

However, more complex protocols that have
our task as a subroutine may also require that
$\delta_{\text{det}}$ be not too small in order to give security
against Bob. For example, a $N-$parallel
repetition of the task in which Bob reports to Alice that $\ll N
\delta_{\text{det}}$ pulses produce a measurement outcome could
allow Bob to choose to report an
appropriate subset of pulses producing measurement outcomes
advantageous to him in some cheating strategy.
Alice thus stipulates a minimum value of $\delta_{\text{det}}$, and
Bob must ensure his technology allows this value to be attained. 
Alice aborts if Bob reports less than
$N\delta_{\text{det}}(1-\delta_{\text{det}}^{\text{dev}})$,
where $\delta_{\text{det}}^{\text{dev}}>0$
is predetermined by Alice and agreed by Bob as the maximum
tolerable deviation from the expected value of reported pulses. 

Since the preparation and measurement devices are not perfect, there
is a probability $P_\text{error}>0$ that Bob obtains a measurement
outcome different to Alice's target state $\lvert\psi\rangle$ when Bob
measures in the basis of preparation by Alice. Thus,
$\delta_\text{error}$ must be chosen large enough to
guarantee the condition (\ref{new10}).

However, more complex protocols that have the task as a subroutine
generally require Bob to be able to report correct measurement
outcomes with reasonably high probability. 
This requires $\delta_\text{error}$ not to be too large.  
 
These constraints require that Bob identifies some subset of purportedly successful
measurement outcomes to Alice, in which the proportion of actually
successful measurement outcomes will be relatively high if
Alice honestly followed the protocol. 
They motivate a \emph{practical protocol} for private measurement of
an unknown qubit state, with the above practical setup:

\begin{enumerate}
\item Alice prepares and sends Bob a photon pulse with an approximate
  single photon source, where each photon in the pulse encodes the
  same qubit state $\lvert \psi\rangle$, and where
  $\lvert \psi\rangle$ is chosen randomly by Alice from the set
  $\mathcal{S}$.
\item Bob generates a random bit $\beta\in\{0,1\}$ and measures the pulse in the qubit basis $\mathcal{B}_{\beta}$.
\item Bob sends a message $m\in\{0,1\}$ to Alice reporting whether a
  measurement outcome was produced ($m=1$) or not ($m=0$),
following an agreed reporting strategy.  
\end{enumerate}

Unless otherwise stated, in the definition of security
  against Alice for the practical protocol, $P_\text{guess}$ is taken
  as Alice's probability to guess $\beta$, independently of the value
  of $m$. This makes sense with the setup described below in
  extensions of this protocol in which $N>1$ photon pulses are
  produced by Alice and all pulses are measured by Bob in the same
  basis $\mathcal{B}_\beta$. Examples of these type of protocols are
  those of Refs. \cite{LKBHTKGWZ13,LCCLWCLLSLZZCPZCP14}, which we
  discuss in Appendix \ref{appD}.

  Alternatively, we could define $P_\text{guess}$ as Alice's
  probability to guess $\beta$ conditioned on Bob reporting $m=1$.
  This makes more sense in extensions of the practical protocol in
  which Alice sends Bob $N>1$ photon pulses and Bob measures each
  pulse randomly in one of the two bases, $\mathcal{B}_0$ and
  $\mathcal{B}_1$, or in protocols using a setup with four single
  photon detectors (setup II defined in Appendix \ref{NJMKW12})
  instead of two as in the setup below. This is because in these scenarios Alice might only care to learn Bob's measurement bases for the pulses that he reported as being successful. In Appendix \ref{NJMKW12}, we use this definition to present attacks to the protocols of Refs. \cite{NJMKW12,ENGLWW14}.

Some version of the practical protocol defined above is
commonly used in experimental demonstrations of mistrustful
cryptography \cite{NJMKW12,LKBHTKGWZ13,LCCLWCLLSLZZCPZCP14,ENGLWW14,PJLCLTKD14}. Various reporting strategies have been
used.   We show below that they have subtle weaknesses, allowing
attacks by a dishonest Alice. We show moreover that this is 
true of {\it any} reporting strategy of the broad class described here.  

\subsection{Details of setup}

We consider a basic setup to implement the practical
  protocol, denoted as \emph{setup I} (see Fig. \ref{fig1}). Another
setup is discussed in Appendix \ref{NJMKW12}. Alice encodes a
qubit state in some degrees of freedom of a photon pulse, using an
approximate single photon source and a state modulator. Alice and Bob
agree in advance on these degrees of freedom. These typically consist
in the polarization (e.g. \cite{NJMKW12,LKBHTKGWZ13,LCCLWCLLSLZZCPZCP14,ENGLWW14,PJLCLTKD14,BOVZKD18}), but can also be time bin (e.g. \cite{BBBGST11}) or others, for
example. For definiteness we focus on polarization coding in
this paper, but our results apply to any choice.

A \emph{dishonest Alice} may deviate arbitrarily from the
protocol, limited only by her technological capabilities. Importantly,
when proving the unconditional security of a protocol, we need to
assume that the technology available to dishonest Alice is only
limited by the laws of physics. Dishonest Alice may, for
  example, replace the photon source agreed with Bob for the protocol
  with another one that has different statistics. More generally, Alice may send Bob an arbitrary quantum
state $\rho$ encoded in the polarization and other degrees of freedom.
  
In most of this paper we focus on  \emph{multiphoton
  attacks} performed by a dishonest Alice, in which Alice sends Bob a pulse of
$k$ photons encoding an arbitrary quantum state $\rho$ of her choice,
which may be pure or mixed and may be entangled with an
ancilla held by Alice, and where each photon in the pulse encodes a qubit
in its polarization. 

We emphasize that the multiphoton attacks we consider here are different from the photon number splitting attacks previously considered in the quantum cryptographic literature \cite{HIGM95}. In the latter, a dishonest Bob exploits the fact that honest Alice's realistic photon source emits multiphoton pulses with non-zero probability and, in principle, perfectly learns Alice's prepared states for the multiphoton pulses. However, in this paper we consider multiphoton attacks by a dishonest Alice, who uses multiphoton pulses to gain information about Bob's measurement choices. For this reason, decoy state protocols \cite{WY03,WCSL10} and other countermeasures against photon number splitting attacks do not concern us here.


We assume from here on that Bob honestly follows the agreed protocol. 
In setup I (see Fig. \ref{fig1}), Bob measures the polarization of the received
pulse with a wave plate, set in one of two positions
$\beta\in\{0,1\}$, followed by a polarizing beam splitter and two
single photon detectors, D$_0$ and D$_1$.   These are threshold
detectors: they do not distinguish the number of photons of a
pulse producing a detection. If $\beta=0$ ($\beta=1$), Bob sets the
wave plate in its first (second) position, corresponding to a
measurement in the basis $\mathcal{B}_0$ ($\mathcal{B}_1$). Let
$\eta_{i\beta}$ and $d_{i\beta}$ be the detection efficiency and the
dark count probability of detector D$_i$ when Bob applies the
measurement $\mathcal{B}_\beta$, where $0<\eta_{i\beta}<1$ and
$0<d_{i\beta}<<1$, for $i,\beta\in\{0,1\}$. In our model, dark counts
and each photo-detection are independent random events. To the best of
our knowledge, this is a valid assumption.

\begin{figure}
\includegraphics[scale=0.59]{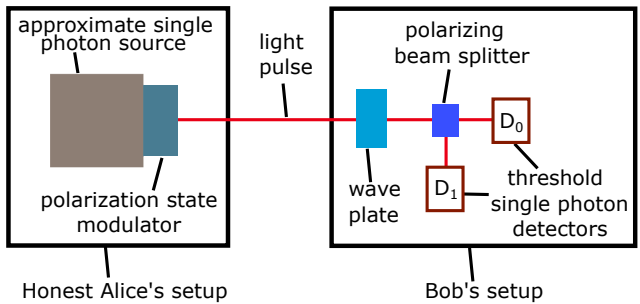}
\caption{\label{fig1} \textbf{Setup I.} Honest Alice's setup consists in an
  approximate single photon source and a polarization state modulator. The setup for Bob, who is assumed honest, comprises a wave plate set in one of two positions, according to
  $\beta\in\{0,1\}$, a polarizing beam splitter and two threshold
  single photon detectors D$_0$ and D$_1$.}
\end{figure}

The pair of detectors produces a \emph{detection event}
$(c_0,c_1)\in\{0,1\}^2$ with four possible values: $(c_0,c_1)=(1,0)$
if D$_0$ clicks and D$_1$ does not click, $(c_0,c_1)=(0,1)$ if D$_0$
does not click and D$_1$ clicks, $(c_0,c_1)=(0,0)$ if no detector
clicks, and $(c_0,c_1)=(1,1)$ if both detectors click.

Independently of whether Alice is honest or not, we define
$P_{\text{det}}(c_0,c_1\vert \beta,\rho ,k)$ and
$P_{\text{report}}(m\lvert \beta, \rho, k)$ to be the probabilities
that a detection event $(c_0,c_1)\in\{0,1\}^2$ occurs, and that Bob
reports to Alice the message $m\in\{0,1\}$ when Alice sends Bob a
pulse of $k$ photons encoding the state $\rho$ and Bob measures in the
basis $\mathcal{B}_\beta$, respectively, for
$c_0,c_1,m,\beta\in\{0,1\}$ and $k\in\{0,1,2,\ldots\}$. We define
$P_\text{det}(c_0,c_1\vert \beta)=\sum_{k=0}^\infty p_k
P_{\text{det}}(c_0,c_1\vert \beta,\rho ,k)$, for
$c_0,c_1,\beta\in\{0,1\}$, when Alice and Bob follow the protocol honestly.

\dpg{We note that although setup I is designed to work when the quantum states are encoded in the photons' polarization, straightforward variations can be implemented if the quantum states are encoded in other photonic degrees of freedom, particularly in the time-bin. For example, the experimental demonstration of quantum coin flipping of Ref. \cite{BBBGST11} encodes the quantum states in the time-bin of photons and, as in setup I, also requires Bob to use two threshold single photon detectors. In other setups, also encoding quantum states in the photons' time-bin, only one threshold single photon detector D, working at two different time intervals $\tau_0$ and $\tau_1$, could be required. Our analyses in this paper apply straightforwardly to these setups by identifying the detector D$_i$ of setup I with the detector D working in the time interval $\tau_i$, for $i\in\{0,1\}$.

Setup I is very commonly used to implement the task of private measurement of an unknown qubit state considered here, used for example in Refs. \cite{LKBHTKGWZ13,LCCLWCLLSLZZCPZCP14,PJLCLTKD14}. Another common setup (setup II), in which Bob has four threshold single photon detectors and which is used in Refs. \cite{NJMKW12,ENGLWW14}, is discussed in detailed in Appendix \ref{NJMKW12}. Variations of setup I are also discussed in section \ref{exotic}.
In principle, other setups could be devised; it is beyond our scope here to consider multiphoton attacks on these, which would need to be analysed case by case.   
}

\section{Bob's reporting strategies and Alice's multiphoton attacks}

Tables \ref{tablemaintext} -- \ref{table3} summarize the multiphoton attacks, the main suggested countermeasures against them, and the reporting strategies  discussed below, as well as the application of multiphoton attacks to previous experimental demonstrations of mistrustful quantum cryptography.

\subsection{Reporting only single clicks}

\label{singleclicks}

We define \emph{reporting strategy I} by $m=1$ if
$(c_0,c_1)\in\{(1,0),(0,1)\}$, and $m=0$ otherwise. \dpgg{That is, \emph{Bob reports only single clicks}
 in the sense that he tells Alice that he obtained a measurement outcome, by sending her the message $m=1$, if exactly one of his two detectors click. This implies that if none or both of Bob's detectors click then he tells Alice that he did not obtain a measurement outcome, by sending Alice the message $m=0$.} This might seem a natural strategy. However, as Liu et
al. discuss \cite{LCCLWCLLSLZZCPZCP14}, if Bob 
uses reporting strategy I, Alice can gain information about $\beta$ with the following attack, which we call \emph{multiphoton attack I}.

To illustrate this attack, consider a setup in which
Alice's polarization preparation devices and Bob's polarizers are 
precisely aligned. Alice sends a
pulse with a large number of photons $k$ in the same polarization
state chosen from $\mathcal{S}$; for example, 
$\rho=(\lvert \psi_{00}\rangle\langle \psi_{00}\rvert)^{\otimes k}$.
If Bob measures the pulse in the basis $\mathcal{B}_0$ then the
detection event $(c_0,c_1)=(1,0)$ occurs with high probability,
giving $m=1$. If Bob measures in the basis $\mathcal{B}_1$
then the detection event $(c_0,c_1)=(1,1)$ occurs with high 
probability, giving $m=0$. Thus, given $m$, Alice can learn $\beta$
with high probability.

In a different version of attack I, Alice's pulse is
  prepared with a coherent source with average photon number
  $\mu>>1$. We show below (
  see section \ref{experiment}) that in this case, if $\mathcal{B}_0$ and
  $\mathcal{B}_1$ are the computational and Hadamard bases, Bob's detectors have equal efficiencies $\eta\in(0,1)$ and zero dark count probabilities, then Alice's probability
  $P_\text{guess}^\text{cs}(\mu)$ to guess Bob's bit $\beta$ as a function of $\mu$ is given by
\begin{equation}
\label{attackexmain}
P_\text{guess}^\text{cs}(\mu)=1-\frac{1}{2}\Bigl(2e^{-\frac{\mu\eta}{2}}-e^{-\mu\eta}\Bigr)
\, .
\end{equation}

When the devices are closely but not precisely aligned, Alice can still learn
significant information about $\beta$ from $m$ with an
appropriate choice of $k$ or $\mu$. An experimental
simulation of this attack is presented below (see
section \ref{experiment}). 
The results are given in Fig. \ref{guessingexp} and show that Alice's probability to guess Bob's bit $\beta$ is very well approximated by (\ref{attackexmain}), and can be very close to unity if $\mu$ is sufficiently large.
Different versions of attack I apply to the experimental protocols of
Refs. \cite{NJMKW12,LKBHTKGWZ13,ENGLWW14,PJLCLTKD14} if they are
implemented
with Bob using reporting strategy I (see
  Appendix \ref{appD}).
  
We note that multiphoton attack I still applies in a variation of reporting strategy I in which Bob also sets $m=1$ with some non-zero but small probability when $(c_0,c_1)=(1,1)$, i.e. when there is a double click.
  
\begin{figure}
\includegraphics[scale=0.63]{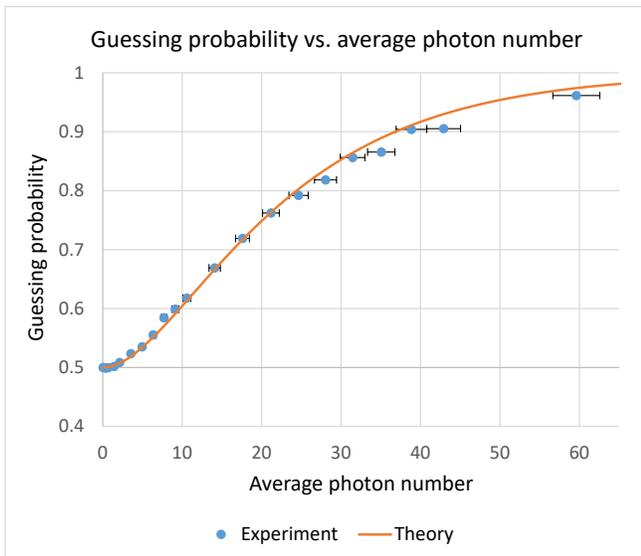}
\caption{\label{guessingexp} \textbf{Alice's guessing
      probability in multiphoton attack I.} Implementing setup I, we simulated multiphoton attack I by varying the attenuation of Alice's coherent source in order to scan the range for the average photon number $\mu$. Frequencies for zero, single and double clicks in the pair of Bob's detectors were registered in order to then compute the experimental estimation of Alice's
    guessing probability $P_\text{guess}^\text{cs}(\mu)$ (blue dots) when Bob uses reporting strategy I. We propose reporting strategy II (defined below) as a countermeasure against multiphoton attack I.  The orange solid curve
    represents the theoretical prediction given by
    (\ref{attackexmain}), assuming the values $d_{i\beta}=0$ and
    $\eta_{i\beta}=\eta\in(0,1)$, for $i,\beta\in\{0,1\}$. The
    measured value of the detection efficiency, including the
    transmission probability through the quantum channel, is
    $\eta=0.12$. The horizontal and vertical uncertainty bars are included. The vertical uncertainty bars are so small that they are unnoticeable and lie within the corresponding markers.}
\end{figure}

\subsection{Reporting if at least one detector clicks}
\label{reportdoublecounts}

A better reporting strategy is to set $m=1$ if at
least one detector clicks and $m=0$ if no detector clicks. We call this \emph{reporting strategy II}. 

This reporting strategy has been considered in quantum key distribution: squashing models map a multiphoton quantum state to a single-photon state by randomly assigning the measurement outcome of a double click to a single click \cite{BML08}. This reporting strategy has also been implemented in the experimental demonstrations of mistrustful quantum cryptography of Refs. \cite{BBBGST11,LCCLWCLLSLZZCPZCP14}.

As the following lemma shows, if Bob's detectors have exactly
equal efficiencies and their dark count probabilities are independent
of his measurement basis, Alice cannot learn any information about
$\beta$ from the message $m$. 

\begin{lemma}
\label{lemma0}
Suppose that $\eta_{i\beta}=\eta$ and $d_{i\beta}=d_i$, for $i,\beta\in\{0,1\}$.
If Bob uses reporting strategy II,  then for an arbitrary quantum state $\rho$ encoded in a pulse of $k$ photons that Alice sends Bob, and for arbitrary qubit orthogonal bases $\mathcal{B}_0$ and $\mathcal{B}_1$, it holds that
\begin{eqnarray}
\label{a1}
P_{\text{report}}(1\lvert \beta, \rho, k)&=&1-(1-d_0)(1-d_1)(1-\eta)^k,
\end{eqnarray}
for $\beta\in\{0,1\}$ and $k\in\{0,1,2,\ldots\}$. 
\end{lemma}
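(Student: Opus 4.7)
The plan is to compute the complementary probability $P_{\text{report}}(0\lvert\beta,\rho,k)$ (no click at all) and show it equals $(1-d_0)(1-d_1)(1-\eta)^k$ irrespective of $\beta$ and $\rho$. Under reporting strategy II the event $m=0$ is exactly the event that neither detector clicks, so this is equivalent to the stated identity.

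First I would factor the no-click event into two independent contributions: (i) neither dark count fires, and (ii) no photon triggers a genuine detection. By the independence assumption on dark counts, contribution (i) carries probability $(1-d_0)(1-d_1)$ regardless of $\beta$, $\rho$ and $k$. It remains to show that the probability of (ii) equals $(1-\eta)^k$ independently of $\beta$ and $\rho$. This is where the assumption $\eta_{i\beta}=\eta$ is essential.

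For contribution (ii), I would argue at the POVM level. For a single incoming photon, whatever its polarization and whatever the wave-plate setting $\beta$, the photon emerges from the polarizing beam splitter heading either to $\mathrm{D}_0$ or to $\mathrm{D}_1$, and in either case is absorbed without producing a click with probability $1-\eta$. Hence the ``no-click'' POVM element on the single-photon Hilbert space is $(1-\eta)\,\id$, independent of $\beta$. Since the detection process acts identically and independently on each photon, the ``no click from any of the $k$ photons'' POVM element on the $k$-photon space is $(1-\eta)^k\,\id^{\otimes k}$, still independent of $\beta$. Tracing this against any state $\rho$ (which may be mixed or entangled with an ancilla Alice keeps) yields $(1-\eta)^k$, independent of $\rho$.

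Combining (i) and (ii), which are independent events (dark counts are independent of the photo-detection process by hypothesis), gives
\begin{equation*}
P_{\text{report}}(0\lvert \beta,\rho,k)=(1-d_0)(1-d_1)(1-\eta)^k,
\end{equation*}
and (\ref{a1}) follows by taking the complement. The main subtlety, and the step that deserves most care in the write-up, is the POVM argument justifying that (ii) is $\rho$-independent for arbitrary multiphoton states, including those entangled across photons or with Alice's ancilla; once one notes that the relevant POVM element is proportional to the identity, basis-independence with respect to $\mathcal{B}_0$, $\mathcal{B}_1$ and independence from $\rho$ both follow immediately.
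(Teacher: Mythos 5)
Your proposal is correct and takes essentially the same route as the paper: both compute the complementary no-click probability, factor out the dark-count term $(1-d_0)(1-d_1)$ by independence, and reduce the remainder to $(1-\eta)^k$ using the equal-efficiency assumption. The only difference is presentational — the paper conditions on the number $k_0$ of photons reaching D$_0$ and uses $(1-\eta_0)^{k_0}(1-\eta_1)^{k-k_0}=(1-\eta)^k$ together with $\sum_{k_0}P(k_0\vert\beta,\rho)=1$, whereas you package the same fact as the single-photon no-click POVM element being $(1-\eta)\id$; the two arguments are equivalent.
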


\begin{proof}
By definition, if Bob follows reporting strategy II then
\begin{equation}
\label{wwwa1}
P_{\text{report}}(1\lvert \beta, \rho, k)=1-P_{\text{det}}(0,0\vert \beta,\rho, k),
\end{equation}
for $\beta\in\{0,1\}$ and $k\in\{0,1,2,\ldots\}$. Let $k_0$ be the number of photons that go to detector D$_0$. The number of photons that go to detector D$_1$ is $k_1=k-k_0$. 
The probability
$P(k_0\vert \beta, \rho)$ that $k_0$ photons go to $D_0$ and $k-k_0$ photons go to
$D_1$ depends both on the quantum state $\rho$ and on the quantum measurement $\mathcal{B}_\beta$ implemented by Bob, via the Born rule. 
As previously stated, in our model, the dark counts and each photo-detection are independent random events. We define $P_i(0\vert \rho, k_0)$ to be the probability 
that $D_i$ does not click, which is independent of whether $D_{\bar{i}}$
clicks, and which is independent of Bob's measurement basis $\mathcal{B}_\beta$, for $i,\beta\in\{0,1\}$.  
We have
\begin{eqnarray}
\label{newnew1}
P_0(0\vert \rho, k_0)&=& (1-d_0)(1-\eta_0)^{k_0},\nonumber\\
P_1(0\vert \rho,k_0)&=& (1-d_1)(1-\eta_1)^{k-k_0},
\end{eqnarray}
for $k_0\in\{0,1,\ldots,k\}$ and $k\in\{0,1,2,\ldots\}$. Thus, for a $k-$photon pulse, the probability
$P_{\text{det}}(0,0\vert \beta,\rho, k)$ that no detector clicks when Bob measures in the basis $\mathcal{B}_\beta$ is given by
\begin{eqnarray}
\label{z1.1}
P_{\text{det}}(0,0\vert \beta,\rho, k)&=&\!\sum_{k_0=0}^k\!\! P(k_0\vert \beta, \rho) P_0(0\vert \rho,k_0)P_1(0\vert \rho,k_0)\nonumber\\
&=&\!\sum_{k_0=0}^k \!\!P(k_0\vert \beta, \rho) (1-d_0)(1-d_1)\times\nonumber\\
&&\qquad\quad\times(1-\eta_0)^{k_0}(1-\eta_1)^{k-k_0}\nonumber\\
&=&\!\sum_{k_0=0}^k\!\! P(k_0\vert \beta,\rho) (1-d_0)(1-d_1)(1-\eta)^k\nonumber\\
&=&\!(1-d_0)(1-d_1)(1-\eta)^k,
\end{eqnarray}
for $\beta\in\{0,1\}$, where in the second line we used (\ref{newnew1}), in the third line we used the assumption that
$\eta_0=\eta_1=\eta$, and in last line we used
$\sum_{k_0=0}^k P(k_0\vert \beta, \rho) =1$, for $\beta\in\{0,1\}$. Thus, (\ref{a1}) follows from (\ref{wwwa1}) and (\ref{z1.1}).
\end{proof}

In practice, Bob cannot guarantee that the efficiencies of his
detectors are exactly equal. In this general case, the detection
probabilities $P_{\text{det}}(c_0,c_1\vert \beta,\rho,k)$ depend
nontrivially on $\beta$, as illustrated in Lemma \ref{lemma2} (see
section \ref{methods}). We call \emph{multiphoton attack II} any strategy implemented by Alice that allows her to exploit the difference of Bob's detection efficiencies to obtain information about $\beta$, when Bob reports double clicks with unit probability, as in reporting strategy II, or with high probability. Particularly, in this attack, we assume that Alice knows the efficiencies of
Bob's detectors and can control the number of photons and the states
of her pulses. For example, in an extension of the task to $N>>1$ pulses in which Bob measures all pulses in the basis $\mathcal{B}_0$ or in the basis $\mathcal{B}_1$, Alice can prepare a subset of pulses in specific states and with specific number of photons to maximize her probability to guess $\beta$ from Bob's messages reporting whether these pulses produced successful measurements or not. This and other versions of this attack apply (for example)
to the experimental demonstrations of
Refs. \cite{LKBHTKGWZ13,LCCLWCLLSLZZCPZCP14}.

In Appendix \ref{appD}, we discuss how
  multiphoton attacks I and II apply to some experimental
  demonstrations of mistrustful quantum cryptography. We summarize
  this in Table \ref{tablemaintext} below.

\begin{table*}
\centering
\begin{center}
\begin{tabular}{ | c| c | c | c | c | c |}
\hline
\multirow{2}{*}{\textbf{Reference}} & \multirow{2}{*}{\textbf{Task}}  & \multirow{2}{*}{\textbf{Setup}} & \textbf{Reports multiple} &\textbf{Covered against} &\textbf{Covered against}\\
& & &\textbf{clicks?} &\textbf{attack I?} & \textbf{attack II?}\\
\hline
\multirow{2}{*}{\cite{LCCLWCLLSLZZCPZCP14}} & Relativistic quantum  & \multirow{2}{*}{I} & \multirow{2}{*}{Yes} &\multirow{2}{*}{Yes} &\multirow{2}{*}{No}\\
& bit commitment& & & &\\
\hline
\multirow{2}{*}{\cite{LKBHTKGWZ13}} & Relativistic quantum  & \multirow{2}{*}{I} & \multirow{2}{*}{Does not say} &Yes (no) if double clicks &\multirow{2}{*}{No}\\
& bit commitment& & &are (are not) reported &\\
   \hline
   \multirow{2}{*}{\cite{PJLCLTKD14}} & \multirow{2}{*}{Quantum coin flipping}  & \multirow{2}{*}{I} & \multirow{2}{*}{Does not say} &
   Yes (no) if double clicks &\multirow{2}{*}{No}\\
& & & &
are (are not) reported&\\
\hline
   \multirow{2}{*}{\cite{ENGLWW14}} & Quantum 1-out-of-2 oblivious transfer  & \multirow{2}{*}{II} & \multirow{2}{*}{Does not say} &Yes (no) if multiple clicks &\multirow{2}{*}{No}\\
& in the noisy storage model & & &are (are not) reported &\\
\hline
   \multirow{2}{*}{\cite{NJMKW12}} & Quantum bit commitment  & \multirow{2}{*}{II} & \multirow{2}{*}{No} &\multirow{2}{*}{No} &\multirow{2}{*}{No}\\
& in the noisy storage model & & & &\\
\hline
\end{tabular}
\caption{\textbf{Application of multiphoton attacks I and II to
    experimental demonstrations of mistrustful quantum cryptography.}
  The multiphoton attacks I and II are briefly described in the main
  text. The table indicates whether the protocols of
  Refs. \cite{NJMKW12,LKBHTKGWZ13,LCCLWCLLSLZZCPZCP14,ENGLWW14,PJLCLTKD14}
  are covered against variations of these attacks, which are discussed
  in detail in Appendix \ref{appD}. Setup I is described in Fig. \ref{fig1},
  while setup II is a variation of Setup I with four detectors, which
  is described in Appendix \ref{NJMKW12}. }
\label{tablemaintext}
\end{center}
\end{table*}


We show below that there is no reporting strategy using setup I that
guarantees perfect security against Alice when the detection
efficiencies are different\dpgg{, apart from the trivial reporting strategy in which Bob does not report any losses to Alice}. However, we derive in Appendix \ref{appB} an upper
bound on the amount of information that Alice can learn about $\beta$
from Bob's message $m$, which approaches zero when the difference of
the detection efficiencies tends to zero.

\subsection{Symmetrization of losses}

The \emph{symmetrization of losses} strategy was introduced in
  Ref. \cite{NJMKW12} as follows. Bob tests his setup by preparing
  and measuring states as in the practical protocol, a large number of times $N$ in parallel. Then, for $c,\beta\in\{0,1\}$, Bob computes the frequency $F_{\text{det}}(c,\bar{c}\vert \beta)$ of detection events $(c,\bar{c})$, which provides a good estimate of the probability $P_{\text{det}}(c,\bar{c}\vert \beta)$ if $NF_{\text{det}}(c,\bar{c}\vert \beta)>>1$. Bob then computes the numbers $S_{c\bar{c}\beta}\in(0,1]$ satisfying
\begin{equation}
\label{new1}
S_{c\bar{c}\beta}F_{\text{det}}(c,\bar{c}\vert \beta)= F_{\text{det}}^{\text{min}},
\end{equation}
for $c,\beta\in\{0,1\}$,
where 
\begin{equation}
\label{new1.0}
F_{\text{det}}^{\text{min}}=\min_{c,\beta\in\{0,1\}}\{F_{\text{det}}(c,\bar{c}\vert
\beta)\} \, .
\end{equation}

Then, in the implementation of the protocol with
Alice, 
Bob reports a detection event $(c,\bar{c})$ with probability
$S_{c\bar{c}\beta}$, for $c,\beta\in\{0,1\}$. 
Ref. \cite{NJMKW12} explicitly states that the detection events $(0,0)$ and $(1,1)$ are not reported by Bob, for $\beta\in\{0,1\}$. 

Symmetrization of losses aims to effectively make the detection probabilities of Bob's detectors equal. As shown below in Lemma \ref{lemma1}, a more precise definition of this strategy (reporting strategy III), achieves this if Alice's pulse has zero or one photons. However, dishonest Alice can send Bob a pulse with an arbitrary number of photons $k$. Because the detection probabilities are not linear functions of $k$ (see 
section \ref{methods}) and Bob does not know $k$, symmetrization of losses fails in effectively making the detection probabilities of Bob's detectors equal. This is proved by Theorem \ref{lemma3} below.

Symmetrization of losses has been implemented in at least four experimental
demonstrations of mistrustful quantum cryptography protocols
\cite{NJMKW12,LKBHTKGWZ13,PJLCLTKD14,ENGLWW14}.  
These implementations used setups and protocols that are slight
variations of ours.  We discuss them, 
and show that they are vulnerable to multiphoton attacks by Alice, in
Appendix \ref{appD}.

\subsection{Generalization of symmetrization of losses to
    double click events}
    
In this paper, we introduce the following generalization of symmetrization of losses and call it \emph{reporting strategy III}. Let
$\eta_{\text{min}}=\min\{\eta_{00},\eta_{01},\eta_{10},\eta_{11}\}$. If Bob obtains a detection event $(c_0,c_1)$, he sets $m=1$ with probability $S_{c_0c_1\beta}$, for $c_0,c_1,\beta\in\{0,1\}$, where
\begin{eqnarray}
\label{a5}
S_{00\beta}&=&0,\nonumber\\
S_{11\beta}&\in&[0,1],\nonumber\\
S_{01\beta}&=&\frac{\eta_{\text{min}}}{\eta_{1\beta}},\nonumber\\
S_{10\beta}&=&\frac{\eta_{\text{min}}}{\eta_{0\beta}},
\end{eqnarray}
for $\beta\in\{0,1\}$. Note that Bob needs to know the
efficiencies of his detectors to apply this reporting strategy
and that the choices of $S_{110}$ and $S_{111}$ are left free.

As the following lemma shows, this reporting strategy guarantees that Alice cannot
obtain any information about $\beta$ if Alice's pulse does not have
more than one photon and $d_{i\beta}=0$, for arbitrary
$\eta_{i\beta}\in(0,1)$ and for $i,\beta\in\{0,1\}$. Furthermore, it
guarantees that Alice cannot obtain much information about $\beta$ if
Alice's pulse does not have more than one photon and
$0<d_{i\beta}\leq \delta$, for $0<\delta<< 1$ and
$i,\beta\in\{0,1\}$.

\begin{lemma}
\label{lemma1}
Let
$d_{i\beta}\leq \delta$, for some $0\leq \delta<1$ and for
$i,\beta\in\{0,1\}$. Consider the practical protocol with
  $p_0+p_1=1$ in which Alice sends Bob a single photon pulse ($k=1$)
in arbitrary qubit state $\rho$ or an empty pulse ($k=0$).  Suppose that Bob applies
reporting strategy III. Then
\begin{equation}
\label{a3}
\bigl\lvert P_{\text{report}}(1\lvert 1, \rho, k) -P_{\text{report}}(1\lvert 0, \rho, k)\bigr\rvert \leq B_\text{III}^k,
\end{equation}
for $k\in\{0,1\}$, where
\begin{eqnarray}
\label{newa3}
B_\text{III}^0&=&2\delta,\nonumber\\
B_\text{III}^1&=&6\delta+2\delta^2+S_{11}^{\text{max}}(5\delta
+\delta^2),
\end{eqnarray}
and where $S_{11}^{\text{max}}=\max\{S_{110},S_{111}\}$.
\end{lemma}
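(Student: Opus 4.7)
The plan is to handle the two cases $k=0$ and $k=1$ separately, with the $k=1$ case relying on the observation that the coefficients $S_{10\beta}=\eta_{\min}/\eta_{0\beta}$ and $S_{01\beta}=\eta_{\min}/\eta_{1\beta}$ are tuned precisely so that, in the zero-dark-count limit, the reporting probability collapses to $\eta_{\min}$ independent of $\beta$ and of the state. The whole lemma then becomes a matter of controlling dark-count corrections.

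For $k=0$, I would directly write down the four detection probabilities using only dark counts: $P_{\text{det}}(1,0|\beta,\rho,0)=d_{0\beta}(1-d_{1\beta})$, $P_{\text{det}}(0,1|\beta,\rho,0)=(1-d_{0\beta})d_{1\beta}$, $P_{\text{det}}(1,1|\beta,\rho,0)=d_{0\beta}d_{1\beta}$. Since $S_{00\beta}=0$ and every $S_{c_0c_1\beta}\in[0,1]$, summing with these weights gives $P_{\text{report}}(1|\beta,\rho,0)\leq d_{0\beta}+d_{1\beta}-d_{0\beta}d_{1\beta}\leq 2\delta$ for each $\beta$, so the absolute difference is also bounded by $2\delta=B_{\text{III}}^0$.

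For $k=1$, I would let $q_\beta=P(1|\beta,\rho)\in[0,1]$ denote the Born-rule probability that the single photon is routed to D$_0$ under measurement $\mathcal{B}_\beta$, and compute each $P_{\text{det}}(c_0,c_1|\beta,\rho,1)$ by conditioning on $k_0\in\{0,1\}$ and applying (\ref{newnew1}) together with the independence of the two detector outcomes given $k_0$. Assembling
\begin{equation}
P_{\text{report}}(1|\beta,\rho,1)=\sum_{(c_0,c_1)\neq(0,0)} S_{c_0c_1\beta}P_{\text{det}}(c_0,c_1|\beta,\rho,1),
\end{equation}
and substituting the values of $S_{10\beta}, S_{01\beta}$ from (\ref{a5}), the leading terms in the $d\to 0$ limit combine to $q_\beta\eta_{\min}+(1-q_\beta)\eta_{\min}=\eta_{\min}$, which crucially is independent of $\beta$ and $\rho$. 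All remaining terms carry at least one factor of some $d_{i\beta}$.

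The final step is a careful but routine accounting of the dark-count remainder. I would write $P_{\text{report}}(1|\beta,\rho,1)-\eta_{\min}$ as an explicit sum of monomials in $d_{0\beta}, d_{1\beta}$, using that the prefactors $\eta_{\min}/\eta_{i\beta}\leq 1$, that $q_\beta,(1-q_\beta),(1-\eta_{i\beta}),(1-d_{i\beta})\leq 1$, and that $S_{11\beta}\leq S_{11}^{\max}$. Bounding each monomial by the appropriate power of $\delta$ and collecting yields a bound of the form $3\delta+S_{11}^{\max}(c_1\delta+c_2\delta^2)+\delta^2$-type, and then $|P_{\text{report}}(1|1,\rho,1)-P_{\text{report}}(1|0,\rho,1)|\leq 2\max_\beta|P_{\text{report}}(1|\beta,\rho,1)-\eta_{\min}|$ delivers the claimed $B_{\text{III}}^1=6\delta+2\delta^2+S_{11}^{\max}(5\delta+\delta^2)$. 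The main obstacle is purely bookkeeping: keeping track of which monomials pick up the $S_{11}^{\max}$ factor versus a bare $\eta_{\min}/\eta_{i\beta}\leq 1$ factor, and loosening coefficients consistently so that the final polynomial in $\delta$ matches the stated form — the conceptual content (exact cancellation at $d=0$ plus Taylor-type estimate in the $d$'s) is straightforward.
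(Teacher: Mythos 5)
Your proposal is correct and follows essentially the same route as the paper: both compute $P_{\text{report}}(1\lvert\beta,\rho,k)$ explicitly from the detection probabilities of Lemma \ref{lemma2} (equivalently, by conditioning on $k_0$ and using (\ref{newnew1})) and then bound elementarily; your observation that the dark-count-free part collapses to $\eta_{\min}$, independent of $\beta$ and $\rho$, is exactly the cancellation that the paper's expression (\ref{a9}) encodes implicitly. The only point needing care is your final step $\lvert P_{\text{report}}(1\lvert 1,\rho,1)-P_{\text{report}}(1\lvert 0,\rho,1)\rvert\leq 2\max_\beta\lvert P_{\text{report}}(1\lvert\beta,\rho,1)-\eta_{\min}\rvert$: if you bound each monomial of the remainder by its worst-case absolute value, the per-basis deviation is about $6\delta+2\delta^2+S_{11}^{\max}(4\delta+\delta^2)$, and doubling overshoots $B_{\text{III}}^1$. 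You must instead exploit the signs, e.g. $P_{\text{det}}(1,1\lvert\beta,\rho,1)=d_{0\beta}(1-q_\beta)\eta_{1\beta}+d_{1\beta}q_\beta\eta_{0\beta}+d_{0\beta}d_{1\beta}A_\beta\in[0,2\delta+\delta^2]$ with $A_\beta=q_\beta(1-\eta_{0\beta})+(1-q_\beta)(1-\eta_{1\beta})$, and the partial cancellation $-d_{0\beta}(1-q_\beta\eta_{0\beta})+(d_{0\beta}+d_{1\beta}-d_{0\beta}d_{1\beta})A_\beta\in[-\delta,\delta]$ in each single-click correction; this gives the per-basis deviation in $[-2\delta,\,2\delta+S_{11}^{\max}(2\delta+\delta^2)]$ and hence a difference of at most $4\delta+S_{11}^{\max}(2\delta+\delta^2)$, which is even slightly stronger than the stated $B_{\text{III}}^1$. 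With that refinement your argument is complete.
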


\begin{proof}
If Alice sends Bob a pulse of $k$ photons encoding a state $\rho$ and Bob measures in the basis $\mathcal{B}_\beta$, the probability that Bob reports the message $m=1$ to Alice is given by
\begin{equation}
\label{newa4}
P_{\text{report}}(1\lvert \beta, \rho, k)=\sum_{c_0=0}^1\sum_{c_1=0}^1S_{c_0c_1\beta}P_{\text{det}}(c_0,c_1\lvert \beta, \rho, k),
\end{equation}
for $\beta\in\{0,1\}$ and $k\in\{0,1,2,\ldots\}$. From (\ref{a5}) and (\ref{newa4}), we have
\begin{eqnarray}
\label{a6}
&&P_{\text{report}}(1\lvert \beta, \rho, k) \nonumber\\
&&\qquad = \eta_{\text{min}}\biggl(\frac{P_{\text{det}}(0,1\vert \beta, \rho, k )}{\eta_{1\beta}}+\frac{P_{\text{det}}(1,0\vert \beta, \rho, k )}{\eta_{0\beta}}\biggr)\nonumber\\
&&\qquad\qquad\qquad\qquad+S_{11\beta}P_\text{det}(1,1\vert \beta,\rho,k),
\end{eqnarray}
for $\beta\in\{0,1\}$ and $k\in\{0,1,2,\ldots\}$.

We consider the case $k=0$. From (\ref{a6}) and from Lemma \ref{lemma2}, we have
\begin{eqnarray}
\label{a7}
P_{\text{report}}(1\lvert \beta, \rho, 0) &=&\eta_{\text{min}}\biggl(\frac{(1-d_{0\beta})d_{1\beta}}{\eta_{1\beta}}+\frac{(1-d_{1\beta})d_{0\beta}}{\eta_{0\beta}}\biggr)\nonumber\\
&&\qquad\qquad+S_{11\beta}d_{0\beta}d_{1\beta},
\end{eqnarray}
for $\beta\in\{0,1\}$. Since $0<\eta_{\text{min}}\leq \eta_{i\beta}$, $0\leq d_{i\beta}\leq \delta<1$ and $S_{11\beta}\leq 1$, for $i,\beta\in\{0,1\}$, it follows straightforwardly from (\ref{a7}) that
\begin{equation}
\label{a8}
\bigl\lvert P_{\text{report}}(1\lvert 1, \rho, 0) - P_{\text{report}}(1\lvert 0, \rho, 0)\bigr\rvert \leq 2\delta.
\end{equation}

We consider the case $k=1$. From (\ref{a6}) and from Lemma \ref{lemma2}, \dpgg{i.e. substituting (\ref{xab2}) with $k=1$ in (\ref{a6}) and after simple algebra,} we have
\begin{eqnarray}
\label{a9}
&&P_{\text{report}}(1\lvert \beta, \rho, 1) \nonumber\\
&&\quad=\frac{\eta_{\text{min}}}{\eta_{1\beta}}\bigl[\eta_{1\beta}(1-q_\beta)-d_{0\beta}(1-q_\beta\eta_{0\beta})\nonumber\\
&&\quad\qquad+(d_{0\beta}+d_{1\beta}-d_{0\beta}d_{1\beta})\bigl(q_\beta(1-\eta_{0\beta})\nonumber\\
&&\quad\qquad+(1-q_\beta)(1-\eta_{1\beta})\bigr)\bigr]\nonumber\\
&&\qquad+\frac{\eta_{\text{min}}}{\eta_{0\beta}}\bigl[\eta_{0\beta}q_\beta-d_{1\beta}(1-(1-q_\beta)\eta_{1\beta})\nonumber\\
&&\qquad\qquad+(d_{0\beta}+d_{1\beta}-d_{0\beta}d_{1\beta})\bigl(q_\beta(1-\eta_{0\beta})\nonumber\\
&&\qquad\qquad+(1-q_\beta)(1-\eta_{1\beta})\bigr)\bigr]\nonumber\\
&&\qquad+S_{11\beta}\bigl[1+(1-d_{0\beta})(1-d_{1\beta})\bigl(q_\beta(1-\eta_{0\beta})\nonumber\\
&&\qquad\quad+(1-q_\beta)(1-\eta_{1\beta})\bigr)\nonumber\\
&&\qquad\quad-(1-d_{0\beta})(1-q_\beta\eta_{0\beta})\nonumber\\
&&\qquad\qquad\qquad-(1-d_{1\beta})\bigl(1-(1-q_\beta)\eta_{1\beta}\bigr)\bigr],
\end{eqnarray}
for $\beta\in\{0,1\}$. Since $0<\eta_{\text{min}}\leq \eta_{i\beta}$, $0\leq d_{i\beta}\leq \delta<1$ and $0\leq S_{11\beta}\leq S_{11}^\text{max}$, for $i,\beta\in\{0,1\}$, it follows straightforwardly from (\ref{a9}) that
\begin{eqnarray}
\label{a10}
&&\bigl\lvert P_{\text{report}}(1\lvert 1, \rho, 1) - P_{\text{report}}(1\lvert 0, \rho, 1)\bigr\rvert \nonumber\\
&&\qquad\qquad\qquad\leq 6\delta+2\delta^2+S_{11}^\text{max}(5\delta+\delta^2).
\end{eqnarray}
Thus, from (\ref{newa3}), (\ref{a8}) and (\ref{a10}), the claimed result (\ref{a3}) follows.
\end{proof}

\subsection{Probabilistic reporting strategies\label{gprs}}

In this paper, we introduce \emph{probabilistic reporting strategies}, which generalize the reporting strategies considered above. If a
  detection event $(c_0,c_1)$ occurs when Bob measures in the basis
  $\mathcal{B}_\beta$, Bob reports the message $m=1$ to Alice with
  some probability $S_{c_0c_1\beta}$, for
  $c_0,c_1,\beta\in\{0,1\}$. Thus, if Alice sends Bob a pulse of $k$
photons encoding a state $\rho$ and Bob measures in the basis
$\mathcal{B}_\beta$, the probability that Bob reports the message
$m=1$ is given by
\begin{equation}
\label{a4}
P_{\text{report}}(1\lvert \beta, \rho, k)=\sum_{c_0=0}^1\sum_{c_1=0}^1S_{c_0c_1\beta}P_{\text{det}}(c_0,c_1\lvert \beta, \rho, k),
\end{equation}
for $\beta\in\{0,1\}$ and
$k\in\{0,1,2,\ldots\}$. Note that the previous
  strategies, including symmetrization of losses, are special cases of
  probabilistic reporting strategies.

We define a \emph{trivial strategy} as a probabilistic reporting strategy satisfying
\begin{equation}
\label{new6}
S_{c_0c_1\beta}=S,
\end{equation}
for $c_0,c_1,\beta\in\{0,1\}$ and for some $S\in(0,1]$: we
  exclude $S=0$, which implies that Bob never reports a successful measurement.
A trivial strategy guarantees to Bob that Alice cannot learn any information about $\beta$ from his message $m$. Indeed, if (\ref{new6}) holds then from (\ref{a4}) we have
\begin{eqnarray}
\label{new7}
P_{\text{report}}(1\lvert \beta, \rho, k)&=&S\sum_{c_0=0}^1\sum_{c_1=0}^1P_{\text{det}}(c_0,c_1\lvert \beta, \rho, k)\nonumber\\
&=&S,
\end{eqnarray}
for $\beta\in\{0,1\}$, for any $k-$qubit state $\rho$ and for any $k\in\{0,1,2,\ldots\}$. 

If Bob uses a trivial strategy, he may as well take $S=1$, 
sending $m=1$ to Alice with unit probability.  As shown in Appendix \ref{appA}, this
  satisfies the correctness properties (\ref{new13}) and (\ref{new10}) if losses are low and Bob's detectors have high efficiency. 
However, a trivial strategy cannot satisfy (\ref{new13}) and (\ref{new10}) for a class of common experimental setups (see details in Appendix \ref{appA}).

In the following theorem we show that the only probabilistic reporting strategy with setup I guaranteeing to Bob perfect security against Alice is the trivial strategy (\ref{new6}). This is shown explicitly for the BB84 case and
numerically for the case of general bases. It follows that symmetrization of losses, being a particular probabilistic reporting strategy, does not guarantee security against Alice. See Appendix \ref{appE} for the proof of the theorem.

\begin{theorem}
\label{lemma3}
Suppose that 
\begin{eqnarray}
\label{lemmaa14}
d_{i\beta}&=&0,\\
\label{lemmaa15}
\eta_{i\beta}&=&\eta_i\in(0,1),\\
\label{lemmanew2}
\eta_0&\neq& \eta_1,
\end{eqnarray}
for $i,\beta\in\{0,1\}$, and $\mathcal{B}_0$ and $\mathcal{B}_1$ are
arbitrary distinct qubit orthogonal bases. If Alice sends Bob a pulse
of $k$ photons encoding a state $\rho$, with $k\in\{0,1,2\}$
chosen by Alice and unknown to Bob, then the only
  probabilistic reporting strategy that guarantees to Bob that Alice
  cannot obtain any information about $\beta$ from his message $m$ is
  the trivial strategy (\ref{new6}).
\end{theorem}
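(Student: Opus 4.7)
The plan is to enforce $P_{\text{report}}(1\vert 0,\rho,k)=P_{\text{report}}(1\vert 1,\rho,k)$ for every qubit state $\rho$ and each $k\in\{0,1,2\}$ that Alice may send, and to peel off the structure of $\{S_{c_0c_1\beta}\}$ case by case until the trivial strategy (\ref{new6}) is forced. For $k=0$, since $d_{i\beta}=0$, a vacuum pulse produces no clicks, so $P_{\text{det}}(c_0,c_1\vert\beta,\rho,0)=\delta_{c_0,0}\delta_{c_1,0}$ and $P_{\text{report}}(1\vert\beta,\rho,0)=S_{00\beta}$; equality in $\beta$ immediately yields $S_{000}=S_{001}$.

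For $k=1$, I would substitute the one-photon detection probabilities from Lemma \ref{lemma2} into (\ref{a4}) to obtain an affine function $P_{\text{report}}(1\vert\beta,\rho,1)=A_\beta q_\beta+B_\beta$, where $q_\beta:=\langle\psi_{0\beta}\vert\rho\vert\psi_{0\beta}\rangle$. The crucial geometric input is that, because $\mathcal{B}_0\neq\mathcal{B}_1$, the Bloch vectors of $\lvert\psi_{00}\rangle$ and $\lvert\psi_{01}\rangle$ are not collinear, so the pair $(q_0(\rho),q_1(\rho))$ sweeps out a two-dimensional region of $[0,1]^2$ as $\rho$ ranges over qubit density operators. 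Matching the two affine functions as $(q_0,q_1)$ varies independently forces $A_0=A_1=0$ and $B_0=B_1$, which together with the $k=0$ identity gives $S_{010}=S_{011}$ and $S_{100}=S_{101}$, plus the single homogeneous relation $S_{00}(\eta_1-\eta_0)+S_{10}\eta_0-S_{01}\eta_1=0$ (suppressing the now-redundant $\beta$ label).

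For $k=2$ I would use product pulses $\sigma^{\otimes 2}$ with $\langle\psi_{0\beta}\vert\sigma\vert\psi_{0\beta}\rangle=q_\beta$, so the split at the polarizing beam splitter is binomial and $P_{\text{report}}(1\vert\beta,\rho,2)$ becomes a quadratic in $q_\beta$. The $q^2$ coefficient contains $-2S_{11\beta}\eta_0\eta_1$ as its only remaining $\beta$-dependent term after the $k=1$ step, so matching across $\beta$ forces $S_{110}=S_{111}$. Independence of $q_0$ and $q_1$ then requires the $q$ and $q^2$ coefficients of the now-common quadratic to vanish, producing two further homogeneous linear equations among $(S_{00},S_{10},S_{01},S_{11})$.

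At this point I have three homogeneous linear equations in four unknowns, and the closing step is to show that $\eta_0\neq\eta_1$ constrains the solution set to the line spanned by $(1,1,1,1)$. I would eliminate as follows: substituting $S_{10}\eta_0=S_{01}\eta_1-S_{00}(\eta_1-\eta_0)$ from the $k=1$ relation into the $q$-coefficient equation collapses it to $(S_{11}-S_{01})\eta_0\eta_1=0$, giving $S_{11}=S_{01}$; the same substitution in the $q^2$-coefficient equation reduces it to $(S_{01}-S_{00})(\eta_1-\eta_0)\eta_1=0$, which because $\eta_0\neq\eta_1$ forces $S_{01}=S_{00}$; the $k=1$ relation then delivers $S_{10}=S_{00}$. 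This identifies the kernel with the trivial family (\ref{new6}) and covers arbitrary distinct qubit bases, since the bases enter only through ensuring the independence of $q_0$ and $q_1$. The main obstacle is the algebraic bookkeeping for $k=2$: one must expand the four detection probabilities carefully and verify that product two-photon test states already supply enough constraints, so that no entangled test state need be introduced.
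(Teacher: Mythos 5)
Your proposal is correct, and it takes a genuinely different — and in one respect stronger — route than the paper's proof. The paper extracts eight linear equations by evaluating the no-information condition on specific test states ($k=0$; $k=1$ with the four BB84 states; $k=2$ with $\lvert 00\rangle$, $\lvert 11\rangle$, $\lvert ++\rangle$) and solves the resulting system by hand for the computational/Hadamard case, then falls back on a symbolic Mathematica computation for arbitrary distinct bases. You instead parametrize everything by $(q_0,q_1)=\bigl(\langle\psi_{00}\vert\rho\vert\psi_{00}\rangle,\langle\psi_{01}\vert\rho\vert\psi_{01}\rangle\bigr)$ and observe that distinctness of the bases is exactly the condition under which the map $\rho\mapsto(q_0,q_1)$ has two-dimensional image, so that $1,q_0,q_1,q_0^2,q_1^2$ are linearly independent on it and the affine ($k=1$) and quadratic ($k=2$) identities can be matched coefficient by coefficient. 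I verified your algebra: the $k=1$ report probability is $S_{00\beta}(1-\eta_1)+S_{01\beta}\eta_1+q_\beta\bigl[S_{00\beta}(\eta_1-\eta_0)-S_{01\beta}\eta_1+S_{10\beta}\eta_0\bigr]$, giving your three identifications and the homogeneous relation; the $S_{11\beta}$ contribution to the $k=2$ quadratic is $-2\eta_0\eta_1 q_\beta^2+2\eta_0\eta_1 q_\beta$ (with vanishing constant term), so forcing the $q_\beta^2$ coefficients to vanish yields $S_{110}=S_{111}$; and your two eliminations do collapse to $\eta_0\eta_1(S_{11}-S_{01})=0$ and $\eta_1(\eta_1-\eta_0)(S_{01}-S_{00})=0$ respectively, after which the $k=1$ relation gives $S_{10}=S_{00}$. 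What your approach buys is a fully analytical proof for arbitrary distinct bases, replacing the paper's numerical step; what it costs is essentially nothing, since restricting to product two-photon test states is legitimate (extra constraints from entangled states could only shrink the solution set further, and the converse direction — that the trivial strategy leaks nothing for all $\rho$ — is the paper's Eq.~(\ref{new7})). The only point worth making explicit in a write-up is that ``distinct bases'' must be read as distinct unordered sets, so that the Bloch vectors of $\lvert\psi_{00}\rangle$ and $\lvert\psi_{01}\rangle$ are non-collinear rather than merely unequal; otherwise the $(q_0,q_1)$ image degenerates to a line segment and the coefficient-matching argument fails.
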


\begin{table*}
\centering
\begin{center}
\begin{tabular}{ | c | c | c | }
\hline
\textbf{multiphoton} & \multirow{2}{*}{\textbf{Applies if}}  & \multirow{2}{*}{\textbf{Proposed countermeasures}} \\
\textbf{attack} & & \\
\hline
\multirow{3}{*}{I} & \multirow{3}{*}{$S_{11\beta}=0$ or $S_{11\beta}$ is small enough}  &Reporting double clicks with unit or high probability,  \\
&  & i.e. setting $S_{11\beta}\approx 1$; or, using the trivial reporting \\
&  &   strategy if consistent with experimental parameters\\
\hline
\multirow{3}{*}{II} & $S_{11\beta}=1$ or $S_{11\beta}$ is large enough &Making Bob's detection efficiencies exactly equal, or \\
& and Bob's detection efficiencies are &  sufficiently close; alternatively, using the trivial reporting\\
&  different, or not close enough & strategy  if consistent with experimental parameters \\
\hline
\end{tabular}
\caption{\textbf{Summary of multiphoton attacks I and II.}
  The multiphoton attacks I and II are briefly described in the main
  text. The table indicates for which parameters $S_{11\beta}$ of the probabilistic reporting strategies introduced in the main text the attacks apply and the main proposed countermeasures discussed in the main text. The value of $\beta$ runs over $\{0,1\}$.}
\label{table2}
\end{center}
\end{table*}

\begin{table*}
\centering
\begin{center}
\begin{tabular}{ | c | c | c | c | c | c | c |}
\hline
\textbf{Reporting} & \multirow{2}{*}{\textbf{Definition}}  & \textbf{Brief description} & \textbf{Considered}& \textbf{Subject to} &\textbf{Subject to} &\textbf{Known security}\\
\textbf{strategy}& & \textbf{and comments}& \textbf{before?} &\textbf{attack I?} &\textbf{attack II?}& \textbf{guarantees for Bob} \\
\hline
\multirow{2}{*}{I} & $S_{00\beta}=S_{11\beta}=0$, & Bob only reports & \multirow{2}{*}{Yes} & \multirow{2}{*}{Yes} & \multirow{2}{*}{No} & \multirow{2}{*}{None} \\
 & $S_{01\beta}=S_{10\beta}=1$  & single clicks & & & & \\
\hline
\multirow{6}{*}{II} &  &    &  & \multirow{6}{*}{No} & Yes, if & Perfect security \\
 & $S_{00\beta}=0$,  &  &  & & Bob's & against arbitrary\\
  & $S_{01\beta}=1,$  & Bob reports single & Yes (e.g. in &  &detection &  multiphoton attacks\\
   & $S_{10\beta}=1,$  & and double clicks & Refs. \cite{BBBGST11,LCCLWCLLSLZZCPZCP14}) & &efficiencies & if Bob's detection\\
      &  $S_{11\beta}=1$  &  & & & are & efficiencies are exactly\\
            &   &  & & & different & equal (see Lemma \ref{lemma0})\\
\hline
\multirow{6}{*}{III} &  & \multirow{6}{*}{Defined by (\ref{a5})}   &  &  &  & Good security if Bob \\
 &  $S_{00\beta}=0$,  &  & No, case & Yes, if & Yes, if & guarantees receiving only\\
  &  $S_{01\beta}=\frac{\eta_{\text{min}}}{\eta_{1\beta}}$, &  &  $S_{11\beta}>0$ &  $S_{11\beta}$ is & $S_{11\beta}$ is &  pulses with zero or one\\
   & $S_{10\beta}=\frac{\eta_{\text{min}}}{\eta_{0\beta}}$, & & introduced & small & large & photons and having small\\
      & $S_{11\beta}\in[0,1]$   &  & here & enough & enough & dark count probabilities \\
            &   &  & & & &  (see Lemma \ref{lemma1})\\
\hline
 & Ideally defined &  It aims to  &   & \multirow{8}{*}{Yes} & \multirow{8}{*}{No} &  \multirow{8}{*}{See reporting strategy III}\\
 &  as special case &  symmetrize & Yes, &  &  & \\
 &  of reporting & the detection &  introduced &   &  & \\
 Symmetrization  & strategy  III   &  probabilities  & by Ref. \cite{NJMKW12} &  &  & \\
     of losses  &  with $S_{11\beta}=0$,  & of Bob's   & and used  &  &  &  \\
            &  but precisely  & detectors, but & in Refs. & & &  \\
                        &   defined  by &  fails in general & \cite{NJMKW12,LKBHTKGWZ13,PJLCLTKD14,ENGLWW14} & & &  \\
                                    &    (\ref{new1}) and (\ref{new1.0}) &  (see Theorem \ref{lemma3})  & & & &  \\
\hline
 \multirow{13}{*}{Trivial} &  & & &  \multirow{13}{*}{No} &  \multirow{13}{*}{No} & It is the only known\\
 &   & &  & & & reporting strategy that\\
 &   &  &  & & & guarantees Bob perfect \\
 &   & Bob reports all   & & & &  security against arbitrary  \\
  & $S_{00\beta}=S$,   & detection events  & Yes, in ideal & & & multiphoton attacks (see  \\
  & $S_{01\beta}=S$,   & with the same  & protocols & & & Theorem \ref{lemma3}), but is not \\
    & $S_{10\beta}=S$, & probability $S$; & without & & & compatible with the\\
    & $S_{11\beta}=S$, & where normally & losses, & & & correctness criteria (\ref{new13}) \\
     & $S\in(0,1]$ & $S=1$, but we & for example & & &  and (\ref{new10}) for common\\
        &  & allow $S\in(0,1]$ &  & & & experimental setups\\
                & & &  & & & (see details in\\
                                & & &  & & & Appendix \ref{appA})\\
\hline
   \multirow{5}{*}{Probabilistic} &   & Includes as & No, the & Yes, if  & Yes, if  & \\
 & $S_{c_0c_1\beta}\in[0,1]$   & special cases & general case & $S_{11\beta}$  & $S_{11\beta}$ & See previous\\
  & for  & the reporting & was  & is &  is & reporting \\
  &  $(c_0,c_1)\in\{0,1\}^2$   & strategies &  introduced & small &  large & strategies\\
    &    & discussed above &  here & enough &  enough & \\
\hline
\end{tabular}
\caption{\textbf{Summary of reporting strategies discussed in the main text and their relationships with multiphoton attacks I and II.} In the table, $S_{c_0c_1\beta}$ indicates that Bob sends Alice the message $m=1$ with probability $S_{c_0c_1\beta}$ if the pulse received from her activates a detection event $(c_0,c_1)\in\{0,1\}^2$, or $m=0$ otherwise, when he applies the measurement $\mathcal{B}_\beta$, where $c_i=1$ ($c_i=0$) indicates that the detector D$_i$ clicks (does not click), for $i\in\{0,1\}$ (see setup I illustrated in Fig. \ref{fig1}). The value of $\beta$ runs over $\{0,1\}$.}
\label{table3}
\end{center}
\end{table*}


\section{Detection probabilities in multiphoton attacks}
\label{methods}
\begin{lemma}
\label{lemma2}
Suppose that Alice sends Bob a pulse of $k$ photons, each encoding a qubit state $\rho_{\text{qubit}}$. That is, the state encoded by the pulse is $\rho=\rho_{\text{qubit}}^{\otimes k}$. We define
\begin{equation}
\label{xab1main}
q_\beta=\langle \psi_{0\beta}\rvert \rho_{\text{qubit}}\lvert \psi_{0\beta}\rangle,
\end{equation}
for $\beta\in\{0,1\}$. Then \begin{eqnarray}
\label{xab2}
&&P_{\text{det}}(0,0\vert \beta, \rho ,k)\nonumber\\
&&\quad=(1\!-\!d_{0\beta})(1\!-\!d_{1\beta})\bigl[q_{\beta}(1\!-\!\eta_{0\beta})\!+\!(1\!-\!q_\beta)(1\!-\!\eta_{1\beta})\bigr]^k,\nonumber\\
&&P_{\text{det}}(0,1\vert \beta, \rho ,k)\nonumber\\
&&\quad=(1\!-\!d_{0\beta})\bigl(1\!-\!q_{\beta}\eta_{0\beta}\bigr)^k\nonumber\\
&&\quad\quad-(1\!-\!d_{0\beta})(1\!-\!d_{1\beta})\bigl[q_{\beta}(1\!-\!\eta_{0\beta})\!+\!(1\!-\!q_\beta)(1\!-\!\eta_{1\beta})\bigr]^k,\nonumber\\
&&P_{\text{det}}(1,0\vert \beta, \rho ,k)\nonumber\\
&&\quad=(1\!-\!d_{1\beta})\bigl(1\!-\!(1\!-\!q_{\beta})\eta_{1\beta}\bigr)^k\nonumber\\
&&\quad\quad-(1\!-\!d_{0\beta})(1\!-\!d_{1\beta})\bigl[q_{\beta}(1\!-\!\eta_{0\beta})\!+\!(1\!-\!q_\beta)(1\!-\!\eta_{1\beta})\bigr]^k,\nonumber\\
&&P_{\text{det}}(1,1\vert \beta, \rho ,k)\nonumber\\
&&\quad =1+(1\!-\!d_{0\beta})(1\!-\!d_{1\beta})\bigl[q_{\beta}(1\!-\!\eta_{0\beta})\!+\!(1\!-\!q_\beta)(1\!-\!\eta_{1\beta})\bigr]^k\nonumber\\
&&\quad\quad-(1\!-\!d_{0\beta})\bigl(1\!-\!q_{\beta}\eta_{0\beta}\bigr)^k\nonumber\\
&&\quad\quad-(1\!-\!d_{1\beta})\bigl(1\!-\!(1\!-\!q_{\beta})\eta_{1\beta}\bigr)^k,
\end{eqnarray}
for $\beta\in\{0,1\}$ and $k\in\{0,1,2\ldots\}$. Now consider that Alice prepares a pulse of $k$ photons with the probability distribution of a coherent source with average photon number $\mu>0$, for $k\in\{0,1,2\ldots\}$. Then the probability $P_{\text{det}}^{\text{cs}}(c_0,c_1\vert \beta,\rho,\mu)$ of the detection event $(c_0,c_1)$ is given by
\begin{eqnarray}
\label{ccc6lemma}
&&P_{\text{det}}^{\text{cs}}(0,0\vert \beta,\rho,\mu)\nonumber\\
&&\qquad=(1-d_{0\beta})(1-d_{1\beta})e^{-\mu\bigl[q_\beta\eta_{0\beta}+(1-q_\beta)\eta_{1\beta}\bigr]},\nonumber\\
&&P_{\text{det}}^{\text{cs}}(0,1\vert \beta,\rho,\mu)\nonumber\\
&&\qquad=(1-d_{0\beta})e^{-\mu q_\beta\eta_{0\beta}}\nonumber\\
&&\qquad\qquad-(1-d_{0\beta})(1-d_{1\beta})e^{-\mu\bigl[q_\beta\eta_{0\beta}+(1-q_\beta)\eta_{1\beta}\bigr]},\nonumber\\
&&P_{\text{det}}^{\text{cs}}(1,0\vert \beta,\rho,\mu)\nonumber\\
&&\qquad =(1-d_{1\beta})e^{-\mu(1-q_\beta)\eta_{1\beta}}\nonumber\\
&&\qquad\qquad-(1-d_{0\beta})(1-d_{1\beta})e^{-\mu\bigl[q_\beta\eta_{0\beta}+(1-q_\beta)\eta_{1\beta}\bigr]},\nonumber\\
&&P_{\text{det}}^{\text{cs}}(1,1\vert \beta,\rho,\mu)\nonumber\\
&&\qquad=1+(1-d_{0\beta})(1-d_{1\beta})e^{-\mu\bigl[q_\beta\eta_{0\beta}+(1-q_\beta)\eta_{1\beta}\bigr]}\nonumber\\
&&\qquad \qquad-(1-d_{0\beta})e^{-\mu q_\beta\eta_{0\beta}}-(1-d_{1\beta})e^{-\mu(1-q_\beta)\eta_{1\beta}},\nonumber\\
\end{eqnarray}
for $c_0,c_1,\beta\in\{0,1\}$ and $\mu>0$.
\end{lemma}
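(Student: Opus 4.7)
The plan is to reduce everything to elementary binomial/inclusion-exclusion identities by exploiting two facts: (i) the pulse factorises as $\rho_{\text{qubit}}^{\otimes k}$, so the photons are independent with respect to Bob's separable polarisation measurement, and (ii) the polarising beam splitter sends a photon measured in state $|\psi_{0\beta}\rangle$ ($|\psi_{1\beta}\rangle$) to detector $\text{D}_0$ ($\text{D}_1$), while the two detectors click independently given the photon numbers routed to each, as stated in the setup.

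First I would observe that, by the Born rule on the product state together with the projective measurement in $\mathcal{B}_\beta$, each photon is routed to $\text{D}_0$ with probability $q_\beta=\langle\psi_{0\beta}|\rho_{\text{qubit}}|\psi_{0\beta}\rangle$ and to $\text{D}_1$ with probability $1-q_\beta$, independently across photons. Hence the number $k_0$ of photons reaching $\text{D}_0$ is distributed as $\text{Bin}(k,q_\beta)$, and $k_1=k-k_0$ reach $\text{D}_1$. From the independence of dark counts and photo-detections stated before Lemma \ref{lemma0}, the conditional probabilities that $\text{D}_i$ fails to click given the photon counts are
\begin{equation}
P_0(0|k_0)=(1-d_{0\beta})(1-\eta_{0\beta})^{k_0},\quad P_1(0|k_0)=(1-d_{1\beta})(1-\eta_{1\beta})^{k-k_0},
\end{equation}
with the two events conditionally independent.

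Next I would compute $P_{\text{det}}(0,0|\beta,\rho,k)$ by summing the joint conditional non-click probability against the binomial distribution of $k_0$, which factors to
\begin{equation}
(1-d_{0\beta})(1-d_{1\beta})\sum_{k_0=0}^{k}\binom{k}{k_0}\bigl[q_\beta(1-\eta_{0\beta})\bigr]^{k_0}\bigl[(1-q_\beta)(1-\eta_{1\beta})\bigr]^{k-k_0},
\end{equation}
and is collapsed by the binomial theorem to the stated closed form. By the same argument, the marginal probability that $\text{D}_0$ alone fails to click is obtained by removing the $\text{D}_1$ factor and yields $(1-d_{0\beta})(1-q_\beta\eta_{0\beta})^{k}$, and symmetrically for $\text{D}_1$. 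The three remaining detection probabilities then follow from elementary inclusion-exclusion: $P_{\text{det}}(0,1|\beta,\rho,k)$ equals the marginal "$\text{D}_0$ silent" probability minus the "both silent" one, symmetrically for $(1,0)$, and $P_{\text{det}}(1,1|\beta,\rho,k)=1-P(\text{D}_0\text{ silent})-P(\text{D}_1\text{ silent})+P(\text{both silent})$.

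Finally, for the coherent-source case I would average each of the four closed-form expressions against the Poisson weight $p_k=e^{-\mu}\mu^k/k!$ using the identity $\sum_{k\ge 0}e^{-\mu}\mu^k x^k/k!=e^{\mu(x-1)}$ applied with $x=q_\beta(1-\eta_{0\beta})+(1-q_\beta)(1-\eta_{1\beta})$, $x=1-q_\beta\eta_{0\beta}$ and $x=1-(1-q_\beta)\eta_{1\beta}$ respectively; these three substitutions turn the three $k$-th powers into the three exponentials appearing in (\ref{ccc6lemma}), and the inclusion-exclusion structure carries over termwise. There is no real obstacle beyond bookkeeping; the only step that needs care is verifying the independence claim in (i), since the factorised measurement statistics depend on both $\rho=\rho_{\text{qubit}}^{\otimes k}$ being a product and on $\mathcal{B}_\beta$ being measured identically on every photon, so the binomial reduction would break down for entangled inputs or mixed measurement bases across the pulse.
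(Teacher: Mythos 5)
Your proposal is correct and follows essentially the same route as the paper: condition on the binomial split $k_0\sim\text{Bin}(k,q_\beta)$ of photons between the detectors, use independence of dark counts and photo-detections, collapse the sums with the binomial theorem, and Poisson-average for the coherent source. The only (cosmetic) difference is that you obtain the $(0,1)$, $(1,0)$ and $(1,1)$ cases from the two single-detector marginals by inclusion-exclusion, whereas the paper writes out each of the four sums explicitly before applying the binomial theorem; the resulting closed forms are identical.
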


\begin{proof}
Let $k_0$ be the number of photons that go to detector D$_0$. The number of photons that go to detector D$_1$ is $k_1=k-k_0$.

As previously stated, in our model, the dark counts and each photo-detection are independent random events. It follows straightforwardly from (\ref{xab1main}) that:
\begin{eqnarray}
\label{xaccc1}
&&P_{\text{det}}(0,0\vert \beta, \rho ,k)\nonumber\\
&&\quad =(1-d_{0\beta})(1-d_{1\beta})\sum_{k_0=0}^k \biggl[\begin{pmatrix}
k \\ k_0
\end{pmatrix} (q_\beta)^{k_0}(1-q_\beta)^{k-k_0}\times\nonumber\\
&&\qquad\qquad\times(1-\eta_{0\beta})^{k_0}(1-\eta_{1\beta})^{k-k_0}\biggr],\nonumber\\
&&P_{\text{det}}(0,1\vert \beta, \rho ,k)\nonumber\\
&&\quad=\sum_{k_0=0}^k \biggl\{\begin{pmatrix}
k \\ k_0
\end{pmatrix} (q_\beta)^{k_0}(1-q_\beta)^{k-k_0}\times\nonumber\\
&&\qquad\qquad\times\Bigl[(1-d_{0\beta})(1-\eta_{0\beta})^{k_0}\bigl(1-(1-\eta_{1\beta})^{k-k_0}\bigr)\nonumber\\
&&\qquad\qquad\qquad+d_{1\beta}(1-d_{0\beta})(1-\eta_{0\beta})^{k_0}(1-\eta_{1\beta})^{k-k_0}\Bigr]\biggr\},\nonumber\\
&&P_{\text{det}}(1,0\vert \beta, \rho ,k)\nonumber\\
&&\quad=\sum_{k_0=0}^k \biggl\{\begin{pmatrix}
k \\ k_0
\end{pmatrix} (q_\beta)^{k_0}(1-q_\beta)^{k-k_0}\times\nonumber\\
&&\qquad\qquad\times\Bigl[(1-d_{1\beta})\bigl(1-(1-\eta_{0\beta})^{k_0}\bigr)(1-\eta_{1\beta})^{k-k_0}\nonumber\\
&&\qquad\qquad\qquad+d_{0\beta}(1-d_{1\beta})(1-\eta_{0\beta})^{k_0}(1-\eta_{1\beta})^{k-k_0}\Bigr]\biggr\},\nonumber\\
&&P_{\text{det}}(1,1\vert \beta, \rho ,k)\nonumber\\
&&\quad=\sum_{k_0=0}^k \biggl\{\begin{pmatrix}
k \\ k_0
\end{pmatrix} (q_\beta)^{k_0}(1-q_\beta)^{k-k_0}\times\nonumber\\
&&\qquad\qquad\times\Bigl[\bigl(1-(1-\eta_{0\beta})^{k_0}\bigr)\bigl(1-(1-\eta_{1\beta})^{k-k_0})\nonumber\\
&&\qquad\qquad\qquad+d_{1\beta}\bigl(1-(1-\eta_{0\beta})^{k_0}\bigr)(1-\eta_{1\beta})^{k-k_0}\nonumber\\
&&\qquad\qquad\qquad+d_{0\beta}(1-\eta_{0\beta})^{k_0}\bigl(1-(1-\eta_{1\beta})^{k-k_0}\bigr)\nonumber\\
&&\qquad\qquad\qquad+d_{0\beta}d_{1\beta}(1-\eta_{0\beta})^{k_0}(1-\eta_{1\beta})^{k-k_0}\Bigr]\biggr\},
\end{eqnarray}
for $\beta\in\{0,1\}$ and $k\in\{0,1,2,\dots\}$. Using the binomial theorem, (\ref{xab2}) follows straightforwardly from (\ref{xaccc1}), as claimed.

Now we suppose that Alice generates photon pulses with a coherent source of average photon number $\mu$. The probability that a pulse has $k$ photons is $p_k=e^{-\mu}\mu^k/k!$, hence, the probability $P_{\text{det}}^{\text{cs}}(c_0,c_1\vert\beta,\rho,\mu)$ that, for a pulse, the detector $\text{D}_0$ clicks if $c_0=1$ (does not click if $c_0=0$) and the detector $\text{D}_1$ clicks if $c_1=1$ (does not click if $c_1=0$) is given by
\begin{eqnarray}
\label{wcseq}
P_{\text{det}}^{\text{cs}}(c_0,c_1\vert\beta,\rho,\mu)&=&\sum_{k=0}^{\infty}p_kP_{\text{det}}(c_0,c_1\vert \beta, \rho ,k)\nonumber\\
&=&\sum_{k=0}^{\infty}\frac{e^{-\mu}\mu^k}{k!}P_{\text{det}}(c_0,c_1\vert \beta, \rho ,k),\nonumber\\
\end{eqnarray}
for $c_0,c_1,\beta\in\{0,1\}$ and $\mu>0$. Thus, from (\ref{xaccc1}) and (\ref{wcseq}), and using $e^x=\sum_{k=0}^\infty \frac{x^k}{k!}$ for $x>0$, we obtain (\ref{ccc6lemma}), as claimed.
\end{proof}

\section{Experiment}
\label{experiment}
\subsection{Experimental evaluation of the detection probabilities in a multiphoton attack}

\begin{figure}
\includegraphics[scale=0.57]{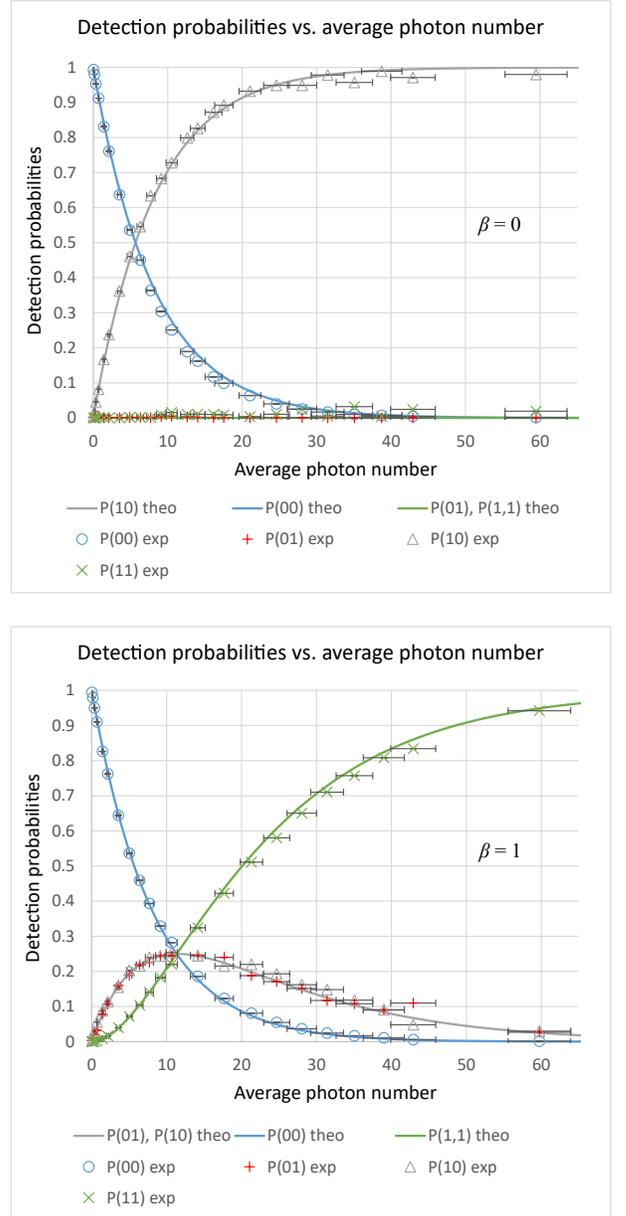}
\caption{\label{newplots} \textbf{Detection probabilities.} We plot the experimental estimates obtained for the detection probabilities $P_{\text{det}}^{\text{cs}}(c_0,c_1\vert \beta,\rho,\mu)$, indicated with the labels ``$P(c_0c_1)$ exp'', and their theoretical predictions given by (\ref{first}), indicated with the labels ``$P(c_0c_1)$ theo'', with $\eta=0.12$, for various values of the average photon number $\mu$, for the qubit state $\rho_\text{qubit}=\lvert 0\rangle\langle 0\rvert$, and for $c_0,c_1,\beta\in\{0,1\}$. The horizontal and vertical uncertainty bars are included. The vertical uncertainty bars are so small that they are unnoticeable and lie within the corresponding markers.}
\end{figure}

We experimentally evaluated the validity of the theoretical
predictions for the detection probabilities given by
(\ref{ccc6lemma}), in Lemma \ref{lemma2}. For this, Alice prepared
the state with vertical polarization using a weak coherent photon source and varied the average photon number $\mu$. This polarization corresponds to a qubit state
$\rho_\text{qubit}=\lvert 0\rangle\langle 0\rvert$, which has Bloch vector along the $z$ axis, given by $\vec{r}=(0,0,1)$.
That is, as above, each pulse encodes in the polarization a
$k-$qubit
state $\rho=\rho_\text{qubit}^{\otimes
  k}$, where
$k$
has a probability distribution $p_k=\frac{e^{-\mu}\mu^k}{k!}$,
for $k\in\{0,1,2,\ldots\}$.
For $c_0,c_1,\beta\in\{0,1\}$,
we fixed Bob's measurement basis to $\mathcal{B}_\beta$
and computed the frequency of the detection events
$(c_0,c_1)$
for a large number of pulses $N=4\times 10^6$, estimating in this way the detection probability
$P_\text{det}^\text{cs}(c_0,c_1\vert\beta,\rho,\mu)$.
As above, Bob's measurement bases $\mathcal{B}_0$
and $\mathcal{B}_1$
are the computational and Hadamard bases, respectively.

In practice, this was done as follows. Pulses of 1 $\mu$s with vertical polarization were generated by Alice at a repetition rate of 200 kHz for $20$ s using two fibered amplitude modulators connected in series. An unbalanced coupler was then used for power calibration in order to fix the value of $\mu$. The estimated uncertainty for the coupler's splitting ratio is $\pm5$ \%, and the uncertainty on the power measurement is also $\pm5$ \% from the relevant data sheet. Combining these uncertainties in quadrature, this gives a total uncertainty on $\mu$ of $\pm7$ \%. One arm transmitted the signal, while the other was sent to a photodiode to calibrate the power. Bob's measurement basis was selected using a half wave plate, with angle uncertainty of $\pm 0.25 ^{\circ}$, hence contributing with $\pm 0.5^{\circ}$ and $\pm 1^{\circ}$ uncertainties in the physical polarization angle and in the Bloch sphere angle, respectively. The polarization was then projected using a fibered polarization beam splitter, and detected with two id230 free-running avalanche detectors, at Bob's site (see Fig. \ref{fig1}).

In Fig. \ref{newplots}, we have plotted the obtained experimental estimates
of $P_\text{det}^\text{cs}(c_0,c_1\vert\beta,\rho,\mu)$,
for $c_0,c_1,\beta\in\{0,1\}$ and
for various values of $\mu$. Fig. \ref{newplots} also shows the plots for the theoretical
predictions given by (\ref{ccc6lemma}). In our theoretical
predictions, we assume that $d_{i\beta}=0$
and $\eta_{i\beta}=\eta\in(0,1)$,
for $i,\beta\in\{0,1\}$.
We measured a value of $\eta=0.12$. Note that this value includes Bob's detection efficiencies and the probability of loss in the quantum channel. Using the definition (\ref{xab1main}), we have  that $q_0=\langle
0\rvert \rho_\text{qubit}\rvert 0\rangle=\frac{1}{2}(1+r_z)$
and $q_1=\langle +\rvert \rho_\text{qubit}\rvert
+\rangle=\frac{1}{2}(1+r_x)$, where $r_z$ and $r_x$ are the
$z$
and the $x$
components of the Bloch vector $\vec{r}$, and where $\lvert+\rangle=\frac{1}{\sqrt{2}}\bigl(\lvert 0\rangle+\lvert 1\rangle\bigr)$. Thus, we have $q_0=1$
and $q_1=\frac{1}{2}$. By substituting these values in
(\ref{ccc6lemma}), we obtain the plotted theoretical curves. These are given by
\begin{eqnarray}
\label{first}
P_{\text{det}}^{\text{cs}}(0,0\vert \beta,\rho,\mu) &=& e^{-\mu\eta}, \nonumber\\
P_{\text{det}}^{\text{cs}}(0,1\vert 0,\rho,\mu)&=&0,\nonumber\\
P_{\text{det}}^{\text{cs}}(1,0\vert 0,\rho,\mu)&=&1-e^{-\mu\eta},\nonumber\\
P_{\text{det}}^{\text{cs}}(1,1\vert 0,\rho,\mu)&=&0,\nonumber\\
P_{\text{det}}^{\text{cs}}(c,\bar{c}\vert 1,\rho,\mu)&=&e^{-\frac{\mu \eta}{2}}-e^{-\mu\eta},\nonumber\\
P_{\text{det}}^{\text{cs}}(1,1\vert 1,\rho,\mu)&=&1+e^{-\mu\eta}-2e^{-\frac{\mu \eta}{2}},\nonumber\\
\end{eqnarray}
 for $\beta,c\in\{0,1\}$. In Fig. \ref{newplots}, we observe a good agreement between the experimental estimates of the detection probabilities $P_{\text{det}}^{\text{cs}}(c_0,c_1\vert \beta,\rho,\mu)$ and the theoretical predictions given by (\ref{first}), for $c_0,c_1,\beta\in\{0,1\}$. The slight disagreements observed for a few experimental points with the theoretical predictions of (\ref{first}) are possibly explained by the uncertainties on some experimental quantities, like the detection efficiency and the measurement bases, for instance, in addition to the uncertainties on the average photon number $\mu$ given by the horizontal uncertainty bars.

\subsection{Experimental simulation of multiphoton attack I}

We illustrate a version of multiphoton attack I in which Alice sends Bob a photon pulse prepared with a coherent source, with average photon number $\mu>>1$ chosen by Alice. We assume that Bob applies reporting strategy I, i.e. he only sets valid measurement outcomes, and sends Alice the message $m=1$, for pulses that generate a click in only one of his detectors. We consider the case that $\mathcal{B}_0$ and $\mathcal{B}_1$ are the computational and Hadamard bases, respectively. Each photon in the pulse encodes a qubit state $\rho_{\text{qubit}}=\lvert \psi_{00}\rangle\langle \psi_{00}\rvert=\lvert 0\rangle\langle 0\rvert$. Thus, we can use Lemma \ref{lemma2}, with $q_0=1$ and $q_1=\frac{1}{2}$ in (\ref{xab1main}). If Bob sends Alice the message $m=1$, Alice guesses that $\beta=0$. If Bob sends $m=0$, Alice guesses that $\beta=1$. Thus, since we suppose that Bob generates $\beta$ randomly, Alice's probability $P_\text{guess}^\text{cs}(\mu)$ to guess Bob's bit $\beta$ in this attack is given by
\begin{eqnarray}
\label{attack}
P_\text{guess}^\text{cs}(\mu)&=&\frac{1}{2}\bigl[P_{\text{det}}^\text{cs}(1,0\vert 0, \rho ,\mu)+P_{\text{det}}^\text{cs}(0,1\vert 0, \rho ,\mu)\nonumber\\
&&\qquad+P_{\text{det}}^\text{cs}(0,0\vert 1, \rho ,\mu)+P_{\text{det}}^\text{cs}(1,1\vert 1,\rho, \mu)\bigl],\nonumber\\
\end{eqnarray}
where $P_{\text{det}}^\text{cs}(c_0,c_1\vert \beta, \rho ,\mu)$ is given by (\ref{ccc6lemma}), for $c_0,c_1,\beta\in\{0,1\}$, and where $q_0=1$ and $q_1=\frac{1}{2}$. In the simple case that $d_{i\beta}=0$ and $\eta_{i\beta}=\eta\in(0,1)$, for $i,\beta\in\{0,1\}$, and $q_0=1$ and $q_1=\frac{1}{2}$, (\ref{ccc6lemma}) reduces to (\ref{first}). Thus, from (\ref{first}) and (\ref{attack}), we obtain  (\ref{attackexmain}).

We implemented an experimental simulation of the multiphoton attack described above. In our simulation, Alice and Bob use setup I illustrated in Fig. \ref{fig1}. Alice's photon source is a coherent source with average photon number $\mu$. Bob measures the polarization of the pulse in the basis $\mathcal{B}_0$ or in the basis $\mathcal{B}_1$, which are the horizontal-vertical polarization basis and the diagonal-diagonal basis (corresponding to polarization angle of $45^{\circ}$ and $-45^{\circ}$ from the horizontal axis towards the vertical axis), respectively. In the Bloch sphere, these are the computational and Hadamard bases, respectively.

Alice sets the value of $\mu$ and sends Bob photon pulses with the same polarization, in vertical polarization, corresponding to the qubits state $\lvert 0\rangle$. That is, each pulse encodes in the polarization a $k-$qubit state $\rho=\rho_\text{qubit}^{\otimes k}$, where $\rho_\text{qubit}=\lvert 0\rangle\langle 0\rvert$ is a qubit state, and where $k$ has a probability distribution $p_k=\frac{e^{-\mu}\mu^k}{k!}$, for $k\in\{0,1,2,\ldots\}$. For $c_0,c_1,\beta\in\{0,1\}$, we fix Bob's measurement basis to $\mathcal{B}_\beta$ and compute the frequency of the detection events $(c_0,c_1)$ for a large number of pulses $N$, estimating experimentally in this way the value of the detection probability $P_\text{det}^{\text{cs}}(c_0,c_1\vert\beta,\rho,\mu)$. Alice's probability $P_\text{guess}^\text{cs}(\mu)$ to guess Bob's bit $\beta$ in the considered attack is given by (\ref{attack}). 

We have plotted the obtained experimental estimates of $P_\text{guess}^\text{cs}(\mu)$ for various values of $\mu$ in Fig. \ref{guessingexp}. In practice, the experimental data plotted in Fig. \ref{guessingexp} corresponds to the same experimental data plotted in Fig. \ref{newplots}, but using (\ref{attack}). As follows from (\ref{attack}), computing $P_\text{guess}^\text{cs}(\mu)$ requires some probabilities $P_\text{det}^{\text{cs}}(c_0,c_1\vert\beta,\rho,\mu)$ with both values of $\beta\in\{0,1\}$. Because in our experimental points of the plots in Fig. \ref{newplots} for different value of $\beta\in\{0,1\}$, the values of the average photon number are close but not exactly equal, we obtain each point in the experimental plot of Fig. \ref{guessingexp} by taking $\mu$ as the average $\mu=\frac{\mu_0+\mu_1}{2}$, where $\mu_i$ is the value of the average photon number for the corresponding experimental point in the plot of Fig. \ref{newplots} with $\beta=i$, for $i\in\{0,1\}$. As explained above, the experimental uncertainty for each value of $\mu_i$ is $7$  \%, for $i\in\{0,1\}$. The uncertainty for each value of $\mu$ in the experimental plot of Fig. \ref{guessingexp} is obtained by combining in quadrature the uncertainties for the corresponding values of $\mu_0$ and $\mu_1$ given in the experimental plots of Fig. \ref{newplots}. The theoretical prediction for $P_\text{guess}^\text{cs}(\mu)$ given by (\ref{attackexmain}) is also shown in Fig. \ref{guessingexp}, for comparison with the experimental data. A good agreement between the experimental data and the theoretical prediction is observed. As mentioned for the plots of Fig. \ref{newplots}, the slight disagreements observed for a few experimental points in the plot of Fig. \ref{guessingexp} with the theoretical predictions of (\ref{attackexmain}) are expected due to the uncertainties on some experimental quantities, like the detection efficiency and the measurement bases, for example, in addition to the uncertainties on the average photon number $\mu$ given by the horizontal uncertainty bars.

\subsection{Statistical Information}
The number of photon pulses used for estimating the probabilities $P_\text{det}^\text{cs}(c_0,c_1\vert \beta,\rho,\mu)$ plotted in Fig. \ref{newplots} and the probability $P_\text{guess}^\text{cs}(\mu)$ plotted in Fig. \ref{guessingexp} for each value of the average photon number $\mu$ was $N = 4\times 10^6$, for $c_0,c_1,\beta\in\{0,1\}$. The uncertainty for each experimental estimate of the probabilities $P_\text{det}^\text{cs}(c_0,c_1\vert \beta,\rho,\mu)$ plotted in Fig. \ref{newplots} was taken as the standard error, given by 
\begin{equation}
\label{standarderror}
\Delta P = \sqrt{ \frac{P (1-P)}{N}},
\end{equation}
where $P$ is the obtained experimental estimate of the probability $P_\text{det}^\text{cs}(c_0,c_1\vert \beta,\rho,\mu)$,  for $c_0,c_1,\beta\in\{0,1\}$ and for each value of $\mu$. The uncertainty for each experimental estimate of the probabilities $P_\text{guess}^\text{cs}(\mu)$ plotted in Fig. \ref{guessingexp} was obtained using (\ref{attack}) and (\ref{standarderror}), combining in quadrature the uncertainties of the experimental estimates for the  probabilities $P_\text{det}^\text{cs}(c_0,c_1\vert \beta,\rho,\mu)$ appearing in (\ref{attack}).

\dpg{
\section{Private measurement of an unknown qudit state}

In this paper we have analysed multiphoton attacks for the task of private measurement of an unknown qubit state. This task is easily extended to \emph{private measurement of an unknown qudit state}, where Alice sends Bob a quantum state of Hilbert space dimension $N_\text{out}\geq 2$, Bob measures it in one of $N_\text{bas}\geq 2$ orthogonal bases with $N_\text{out}$ outcomes. Let $\mathbb{Z}_N=\{0,1,\ldots,N-1\}$, for all integers $N\geq 2$. An ideal protocol to implement this task is the following:
\begin{enumerate}
\item Alice prepares a quantum state $\lvert \psi\rangle$ of Hilbert space dimension $N_\text{out}$ randomly from a set 
\begin{equation}
\mathcal{S}=\bigl\{\lvert \psi_{ij}\rangle\vert i\in\mathbb{Z}_{N_\text{out}}, j\in\mathbb{Z}_{N_\text{bas}}\bigr\}\nonumber
\end{equation}
and sends it to Bob. 
\item Bob generates a random number $\beta\in\mathbb{Z}_{N_\text{bas}}$ privately and measures $\lvert\psi\rangle$ in an orthogonal  basis $\mathcal{B}_{\beta}=\{\lvert\psi_{i\beta}\rangle\}_{i\in\mathbb{Z}_{N_{\text{out}}}}$.
\end{enumerate}
This protocol reduces to the ideal protocol to implement the task of private measurement of an unknown qubit state presented in section \ref{PMQS} in the case $N_\text{bas}=N_\text{out}=2$.

A protocol to implement the task is called $\epsilon_\text{guess}-$secure against Alice if Alice's probability $P_\text{guess}$ to guess Bob's chosen number $\beta$ satisfies
\begin{equation}
\label{zzz1}
P_\text{guess}\leq \frac{1}{N_\text{bas}}+\epsilon_\text{guess},
\end{equation}
for any possible strategy (not necessarily honestly following
the protocol) of Alice. As before, we say the protocol is secure against
Alice if $\epsilon_\text{guess}\rightarrow 0$ as some protocol security parameter
is increased. In general, a dishonest Alice may deviate in
any way from the protocol. The correctness criteria (\ref{new13}) and (\ref{new10}) remain the same.

An ideal setup allows us to effectively implement
the task with
perfect security against Alice ($\epsilon_\text{guess}=0$) and perfect correctness ($P_\text{det}=1$ and $P_\text{error}=0$). However, in practical implementations,  
the ideal protocol and setup have to be altered to allow for imperfect
sources, channels and measuring devices. 

Investigating practical protocols to implement this task  with photonic setups for the general case $N_\text{bas}, N_\text{out}\geq 2$ and the multiphoton attacks arising is left as an open problem. However, our results can be easily extended if the bases $\mathcal{B}_j$ are products of qubit bases, for $j\in\mathbb{Z}_{N_\text{bas}}$. 

For example, if $N_\text{bas}=N_\text{out}=2^n$ and the bases $\mathcal{B}_0,\ldots,\mathcal{B}_{N_\text{bas}-1}$ are products of $n$ qubit bases, for some integer $n\geq 1$, then $n$ consecutive implementations of the practical protocol of section \ref{APP} with setup I (see Fig. \ref{fig1}), labelled by $i=1,2,\ldots,n$, could be used as a subroutine to implement the task. In particular, if for $i=1,2,\ldots,n$, in the $i$th instance of the practical protocol with setup I, 1) Bob sends Alice the bit $m_i$ using reporting strategy II, and 2) both detectors have efficiency $\eta$, then Lemma \ref{lemma0} can be straightforwardly extended to show that Alice cannot obtain any information about $\beta$ from Bob's message $m=(m_1,m_2,\ldots,m_n)$. Thus, in this case, Bob is perfectly protected against arbitrary multiphoton attacks by Alice.

\subsection{An extension of Lemma \ref{lemma0}}

Lemma \ref{lemma0} can also be generalized to the following setup, which is an extension of setup I for the case $N_\text{bas}\geq 2,N_\text{out}\geq 2$. Alice sends Bob a photon pulse of $k$ photons, encoding an arbitrary quantum state $\rho$, where each photon encodes a quantum state of Hilbert space dimension $N_\text{out}$ in some degrees of freedom. It is clear that for $N_\text{out}>2$, these degrees of freedom cannot be the photon's polarization, but can be the time-bin, for example. Bob's setup includes a quantum measuring device, with  $N_\text{out}$ threshold single photon detectors $\text{D}_0,\ldots,\text{D}_{N_{\text{out}}-1}$. Bob sets his system to apply a quantum measurement in the orthogonal basis $\mathcal{B}_{\beta}$, which has $N_\text{out}$ outcomes, for some $\beta\in\mathbb{Z}_{\text{bas}}$. 

In an ideal setting in which Bob's detectors are perfectly efficient and do not have dark counts, there are not any losses of the transmitted photons, and Alice sends Bob a pulse with exactly one photon, exactly one of Bob's detectors clicks with unit probability. However, in a realistic scenario in which these conditions do not hold, any combination of detectors can click with some non-zero probability.

Bob reports to Alice that the pulse was (was not) successfully measured, by sending her the message $m=1$ ($m=0$) if at least one detector clicks (if no detector clicks). This is a straightforward extension of reporting strategy II to this generalized setting. Suppose that the detector D$_i$ has dark count probability $d_i$ and efficiency $\eta_i=\eta$, for $i\in\mathbb{Z}_{N_\text{out}}$. The proof of Lemma \ref{lemma0} can be straightforwardly extended to show that the probability that Bob sends the message $m=1$ to Alice is given by
\begin{equation}
\label{lastequation}
P_\text{report}(1\vert \beta,\rho,k)=1-(1-\eta)^k\prod_{i=0}^{N_{\text{out}}-1}(1-d_i),
\end{equation}
for $\beta\in\mathbb{Z}_{N_\text{bas}}$ and $k\in\{0,1,2,\ldots\}$. Thus, Alice cannot learn any information about $\beta$ from the message $m$.
}

\section{Discussion}

\label{exotic}
We have highlighted known, and introduced new, attacks on photonic
implementations of mistrustful quantum cryptographic protocols. 
These arise because Alice may send multiphoton
pulses that are different from the single photon pulses
envisaged in the ideal versions of the protocols and that
give her statistically distinguishable results when Bob
uses standard detectors.   

Lemma \ref{lemma0} suggests that a possible
  countermeasure against multiphoton attacks is to make the
  efficiencies of Bob's detectors as close as possible. Bob
  could use attenuators for this purpose; but care should be taken to
  guarantee that these act linearly, i.e that their action on
  multiphoton states is given by the product of their action on
  individual photons, and that their action does not depend on
  wavelength or other degrees of freedom that Alice could exploit to
  her advantage, as in the side-channel attacks discussed below. We
  have investigated this possibility in Appendix \ref{appB}
  by deriving security bounds when Bob applies reporting strategies II or III that provide perfect security against
  multiphoton attacks in the limit that the difference of the
  detector efficiencies equals zero (see Lemmas \ref{lemma5} -- \ref{lemmanewbound} in Appendix \ref{appB}). However, as discussed in Appendix \ref{appB}, our security bounds are not useful
  in practical settings unless this difference is extremely
  small. We leave as an open problem to investigate whether this
  countermeasure can provide useful security guarantees against
  multiphoton attacks in practice.
  
Another possible countermeasure is for Bob to use variations of the setup analysed here to probabilistically infer if a pulse is multiphoton. For example, Bob could use photon-number-resolving (PNR) detectors (e.g. \cite{ASSBW03,FJPF03,KYS08}) instead of threshold detectors, and abort if the detector determines that the received pulse has more than a threshold number of photons; approximate PNR detectors can be built from various beam splitters and threshold single photon detectors or from a single threshold detector using a time-multiplexing technique \cite{ASSBW03,FJPF03}, for instance. 
This measure may be helpful against attacks in which Alice sends one or more pulses with large photon number. Note though that it tends to reduce Bob's effective detection efficiencies. 


Another possible countermeasure is that Bob aborts if he obtains a double click. In particular, in extensions of the task presented here in which Alice sends Bob $N>>1$ pulses, Bob aborts if the number of double-click events is above a threshold. In this way, dishonest Alice must limit her number of multiphoton pulses. We have outlined in Appendix \ref{bounds} how to derive security bounds if Bob can guarantee that a sufficiently small fraction of pulses are multiphoton, if $N\delta \ll 1$, where $\delta$ is the dark count rate. However, this is a very tight condition for large $N$.

The previous two countermeasures cannot guarantee perfect protection in practice. As discussed, a realistic source emits multiphoton pulses with nonzero probability. Thus, a dishonest technologically advanced Alice can mimic the source by choosing the photon number of each pulse sent to mimic the overall source statistics. In this way, Bob cannot detect that Alice is cheating, but Alice can still  learn Bob's measurement basis with nonzero probability. We have given an explicit analysis of this attack in an extension of the task with $N>>1$ pulses in Appendix \ref{double}.


A way to perfectly avoid multiphoton attacks is to use the trivial reporting strategy in which Bob does not report any losses to Alice. As previously discussed, this measure 
does not allow Bob to satisfy the correctness criteria (\ref{new13}) and (\ref{new10}) for a broad class of \dpg{practical experimental parameters using setup I} (see Appendix \ref{appA}). However, we believe that current technology could allow good security for a range of useful tasks based on private measurement of an unknown qubit state \dpg{using setup I}. For example, the experiment of Ref. \cite{LLRZBLBZCFZP21} reports detection efficiency higher than $80\%$ and state fidelity higher than $99\%$.




Another approach to avoid multiphoton attacks is to use protocols that are not based on the task of private measurement of an unknown qubit state presented here. For example, protocols using a reversed version of this task, in which Bob sends quantum states to Alice that must remain private to her, could be used instead. However, this opens other security problems. In particular, since in practice, Bob's photon source emits multiphoton pulses with nonzero probability, Alice can unambiguously distinguish the received states with nonzero probability. This applies in practice to the quantum coin flipping protocol of Ref. \cite{ZYWCCGH15}, for instance.


We have focused here on the simple experimental setup of Fig. \ref{fig1}
  with two detectors. However, our analyses can be extended to more general setups with more detectors. In Appendix \ref{NJMKW12} we have investigated a setup with four detectors, setup II, (see Fig. \ref{setupII} in Appendix \ref{NJMKW12}). There, we analyse extensions of reporting strategies I, II and III to setup II. We extend multiphoton attack I to this setup and show that it makes implementations insecure if Bob only reports single clicks as successful measurements (see Fig. \ref{plots} in Appendix \ref{NJMKW12}). We extend Lemma \ref{lemma0} to setup II and show that if Bob's detectors have equal detection efficiencies and dark count probabilities independent of the measurement bases then an extension of reporting strategy II guarantees security against arbitrary multiphoton attacks (see Lemma \ref{setup2equaldet} in Appendix \ref{NJMKW12}). We extend multiphoton attack II to this setup and show that it makes implementations insecure if Bob's detectors have different efficiencies
  (see Lemma \ref{lastlemma} and Fig. \ref{lastfig} in Appendix \ref{NJMKW12}). We also extend Lemma \ref{lemma1} to this setup (see Lemma \ref{lemmaSLII} in Appendix \ref{NJMKW12}) and leave the extension of Theorem \ref{lemma3} to this setup as an open problem.

In addition to the degrees of freedom of the photon pulses,
  like polarization, on which Alice and Bob agreed that Alice would
  encode the quantum states, Alice may control further degrees of
  freedom of the photon pulses to her advantage, giving rise to
  various \emph{side-channel attacks} previously considered in the literature of quantum key distribution \cite{XMZLP20}, which can be straightforwardly adapted to
  mistrustful quantum cryptography \dpggg{(e.g. \cite{HBAM18})}. These degrees of freedom may
  include, for example, the light reflected from Bob's setup to
  Alice's setup in Trojan-horse attacks \cite{GFKZR06}; the time at
  which the pulse is sent, the pulse wavelength, polarization, spatial
  degrees of freedom like angle of incidence, etc. in the detection-efficiency-mismatch attacks \cite{MAS06,FTLM09}; the arrival time of
  the pulse in the time-shift attacks \cite{ZFQCL08}; the intensity of
  the pulse in the blinding \cite{M09} and bright-illumination
  \cite{LWWESM10} attacks; the time separation between consecutive
  pulses in the dead-time attacks \cite{LCCLWCLLSLZZCPZCP14}; and the
  mean photon number \cite{SRKBPMLM15}. 
 
More generally, in a side-channel attack, Alice may send any system S in place of
the expected photon pulse.   
For example, S could be a different type of elementary particle, or some
more complex system.   
It is hard to make any
general statement about the resulting measurement statistics at
Bob's detectors.   
At the more exotic end of the range of possibilities, one
can imagine \cite{LC99} Alice sending through the quantum channel miniaturised robots that are programmed to analyse the measurement settings and 
to trigger detections following statistics of Alice's choice.
Although this last possibility may perhaps seem unrealistic, the
limitation is ultimately technological, meaning that a
security proof that excludes such cases cannot strictly be
said to guarantee unconditional security \cite{LC99}. Even if Alice is restricted to photons or particles of small mass, 
it would be very hard for Bob to analyse all the possible
detector statistics and the scope they offer for side-channel attacks.


Measurement-device independent protocols protect against multiphoton and detector side-channel attacks \cite{LCQ12,ZYWCCGH15}. But they can be vulnerable to side-channel attacks on the source and other security problems arising from imperfect sources -- for example, photon sources emitting multiphoton pulses with non-zero probability, as discussed above. Countermeasures against these attacks have been proposed in QKD (e.g \cite{NPCT21}). These do not in general apply to mistrustful quantum cryptography because Alice is assumed honest in QKD. However, it might be useful to investigate them in the context of mistrustful cryptography.

Fully device-independent protocols provide unconditional security, in principle, from the violation of a Bell inequality  \dpggg{\cite{BHK05,AK15.2,KST20}}. But this opens further challenges, which include closing the locality, detection \cite{Pearle70} and collapse-locality \cite{K05} loopholes. Although the detection and locality loopholes have been simultaneously closed experimentally \cite{HBDRKBRVSAAPMMTEWTH15}, the collapse-locality loophole remains largely untested \cite{K20}.

  Another countermeasure
  that provides unconditional security, in
  principle, is for Bob to filter Alice's signals via teleportation,
as suggested in another context by Lo-Chau \cite{LC99}. Indeed, if Bob teleports the quantum state
  encoded in each pulse sent by Alice to a photon entering his setup,
using an ideal teleportation device, 
  he is guaranteed that his setup does not receive anything else than
  a single photon. Thus, if Bob applies reporting strategy III, he is guaranteed from Lemma \ref{lemma1} that Alice
can get negligible information about his bit
  $\beta$ when the dark count probabilities of his detectors are very
  small.   In practice, when preparing pairs of entangled
  photons with parametric down conversion, there is a small but
  nonzero probability of producing pairs with more than one photon. 
Thus, in practice Bob cannot guarantee that pulses with
  more than one photon do not enter his setup. Although the quantum
  states of the multiphoton pulses entering Bob's setup with this
  countermeasure are out of Alice's control, 
Bob must guarantee that Alice cannot exploit this imperfection to
learn some information about $\beta$. We believe this countermeasure deserves further investigation.

An approach to avoid multiphoton and side-channel attacks perfectly is to use protocols not requiring any quantum communication. Some mistrustful cryptographic tasks (e.g. bit commitment \cite{LKBHTWZ15,CCL15} and coin flipping \cite{K99.2,PG21}) can be implemented by purely classical protocols with relativistic constraints, while still satisfying unconditional security -- at least in principle.
   Whether classical or quantum protocols are advantageous in any given context depends on resource costs and other trade-offs.    Our discussion highlights costs and security issues for quantum protocols that need to be taken into account.


We conclude that implementations of theoretically unconditionally secure mistrustful quantum cryptographic protocols need very careful
analysis of practical possibilities for attacks of the type
described.  We hope our analyses of multiphoton and side-channel attacks will
serve as cautionary examples and stimulate further investigations.

\begin{acknowledgments}
A.K. and D.P.G. thank Yang Liu, F\'{e}lix Bussi\`{e}res and Erika Andersson for helpful conversations.
M.B., A.C. and E.D. acknowledge financial support from the French National Research Agency (ANR) project quBIC. A.K. and D.P.G. acknowledge financial support from the UK Quantum Communications Hub grants no. EP/M013472/1 and EP/T001011/1. A.K. is partially supported  by Perimeter Institute for Theoretical Physics. Research at Perimeter Institute is supported by the Government of Canada through Industry Canada and by the Province of Ontario through the Ministry of Research and Innovation.

\textbf{Author contributions}
D.P.G. conceived the project. D.P.G. did the majority of the theoretical work, with input from A.K. The experiment was conceived by M.B., A.C., E.D. and D.P.G. M.B. and A.C. constructed the experimental setup and took the experimental data. D.P.G. analysed the experimental data with input from M.B. and A.C. A.K. and D.P.G. wrote the manuscript with input from M.B., A.C. and E.D.
  \end{acknowledgments}

\appendix

\section{The trivial reporting strategy and the correctness properties}
\label{appA}
\begin{lemma}
\label{lemma4}
Let Alice and Bob implement the practical
protocol. Let $P_{\text{error}\vert c_0c_1\beta}$ be the probability
that Bob obtains the wrong measurement outcome when he measures in the
basis $\mathcal{B}_\beta$ of preparation by Alice and obtains a
detection event $(c_0,c_1)$, for $c_0,c_1,\beta\in\{0,1\}$. If the
photon pulse produces a detection event $(c,c)$, Bob assigns a random
measurement outcome, which we assume gives a wrong outcome with some
probability $P_{\text{error}\vert cc\beta}$ such that
$\frac{1}{2}-\delta_{\text{error}}^{\text{equal}}\leq
P_{\text{error}\vert cc\beta}\leq
\frac{1}{2}+\delta_{\text{error}}^\text{equal}$,
for a small $\delta_{\text{error}}^{\text{equal}}>0$ and for
$c,\beta\in\{0,1\}$. We assume that
$P_{\text{error}\vert c\bar{c}\beta}\leq
\delta_{\text{error}}^{\text{diff}}$,
for a small $\delta_{\text{error}}^{\text{diff}}>0$ and for
$c,\beta\in\{0,1\}$. Let
$P_{\text{det}}(c_0,c_1\vert \beta)$ be the probability that
the pulse sent by Alice produces a detection event $(c_0,c_1)$ in
Bob's detectors when Bob measures in the basis $\mathcal{B}_\beta$,
for $c_0,c_1,\beta\in\{0,1\}$.

If the trivial reporting strategy (\ref{new6}) is implemented with $S=1$ using a \emph{first class} of experimental setups with small probability of loss and high detection efficiencies, satisfying $P_{\text{det}}(c,c\vert \beta)\leq \delta_{cc}^{\text{I}}$, for a small $\delta_{cc}^{\text{I}}>0$ and for $c,\beta\in\{0,1\}$, then the condition (\ref{new13}) of the correctness property is satisfied for any $\delta_{\text{det}}\leq 1$, and the condition (\ref{new10}) of the correctness property is guaranteed if 
\begin{equation}
\label{new9}
(\delta_{00}^{\text{I}}+\delta_{11}^{\text{I}})\Bigl(\frac{1}{2}+\delta_{\text{error}}^{\text{equal}}\Bigr)+\delta_{\text{error}}^{\text{diff}}\leq \delta_\text{error}.
\end{equation}

If a probabilistic reporting strategy (\ref{a4}) is implemented using
a \emph{second class} of experimental setups where Bob's
detection efficiencies are small and the probability of loss is high,
satisfying that
$P_{\text{det}}(0,0\vert \beta)\geq
  1-\delta_{00}^{\text{II}}$,
for a small $\delta_{00}^{\text{II}}>0$ and for
$\beta\in\{0,1\}$, and the conditions (\ref{new13}) and (\ref{new10}) hold with
\begin{eqnarray}
\label{new14}
\delta_{\text{det}}&>& \frac{\delta_{00}^{\text{II}}}{5(1-\delta_{00}^{\text{II}})},\\
\label{new15}
\delta_{\text{error}}&<& \frac{1}{12}-\frac{\delta_{\text{error}}^{\text{equal}}}{6},
\end{eqnarray}
then there exists $(c_0,c_1)\in\{(0,1),(1,0),(1,1)\}$ and $\beta\in\{0,1\}$ satisfying $S_{c_0c_1\beta}>S_{00\beta}$, i.e. the reporting strategy is not the trivial one.
\end{lemma}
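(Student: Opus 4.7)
My plan is to handle the two claims in Lemma~\ref{lemma4} independently, since they concern opposite regimes: Part 1 is a direct verification for the trivial $S=1$ strategy in the low-loss regime, while Part 2 is a proof by contrapositive showing that any probabilistic strategy constrained by $S_{c_0 c_1 \beta}\leq S_{00\beta}$ for every $(c_0,c_1)\neq(0,0)$ and $\beta$ cannot simultaneously meet the correctness criteria in the high-loss regime.

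For Part 1, I will first observe that under the trivial strategy with $S=1$ Bob reports $m=1$ for every pulse, so $P_\text{report}(1\vert\beta)=1$, which satisfies (\ref{new13}) trivially for any $\delta_\text{det}\leq 1$. Since $m=1$ always, the conditional error coincides with the unconditional weighted sum
\begin{equation*}
P_\text{error}=\sum_{c_0,c_1}P_\text{det}(c_0,c_1\vert\beta)P_{\text{error}\vert c_0 c_1 \beta}.
\end{equation*}
I will split this into contributions from the double/no-click events, for which Bob assigns a random outcome with $P_{\text{error}\vert cc\beta}\leq \frac{1}{2}+\delta_{\text{error}}^{\text{equal}}$, and from the single-click events, for which $P_{\text{error}\vert c\bar{c}\beta}\leq \delta_{\text{error}}^{\text{diff}}$. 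Using the hypotheses $P_\text{det}(c,c\vert\beta)\leq\delta_{cc}^{\text{I}}$ for $c\in\{0,1\}$ together with $P_\text{det}(0,1\vert\beta)+P_\text{det}(1,0\vert\beta)\leq 1$ will bound $P_\text{error}$ above by the left-hand side of (\ref{new9}), which by hypothesis is at most $\delta_\text{error}$.

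For Part 2, I will argue by contrapositive. Assuming $S_{c_0 c_1 \beta}\leq S_{00\beta}$ for every $(c_0,c_1)\in\{(0,1),(1,0),(1,1)\}$ and $\beta\in\{0,1\}$, substituting into (\ref{a4}) and using $\sum_{c_0,c_1}P_\text{det}(c_0,c_1\vert\beta)=1$ will yield $P_\text{report}(1\vert\beta)\leq S_{00\beta}$. The key step is to lower-bound the conditional error by isolating the contribution from the $(0,0)$ event, which under the high-loss hypothesis $P_\text{det}(0,0\vert\beta)\geq 1-\delta_{00}^{\text{II}}$ dominates the reports: the numerator of the Bayes ratio defining $P_\text{error}$ is at least $S_{00\beta}(1-\delta_{00}^{\text{II}})(\frac{1}{2}-\delta_{\text{error}}^{\text{equal}})$, and dividing by the denominator bound $S_{00\beta}$ gives
\begin{equation*}
P_\text{error}\geq (1-\delta_{00}^{\text{II}})\Bigl(\tfrac{1}{2}-\delta_{\text{error}}^{\text{equal}}\Bigr).
\end{equation*}
Combining this with (\ref{new10}) and rewriting (\ref{new15}) as $\delta_\text{error}<\frac{1}{6}(\frac{1}{2}-\delta_{\text{error}}^{\text{equal}})$ will force $1-\delta_{00}^{\text{II}}<\frac{1}{6}$, i.e.\ $\delta_{00}^{\text{II}}>\frac{5}{6}$, contradicting the smallness assumption on $\delta_{00}^{\text{II}}$. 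The main technical point is choosing the cleanest route from the qualitative hypothesis to a sharp numerical inequality: only the $(0,0)$ contribution to the numerator and the trivial denominator bound $P_\text{report}(1\vert\beta)\leq S_{00\beta}$ are needed, since these cancel the unknown $S_{00\beta}$. The auxiliary hypothesis (\ref{new14}) plays a supporting role, ensuring that $\delta_\text{det}$ is bounded away from zero so the induced inequality $S_{00\beta}\geq\delta_\text{det}$ is meaningful and consistent with the assumed smallness of $\delta_{00}^{\text{II}}$.
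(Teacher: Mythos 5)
Your proof is correct. Part 1 follows the paper's own argument essentially verbatim: with $S=1$ every pulse is reported, so (\ref{new13}) is immediate, and splitting the error sum into the $(c,c)$ events (weight at most $\delta_{00}^{\text{I}}+\delta_{11}^{\text{I}}$, error at most $\frac{1}{2}+\delta_{\text{error}}^{\text{equal}}$) and the $(c,\bar c)$ events (total weight at most $1$, error at most $\delta_{\text{error}}^{\text{diff}}$) gives exactly the left-hand side of (\ref{new9}).

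Part 2, however, takes a genuinely different route from the paper's. The paper first establishes, \emph{without} the contrapositive assumption, the intermediate bound $S_{00\beta}<\delta_{00}^{\text{II}}/\bigl(5(1-\delta_{00}^{\text{II}})\bigr)$ (its Eq. (\ref{new16})) by bounding $P_{\text{error}}\geq \bigl(\frac{1}{2}-\delta_{\text{error}}^{\text{equal}}\bigr)/\bigl(1+\delta_{00}^{\text{II}}/(S_{00\beta}(1-\delta_{00}^{\text{II}}))\bigr)$ and using $S_{c_0c_1\beta}\leq 1$ in the denominator; it then combines (\ref{new16}) with the assumption $S_{c_0c_1\beta}\leq S_{00\beta}$ to force $P_{\text{det}}<\delta_{\text{det}}$, contradicting (\ref{new13}) and (\ref{new14}). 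You instead push the contrapositive assumption directly into the denominator of the Bayes ratio, so that $S_{00\beta}$ cancels and you get the clean bound $P_{\text{error}}\geq (1-\delta_{00}^{\text{II}})\bigl(\frac{1}{2}-\delta_{\text{error}}^{\text{equal}}\bigr)$, whence (\ref{new10}) and (\ref{new15}) give $\delta_{00}^{\text{II}}>\frac{5}{6}$. Both arguments are valid, and yours is shorter and makes the cancellation transparent; the trade-off is that your contradiction rests on the informal hypothesis that $\delta_{00}^{\text{II}}$ is small, with (\ref{new13}) and (\ref{new14}) reduced to ruling out the degenerate case $S_{00\beta}=0$, whereas the paper's contradiction is driven entirely by the quantitative hypotheses (\ref{new13})--(\ref{new15}) and yields the explicit bound on $S_{00\beta}$ around which the constant in (\ref{new14}) is calibrated. (If you want to avoid invoking smallness, note that $\delta_{00}^{\text{II}}>\frac{5}{6}$ makes the right-hand side of (\ref{new14}) exceed $1$, which is incompatible with $\delta_{\text{det}}\leq P_{\text{det}}\leq 1$.) Two small points to tighten: the inequality you cite is really $\max_{\beta}S_{00\beta}\geq\delta_{\text{det}}$, so you should run the error argument for a $\beta$ with $S_{00\beta}>0$, whose existence (\ref{new13}) guarantees; and dividing out $\frac{1}{2}-\delta_{\text{error}}^{\text{equal}}$ in the last step requires $\delta_{\text{error}}^{\text{equal}}<\frac{1}{2}$, which the smallness hypothesis supplies.
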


\begin{proof}
  We consider the case of the first class of experimental
    setups following the trivial strategy (\ref{new6}) with $S=1$. It
  follows straightforwardly from (\ref{new6}) and (\ref{new7}), since
  $S=1$, that Bob's probability to report Alice's pulse as producing a
  measurement outcome is $P_{\text{det}}=1$.  Thus, the
  condition (\ref{new13}) is satisfied for any
  $\delta_{\text{det}}\leq 1$, as claimed. The probability of error
  $P_{\text{error}}$ is computed for the case in which Bob's
  basis $\mathcal{B}_\beta$ equals Alice's basis and Bob reports the
  pulse as valid ($m=1$), which in this case occurs with unit
  probability.  We have
  \begin{eqnarray}
\label{new8}
P_{\text{error}}&=&\sum_{c_0=0}^1\sum_{c_1=0}^1P_{\text{det}}(c_0,c_1\vert \beta)P_{\text{error}\vert c_0c_1\beta}\nonumber\\
&\leq&(\delta_{00}^{\text{I}}+\delta_{11}^{\text{I}})\Bigl(\frac{1}{2}+\delta_{\text{error}}^{\text{equal}}\Bigr)+\delta_{\text{error}}^{\text{diff}}.
\end{eqnarray}
Thus, the condition (\ref{new10}) of the correctness property is guaranteed if (\ref{new9}) holds, as claimed.

Now we consider the case of the second class of experimental setups that follows a probabilistic reporting strategy. We show that satisfaction of (\ref{new10}) and (\ref{new15}) require
\begin{equation}
\label{new16}
S_{00\beta}< \frac{\delta_{00}^{\text{II}}}{5(1-\delta_{00}^{\text{II}})},
\end{equation}
for $\beta\in\{0,1\}$. Then we show that satisfaction of (\ref{new13}), (\ref{new14}) and (\ref{new16}) imply that there exists $(c_0,c_1)\in\{(0,1),(1,0),(1,1)\}$ and $\beta\in\{0,1\}$ satisfying $S_{c_0c_1\beta}>S_{00\beta}$, i.e. that the reporting strategy is not the trivial one.

We show that (\ref{new16}) follows from (\ref{new10}) and (\ref{new15}). Suppose that (\ref{new16}) does not hold. That is, suppose that there exists $\beta'\in\{0,1\}$ for which 
\begin{equation}
\label{new17}
S_{00\beta'}\geq \frac{\delta_{00}^{\text{II}}}{5(1-\delta_{00}^{\text{II}})}.
\end{equation}
We show from (\ref{new17}) that
\begin{equation}
\label{new18}
P_{\text{error}}\geq \frac{1}{12}-\frac{\delta_{\text{error}}^{\text{equal}}}{6}.
\end{equation}
However, from (\ref{new10}) and (\ref{new15}) we see that (\ref{new18}) cannot hold. Thus, (\ref{new17}) cannot hold either. Thus, (\ref{new16}) holds.

We show (\ref{new18}) from (\ref{new17}). The probability of error $P_{\text{error}}$ is computed for the case in which Bob's basis $\mathcal{B}_\beta$ equals Alice's basis and Bob reports the pulse as valid ($m=1$). We consider the case $\beta=\beta'$. We have
\begin{eqnarray}
\label{new11}
P_{\text{error}}&=&\frac{\sum_{c_0,c_1}S_{c_0c_1\beta'}P_{\text{det}}(c_0,c_1\vert \beta')P_{\text{error}\vert c_0c_1\beta'}}{\sum_{c_0,c_1}S_{c_0c_1\beta'}P_{\text{det}}(c_0,c_1\vert \beta')}\nonumber\\
&\geq&\frac{S_{00\beta'}P_{\text{det}}(0,0\vert \beta')P_{\text{error}\vert 00\beta'}}{\sum_{c_0,c_1}S_{c_0c_1\beta'}P_{\text{det}}(c_0,c_1\vert \beta')}\nonumber\\
&\geq&\frac{P_{\text{error}\vert 00\beta'}}{1+\frac{\sum_{(c_0,c_1)\neq(0,0)}S_{c_0c_1\beta'}P_{\text{det}}(c_0,c_1\vert \beta')}{S_{00\beta'}P_{\text{det}}(00\vert \beta')}}\nonumber\\
&\geq&\frac{P_{\text{error}\vert 00\beta'}}{1+\frac{\sum_{(c_0,c_1)\neq(0,0)}P_{\text{det}}(c_0,c_1\vert \beta')}{S_{00\beta'}P_{\text{det}}(00\vert \beta')}}\nonumber\\
&\geq&\frac{\frac{1}{2}-\delta_{\text{error}}^{\text{equal}}}{1+\frac{\delta_{00}^{\text{II}}}{S_{00\beta'}(1-\delta_{00}^{\text{II}})}}\nonumber\\
&\geq&\frac{1}{6}\Bigl(\frac{1}{2}-\delta_{\text{error}}^{\text{equal}}\Bigr),
\end{eqnarray}
which is the claimed bound (\ref{new18}), where $c_0$ and $c_1$ run over $\{0,1\}$, where in the fourth line we used $S_{c_0c_1\beta'}\leq 1$ for $c_0,c_1\in\{0,1\}$, in the fifth line we used that $P_{\text{error}\vert cc\beta}\geq \frac{1}{2}-\delta_{\text{error}}^\text{equal}$, $\sum_{(c_0,c_1)\neq(0,0)}P_{\text{det}}(c_0,c_1\vert \beta')=1-P_{\text{det}}(0,0\vert \beta')$ and $P_{\text{det}}(0,0\vert \beta')\geq1-\delta_{00}^{\text{II}}$, and in the last line we used (\ref{new17}).

We show that satisfaction of (\ref{new13}), (\ref{new14}) and (\ref{new16}) imply that there exists $(c_0,c_1)\in\{(0,1),(1,0),(1,1)\}$ and $\beta\in\{0,1\}$ satisfying $S_{c_0c_1\beta}>S_{00\beta}$. We suppose that this does not hold, i.e. that
\begin{equation}
\label{new19}
S_{c_0c_1\beta}\leq S_{00\beta},
\end{equation}
for $c_0,c_1,\beta\in\{0,1\}$. We show that this implies a contradiction. From (\ref{new16}) and (\ref{new19}), we show that the probability that Bob reports Alice's photon pulse as activating a measurement outcome  $P_{\text{det}}$ satisfies
\begin{equation}
\label{new20}
P_{\text{det}}< \frac{\delta_{00}^{\text{II}}}{5(1-\delta_{00}^{\text{II}})}.
\end{equation}
However, from (\ref{new13}) and (\ref{new14}), we see that (\ref{new20}) cannot hold. Thus, (\ref{new19}) does not hold, as claimed.

We show (\ref{new20}) from (\ref{new16}) and (\ref{new19}). We have
\begin{eqnarray}
\label{new21}
P_{\text{det}}&=&\sum_{c_0,c_1,\beta}P(\beta)S_{c_0c_1\beta}P_{\text{det}}(c_0,c_1\vert \beta)\nonumber\\
&\leq&\sum_\beta P(\beta)S_{00\beta} \sum_{c_0,c_1}P_{\text{det}}(c_0,c_1\vert \beta)\nonumber\\
&\leq& \max_{\beta\in\{0,1\}}\{S_{00\beta}\}\nonumber\\
&<& \frac{\delta_{00}^{\text{II}}}{5(1-\delta_{00}^{\text{II}})},
         \end{eqnarray}
       where $c_0,c_1$ and $\beta$ run over $\{0,1\}$, where in the
       second
 line we used (\ref{new19}), in the third line we used that
 $\sum_{c_0,c_1}P_{\text{det}}(c_0,c_1\vert \beta)=1$ and
 $\sum_\beta P(\beta)=1$, 
and in the last line we used (\ref{new16}).
\end{proof}

The second class of experimental setups includes
implementations with weak coherent sources, for which
$P_{\text{det}}(0,0\vert \beta)\geq
  1-\delta_{00}^{\text{II}}$,
for a small $\delta_{00}^{\text{II}}>0$ and for
$\beta\in\{0,1\}$.   For example, the experimental
  demonstrations of mistrustful quantum cryptography given by
  Refs. \cite{LKBHTKGWZ13,LCCLWCLLSLZZCPZCP14,PJLCLTKD14} 
  used weak coherent sources with $\mu \approx 0.05,
  0.183, 0.005$ respectively.   For $\mu=0.05$, the
  probability of a zero photon pulse is $p_0=e^{-\mu} \approx 0.95$. Thus, due to losses in the quantum
channel and Bob's detection efficiencies being smaller than unity, we
have $\delta_{00}^{\text{II}}<0.05$ in this case. The
condition (\ref{new14}) is motivated by the fact that if
$ P_{\text{det}}(0,0\vert \beta)\lesssim
  1-\delta_{00}^{\text{II}}$
then $P_{\text{det}}\gtrsim \delta_{00}^{\text{II}}$, hence
$P_\text{det}>\delta_{\text{det}}$ with
$\delta_{\text{det}}$ satisfying (\ref{new14}) is a suitable required
lower bound for $P_\text{det}$. Similarly, the
  experimental demonstrations of mistrustful quantum cryptography
  given by
  Refs. \cite{NJMKW12,LKBHTKGWZ13,LCCLWCLLSLZZCPZCP14,ENGLWW14} report
  the required upper bounds $P_\text{error}<0.05$, $P_\text{error}<0.015$, $P_\text{error}<0.046$ and
  $P_\text{error}<0.01$, respectively, in order to guarantee security
  against Bob. Thus, we see that in
  Refs. \cite{NJMKW12,LKBHTKGWZ13,LCCLWCLLSLZZCPZCP14,ENGLWW14},
  $P_\text{error}<\frac{1}{12}$ and
 we have $P_\text{error}<\delta_{\text{error}}$ with
  $\delta_{\text{error}}$ satisfying (\ref{new15}).

\section{Security bounds}
\label{appB}
The following lemmas give upper bounds on the amount of information
that Alice can obtain about $\beta$ from Bob's message $m$. Lemma
\ref{lemma5} provides a bound for an arbitrary reporting strategy by
Bob. Together with Lemma \ref{lemma5}, Lemmas \ref{lemma6} and
\ref{lemmanewbound} provide bounds when Bob follows reporting
strategies II and III, respectively.

\begin{lemma}
\label{lemma5}
Suppose that Alice sends Bob a pulse of $k$ photons, encoding an
arbitrary $k-$qubit state $\rho$, which may be arbitrarily entangled
and which may be entangled with an ancilla held by Alice. Suppose
that Bob chooses the bit $\beta$ with some probability
\begin{equation}
\label{a46}
P_{\text{basis}}(\beta)\leq \frac{1}{2}+\epsilon_{\text{basis}},
\end{equation}
for $\beta\in\{0,1\}$ and for some $ \epsilon_{\text{basis}}\geq
0$.
Consider an arbitrary strategy by Bob to report the message
$m\in\{0,1\}$ to Alice. Then Alice's probability $P_{\text{guess}}$ to
guess $\beta$ from Bob's message $m$ satisfies
\begin{eqnarray}
\label{a43}
P_{\text{guess}}&\leq& \Bigl(\frac{1}{2}+\epsilon_{\text{basis}}\Bigr)\Bigl(1+\bigl\lvert P_{\text{report}}(1\lvert 1, \rho, k)\nonumber\\
&&\qquad\qquad\qquad-P_{\text{report}}(1\lvert 0, \rho, k)\bigr\rvert\Bigr).
\end{eqnarray}
\end{lemma}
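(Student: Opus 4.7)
The plan is to express Alice's optimal guessing probability via Bayes' rule and then bound it term-by-term using the prior bound on $P_{\text{basis}}(\beta)$ together with a simple algebraic identity involving $\max$ and absolute values.

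First I would set $p_\beta := P_{\text{report}}(1\lvert \beta,\rho,k)$ for $\beta\in\{0,1\}$, so that $P_{\text{report}}(0\lvert \beta,\rho,k)=1-p_\beta$. Since Alice's information about $\beta$ is entirely carried by the classical bit $m$ (the state $\rho$ and the number $k$ are chosen by her and therefore fixed), her optimal guessing probability is obtained by the usual maximum-a-posteriori argument:
\begin{equation}
P_{\text{guess}}=\sum_{m\in\{0,1\}}\max_{\beta\in\{0,1\}}\bigl[P_{\text{report}}(m\lvert\beta,\rho,k)\,P_{\text{basis}}(\beta)\bigr].
\end{equation}
Writing this out explicitly gives
\begin{equation}
P_{\text{guess}}=\max\bigl(p_0 P_{\text{basis}}(0),\,p_1 P_{\text{basis}}(1)\bigr)+\max\bigl((1-p_0)P_{\text{basis}}(0),\,(1-p_1)P_{\text{basis}}(1)\bigr).
\end{equation}

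Next I would use the hypothesis (\ref{a46}), namely $P_{\text{basis}}(\beta)\leq \tfrac{1}{2}+\epsilon_{\text{basis}}$ for each $\beta$, to pull this bound out of each $\max$:
\begin{equation}
P_{\text{guess}}\leq\Bigl(\tfrac{1}{2}+\epsilon_{\text{basis}}\Bigr)\Bigl[\max(p_0,p_1)+\max(1-p_0,1-p_1)\Bigr].
\end{equation}
Finally I would invoke the elementary identity
\begin{equation}
\max(a,b)+\max(1-a,1-b)=1+\lvert a-b\rvert,\qquad a,b\in[0,1],
\end{equation}
which follows from $\max(1-a,1-b)=1-\min(a,b)$ and $\max(a,b)-\min(a,b)=\lvert a-b\rvert$. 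Applying this with $a=p_0$, $b=p_1$ immediately yields the claimed bound (\ref{a43}).

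There is no real obstacle; the argument is short and essentially algebraic once the Bayesian optimality of the MAP rule is invoked. The only point that needs a brief justification is that the MAP guessing probability really does take the form $\sum_m \max_\beta P(m,\beta)$ — this is standard but worth stating cleanly, since Alice's side information (her ancilla, her choice of $\rho$ and $k$) is independent of $\beta$ once Bob's message $m$ is fixed, so no additional quantum side-information terms appear.
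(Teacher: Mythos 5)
Your proof is correct and follows essentially the same route as the paper's: both reduce the problem to optimally discriminating the two conditional distributions $P_{\text{report}}(\cdot\vert\beta,\rho,k)$ given the prior bound, and both arrive at the factor $1+\lvert p_0-p_1\rvert$. The only difference is presentational — you write the optimum directly as the MAP sum $\sum_m\max_\beta P(m,\beta)$ and apply the identity $\max(a,b)+\max(1-a,1-b)=1+\lvert a-b\rvert$, whereas the paper enumerates the deterministic strategies $P_{\text{Alice}}(i\vert m)=\delta_{i,m}$ and $\delta_{\bar i,m}$ and bounds each case separately.
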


\begin{proof}
Alice's most general strategy to guess $\beta$ from the message $m$ is as follows. Alice guesses $\beta=i$ with some probability $P_{\text{Alice}}(i\vert m)$, when she receives the message $m$, for $i,m\in\{0,1\}$. Alice's optimal strategy is a deterministic strategy. That is, one of the two following equations holds:
\begin{eqnarray}
\label{det1}
P_{\text{Alice}}(i\vert m)&=&\delta_{i,m},\\
\label{det2}
P_{\text{Alice}}(i\vert m)&=&\delta_{\bar{i},m},
\end{eqnarray}
for $i,m\in\{0,1\}$. Alice's average probability to guess $\beta$ is given by
\begin{equation}
\label{a45}
P_{\text{guess}}=\sum_{\beta=0}^1\sum_{m=0}^1P_{\text{basis}}(\beta)P_{\text{Alice}}(\beta\vert m)P_{\text{report}}(m\lvert \beta, \rho, k).
\end{equation}

If (\ref{det1}) holds, it follows from (\ref{a46}), (\ref{det1}) and (\ref{a45}) that
\begin{equation}
\label{det3}
P_{\text{guess}}\leq \Bigl(\frac{1}{2}+\epsilon_{\text{basis}}\Bigr)\sum_{\beta=0}^1 P_{\text{report}}(\beta\lvert \beta, \rho, k).
\end{equation}
If (\ref{det2}) holds, it follows from (\ref{a46}), (\ref{det2}) and (\ref{a45}) that
\begin{equation}
\label{det4}
P_{\text{guess}}\leq \Bigl(\frac{1}{2}+\epsilon_{\text{basis}}\Bigr)\sum_{\beta=0}^1 P_{\text{report}}(\bar{\beta}\lvert \beta, \rho, k).
\end{equation}
Since 
\begin{equation}
\label{det5}
P_{\text{report}}(\bar{m}\lvert \beta, \rho, k) =1-P_{\text{report}}(m\lvert \beta, \rho, k),
\end{equation}
for $\beta,m\in\{0,1\}$, we obtain that
\begin{eqnarray}
\label{det6}
\sum_{\beta=0}^1 P_{\text{report}}(\beta\lvert \beta, \rho, k)&=&1+P_{\text{report}}(1\lvert 1, \rho, k)\nonumber\\
&&\qquad\!\!-P_{\text{report}}(1\lvert 0, \rho, k)\nonumber\\
&\leq&1+\bigl\lvert P_{\text{report}}(1\lvert 1, \rho, k)\nonumber\\
&&\qquad\!\! -P_{\text{report}}(1\lvert 0, \rho, k)\bigr\rvert.
\end{eqnarray}
Similarly, from (\ref{det5}), we obtain
\begin{eqnarray}
\label{det7}
\sum_{\beta=0}^1 P_{\text{report}}(\bar{\beta}\lvert \beta, \rho, k)&=&1+P_{\text{report}}(1\lvert 0, \rho, k)\nonumber\\
&&\qquad\!\!\!\!-P_{\text{report}}(1\lvert 1, \rho, k)\nonumber\\
&\leq&1+\bigl\lvert P_{\text{report}}(1\lvert 1, \rho, k)\nonumber\\
&&\qquad\!\!\!\!\!\!\!-P_{\text{report}}(1\lvert 0, \rho, k)\bigr\rvert.
\end{eqnarray}
Thus, from (\ref{det3}), (\ref{det4}), (\ref{det6}) and (\ref{det7}), the claimed result (\ref{a43}) follows.
\end{proof}

\begin{lemma}
\label{lemma6}
Let $\eta_{\text{low}}$ and $\eta_{\text{up}}$ be such that
\begin{equation}
\label{qqqxyza54}
0<\eta_{\text{low}}\leq \eta_{i\beta}\leq \eta_{\text{up}}<1,
\end{equation}
for $i,\beta\in\{0,1\}$. Let $0\leq d_{i\beta}\leq\delta$, for $i,\beta\in\{0,1\}$ and for some $0\leq\delta<\frac{1}{2}$. We define
\begin{equation}
\label{qqqxyza56}
B_\text{det}^{\text{II}}\equiv  (1-2\delta)\biggl(\frac{\ln{\bigl(1-\eta_{\text{up}}\bigr)}}{\ln{\bigl(1-\eta_{\text{low}}\bigr)}}
-1\biggr)\bigl(1-\eta_{\text{up}}\bigr)^{B_{\text{exp}}^{\text{II}}},
\end{equation}
where
\begin{equation}
\label{qqqxyza57}
B_{\text{exp}}^{\text{II}}\equiv{\frac{\ln{\Bigl((1-2\delta)\ln{\bigl(1-\eta_{\text{up}}\bigr)}/ \ln{\bigl(1-\eta_{\text{low}}\bigr)}\Bigr)}}{\ln{\bigl(1-\eta_{\text{low}}\bigr)}-\ln{\bigl(1-\eta_{\text{up}}\bigr)}}}.
\end{equation}
Suppose that Alice sends Bob a pulse of $k$ photons, encoding an arbitrary $k-$qubit state $\rho$, which may be arbitrarily entangled and which may be entangled with an ancilla held by Alice. Suppose also that Bob uses reporting strategy II. Then
\begin{equation}
\label{qqqxyza58}
\bigl\lvert P_{\text{report}}(1\lvert 1, \rho, k)- P_{\text{report}}(1\lvert 0, \rho, k)\bigr\rvert\leq B_\text{II},
\end{equation}
for $k\in\{0,1,2,\ldots\}$, where
\begin{equation}
\label{qqqxyza59}
B_\text{II}\equiv\max\bigl\{2\delta,B_\text{det}^\text{II}\bigr\}.
\end{equation}
\end{lemma}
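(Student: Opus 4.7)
The plan is to reduce the claim to a one-dimensional optimization over $k$ by replacing the state-dependent no-click probability with $\rho$-independent envelopes. Since Bob follows reporting strategy II, (\ref{wwwa1}) gives $P_\text{report}(1\vert\beta,\rho,k)=1-P_\text{det}(0,0\vert\beta,\rho,k)$, so the quantity to bound equals $\lvert P_\text{det}(0,0\vert 0,\rho,k)-P_\text{det}(0,0\vert 1,\rho,k)\rvert$. Repeating the independence argument leading to (\ref{z1.1}) in the proof of Lemma \ref{lemma0}, but retaining distinct $\eta_{i\beta}$, $d_{i\beta}$ and making no product-state assumption on $\rho$, I would first write
\begin{equation*}
P_\text{det}(0,0\vert\beta,\rho,k)=(1-d_{0\beta})(1-d_{1\beta})\sum_{k_0=0}^{k}P(k_0\vert\beta,\rho)(1-\eta_{0\beta})^{k_0}(1-\eta_{1\beta})^{k-k_0},
\end{equation*}
where $P(k_0\vert\beta,\rho)$ is the Born-rule probability that $k_0$ of the $k$ incident photons reach $D_0$ after the polarizing beam splitter when Bob measures in $\mathcal{B}_\beta$. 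The state $\rho$ enters only through this marginal, which sums to $1$.

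Next I would bound each factor uniformly in $\beta$ and $\rho$. By (\ref{qqqxyza54}), every $(1-\eta_{i\beta})\in[1-\eta_\text{up},1-\eta_\text{low}]$, so each summand obeys $(1-\eta_\text{up})^k\leq(1-\eta_{0\beta})^{k_0}(1-\eta_{1\beta})^{k-k_0}\leq(1-\eta_\text{low})^k$; averaging against $P(k_0\vert\beta,\rho)$ preserves the envelope. Combined with $(1-\delta)^2\geq 1-2\delta$ and $(1-d_{0\beta})(1-d_{1\beta})\leq 1$, this gives
\begin{equation*}
(1-2\delta)(1-\eta_\text{up})^k\leq P_\text{det}(0,0\vert\beta,\rho,k)\leq(1-\eta_\text{low})^k,
\end{equation*}
so
\begin{equation*}
\bigl\lvert P_\text{det}(0,0\vert 0,\rho,k)-P_\text{det}(0,0\vert 1,\rho,k)\bigr\rvert\leq h(k):=(1-\eta_\text{low})^k-(1-2\delta)(1-\eta_\text{up})^k.
\end{equation*}

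The remaining task is to maximize $h$ over $k\in\{0,1,2,\ldots\}$, which I would relax to $k\geq 0$ real. Note $h(0)=2\delta$ and $h(k)\to 0$ as $k\to\infty$. Setting $h'(k)=0$ rearranges to $\bigl(\frac{1-\eta_\text{low}}{1-\eta_\text{up}}\bigr)^{k}=(1-2\delta)\frac{\ln(1-\eta_\text{up})}{\ln(1-\eta_\text{low})}$, and taking logs identifies the unique interior critical point as $k^{*}=B_\text{exp}^\text{II}$ of (\ref{qqqxyza57}). Using the critical-point identity $(1-\eta_\text{low})^{k^{*}}\ln(1-\eta_\text{low})=(1-2\delta)(1-\eta_\text{up})^{k^{*}}\ln(1-\eta_\text{up})$ to eliminate $(1-\eta_\text{low})^{k^{*}}$ from $h(k^{*})$ yields precisely $B_\text{det}^\text{II}$ of (\ref{qqqxyza56}). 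When $k^{*}>0$, this interior critical point must be a maximum (since $h$ is positive at $0$, tends to $0$ at infinity, and has no other critical points); when $k^{*}\leq 0$, $h$ is non-increasing on $[0,\infty)$ and the supremum is $h(0)=2\delta$. Either way $\max_{k\geq 0}h(k)\leq\max\{2\delta,B_\text{det}^\text{II}\}=B_\text{II}$, giving the desired bound.

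The main conceptual subtlety is that $\rho$ is allowed to be arbitrarily entangled across the $k$ photons and with an ancilla held by Alice, so the product-state formulae of Lemma \ref{lemma2} cannot be invoked directly. The key observation is that, conditioned on the photon count $k_0$ at $D_0$, the no-click event factorizes over the two detectors, so the entire $\rho$-dependence collapses into the one-dimensional marginal $P(k_0\vert\beta,\rho)$; once this is noted, the termwise envelopes absorb $\rho$ completely and the problem reduces to the elementary calculus exercise above.
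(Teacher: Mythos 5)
Your proof is correct and follows essentially the same route as the paper: the paper's Lemma \ref{lemmanew} establishes exactly your envelope bound $(1-2\delta)(1-\eta_\text{up})^k\leq P_\text{det}(0,0\vert\beta,\rho,k)\leq(1-\eta_\text{low})^k$ via the same conditioning on the photon count $k_0$ at D$_0$, and the paper's proof of Lemma \ref{lemma6} then maximizes the same function $f(k)=(1-\eta_\text{low})^k-(1-2\delta)(1-\eta_\text{up})^k$ at the same critical point $k^*=B_\text{exp}^\text{II}$. The only cosmetic difference is that the paper splits cases on $\eta_\text{low}=\eta_\text{up}$ versus $\eta_\text{low}<\eta_\text{up}$ (verifying $f''(k^*)<0$ in the latter case) rather than on the sign of $k^*$, but both arguments yield $\max\{2\delta,B_\text{det}^\text{II}\}$.
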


\begin{lemma}
\label{lemmanewbound}
Let $\eta_{\text{min}}$ and $\eta_{\text{max}}$ be such that
\begin{eqnarray}
\label{xyza54}
\eta_{\text{min}}&\equiv&\min_{i,\beta\in\{0,1\}}\{\eta_{i\beta}\}>0,\nonumber\\
\eta_{\text{max}}&\equiv&\max_{i,\beta\in\{0,1\}}\{\eta_{i\beta}\}<1.
\end{eqnarray}
Let $0\leq d_{i\beta}\leq\delta$, for $i,\beta\in\{0,1\}$ and for some $0\leq\delta<\frac{1}{2}$. We define
\begin{equation}
\label{xyza56}
B_\text{det}^{\text{III}}\equiv  1-\frac{\eta_\text{min}}{\eta_\text{max}}+(1-2\delta)\biggl(\frac{\ln{\bigl(1-\eta_{\text{max}}\bigr)}}{\ln{\bigl(1-\eta_{\text{min}}\bigr)}}
-1\biggr)\bigl(1-\eta_{\text{max}}\bigr)^{B_{\text{exp}}^{\text{III}}},
\end{equation}
where
\begin{equation}
\label{xyza57}
B_{\text{exp}}^{\text{III}}\equiv{\frac{\ln{\Bigl((1-2\delta)\eta_\text{max}\ln{\bigl(1-\eta_{\text{max}}\bigr)}/ \eta_\text{min}\ln{\bigl(1-\eta_{\text{min}}\bigr)}\Bigr)}}{\ln{\bigl(1-\eta_{\text{min}}\bigr)}-\ln{\bigl(1-\eta_{\text{max}}\bigr)}}}.
\end{equation}
Suppose that Alice sends Bob a pulse of $k$ photons, encoding an arbitrary $k-$qubit state $\rho$, which may be arbitrarily entangled and which may be entangled with an ancilla held by Alice. Suppose also that Bob uses reporting strategy III with
\begin{equation}
\label{xyza100}
\frac{\eta_\text{min}}{\eta_\text{max}}\leq S_{11\beta}\leq 1,
\end{equation}
for $\beta\in\{0,1\}$. Then
\begin{equation}
\label{xyza58}
\bigl\lvert P_{\text{report}}(1\lvert 1, \rho, k)- P_{\text{report}}(1\lvert 0, \rho, k)\bigr\rvert\leq B_\text{III},
\end{equation}
for $k\in\{0,1,2,\ldots\}$, where
\begin{equation}
\label{xyza59}
B_\text{III}\equiv\max\bigl\{2\delta,B_\text{det}^{\text{III}}\bigr\}.
\end{equation}
\end{lemma}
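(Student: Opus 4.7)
The plan is to mirror the argument I would use for Lemma \ref{lemma6}, adapted to the weighted reporting formula (\ref{a6}) that defines strategy III. Roughly, the extra $(1-\eta_\text{min}/\eta_\text{max})$ offset in $B_\text{det}^\text{III}$ comes from the freedom in $S_{11\beta}$, while the remaining exponential term arises from the same $k$-optimisation as in Lemma \ref{lemma6}.

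First I would extend Lemma \ref{lemma2} from product states $\rho=\rho_\text{qubit}^{\otimes k}$ to arbitrary (possibly entangled, possibly ancilla-correlated) $k$-photon states by introducing the classical photon-splitting distribution $P(k_0\vert\beta,\rho)$ --- the Born-rule probability, under $\mathcal{B}_\beta$, that exactly $k_0$ photons are routed to detector $\text{D}_0$. Each detection probability then takes the form $P_\text{det}(c_0,c_1\vert\beta,\rho,k)=\sum_{k_0=0}^{k}P(k_0\vert\beta,\rho)\,\pi_{c_0c_1\beta}(k_0,k-k_0)$, where $\pi_{c_0c_1\beta}$ is the independent-photon/dark-count factor appearing in (\ref{xaccc1}). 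Substituting into (\ref{a6}) writes $P_\text{report}(1\vert\beta,\rho,k)$ as a convex combination of explicit elementary terms. The case $k=0$ is then immediate: only dark counts contribute and a term-by-term estimate with $d_{i\beta}\le\delta$ gives $\lvert P_\text{report}(1\vert 1,\rho,0)-P_\text{report}(1\vert 0,\rho,0)\rvert\le 2\delta$, by the same bookkeeping as in Lemma \ref{lemma1}.

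For $k\ge 1$ I would first treat the dark-count-free case $\delta=0$. Using $P_\text{det}(1,1)=1-P_\text{det}(0,0)-P_\text{det}(0,1)-P_\text{det}(1,0)$ to eliminate the double-click probability from (\ref{a6}) and grouping, the $\beta$-difference $P_\text{report}(1\vert 1,\rho,k)-P_\text{report}(1\vert 0,\rho,k)$ naturally splits into (i) an $(S_{111}-S_{110})$ piece, whose magnitude is at most $(1-\eta_\text{min}/\eta_\text{max})$ by the hypothesis $\eta_\text{min}/\eta_\text{max}\le S_{11\beta}\le 1$, and (ii) an efficiency-driven piece with coefficients involving $\eta_\text{min}/\eta_{i\beta}$ and the distributions $P(k_0\vert\beta,\rho)$. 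For (ii), each convex-combination summand lies in the interval $[(1-\eta_\text{max})^{k_i},(1-\eta_\text{min})^{k_i}]$; extremising over the distribution (Alice can approach the extremes by a suitable entangled $\rho$) eliminates the $\rho$-dependence. Reinstating dark counts through $(1-d_{i\beta})\in[1-\delta,1]$ then produces the familiar $(1-2\delta)$ correction on the smaller exponential, and yields a total $k$-dependent upper bound of the form $f(k)=(1-\eta_\text{min}/\eta_\text{max})+(\eta_\text{min}/\eta_\text{max})(1-\eta_\text{min})^k-(1-2\delta)(1-\eta_\text{max})^k$.

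The final step is the one-variable optimisation of $f(k)$: setting $f'(k)=0$ gives precisely the equation whose unique solution is $k^*=B_\text{exp}^\text{III}$, and back-substitution reproduces $f(k^*)=B_\text{det}^\text{III}$. Combining with the $k=0$ estimate gives $B_\text{III}=\max\{2\delta,B_\text{det}^\text{III}\}$, as claimed. The main obstacle will be step (ii): carefully tracking the $S_{11\beta}$ and $\eta_\text{min}/\eta_{i\beta}$ coefficients together with the dark-count factors so that the worst-case pair $(P(k_0\vert 0,\rho),P(k_0\vert 1,\rho))$ genuinely saturates the stated form of $f(k)$, without introducing slack that would overshoot the claimed $B_\text{det}^\text{III}$. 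A minor technical point is that $B_\text{exp}^\text{III}$ need not be an integer; this is harmless because $f$ attains its continuous maximum at $k^*$, which upper-bounds the maximum over integer $k\ge 1$.
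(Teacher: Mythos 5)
Your proposal is correct and follows essentially the same route as the paper: bound $P_{\text{det}}(0,0\vert\beta,\rho,k)$ for arbitrary entangled $\rho$ between $(1-2\delta)(1-\eta_\text{max})^k$ and $(1-\eta_\text{min})^k$ via the splitting distribution (the paper's Lemma \ref{lemmanew}), reduce the reporting difference to the function $f(k)=g(k)$ of (\ref{rrr5}), and maximise it over real $k$ by calculus to land on $B_{\text{exp}}^{\text{III}}$ and $B_\text{det}^{\text{III}}$. The bookkeeping you flag as the main obstacle in step (ii) dissolves if, instead of separating out an $(S_{111}-S_{110})$ piece, you simply note that every nonzero reporting weight $S_{c_0c_1\beta}$ in (\ref{a6}) lies in $[\eta_\text{min}/\eta_\text{max},1]$, which immediately sandwiches $P_{\text{report}}(1\vert \beta,\rho,k)$ between $\tfrac{\eta_\text{min}}{\eta_\text{max}}\bigl(1-P_{\text{det}}(0,0\vert\beta,\rho,k)\bigr)$ and $1-P_{\text{det}}(0,0\vert\beta,\rho,k)$ with no slack — this is exactly the paper's (\ref{rrr2})--(\ref{rrr3}).
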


We note that the bounds $B_\text{II}$ and $B_\text{III}$ in Lemmas \ref{lemma6} and \ref{lemmanewbound}, respectively, tend to
$2\delta$ if all the detection efficiencies tend to the same
value $\eta$, as in this case we have
$\eta_{\text{up}}\rightarrow\eta$, $\eta_{\text{low}}\rightarrow \eta$, $\eta_{\text{max}}\rightarrow\eta$ and
$\eta_{\text{min}}\rightarrow \eta$. Additionally, if the dark count probabilities are
zero then $\delta=0$ and the bounds tend to zero in this
case. This is expected. However, as we illustrate in
Fig. \ref{figbound}, even for relatively close values for the
detection efficiencies, the obtained bounds are not very small.

The proofs of Lemmas \ref{lemma6} and \ref{lemmanewbound} use the following lemma.

\begin{figure}
\includegraphics[width=0.48\textwidth]{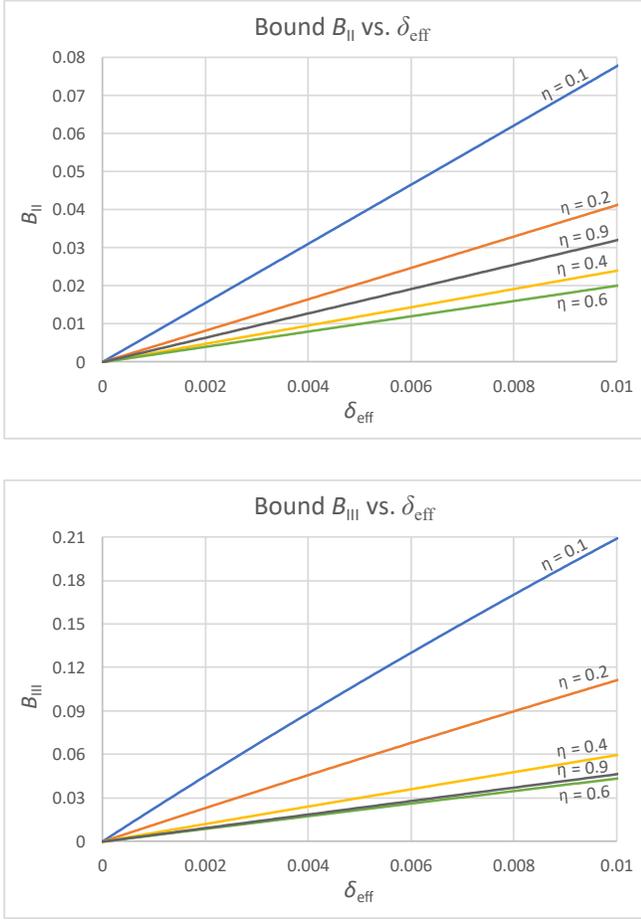}
\caption{\label{figbound} \textbf{Security bounds $B_\text{II}$ and $B_\text{III}$.} We show plots for the bounds $B_\text{II}$ (top) and $B_\text{III}$ (bottom), defined by (\ref{qqqxyza59}) in Lemma \ref{lemma6} and by (\ref{xyza59}) in Lemma \ref{lemmanewbound}, respectively. We consider the case $\delta=10^{-5}$ in Lemmas \ref{lemma6} and \ref{lemmanewbound}. We also consider $\eta_{\text{low}}=\eta-\delta_{\text{eff}}$ and $\eta_{\text{up}}=\eta+\delta_{\text{eff}}$ in Lemma \ref{lemma6}; and $\eta_{\text{min}}=\eta-\delta_{\text{eff}}$ and $\eta_{\text{max}}=\eta+\delta_{\text{eff}}$ in Lemma \ref{lemmanewbound}. We plot for five values of $\eta$ and for the range $\delta_{\text{eff}}\in[0,0.01]$.}
\end{figure}

\begin{lemma}
\label{lemmanew}
Suppose that Alice sends Bob a pulse of $k$ photons, encoding an arbitrary $k-$qubit state $\rho$, which may be arbitrarily entangled and which may be entangled with an ancilla held by Alice. Let $\eta_{\text{low}}$ and $\eta_{\text{up}}$ be such that
\begin{equation}
\label{newxyza54}
0<\eta_{\text{low}}\leq \eta_{i\beta}\leq\eta_{\text{up}}<1,
\end{equation}
for $i,\beta\in\{0,1\}$. Let $0\leq d_{i\beta}\leq \delta$, for $i,\beta\in\{0,1\}$, and for some $0\leq \delta<\frac{1}{2}$. Then
\begin{equation}
\label{rrra64}
(1-2\delta)(1-\eta_\text{up})^k\leq P_{\text{det}}(0,0\lvert \beta, \rho, k)\leq (1-\eta_\text{low})^k,
\end{equation}
for $k\in\{0,1,2,\ldots\}$ and $\beta\in\{0,1\}$.
\end{lemma}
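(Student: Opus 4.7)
The plan is to derive an exact formula for $P_{\text{det}}(0,0\vert \beta, \rho, k)$ that treats the four efficiencies $\eta_{i\beta}$ and dark count probabilities $d_{i\beta}$ symbolically, and then to close in on the claimed bounds by elementary monotonicity.

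First I would revisit the argument in the proof of Lemma \ref{lemma0}, but without imposing the equal-efficiency assumption used there. Let $k_0$ denote the (random) number of photons sent to detector $\text{D}_0$ by the polarizing beam splitter following the wave plate set to $\mathcal{B}_\beta$, and let $P(k_0\vert \beta,\rho)$ be the probability of this event given by the Born rule applied to $\rho$ and to Bob's measurement. This quantity is well defined even if $\rho$ is entangled with an ancilla held by Alice, because it depends only on the reduced state of the $k$ photons entering Bob's setup. Conditioned on the split $(k_0,k-k_0)$, independence of dark counts and of each photo-detection gives
\begin{equation}
P_{\text{det}}(0,0\vert \beta,\rho,k)=\sum_{k_0=0}^k P(k_0\vert \beta,\rho)(1-d_{0\beta})(1-d_{1\beta})(1-\eta_{0\beta})^{k_0}(1-\eta_{1\beta})^{k-k_0},
\end{equation}
which is just equation (\ref{z1.1}) before the step that collapses the $\eta$ factors into a common $(1-\eta)^k$.

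Next I would apply pointwise inequalities to the summand. For the upper bound use $(1-d_{0\beta})(1-d_{1\beta})\leq 1$ together with $1-\eta_{i\beta}\leq 1-\eta_{\text{low}}$ for $i\in\{0,1\}$, so that $(1-\eta_{0\beta})^{k_0}(1-\eta_{1\beta})^{k-k_0}\leq (1-\eta_{\text{low}})^k$. For the lower bound use $(1-d_{0\beta})(1-d_{1\beta})\geq (1-\delta)^2\geq 1-2\delta$ and $1-\eta_{i\beta}\geq 1-\eta_{\text{up}}$, yielding the factor $(1-\eta_{\text{up}})^k$. Since both of these bounds are independent of $k_0$, I can pull them out of the sum and use $\sum_{k_0=0}^k P(k_0\vert \beta,\rho)=1$ to obtain the two inequalities of (\ref{rrra64}) at once.

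There is essentially no substantive obstacle: the result is structural and reduces to monotonicity of $x\mapsto x^{k_0}$ and $x\mapsto x^{k-k_0}$ on $(0,1)$ combined with normalization of $P(\cdot\vert \beta,\rho)$. The one point that deserves a sentence in the write-up is why the conditional-on-split formula above holds in full generality, that is for an arbitrary (possibly mixed, possibly entangled, possibly entangled with an Alice-held ancilla) $\rho$ on $k$ photons; this follows because all the randomness downstream of the PBS depends on $\rho$ and $\beta$ only through the single distribution $P(k_0\vert \beta,\rho)$.
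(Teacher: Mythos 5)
Your proposal is correct and follows essentially the same route as the paper's proof: both decompose $P_{\text{det}}(0,0\vert\beta,\rho,k)$ over the photon split $k_0$ via the Born-rule distribution $P(k_0\vert\beta,\rho,k)$, bound the summand pointwise using $\eta_{\text{low}}\leq\eta_{i\beta}\leq\eta_{\text{up}}$ and $0\leq d_{i\beta}\leq\delta$ (with $(1-\delta)^2\geq 1-2\delta$ for the lower bound), and conclude by normalization of $P(\cdot\vert\beta,\rho,k)$. Nothing further is needed.
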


\begin{proof}
Consider an arbitrary, and possibly entangled, $k-$qubit state $\rho$ encoded in a pulse of $k$ photons that Alice sends Bob, which may be in an arbitrary entangled state with an ancilla held by Alice. Let $P(k_0\vert\beta,\rho,k)$ be the probability that $k_0$ photons go to the detector $D_0$ and $k-k_0$ photons go to the detector $D_1$, when Bob measures in the basis $\mathcal{B}_\beta$, for $\beta\in\{0,1\}$, $k_0\in\{0,1,\ldots,k\}$ and $k\in\{0,1,2,\ldots\}$. It follows straightforwardly that
\begin{eqnarray}
\label{rrra62}
P_{\text{det}}(0,0\lvert \beta, \rho, k)&=&(1-d_{0\beta})(1-d_{1\beta})\sum_{k_0=0}^k P(k_0\vert\beta,\rho,k)\times\nonumber\\
&&\quad\times\bigl[(1-\eta_{0\beta})^{k_0} (1-\eta_{1\beta})^{k-k_0}\bigr]\nonumber\\
&\leq&(1-\eta_\text{low})^k\sum_{k_0=0}^k P(k_0\vert\beta,\rho,k)\nonumber\\
&=&(1-\eta_\text{low})^k,
\end{eqnarray}
for any $k-$qubit state $\rho$, for $k\in\{0,1,2,\ldots\}$, and for $\beta\in\{0,1\}$; where in the second line we used that $\eta_{i\beta}\geq \eta_\text{low}$ and $d_{i\beta}\geq 0$, for $i,\beta\in\{0,1\}$; and in the third line we used that $\sum_{k_0=0}^k P(k_0\vert\beta,\rho,k)=1$. Similarly, using that $\eta_{i\beta}\leq \eta_\text{up}$ and $d_{i\beta}\leq \delta$, it follows straightforwardly that
\begin{equation}
\label{rrra63}
P_{\text{det}}(0,0\lvert \beta, \rho, k)\geq (1-2\delta)(1-\eta_\text{up})^k,
\end{equation}
for any $k-$qubit state $\rho$, for $k\in\{0,1,2,\ldots\}$, and for $\beta\in\{0,1\}$. Thus, from (\ref{rrra62}) and (\ref{rrra63}), we obtain (\ref{rrra64}).

\end{proof}

\begin{proof}[Proof of Lemma \ref{lemma6}]
Consider an arbitrary, and possibly entangled, $k-$qubit state $\rho$ encoded in a pulse of $k$ photons that Alice sends Bob, which may be in an arbitrary entangled state with an ancilla held by Alice. In the reporting strategy II, the probability that Bob reports the message $m=1$ to Alice is the probability that at least one of his detectors click. That is,
\begin{equation}
\label{qqqa60}
P_{\text{report}}(1\lvert \beta, \rho, k)=1-P_{\text{det}}(0,0\lvert \beta, \rho, k),
\end{equation}
for $k\in\{0,1,2,\ldots\}$ and $\beta\in\{0,1\}$. Thus, showing (\ref{qqqxyza58}) reduces to showing
\begin{equation}
\label{qqqa61}
\bigl\lvert P_{\text{det}}(0,0\lvert 1, \rho, k)- P_{\text{det}}(0,0\lvert 0, \rho, k)\bigr\rvert\leq\max\{2\delta, B_{\text{det}}^{\text{II}}\},\\
\end{equation}
for $k\in\{0,1,2,\ldots\}$, where $B_{\text{det}}^{\text{II}}$ is given by (\ref{qqqxyza56}).

We show (\ref{qqqa61}). From Lemma \ref{lemmanew}, we obtain
\begin{equation}
\label{qqqrrr4}
-f(k)\leq P_{\text{det}}(0,0\lvert 1, \rho, k)- P_{\text{det}}(0,0\lvert 0, \rho, k)\leq f(k),
\end{equation}
where we define
\begin{equation}
\label{qqqrrr5}
f(k)\equiv (1-\eta_\text{low})^k-(1-2\delta)(1-\eta_\text{up})^k,
\end{equation}
for $k\in\mathbb{R}$. It follows from (\ref{qqqrrr4}) that
\begin{equation}
\label{qqqrrr6}
\bigl\lvert P_{\text{det}}(0,0\lvert 1, \rho, k)- P_{\text{det}}(0,0\lvert 0, \rho, k)\bigr\rvert\leq f(k),
\end{equation}
for $k\in\{0,1,2,\ldots\}$. If $\eta_\text{low}=\eta_\text{up}$, it follows from (\ref{qqqxyza54}) and (\ref{qqqrrr5}), and from $\delta\geq 0$, that 
\begin{equation}
\label{qqqrrr7}
f(k)\leq 2\delta,
\end{equation}
for $k\in\{0,1,2,\ldots\}$. We show below that if $\eta_\text{low}<\eta_\text{up}$ then
\begin{equation}
\label{qqqrrr9}
 f(k)\leq f(B_\text{exp}^{\text{II}}),
\end{equation}
for $k\in\{0,1,2,\ldots\}$, where $B_\text{exp}^{\text{II}}$ is given by (\ref{qqqxyza57}). It follows straightforwardly from (\ref{qqqxyza56}), (\ref{qqqxyza57}) and (\ref{qqqrrr5}) that
\begin{equation}
\label{qqqrrr10}
f(B_\text{exp}^{\text{II}})=B_\text{det}^{\text{II}},
\end{equation}
where $B_\text{det}^{\text{II}}$ is given by (\ref{qqqxyza56}). Thus, (\ref{qqqa61}) follows from (\ref{qqqrrr6}) -- (\ref{qqqrrr10}).

We assume that $\eta_\text{low}<\eta_\text{up}$ and show (\ref{qqqrrr9}). We define $k^*\in\mathbb{R}$ such that 
\begin{equation}
\label{qqqrrr11}
f'(k^*)=0.
\end{equation}
We show below that 
\begin{equation}
\label{qqqrrr12}
k^*=B_\text{exp}^{\text{II}}.
\end{equation}
We also show that
\begin{equation}
\label{qqqrrr13}
f''(k^*)<0.
\end{equation}
Thus, $f(k^*)$ is a maximum and
\begin{equation}
\label{qqqrrr14}
f(k)\leq f(k^*),
\end{equation}
for $k\in\{0,1,2,\ldots\}$. Therefore, (\ref{qqqrrr9}) follows from (\ref{qqqrrr12}) and (\ref{qqqrrr14}).

We show (\ref{qqqrrr12}). From (\ref{qqqrrr5}), we have
\begin{eqnarray}
\label{qqqrrr15}
f'(k)&=&\ln(1-\eta_\text{low})(1-\eta_\text{low})^{k}\nonumber\\
&&\quad\!\!-(1-2\delta)\ln(1-\eta_\text{up})(1-\eta_\text{up})^{k},
\end{eqnarray}
for $k\in\mathbb{R}$. Thus, (\ref{qqqrrr12}) follows straightforwardly from (\ref{qqqxyza57}), (\ref{qqqrrr11}) and (\ref{qqqrrr15}).

We complete the proof by showing (\ref{qqqrrr13}). From (\ref{qqqrrr15}), we have
\begin{eqnarray}
\label{qqqrrr16}
f''(k^*)&=&\bigl(\ln(1-\eta_\text{low})\bigr)^2(1-\eta_\text{low})^{k^*}\nonumber\\
&&\quad\!\!-(1-2\delta)\bigl(\ln(1-\eta_\text{up})\bigr)^2(1-\eta_\text{up})^{k^*},\nonumber\\
&=&\bigl(\ln(1-\eta_\text{low})-\ln(1-\eta_\text{up})\bigr)\times\nonumber\\
&&\quad\times (1-2\delta)\ln(1-\eta_\text{up})(1-\eta_\text{up})^{k^*}\nonumber\\
&<&0,
\end{eqnarray}
where in the second line we used (\ref{qqqrrr11}) and (\ref{qqqrrr15}); and in the third line we used (\ref{qqqxyza54}), our assumption that $\eta_\text{low}<\eta_\text{up}$, and that $0\leq\delta<\frac{1}{2}$.
\end{proof}

\begin{proof}[Proof of Lemma \ref{lemmanewbound}]
From the definition of reporting strategy III, given by (\ref{a5}), and from (\ref{a4}),  we have
\begin{eqnarray}
\label{rrr1}
&&P_\text{report}(1\vert 1,\rho,k)-P_\text{report}(1\vert 0,\rho,k)\nonumber\\
&&\quad=\frac{\eta_\text{min}}{\eta_{11}}P_\text{det}(0,1\vert 1,\rho,k)+\frac{\eta_\text{min}}{\eta_{01}}P_\text{det}(1,0\vert 1,\rho,k)\nonumber\\
&&\qquad+S_{111}P_\text{det}(1,1\vert 1,\rho,k)-\frac{\eta_\text{min}}{\eta_{10}}P_\text{det}(0,1\vert 0,\rho,k)\nonumber\\
&&\qquad-\frac{\eta_\text{min}}{\eta_{00}}P_\text{det}(1,0\vert 0,\rho,k)-S_{110}P_\text{det}(1,1\vert 0,\rho,k).\nonumber\\
\end{eqnarray}
From (\ref{xyza100}) and (\ref{rrr1}), we obtain
\begin{eqnarray}
\label{rrr2}
&&P_\text{report}(1\vert 1,\rho,k)-P_\text{report}(1\vert 0,\rho,k)\nonumber\\
&&\quad\leq 1-P_\text{det}(0,0\vert 1,\rho,k)-\frac{\eta_\text{min}}{\eta_\text{max}}\bigl(1-P_\text{det}(0,0\vert 0,\rho,k)\bigr),\nonumber\\
\end{eqnarray}
and
\begin{eqnarray}
\label{rrr3}
&&P_\text{report}(1\vert 1,\rho,k)-P_\text{report}(1\vert 0,\rho,k)\nonumber\\
&&\quad\geq\frac{\eta_\text{min}}{\eta_\text{max}}\bigl(1-P_\text{det}(0,0\vert 1,\rho,k)\bigr) -\bigl(1-P_\text{det}(0,0\vert 0,\rho,k)\bigr).\nonumber\\
\end{eqnarray}

Thus, from (\ref{rrr2}) and (\ref{rrr3}), and from Lemma \ref{lemmanew}, we obtain
\begin{equation}
\label{rrr4}
-g(k)\leq P_\text{report}(1\vert 1,\rho,k)-P_\text{report}(1\vert 0,\rho,k)\leq g(k),
\end{equation}
where we define
\begin{equation}
\label{rrr5}
g(k)\equiv 1-\frac{\eta_\text{min}}{\eta_\text{max}}+\frac{\eta_\text{min}}{\eta_\text{max}}(1-\eta_\text{min})^k-(1-2\delta)(1-\eta_\text{max})^k,
\end{equation}
for $k\in\mathbb{R}$. It follows from (\ref{rrr4}) that
\begin{equation}
\label{rrr6}
\bigl\lvert P_\text{report}(1\vert 1,\rho,k)-P_\text{report}(1\vert 0,\rho,k)\bigr\rvert\leq g(k),
\end{equation}
for $k\in\{0,1,2,\ldots\}$. 

If $\eta_\text{min}=\eta_\text{max}$, it follows from (\ref{xyza54}) and (\ref{rrr5}), and from $\delta\geq 0$, that 
\begin{equation}
\label{rrr7}
g(k)\leq 2\delta,
\end{equation}
for $k\in\{0,1,2,\ldots\}$. We show below that if $\eta_\text{min}<\eta_\text{max}$ then
\begin{equation}
\label{rrr9}
 g(k)\leq g(B_\text{exp}^{\text{III}}),
\end{equation}
for $k\in\{0,1,2,\ldots\}$, where $B_\text{exp}^{\text{III}}$ is given by (\ref{xyza57}). It follows straightforwardly from (\ref{xyza56}), (\ref{xyza57}) and (\ref{rrr5}) that
\begin{equation}
\label{rrr10}
g(B_\text{exp}^{\text{III}})=B_\text{det}^{\text{III}},
\end{equation}
where $B_\text{det}^{\text{III}}$ is given by (\ref{xyza56}). Thus, the claimed bound (\ref{xyza58}) follows from (\ref{rrr6}) -- (\ref{rrr10}).

We assume that $\eta_\text{min}<\eta_\text{max}$ and show (\ref{rrr9}). We define $\tilde{k}\in\mathbb{R}$ such that 
\begin{equation}
\label{rrr11}
g'(\tilde{k})=0.
\end{equation}
We show below that 
\begin{equation}
\label{rrr12}
\tilde{k}=B_\text{exp}^{\text{III}}.
\end{equation}
We also show that
\begin{equation}
\label{rrr13}
g''(\tilde{k})<0.
\end{equation}
Thus, $g(\tilde{k})$ is a maximum and
\begin{equation}
\label{rrr14}
g(k)\leq g(\tilde{k}),
\end{equation}
for $k\in\{0,1,2,\ldots\}$. Therefore, (\ref{rrr9}) follows from (\ref{rrr12}) and (\ref{rrr14}).

We show (\ref{rrr12}). From (\ref{rrr5}), we have
\begin{eqnarray}
\label{rrr15}
g'(k)&=&\frac{\eta_\text{min}}{\eta_\text{max}}\ln(1-\eta_\text{min})(1-\eta_\text{min})^{k}\nonumber\\
&&\quad\!\!-(1-2\delta)\ln(1-\eta_\text{max})(1-\eta_\text{max})^{k},
\end{eqnarray}
for $k\in\mathbb{R}$. Thus, (\ref{rrr12}) follows straightforwardly from (\ref{xyza57}), (\ref{rrr11}) and (\ref{rrr15}).

We complete the proof by showing (\ref{rrr13}). From (\ref{rrr15}), we have
\begin{eqnarray}
\label{rrr16}
g''(\tilde{k})&=&\frac{\eta_\text{min}}{\eta_\text{max}}\bigl(\ln(1-\eta_\text{min})\bigr)^2(1-\eta_\text{min})^{\tilde{k}}\nonumber\\
&&\quad\!\!-(1-2\delta)\bigl(\ln(1-\eta_\text{max})\bigr)^2(1-\eta_\text{max})^{\tilde{k}},\nonumber\\
&=&\bigl(\ln(1-\eta_\text{min})-\ln(1-\eta_\text{max})\bigr)\times\nonumber\\
&&\quad\times (1-2\delta)\ln(1-\eta_\text{max})(1-\eta_\text{max})^{\tilde{k}}\nonumber\\
&<&0,
\end{eqnarray}
where in the second line we used (\ref{rrr11}) and (\ref{rrr15}); and in the third line we used (\ref{xyza54}), our assumption that $\eta_\text{min}<\eta_\text{max}$, and that $0\leq\delta<\frac{1}{2}$.
\end{proof}

\section{Mistrustful quantum cryptography}
\label{appC}
In mistrustful cryptography, two or more parties who do not trust each
other collaborate to implement a cryptographic task. Important
cryptographic tasks in mistrustful cryptography are bit commitment,
oblivious transfer, secure multi-party computation and coin flipping,
for example.  We say that a cryptographic protocol is
\emph{unconditionally secure} if it is secure based only on the laws
of physics, without imposing any technological limitations on the
dishonest parties. There exist quantum and relativistic protocols in
mistrustful cryptography that exploit the laws of quantum physics and
relativity to guarantee security, respectively. For some tasks in
mistrustful cryptography, there exists some impossibility results
stating that some tasks in mistrustful cryptography cannot achieve
unconditional security with quantum nonrelativistic protocols
\cite{M97,LC97,L97,LC98}, or even with quantum relativistic protocols
for other tasks \cite{C07}. On the other hand, there are relativistic
protocols that achieve unconditional security for some tasks
\cite{K99,K99.2,K05.2,CK06,C07,LKBHTWZ15,K11.2,K11.3,K12,AK15.1,AK15.2,PG15.1,PGK18,PG19}. However,
by imposing technological limitations on the dishonest parties,
security of some quantum nonrelativistic protocols can be
guaranteed. For example, some tasks that cannot achieve unconditional
security can be implemented securely in the noisy storage model
\cite{DFSS08,WST08} in which the dishonest parties can only store
quantum states in noisy quantum memories with finite coherence times.

\subsection{Quantum bit commitment}
\label{QBC}

In bit commitment, Bob (the committing party) commits a secret bit $b$
to Alice at a given time $t_\text{commit}$. Bob chooses to unveil $b$
to Alice at some time $t_\text{unveil}>t_\text{commit}$. A bit
commitment protocol must satisfy two security conditions. First, a bit
commitment protocol is \emph{hiding} if, when Bob follows the protocol
and Alice deviates arbitrarily from the protocol, the probability that
Alice guesses Bob's bit $b$ before Bob unveils satisfies
$P_\text{guess}\leq \frac{1}{2}+\epsilon_\text{hiding}$, for some
$\epsilon_\text{hiding}\geq 0$ that goes to zero as some security
parameter goes to infinity. Second, a bit commitment protocol is
\emph{binding} if, when Alice follows the protocol and Bob deviates
arbitrarily from the protocol, the probability $p_i$ that Bob
successfully unveils the bit $b=i$ satisfies
$p_0+p_1\leq 1+\epsilon_\text{binding}$, for $i\in\{0,1\}$, and for
some $\epsilon_\text{binding}\geq 0$ that goes to zero as some
security parameter goes to infinity. The hiding and binding properties
are also called \emph{security against Alice} and \emph{security
  against Bob}, respectively, when Bob is the committing party.

Nonrelativistic quantum bit commitment cannot achieve unconditional
security \cite{M97,LC97,LC98}. However, secure bit commitment can be
achieved in the noisy storage model \cite{DFSS08,WST08}. Nevertheless, there
are relativistic classical \cite{K99,K05.2,LKBHTWZ15} and
quantum \cite{K11.2,K12,AK15.1,AK15.2} bit commitment protocols that
are unconditionally secure.

\subsection{Quantum oblivious transfer}
In a $1$-out-of-$2$ oblivious transfer (OT) protocol \cite{K88}, Alice
inputs two strings of $n$ bits, $x_0$ and $x_1$, initially secret from
Bob. Bob inputs a bit $b$, initially secret from Alice. At the end of
the protocol, Bob outputs the string $x_b$. Two security conditions
must be fulfilled, called security against Alice and security against
Bob. \emph{Security against Alice} states that, if Bob follows the
protocol and Alice deviates arbitrarily from the protocol, the
probability that Alice guesses Bob's input $b$ satisfies
$P_\text{Alice}\leq \frac{1}{2}+\epsilon_\text{Alice}$, for some
$\epsilon_\text{Alice}\geq 0$ that goes to zero as some security
parameter goes to infinity. \emph{Security against Bob} states that,
if Alice follows the protocol and Bob deviates arbitrarily from the
protocol, Bob cannot learn both strings $x_0$ and $x_1$; this can be
quantified by stating that the probability that Bob obtains both
messages satisfies $P_\text{Bob}\leq \epsilon_\text{Bob}$, for some
$\epsilon_\text{Bob}\geq 0$ that goes to zero as some security
parameter goes to infinity.

$1$-out-of-$m$ OT cannot be implemented with unconditional
security in quantum cryptography \cite{L97}. This impossibility
theorem holds even in the setting of relativistic quantum cryptography
\cite{CK06}, although some relativistic variations of the task can be
achieved with unconditional security
\cite{PG15.1,PGK18,PG19}. However, $1$-out-of-$2$ OT can be
implemented securely in the noisy storage model \cite{DFSS08,WST08}.

\subsection{Quantum coin flipping}

In strong coin flipping, Bob and Alice, who are at distant locations,
obtain a bit $a$ that is random and which cannot be biased by neither
of them. A strong coin flipping protocol must satisfy security against
Alice and security against Bob. \emph{Security against Alice}
(\emph{Bob}) states that if Alice (Bob) follows the protocol and Bob
(Alice) deviates arbitrarily from the protocol, and Alice (Bob)
obtains as outcome the bit $a$ then it holds that
$P(a=i)\leq \frac{1}{2}+\epsilon$, for $i\in\{0,1\}$, and for some
$\epsilon\geq 0$ that goes to zero as some security parameter goes to
infinity.

Although there are relativistic protocols for strong coin flipping
that are unconditionally secure \cite{K99.2}, quantum nonrelativistic
protocols for strong coin flipping cannot achieve unconditional
security \cite{LC98}. However there are quantum
nonrelativistic protocols for strong coin flipping that
unconditionally guarantee some level of security, in that the bias 
a dishonest party can give the coin is bounded below one \cite{ATVY00,SR02,Kitaev02,ABDR04,NS03,A04,CK09}.

\section{multiphoton attacks on experimental demonstrations of
  mistrustful cryptography protocols}
\label{appD}
Some of the quantum protocols discussed in the previous section
offer unconditional security guarantees; others offer security based
on technological assumptions.    Over the past decade, they have
been implemented in pioneering experimental demonstrations (e.g. \cite{NJMKW12,LKBHTKGWZ13,LCCLWCLLSLZZCPZCP14,ENGLWW14,PJLCLTKD14}), often 
adapting quantum key distribution technology.  
 Refs. \cite{NJMKW12,ENGLWW14} demonstrated
  quantum bit commitment and 1-out-of-2 quantum oblivious transfer in
  the noisy storage model. Ref. \cite{PJLCLTKD14} demonstrated quantum
  coin flipping performing better than classical protocols over a
  distance of various kilometres, gaining three orders of magnitude in
  communication distance over previous
  experiments. Refs. \cite{LKBHTKGWZ13,LCCLWCLLSLZZCPZCP14}
  implemented quantum relativistic bit commitment protocols for the
  first time, showing that the protocol of Ref. \cite{K12}
can be implemented in practice, over short and long range.  

Two of us (E.D. and A.K.) have contributed to this work. 
Scientific honesty compels us to acknowledge that we did not appreciate all the obstacles
to attaining provable and truly unconditional security in practical implementations of mistrustful quantum cryptographic
protocols.   In hindsight, we believe the 
implementations \cite{NJMKW12,LKBHTKGWZ13,LCCLWCLLSLZZCPZCP14,ENGLWW14,PJLCLTKD14}
may be best seen as proofs of principle.
They show that some key technological challenges in implementing the protocols 
have been met and give significant and valuable security guarantees
based on assumptions about the parties' behaviour and technology
that are natural in some scenarios.  
However, more work appears to be required to deliver provably unconditional
security. In particular, provable unconditional security against Alice (i.e. the
sender of the quantum states, in our convention) requires
implementing techniques countering the various side-channel and multiphoton attacks we have
described and showing that her cheating probability can thus be 
made arbitrarily close to the ideal bound (which in most cases is
zero).  

Users and developers of mistrustful quantum
cryptosystems may also need to consider whether some technological
assumptions may be both necessary and sufficient in practical 
implementations, even for protocols that are theoretically
unconditionally secure, given the difficulty in provably 
countering every possible side-channel attack.

We will discuss
Refs. \cite{NJMKW12,LKBHTKGWZ13,LCCLWCLLSLZZCPZCP14,ENGLWW14,PJLCLTKD14}
separately below.  
We should first note that 
reanalysing previous implementations in the light of our attacks
is not completely straightforward.   
Although the relevant papers are generally very clear and comprehensive,
experimental details that at the time may not have seemed significant 
were not always given.   For example, some of the experiments used
off-the-shelf quantum key distribution equipment which may
have been programmed either to discard double clicks or 
choose a random outcome.  After discussing with colleagues,
we understand it may not now be possible to say with certainty which
option was used.   
This is unimportant for the future of the field.    
We believe we should note, though, when (what we now realise are) significant instructions
are missing from the published reports, since future
users trying to reproduce or improve on previous implementations
may rely on these.   Ref. \cite{NJMKW12} explicitly
states that only single click detection events are reported by Bob as
valid measurement outcomes, while Refs. \cite{LKBHTKGWZ13,ENGLWW14,PJLCLTKD14} do not say whether
multiple clicks are reported by Bob or not.

A related issue is the question of precisely how the symmetrization of
losses technique was or might have been implemented.  This technique
was introduced by Ref. \cite{NJMKW12} and claimed to guarantee
security against Alice by
Refs. \cite{NJMKW12,LKBHTKGWZ13,ENGLWW14,PJLCLTKD14}.  Indeed, as
Lemma \ref{lemma1} states for the symmetrization of losses with setup I (see
Fig. \ref{fig1} of the main text)
-- as applied by Ref. \cite{PJLCLTKD14}, for
example -- security against Alice is guaranteed when she does not send
multiphoton pulses, if Bob reduces the probability of assigning a
measurement outcome due to a single click event $(c,\bar{c})$ by a
suitable factor $S_{c\bar{c}\beta}$, for $c,\beta\in\{0,1\}$. A
similar result is shown by Lemma \ref{lemmaSLII} below in a setup with
four detectors (setup II, see Fig. \ref{setupII}), as implemented by
Refs. \cite{NJMKW12,ENGLWW14}.  However, the original symmetrization
of losses technique \cite{NJMKW12} treats multiple clicks as invalid,
and thus does not provide effective protection against multiphoton
attack I.  A natural defence against this attack is to report multiple
clicks as valid measurement outcomes.  However, one then needs to
define an extension of the symmetrization of losses reporting strategy
applicable to multiple click outcomes. A general extension of
symmetrization of losses in setup I is what we have defined as
reporting strategy III in (\ref{a5}), where a double click is reported
by Bob with some probability $S_{11\beta}\in[0,1]$ when Bob measures
in the basis $\mathcal{B}_\beta$, for $\beta\in\{0,1\}$. Theorem \ref{lemma3} then shows that multiphoton attack II applies if the
detector efficiencies of Bob's detectors are different and known by
Alice.  A countermeasure against this attack is to ensure Bob's
detectors have equal efficiency. This cannot be implemented
perfectly.  Lemma \ref{lemmanewbound} guarantees that,
if the efficiencies of Bob's detectors are
sufficiently close, then Alice can obtain little information about $\beta$
if Bob applies reporting strategy III with the parameters of Lemma
\ref{lemmanewbound}.   However, as Fig. \ref{figbound} illustrates, 
the efficiency differences need to be very small.

We should note that  Ref. \cite{LKBHTKGWZ13} implemented a
slightly different version of symmetrization of losses with setup I 
which aimed not to symmetrize the detection efficiencies but to
symmetrize Bob's detection probabilities for both values of Bob's
measurement basis $\beta\in\{0,1\}$.
This version of symmetrization of losses naturally incorporates
double click events, which can be assigned a random measurement
outcome.
However, multiphoton attack II still applies.

We also note again that Ref. \cite{LCCLWCLLSLZZCPZCP14} discussed multiphoton attack I
and implemented the countermeasure of reporting single
and double clicks.   Again, multiphoton attack II still
applies.

Below we discuss variations of the multiphoton attacks I and II
presented in the main text, as they apply to the protocols of
Refs. \cite{NJMKW12,LKBHTKGWZ13,LCCLWCLLSLZZCPZCP14,ENGLWW14,PJLCLTKD14}.
Because these protocols extend the task of private measurement of an
unknown qubit state of the main text to a setting with $N>1$ photon
pulses in different ways, we need to discuss the attacks separately
for each protocol. A summary is given in Table \ref{tablemaintext} in the main text.

\subsection{multiphoton attacks on the relativistic quantum bit commitment protocol of Ref. \cite{LKBHTKGWZ13}}
\label{LKBHTKGWZ13}

\subsubsection{The relativistic quantum bit commitment protocol of Ref. \cite{LKBHTKGWZ13}}

\label{Lunghiprotocol}

Ref. \cite{LKBHTKGWZ13} (co-authored by one of us)
demonstrated the quantum relativistic bit commitment protocol of
Ref. \cite{K12}, using an extra stage of pre-processing to 
allow commitments to be made by the pre-agreed actions of 
parties separated by several thousand kilometres.  
We discuss here the application of multiphoton attacks on the
protocol (as presented) and show that it does not guarantee 
hiding with unconditional security if the committer's detectors
have unequal efficiency and the committee becomes aware of 
the efficiencies. 
Ref. \cite{LKBHTKGWZ13} takes Bob as the committing party. We follow
this convention below. 

The protocol of Ref. \cite{LKBHTKGWZ13} is a relativistic quantum
protocol. The attacks that we present below are implemented in the
quantum stage of the protocol, which is nonrelativistic. For this
reason, here we only need to discuss the quantum stage.

The nonrelativistic quantum stage of the protocol is as follows. Alice
and Bob use setup I discussed in the main text and illustrated in
Fig. \ref{fig1}. Alice's photon source is a weak coherent source with
small average photon number $\mu$. Alice sends Bob $N$ photon pulses,
each encoding in the polarization a qubit state chosen randomly from
the set
$\mathcal{S}=\{\lvert 0\rangle,\lvert 1\rangle,\lvert
\tilde{+}\rangle,\lvert \tilde{-}\rangle\}$,
where $\mathcal{B}_0=\{\lvert 0\rangle,\lvert 1\rangle\}$ and
$\mathcal{B}_1=\{\lvert \tilde{+}\rangle,\lvert \tilde{-}\rangle\}$
are two qubit orthogonal bases. Ref. \cite{LKBHTKGWZ13} considers the
particular case that $\mathcal{B}_0$ and $\mathcal{B}_1$ are the
computational and Hadamard bases, respectively. Bob chooses a random
bit $\beta$. Immediately after their reception, Bob measures each of
the $N$ photon pulses in the qubit orthogonal basis
$\mathcal{B}_\beta$. In order to deal with losses in the quantum
channel, for each pulse sent by Alice, Bob sends Alice a message $m=1$
if the pulse produced a valid measurement outcome and $m=0$
otherwise. Bob chooses the committed bit $b$ and sends the message
$c= \beta\oplus b$ to Alice.

Ref. \cite{LKBHTKGWZ13} implemented the following version of
  the symmetrization of losses strategy. Bob tests his system by
implementing the protocol agreed with Alice, with the agreed
experimental parameters. Then, Bob computes the ratio
$R=\frac{n_0}{n_1}$, where $n_\beta$ is the number of pulses producing
valid measurement outcomes when Bob measures all pulses in the basis
$\mathcal{B}_\beta$, for $\beta\in\{0,1\}$. Then, when implementing
the protocol with Alice, Bob performs the following actions. If
$R\leq 1$, when a pulse produces a click,
Bob sets $m=1$ and assigns a valid measurement outcome with
probability $R$ if $\beta=1$, or with unit probability if
$\beta=0$. On the other hand, if $R> 1$, when a pulse produces a click, Bob sets $m=1$ and assigns a valid
measurement outcome with probability $\frac{1}{R}$ if $\beta=0$, or
with unit probability if $\beta=1$. This effectively makes Bob's detection probabilities for
the cases $\beta=0$ and $\beta=1$ equal when both parties follow the
protocol. We note that Ref. \cite{LKBHTKGWZ13} does not
  explicitly say whether this procedure 
applies only to single clicks, or to single and double clicks,
although we understand that double clicks were probably counted.

However, we need to consider cheating strategies available
to  Alice.  We make some simplifying assumptions (which if anything
reduce the power of cheating attacks) to illustrate these.
We assume that $\mathcal{B}_0$
and $\mathcal{B}_1$ are the computational and Hadamard bases,
respectively, as in Ref. \cite{LKBHTKGWZ13}. We assume that Bob uses
the single photon threshold detectors $D_0$ and $D_1$ to register
outcomes associated with the states $\lvert 0\rangle$
($\lvert +\rangle$) and $\lvert 1\rangle$ ($\lvert -\rangle$) if
$\beta=0$ ($\beta=1$). We assume that $\eta_{i\beta}=\eta_i\in(0,1)$,
for $i,\beta\in\{0,1\}$. Since Bob cannot guarantee his detector to
have exactly the same efficiencies, we assume that
$\eta_0\neq \eta_1$. Without loss of generality, we assume that
$1>\eta_0>\eta_1>0$.    We also
assume that Alice's preparation devices and Bob's measurement devices
are perfectly aligned. Thus, random BB84 states are prepared exactly
by Alice. Bob randomly chooses $\beta\in\{0,1\}$ and measures all
pulses exactly in the basis $\mathcal{B}_\beta$, where $\mathcal{B}_0$
and $\mathcal{B}_1$ are exactly the computational and Hadamard bases,
respectively. We note from these assumptions that, due to symmetry,
the detection probabilities when Bob measures in the basis
$\mathcal{B}_0$ are the same as the detection probabilities when Bob
measures in the basis $\mathcal{B}_1$. Thus, for large $N$, the ratio
$R$ obtained by Bob in his symmetrization of losses strategies is very
close to unity. We assume here that $R=1$.

\subsubsection{multiphoton attack I}
\label{singleattack}

Suppose that Bob only assign valid measurement outcomes to pulses
that produce a click in only one of his detectors. Furthermore, we
assume that Bob applies the symmetrization of losses strategies
described above. As discussed in the main text, Alice can implement
multiphoton attack I and gain information about Bob's
measurement basis $\mathcal{B}_\beta$. To illustrate this, consider a
setup in which Alice's polarization preparation devices and Bob's
polarizers are precisely aligned.  Alice can send a pulse with a large
number of photons $k$ in the same polarization state chosen from
$\mathcal{S}$; for example,
$\rho=(\lvert 0\rangle\langle 0\rvert)^{\otimes k}$. If Bob measures
the pulse in the basis $\mathcal{B}_0$ then the detection event
$(c_0,c_1)=(1,0)$ occurs with high probability, giving $m=1$ with unit
probability if $R\leq 1$, or with probability $\frac{1}{R}$ if
$R>1$. If Bob measures in the basis $\mathcal{B}_1$ then the detection
event $(c_0,c_1)=(1,1)$ occurs with high probability, giving
$m=0$. Thus, given $m$, Alice can learn $\beta$ with high
probability. Note that this attack applies
  whether or not $\eta_0=\eta_1$.

In practice, Alice and Bob's devices cannot be perfectly
precise. However, Alice can still learn significant information about
Bob's bit $\beta$ from the message $m$ with an appropriate choice of
$k$. In particular, Alice can send Bob photon pulses prepared with a
coherent source with average photon number $\mu>>1$ and guess
Bob's bit $\beta$ with probability close to unity, as our experimental
simulation shows (see Fig. \ref{guessingexp} of the main text). Furthermore, Alice can
increase her probability to guess $\beta$ by sending Bob various
pulses with a large number of photons.

Partial countermeasures that Bob can apply against these
attacks are the following. First, Bob can use reporting strategy II  and assign a random outcome when a double click is obtained; or he can use reporting strategy III with the parameters of Lemma \ref{lemmanewbound}. Lemma \ref{lemma0} shows reporting strategy II is perfectly effective if Bob's detectors have exactly equal efficiencies. Lemmas \ref{lemma6} and \ref{lemmanewbound} guarantee that,
if the efficiencies of Bob's detectors are
sufficiently close, then Alice can obtain little information about $\beta$
if Bob applies reporting strategy II or III, with the parameters of Lemma \ref{lemma6}  or \ref{lemmanewbound}, respectively. However, as Fig. \ref{figbound} illustrates, 
the efficiency differences need to be very small.
Second, Bob can
abort the protocol if a fraction of pulses greater than
$r_\text{double}^\text{max}\in(0,1)$, previously agreed with Alice,
produces double clicks.

\subsubsection{multiphoton attack II}
\label{app1}

As noted above, multiphoton attack I would apply if Bob discarded
double clicks as invalid measurements.  We thus assume that Bob sends Alice the message $m=1$ for each pulse sent
by Alice that produces a click in at least one of his detectors. We
also assume that Bob applies the countermeasure against multiphoton
attacks in which he aborts if he observes a ratio of double click
events higher than a maximum value
$r_\text{double}^\text{max}\in(0,1)$ agreed with Alice. We present a
multiphoton attack by Alice that for certain parameters allows Alice
to guess Bob's bit $\beta$ with probability approaching unity as the
number of pulses $N$ of the protocol increases.

Alice generates two nonempty and nonintersecting subsets of $[N]$,
$\Omega_\text{protocol}$ and $\Omega_\text{attack}$, satisfying
$\Omega_\text{protocol}\cup\Omega_\text{attack}=[N]$. Let
$a=\frac{\lvert\Omega_\text{attack}\rvert}{N}$. It follows that
$\lvert\Omega_\text{attack}\rvert=aN$ and
$\lvert\Omega_\text{protocol}\rvert=(1-a)N$, with $a\in(0,1)$. Alice
sends Bob $N$ photon pulses. The polarization degrees of freedom of
each of the pulses with labels from the set $\Omega_\text{protocol}$
are prepared by Alice in a quantum state as established in the
protocol agreed with Bob. Each of the pulses with labels from the set
$\Omega_\text{attack}$ is prepared by Alice in $k^*>1$ photons,
encoding in the polarization a quantum state $\rho$. Let
$P_\text{protocol}(1\vert \beta)$ and $P_\text{attack}(1\vert \beta)$
be the probabilities that a pulse with label from the sets
$\Omega_\text{protocol}$ and $\Omega_\text{attack}$ activates a
detection in at least one of the two detectors, respectively, i.e that
Bob sends Alice the message $m=1$ for that pulse, when Bob measures
the pulses in the basis $\mathcal{B}_\beta$, for $\beta\in\{0,1\}$.

Alice chooses $k^*$, $\rho$ and $a$ in such a way that: 1) the probability that there are more than $Nr_\text{double}^\text{max}$ double clicks in Bob's detectors is negligible; and 2) it holds that 
\begin{equation}
\label{extended}
g_1(a,\rho,k^*)>g_0(a,\rho,k^*)>0,
\end{equation}
where
\begin{equation}
\label{extended0}
g_\beta(a,\rho,k^*)=aP_\text{attack}(1\vert \beta)+(1-a)P_\text{protocol}(1\vert \beta),
\end{equation}
for $\beta\in\{0,1\}$. Let $Z_\beta$ denote the random variable corresponding to the number of pulses producing that at least one of the two of Bob's detectors click when Bob measures in the basis $\mathcal{B}_\beta$, and let $E(Z_\beta)$ denote its expectation value, for $\beta\in\{0,1\}$. We have that 
\begin{equation}
\label{extended2}
E(Z_\beta)=Ng_\beta(a,\rho,k^*),
\end{equation}
for $\beta\in\{0,1\}$. Thus, from (\ref{extended}) and (\ref{extended2}), we have
\begin{equation}
\label{extended3}
E(Z_1)>E(Z_0).
\end{equation}
Alice defines a parameter $\delta\in(0,1)$ satisfying 
\begin{equation}
\label{extended5}
E(Z_0)(1+\delta)=E(Z_1)(1-\delta)=G_N.
\end{equation}
If the number of events $Z$ reported by Bob to give at least one
click, i.e. for which Bob sends the message $m=1$ to Alice, is smaller
than $G_N$, Alice guesses that Bob measured in the basis
$\mathcal{B}_0$, otherwise Alice guesses that Bob measured in the
basis $\mathcal{B}_1$. It follows that Alice's average probability to
guess Bob's bit $\beta$ in this attack is given by
\begin{eqnarray}
\label{z30}
P_{\text{guess}}&=&\frac{1}{2}\text{Prob}[Z_0<E(Z_0)(1+\delta)]\nonumber\\
&&\qquad+\frac{1}{2}\text{Prob}[Z_1\geq E(Z_1)(1-\delta)]\nonumber\\
&=&1-\frac{1}{2}\text{Prob}[Z_0\geq E(Z_0)(1+\delta)]\nonumber\\
&&\qquad-\frac{1}{2}\text{Prob}[Z_1< E(Z_1)(1-\delta)]\nonumber\\
&\geq&1-\frac{1}{2}\Bigl[e^{-N\frac{g_0(a,\rho,k^*)\delta^2}{3}}+e^{-N\frac{g_1(a,\rho,k^*)\delta^2}{2}}\Bigr],\nonumber\\
\end{eqnarray}
where in the last line we used (\ref{extended2}) and two Chernoff bounds \cite{Mitzenmacherbook}. Thus, since $g_\beta(a,\rho,k^*)>0$ for $\beta\in\{0,1\}$, we see from (\ref{z30}) that $P_{\text{guess}}$ approaches unity with increasing $N$.

A possible countermeasure by Bob against this type of attack
is to measure suitable statistical properties of the pulses sent by
Alice.  For example, Bob may modify his setup by adding more beam
splitters and single photon detectors. In this way, Bob can measure
the number of pulses producing clicks across different combinations of
his detectors. If Bob observes statistics that deviate considerably
from the statistics expected in the protocol agreed with Alice, Bob
may abort the protocol.

\subsubsection{Example of multiphoton attack II: a double-photon attack}
\label{double}

Suppose Bob reports double clicks, i.e. sends the message
$m = 1$ to Alice and randomly chooses the outcome. We present an attack by Alice in which the photon statistics of the
pulses that she sends to Bob correspond to those agreed for her
weak coherent source. 
Thus, there is no way in which Bob can know that Alice is
cheating.

We illustrate the attack on an implementation with parameters
$N=2\times 10^{7}$, $\eta_{0}=0.12$, $\eta_{1}=0.08$,
$d_0=d_1=10^{-5}$ and a weak coherent source with average
  photon number $\mu=0.05$. The values $\eta_0=0.12$ and
  $\eta_1=0.08$ are consistent with uncertainty values for the
  detection efficiencies of $0.02$, which is a common value.
  We show that Alice can guess Bob's bit $\beta$
  with failure probability smaller than $0.035$. Our example shows
  that Alice could in principle undetectably exploit the difference of
  Bob's detector efficiencies due to their experimental uncertainty to
  guess Bob's input $\beta$ with high probability, for some values of
  the experimental parameters.
 
Note that the actual experimental parameters reported in
  Ref. \cite{LKBHTKGWZ13} are different: 
$N=2.2\times 10^{6}$, $\eta\approx0.06$ and $\mu=0.05$. The values of
$d_0,d_1$ and the uncertainties of Bob's detection efficiencies were
not reported by Ref. \cite{LKBHTKGWZ13}.   Even if we assume the above values for
$d_0$ and $d_1$ and uncertainties of the detection
efficiencies not greater than $0.02$, we have not shown the
attack discussed allows Alice to guess $\beta$ with probability close
to unity in the experiment actually implemented.   What our illustration shows is that reproducing the
implementation with modestly different and plausible parameters leads to an
unnoticed insecurity.

As discussed in the main text, we assume that Alice knows the detection efficiencies
$\eta_0$ and $\eta_1$ of Bob's respective detectors $D_0$ and
$D_1$. This is, for example, because Alice has manufactured the
detectors used by Bob, or because she has obtained information about
their detection efficiencies in some other way. We also assume that
Alice can know the number of photons for each pulse prepared by her weak
coherent source. In particular, we assume that Alice knows which of her
prepared pulses have two photons. Alice prepares the states
perfectly. These assumptions might seem very strong. But these are
standard assumptions when trying to show unconditional security
against Alice, in which it is assumed that Alice has access to
perfect technology and is only limited by the laws of physics.

Furthermore, we assume that Bob does not know the values of $\eta_0$
and $\eta_1$. For example, Bob only knows that $\eta_0$ and $\eta_1$
are within some -- possibly small -- range due to their assigned
uncertainties. We assume that both Bob and Alice know the dark count
probabilities $d_0$ and $d_1$ of Bob's detectors.

In her attack, as required by the protocol agreed with Bob, Alice
prepares $N$ pulses with a weak coherent source of average photon
number $\mu$, with all the photons in each pulse encoding the same
qubit state, which is randomly chosen from the BB84 set for each
pulse. Alice sends the $N$ pulses to Bob with their labels
$j\in[N]$. Let $\Omega$ be the set of labels for pulses with two
photons prepared in the state $\lvert 0\rangle$, and let
$\Omega^{\text{rep},\beta}$ be the set of labels from $\Omega$ for
which Bob sends Alice the message $m=1$, i.e for which he assigns a
valid measurement outcome, when he measures the pulses in the basis
$\mathcal{B}_\beta$, for $\beta\in\{0,1\}$. The probability that a
pulse prepared by Alice has two photons is given by
$p_2=e^{-\mu}\mu^2/2$.  Let $N_0=\lvert \Omega\rvert$ and
$N_0^{\text{rep},\beta}=\lvert \Omega^{\text{rep},\beta}\rvert$, for
$\beta\in\{0,1\}$. In her attack, Alice only focuses on the labels of
pulses from the set $\Omega^{\text{rep},\beta}$, with the goal of
guessing the value of $\beta\in\{0,1\}$.

Let $P_{\beta}(i)$ be the probability that a pulse of two photons that Alice
prepares in the state $\lvert 0\rangle$ activates a detection in at
least one of the two Bob's detectors if $i=1$, and that it does not
activate any detection if $i=0$, when Bob measures in the basis
$\mathcal{B}_\beta$, for $\beta\in\{0,1\}$. We compute $P_{\beta}(i)$ for
$i,\beta\in\{0,1\}$ below and show that $P_0(1)>P_1(1)>0$ for
$\eta_0>\eta_1>0$. Thus, we have
$E(N_0^{\text{rep},0})>E(N_0^{\text{rep},1})$, where
\begin{equation}
\label{z1.5}
E(N_0^{\text{rep},\beta})=N e^{-\mu}\mu^2P_\beta(1)/8
\end{equation}
 is the expectation value of the random variable $N_0^{\text{rep},\beta}$, for $\beta\in\{0,1\}$. Therefore, since $P_0(1)>P_1(1)>0$, we can find numbers $\delta>0$ and $G$ such that
\begin{equation}
\label{z1.2}
E(N_0^{\text{rep},1})(1+\delta)=E(N_0^{\text{rep},0})(1-\delta)=G.
\end{equation}
It is straightforward to obtain that
\begin{eqnarray}
\label{z1.3}
\delta&=&\frac{P_{0}(1)-P_{1}(1)}{P_{0}(1)+P_{1}(1)},\\
\label{z1.4}
G&=&\frac{N e^{-\mu}\mu^2 P_{0}(1) P_{1}(1)}{4(P_{0}(1) +P_{1}(1))}.
\end{eqnarray}

Alice's guessing strategy is as follows. Bob measures in the basis
$\mathcal{B}_\beta$ and Alice knows the value of $N_0^{\text{rep},\beta}$, for
some $\beta\in\{0,1\}$. But Alice does not know the value of $\beta$, which she
is trying to guess. To clarify this issue, let us denote
$N_0^{\text{rep}}=N_0^{\text{rep},\beta}$. Alice computes the value of
$N_0^{\text{rep}}$ from the pulses for which Bob reported $m=1$. If
$N_0^{\text{rep}}\leq G$, Alice guesses $\beta=1$, otherwise she guesses
$\beta=0$. As we show below, from (\ref{z1.2}) and from
Chernoff bounds, Alice's probability of failure $P_{\text{fail}}$
in guessing $\beta$ is very small for $N$ large enough.

We compute an upper bound on Alice's probability of failure
$P_{\text{fail}}$. From (\ref{z1.3}), and since we show below that
$P_0(1)>P_1(1)>0$ from $\eta_0>\eta_1>0$, we have $0<\delta<1$. It
follows that
\begin{eqnarray}
\label{abcd3}
P_{\text{fail}}&=&\frac{1}{2}\bigl( \text{Prob}(N_0^{\text{rep},0}\leq G) + \text{Prob}(N_0^{\text{rep},1}> G)\bigr)\nonumber\\
&<&\frac{1}{2}\biggl( e^{-\frac{E(N_0^{\text{rep},0})\delta^2}{2}} + e^{-\frac{E(N_0^{\text{rep},1})\delta^2}{3}}\biggr)\nonumber\\
&=&\frac{1}{2}\biggl( e^{-\frac{N e^{-\mu}\mu^2P_0(1)\delta^2}{16}} + e^{-\frac{N e^{-\mu}\mu^2P_1(1)\delta^2}{24}}\biggr),
\end{eqnarray}
where in the first line we used the definition of Alice's guessing
strategy; in the second line we used (\ref{z1.2}), $0<\delta<1$ and
Chernoff bounds \cite{Mitzenmacherbook}; and in the last line we used
(\ref{z1.5}). Thus, we see from (\ref{abcd3}) that Alice's probability
of failure can be negligible for $N$ large enough.

It is important to note that Bob's reported detection frequencies are the
same for both cases: 1) when he measures all the received pulses in
the computational basis ($\beta=0$), and 2) when he measures all the
received pulses in the Hadamard basis ($\beta=1$). This is easy to see by
noting the following. First, Alice prepares BB84 states
randomly. Second, Bob's detection probabilities for pulses prepared
by Alice in the state $\lvert 0\rangle$ ($\lvert 1\rangle$) and measured
by Alice in a basis $\mathcal{B}_\beta$ are the same for pulses prepared
by Alice in the state $\lvert +\rangle$ ($\lvert -\rangle$) and measured
by Bob in a basis $\mathcal{B}_{\bar{\beta}}$ and vice versa. Thus, in the
symmetrization of losses strategy implemented by Bob he gets the ratio $R$ very close to unity, for large $N$. We conclude from this attack that the symmetrization of losses strategy does not guarantee to Bob that
Alice cannot obtain information about her choice of basis $b$. It follows that the experimental demonstration of bit commitment of Ref. \cite{LKBHTKGWZ13} is not hiding; hence, it is not unconditionally secure.

We compute $P_{0}(1)$ and $P_{1}(1)$ and we show that
$P_{0}(1) > P_{1}(1)>0$, for $\eta_0>\eta_1>0$. Both photons of a
two-photon pulse prepared by Alice in the state $\lvert 0\rangle$ go to
the detector $D_0$ when Bob measures in the computational basis
(i.e. when $\beta=0$). Thus, we have
\begin{equation}
\label{abcd1}
P_{0}(1)=1-(1-d_0)(1-d_1)(1-\eta_0)^2,
\end{equation}
where $d_i$ the probability of a dark count by the detector $D_i$, for $i\in\{0,1\}$. On the other hand, when Bob measures in the Hadamard basis (i.e. when $\beta=1$), each photon goes to the detector $D_i$ with probability $\frac{1}{2}$. Thus, we have
\begin{eqnarray}
\label{abcd2}
P_{1}(1)&=&1-\frac{(1-d_0)(1-d_1)}{4}\bigl[(1-\eta_0)^2+(1-\eta_1)^2\nonumber\\
&&\qquad +2(1-\eta_0)(1-\eta_1)\bigr].
\end{eqnarray}
Note that (\ref{abcd1}) and (\ref{abcd2}) follow straightforwardly from Lemma \ref{lemma2} of the main text. It follows from (\ref{abcd1}) and (\ref{abcd2}) that $P_{0}(1) > P_{1}(1)>0$, for $\eta_0>\eta_1>0$.

We compute an example of Alice's probability of failure
$P_{\text{fail}}$ in this attack. We consider the parameters
$d_0=d_1=10^{-5}$, $\mu=0.05$, $\eta_0=0.12$, $\eta_1=0.08$,
$N=2\times 10^7$. These are reasonable experimental parameters. For
example, $\mu=0.05$, $N=2.2\times 10^6$ and values of $\eta$ around
$0.06$ are reported by Ref. \cite{LKBHTKGWZ13}. We note that our
chosen value for $N$ is roughly only ten times the value chosen in
Ref. \cite{LKBHTKGWZ13}. Although Ref. \cite{LKBHTKGWZ13} does not
report this, uncertainties in detection efficiency of $0.02$ or higher
are usual;
hence, values
$\eta_0=0.12$ and $\eta_1=0.08$ are reasonable. Moreover, a reasonable
experimental value for the dark count rate is 1000/s, which for a
standard gate width value of 10 ns gives a dark count probability of
$10^{-5}$. From (\ref{z1.3}), (\ref{abcd1}) and (\ref{abcd2}), we
obtain $P_{0}(1)=0.2256$, $P_{1}(1)=0.1900$, $\delta=0.0857$. Thus,
from (\ref{abcd3}), we obtain $P_{\text{fail}}<0.035$.

The previous attack can straightforwardly be extended in various
ways. For example, Alice may try to guess Bob's measurement choice
$\beta\in\{0,1\}$ from pulses for which Bob reported $m=1$ that Alice prepared
in some particular state $\lvert\psi\rangle$ from the BB84 set with
some particular number of photons $k$, or with number of photons in
some range, for example with $k\geq 2$. More generally, the idea of
the previous attack is that Alice follows the protocol agreed with Bob, but still
is able to exploit the fact that Bob's detectors have different
detection efficiencies $\eta_0\neq\eta_1$, and that she has some
knowledge about $\eta_0$ and $\eta_1$.

In the previous attack and extensions discussed above, Alice follows the
protocol agreed with Bob, i.e. she prepares BB84 states randomly with a weak
coherent source set to a small average photon number $\mu$ previously
agreed with Bob. This guarantees to Alice that Bob cannot detect her
cheating, because Alice is following the agreed protocol. In more
general attacks, Alice may deviate from the agreed protocol in various
ways. For example, she may prepare BB84 states with a probability
distribution different to random, or she may prepare more general
states, she could prepare some photon pulses with different average
photon number, she could use different photon sources, etc. However,
all these variations from the agreed protocol may be detected in
principle by Bob. Bob could, for example, apply some specific
quantum measurements in a subset of the received pulses and verify
that the detection probabilities and the outcome probabilities
correspond to the expected values in the agreed protocol. 

We note that the previous attack exploits the fact that the
probability that Bob reports the detection of a pulse is different for
different measurement basis. A partial countermeasure against this
attack, or extensions, consists in implementing a nontrivial
probabilistic reporting strategy, discussed in the main text, with
appropriate values of the reporting probabilities $S_{c_0c_1\beta}$,
for $c_0,c_1,\beta\in\{0,1\}$. However, as shown in Theorem \ref{lemma3} of the
main text, for any nontrivial probabilistic reporting strategy applied
by Bob, Alice can obtain some information about Bob's chosen bit
$\beta$. Thus, this countermeasure cannot be perfectly effective.

\subsubsection{Extending the bounds of Lemmas \ref{lemma5} --\ref{lemmanewbound}}

\label{bounds}

We recall that in Ref. \cite{LKBHTKGWZ13} Bob applies a
  particular form of the symmetrization of losses as reporting
  strategy, in which the goal is to symmetrize his detection
  probabilities for his both choices $\beta\in\{0,1\}$ of measurement
  bases. However, below we deduce an upper bound on the probability
  that Alice can guess Bob's bit $\beta$ when Bob applies either the
  reporting strategy II or the reporting strategy III, with the
  parameters of Lemmas \ref{lemma6} and \ref{lemmanewbound},
  respectively.

The bounds of Lemmas \ref{lemma6} and \ref{lemmanewbound} apply to the bit commitment protocol of
Ref. \cite{LKBHTKGWZ13} when Alice implements a multiphoton attack
consisting in a single photon pulse and Bob applies reporting strategy
II or III, respectively.
Thus, Lemma \ref{lemma5} implies an upper
bound on Alice's probability to guess Bob's bit $\beta$ with a
multiphoton attack of a single pulse. Alice can extend her attack by
sending various multiphoton pulses. In principle, the $N$ pulses that
Alice sends Bob in the protocol can be chosen by Alice as multiphoton
pulses with appropriately chosen photon numbers $k$. Lemmas
\ref{lemma5} and \ref{lemma6}, or \ref{lemma5} and
\ref{lemmanewbound}, can be used to deduce an upper bound on Alice's probability $P_\text{guess}$ to guess $\beta$ for this more
general case, when Bob applies reporting strategy II or III,
respectively. 

Let us assume for now that Bob applies an arbitrary reporting strategy. For $i\in[N]$, let $k_i$ be the number of photons encoded
in the polarization degrees of freedom of the $i$th pulse. We define $B_i$ as a number satisfying
\begin{equation}
\label{rrrb1}
\bigl\lvert P_\text{report}\bigl(1\vert 1,\rho_i,k_i\bigr)-P_\text{report}\bigl(1\vert 0,\rho_i,k_i\bigr)\bigr\rvert\leq B_i,
\end{equation}
for the reporting strategy applied by Bob, and for any quantum state
$\rho_i$ of $k_i$ qubits encoded in the $k_i$ photons of the $i$th
pulse, which may be entangled among the $k_i$ qubits, and which may
also be entangled with any other quantum systems. For example, if Bob
applies the reporting strategy II with the parameters of Lemma
\ref{lemma6} then $B_i$ can be given by the bound $B_\text{II}$ of
Lemma \ref{lemma6}, which is valid for any
$k_i\in\{0,1,2,\ldots\}$. If Bob applies the reporting strategy III
with the parameters of Lemma \ref{lemmanewbound} then $B_i$ can be
given by the bound $B_\text{III}$ of Lemma \ref{lemmanewbound}, which
is valid for any $k_i\in\{0,1,2,\ldots\}$. Furthermore, If Bob applies
the reporting strategy III with the parameters of Lemma \ref{lemma1}
and $k_i\in\{0,1\}$ then $B_i$ can be given by the bound
$B_\text{III}^{k_i}$ of Lemma \ref{lemma1}. We show below that
\begin{equation}
\label{bound}
P_\text{guess} \leq \frac{1}{2}+\frac{1}{2}\sum_{i=1}^N B_i,
\end{equation}
for any multiphoton attack by Alice.

If Bob applies the reporting strategy II with the parameters of Lemma \ref{lemma6} then we have from (\ref{bound}) and from Lemma \ref{lemma6} that
\begin{equation}
\label{boundII}
P_\text{guess} \leq \frac{1}{2}+\frac{NB_{\text{II}}}{2},
\end{equation}
for any multiphoton attack by Alice. However, as illustrated in Fig. \ref{figbound}, the bound $B_{\text{II}}$ of Lemma \ref{lemma6} is not very small even for relatively close values for the detection
efficiencies. Since $N$ is usually required to be large in order to
guarantee security against Bob -- for example, $N=2.2\times 10^6$ in
the protocol of Ref. \cite{LKBHTKGWZ13} -- the upper bound on
$P_\text{guess}$ given by (\ref{boundII}) is not smaller than unity for
any practical case.

It seems that our bounds would be more useful if Bob applies the reporting strategy III instead. Suppose that Bob applies reporting strategy III with the parameters of Lemma \ref{lemmanewbound}. From the $N$ pulses that Alice sends Bob, let $N_\text{empty}$, $N_\text{single}$ and $N_\text{mult}$ be the number of pulses with zero photons, with one photon, and with more than one photon, respectively.
Suppose  that Bob can guarantee with probability at least $1-\epsilon$ that
\begin{eqnarray}
\label{condition1}
N_\text{empty}&\geq& (1-\delta_\text{empty})N,\\
\label{condition2}
N_\text{mult}&\leq& \delta_\text{mult}N,
\end{eqnarray}
where
\begin{equation}
\label{condition3}
0\leq\delta_\text{mult}\leq\delta_\text{empty}\leq1,
\end{equation}
for some $0\leq\epsilon<<1$. Bob could guarantee this by aborting if the number of pulses not producing any click is below a threshold, or if the number of double clicks is above a threshold, for instance. These thresholds must be chosen carefully so that the probability that Bob aborts is negligible if Alice and Bob follow the agreed protocol. Since, from Lemmas \ref{lemma1} and \ref{lemmanewbound}, we have $B_\text{III}^0\leq B_\text{III}^1\leq B_\text{III}$ for $\delta$ small enough, it follows from (\ref{bound}), from (\ref{condition1}) -- (\ref{condition3}), and from Lemmas \ref{lemma1} and \ref{lemmanewbound} that
\begin{eqnarray}
\label{boundIV}
P_\text{guess}& \leq& \epsilon +(1-\epsilon)\biggl[\frac{1}{2}+\frac{N}{2}\Bigl((1-\delta_\text{empty})B_{\text{III}}^0\nonumber\\
&&\qquad\quad+(\delta_\text{empty}-\delta_\text{mult})B_{\text{III}}^1+\delta_\text{mult}B_{\text{III}}\Bigr)\biggr].\nonumber\\
\end{eqnarray}

Bob can always guarantee (\ref{condition1}) and (\ref{condition3}) with $\delta_\text{empty}=1$ and any $\delta_\text{mult}\in[0,1]$. In this case, if (\ref{condition2}) is also satisfied, then (\ref{boundIV}) reduces to
\begin{eqnarray}
\label{boundIII}
P_\text{guess}& \leq& \epsilon +(1-\epsilon)\biggl[\frac{1}{2}+\frac{N}{2}\Bigl((1-\delta_\text{mult})B_{\text{III}}^1\nonumber\\
&&\qquad\qquad\qquad+\delta_\text{mult}B_{\text{III}}\Bigr)\biggr].
\end{eqnarray}
We see from Lemma \ref{lemma1} that $B_{\text{III}}^1\leq 11\delta +3\delta^2$. Thus if the dark count probabilities are small enough; more precisely, if $N\delta<<1$, then the bound (\ref{boundIII}) can guarantee $P_\text{guess}-\frac{1}{2}<<\frac{1}{2}$ if $\epsilon<<1$ and $N\delta_\text{mult}B_{III}<<1$. 

The bound (\ref{boundIII}) seems promising in protocols agreed by Alice and Bob where the fraction of pulses with more than one photon is small enough. For example, if in the protocol agreed with Bob, Alice has a source of pairs of entangled photons in
which she measures one of the photons and sends the other one to
Bob, with transmissions considered valid if Alice obtains a
measurement outcome, the probability that a pulse has more than one photon is very small. Thus, Bob expects that the the fraction of pulses with more than one photon is small, if Alice follows the agreed protocol. 

The bound (\ref{boundIV}) with $\delta_\text{empty}<<1$ seems promising in protocols agreed by Alice and Bob where the fraction of pulses with more than one photon is small enough and the fraction of empty pulses is large enough. For example, if in the protocol agreed by Alice and Bob Alice uses a weak coherent source with small average photon number $\mu$, the probability that a pulse is empty is $p_0=e^{-\mu}\approx 1$, and the probability that a pulse has more than one photon is $p_\text{mult}=1-(1+\mu)e^{-\mu}<<1$. Thus, Bob expects that the the fraction of empty pulses is large and the fraction of pulses with more than one photon is small, if Alice and Bob follow the agreed protocol. 

If useful partial security can indeed be attained with
current technology by (for example) these methods, it would be worth
exploring whether greater security can be
attained by extending the bit commitment protocols.    
Suppose for example that one can implement a bit commitment protocol
with a guarantee that $P_{\rm guess} < \frac{1}{2} + \epsilon_{\rm
  hiding} $ and 
$p_0 + p_1 < 1 + \epsilon_{\rm binding}$, where $\epsilon_{\rm hiding}
\ll \frac{1}{2} $ and $\epsilon_{\rm binding}$ is very small.     
This guarantee may be conditional on some assumptions, for example
that Alice only uses photonic sources, with no side-channel attacks.
One could then, for example, consider a new protocol with $M$
iterations of the
above as sub-protocols in which   
Bob commits to bits $z_1 , \ldots , z_M$. The new protocol then commits Bob to the XOR bit $z = z_1 \oplus \ldots \oplus z_M$.
If Alice carries out only the attacks we have described separately
on each iteration, she obtains incomplete information
about each $z_i$.   Hence, by the piling-up lemma \cite{Pilinguplemma},
her information about the committed bit $z$ tends to zero
as $M$ gets large.   However, whether Alice has more powerful
attacks (under the given assumptions) needs to be
analysed, as does the dependence of $\epsilon_{\rm binding}$ on $M$.   

We leave as open problems to investigate whether the bounds
(\ref{boundIV}) and (\ref{boundIII}) can provide useful security in
practice and (if so) whether variant protocols offer further practical security
advantages.    Similar comments apply to the other implementations
of bit commitment schemes discussed below.

We show (\ref{bound}). In an arbitrary multiphoton attack, Alice
encodes a quantum state $\rho$ in the $N$ photon pulses that Alice
sends Bob. Each pulse has an arbitrary number of photons chosen by
Alice. The state $\rho$ is an arbitrary entangled state among all the
qubits encoded by the polarization degrees of freedom of the photons
in the $N$ pulses, which can also be entangled with an ancilla held by
Alice.

Bob sends Alice a message $m=(m_1,\ldots,m_N)\in\{0,1\}^N$ indicating
that the $j$th pulse produced a valid measurement outcome if $m_j=1$,
or otherwise if $m_j=0$, for $j\in[N]$. Let $P(m\vert \beta)$ be the
probability that Bob sends the message $m\in\{0,1\}^N$ when he
measures in the basis $\mathcal{B}_\beta$, and let $P_\beta$ denote
the probability distribution for $m\in\{0,1\}^N$ given $\beta$, for
$\beta\in\{0,1\}$. The probability $P_\text{guess}$ that Alice guesses
Bob's bit $\beta$ satisfies
\begin{eqnarray}
\label{boundeq1}
P_\text{guess}&\leq&\frac{1}{2}+\frac{1}{2}\lVert P_0 - P_1\rVert\nonumber\\
&=&\frac{1}{2}+\frac{1}{4}\sum_{m}\bigl\lvert P(m\vert 0)-P(m\vert 1)\bigr\rvert,\nonumber\\
\end{eqnarray}
where $\lVert P_0 - P_1\rVert$ is the variational distance between the
probability distributions $P_0$ and $P_1$, where in the second line we
used the definition of the variational distance, and where $m$ runs
over $\{0,1\}^N$.

 We show below that 
\begin{equation}
\label{boundeq2}
\sum_{m}\bigl\lvert P(m\vert 0)-P(m\vert 1)\bigr\rvert\leq 2\sum_{i=1}^NB_i.
\end{equation}
Thus, from (\ref{boundeq1}) and (\ref{boundeq2}) we obtain the bound (\ref{bound}).

Let $m_0=1$. We have
\begin{equation}
\label{boundeq3}
P(m\vert \beta)=\prod_{i=1}^N P_i(m_i\vert m_0m_1\cdots m_{i-1}\beta),
\end{equation}
where $P_i(j_i\vert j_0j_1\cdots j_{i-1}b)$ is the probability that $m_i=j_i$ given that
$m_l=j_l$ and that $\beta=b$,  for $(j_0,\ldots,j_N)\in\{0,1\}^{N+1}$, $b\in\{0,1\}$, $l\in\{0,1,\ldots,i-1\}$ and $i\in[N]$. From (\ref{boundeq2}) and (\ref{boundeq3}) we see that (\ref{boundeq2}) follows from
\begin{eqnarray}
\label{boundeq6}
2\sum_{i=1}^N B_i&\geq&\sum_{m}\biggl\lvert \prod_{i=1}^N P_i(m_i\vert m_0m_1\cdots m_{i-1}0)\nonumber\\
&&\quad-\prod_{i=1}^N P_i(m_i\vert m_0m_1\cdots m_{i-1}1)\biggr\rvert.
\end{eqnarray}

We show (\ref{boundeq6}). Let
$\rho_i(j_1\ldots j_{i-1}b)$ be the quantum state encoded in the $k_i$
qubits of photon-polarization in the $i$th pulse when $m_l=j_l$ and
$\beta=b$, for $l\in\{1,2,\ldots,i-1\}$, $b\in\{0,1\}$ and $i\in[N]$.
We see that
\begin{eqnarray}
\label{boundeq4}
&&P_i(m_i\vert m_0m_1\cdots m_{i-1}\beta)\nonumber\\
&&\qquad= P_\text{report}\bigl(m_i\vert \beta,\rho_i(m_1\ldots m_{i-1}\beta),k_i\bigr).
\end{eqnarray}
Thus, from (\ref{rrrb1}), we have
\begin{equation}
\label{boundeq5}
\lvert P_i(1\vert m_0m_1\cdots m_{i-1}1)-P_i(1\vert m_0m_1\cdots m_{i-1}0)\rvert\leq B_i,
\end{equation}
for $i\in[N]$.

We show (\ref{boundeq6}) by induction. We first show (\ref{boundeq6}) for the case $N=1$. We have,
\begin{eqnarray}
\label{boundeq7}
&&\sum_{m_1=0}^1\lvert P_1(m_1\vert m_01)-P_1(m_1\vert m_00)\rvert\nonumber\\
&&\qquad\qquad\qquad =2\lvert P_1(1\vert m_01)-P_1(1\vert m_00)\rvert\nonumber\\
&&\qquad\qquad\qquad \leq 2B_1,
\end{eqnarray}
as claimed, where in the first line we used that $P_1(1\vert m_0\beta)=1-P_1(0\vert m_0\beta)$ for $\beta\in\{0,1\}$, and in the second line we used (\ref{boundeq5}) for the case $i=N=1$.

Now we assume that (\ref{boundeq6}) holds for the case
$N=M\in\mathbb{N}$. We show that (\ref{boundeq6}) holds for the case
$N=M+1$. To simplify notation we take $m\in\{0,1\}^{M+1}$,
$y=(m_0,m_1,\ldots,m_N)$ and $x=m_{N+1}$. Using a telescoping sum, we
have
\begin{eqnarray}
\label{boundeq8}
W&\equiv& \sum_m\Biggl\lvert \prod_{i=1}^N P_i(m_i\vert m_0m_1\cdots m_{i-1}1) \nonumber\\
&&\qquad\qquad - \prod_{i=1}^N P_i(m_i\vert m_0m_1\cdots m_{i-1}0)\Biggr\rvert\nonumber\\
&=&\sum_x\sum_y\Biggl\lvert \bigl(P_{M+1}(x\vert y1)-P_{M+1}(x\vert y0)\bigr)\times\nonumber\\
&&\qquad\qquad \times\prod_{i=1}^M P_i(y_i\vert y_0y_1\cdots y_{i-1}0)\nonumber\\
&&\qquad +\biggl( \prod_{i=1}^M P_i(y_i\vert y_0y_1\cdots y_{i-1}1)\nonumber\\
&&\qquad\qquad\quad-\prod_{i=1}^M P_i(y_i\vert y_0y_1\cdots y_{i-1}0)\biggr)\times\nonumber\\
&&\qquad\qquad \times P_{M+1}(x\vert y1)\Biggr\rvert\nonumber\\
&=&\sum_y\Biggl\lvert \bigl(P_{M+1}(0\vert y1)-P_{M+1}(0\vert y0)\bigr)\times\nonumber\\
&&\qquad\qquad \times\prod_{i=1}^M P_i(y_i\vert y_0y_1\cdots y_{i-1}0)\nonumber\\
&&\qquad +\biggl( \prod_{i=1}^M P_i(y_i\vert y_0y_1\cdots y_{i-1}1)\nonumber\\
&&\qquad\qquad\quad-\prod_{i=1}^M P_i(y_i\vert y_0y_1\cdots y_{i-1}0)\biggr)\times\nonumber\\
&&\qquad\qquad \times P_{M+1}(0\vert y1)\Biggr\rvert\nonumber\\
&&\quad +\sum_y\Biggl\lvert \bigl(P_{M+1}(1\vert y1)-P_{M+1}(1\vert y0)\bigr)\times\nonumber\\
&&\qquad\qquad \times\prod_{i=1}^M P_i(y_i\vert y_0y_1\cdots y_{i-1}0)\nonumber\\
&&\qquad +\biggl( \prod_{i=1}^M P_i(y_i\vert y_0y_1\cdots y_{i-1}1)\nonumber\\
&&\qquad\qquad\quad-\prod_{i=1}^M P_i(y_i\vert y_0y_1\cdots y_{i-1}0)\biggr)\times\nonumber\\
&&\qquad\qquad \times P_{M+1}(1\vert y1)\Biggr\rvert\nonumber\\
&\leq&\sum_y C_y,
\end{eqnarray}
where $C_y=\max\{C_y^{-1,-1},C_y^{-1,1},C_y^{1,-1},C_y^{1,1}\}$, and where
\begin{eqnarray}
\label{boundeq9}
C_y^{a,a'}&=&a\Biggl[ \bigl(P_{M+1}(0\vert y1)-P_{M+1}(0\vert y0)\bigr)\times\nonumber\\
&&\qquad\qquad \times\prod_{i=1}^M P_i(y_i\vert y_0y_1\cdots y_{i-1}0)\nonumber\\
&&\qquad +\biggl( \prod_{i=1}^M P_i(y_i\vert y_0y_1\cdots y_{i-1}1)\nonumber\\
&&\qquad\qquad\quad-\prod_{i=1}^M P_i(y_i\vert y_0y_1\cdots y_{i-1}0)\biggr)\times\nonumber\\
&&\qquad\qquad \times P_{M+1}(0\vert y1)\Biggr]\nonumber\\
&&\quad +a'\Biggl[ \bigl(P_{M+1}(1\vert y1)-P_{M+1}(1\vert y0)\bigr)\times\nonumber\\
&&\qquad\qquad \times\prod_{i=1}^M P_i(y_i\vert y_0y_1\cdots y_{i-1}0)\nonumber\\
&&\qquad +\biggl( \prod_{i=1}^M P_i(y_i\vert y_0y_1\cdots y_{i-1}1)\nonumber\\
&&\qquad\qquad\quad-\prod_{i=1}^M P_i(y_i\vert y_0y_1\cdots y_{i-1}0)\biggr)\times\nonumber\\
&&\qquad\qquad \times P_{M+1}(1\vert y1)\Biggr],\nonumber\\
\end{eqnarray}
for $a,a'\in\{0,1\}$. Using that $P_{M+1}(0\vert y\beta)+P_{M+1}(1\vert y\beta)=1$, for all $y$ and $\beta\in\{0,1\}$, we obtain from (\ref{boundeq9}) that
\begin{eqnarray}
\label{boundeq10}
C_y^{a,a}&=&a\Biggl( \prod_{i=1}^M P_i(y_i\vert y_0y_1\cdots y_{i-1}1)\nonumber\\
&&\qquad\qquad-\prod_{i=1}^M P_i(y_i\vert y_0y_1\cdots y_{i-1}0)\Biggr)\nonumber\\
&\leq&\Biggl\lvert \prod_{i=1}^M P_i(y_i\vert y_0y_1\cdots y_{i-1}1)\nonumber\\
&&\qquad\qquad-\prod_{i=1}^M P_i(y_i\vert y_0y_1\cdots y_{i-1}0)\Biggr\rvert,
\end{eqnarray}
for all $y$ and for $a\in\{-1,1\}$. Similarly, we obtain
\begin{eqnarray}
\label{boundeq11}
C_y^{a,-a}&=&2a\bigl( P_{M+1}(0\vert y1)-P_{M+1}(0\vert y0)\bigr)\times\nonumber\\
&&\qquad\times \prod_{i=1}^M P_i(y_i\vert y_0y_1\cdots y_{i-1}0)\nonumber\\
&&\quad+a\bigl(2P_{M+1}(0\vert y1)-1\bigr)\times\nonumber\\
&&\qquad\times\Biggl( \prod_{i=1}^M P_i(y_i\vert y_0y_1\cdots y_{i-1}1)\nonumber\\
&&\qquad\qquad-\prod_{i=1}^M P_i(y_i\vert y_0y_1\cdots y_{i-1}0)\Biggr)\nonumber\\
&\leq&2\bigl\lvert P_{M+1}(0\vert y1)-P_{M+1}(0\vert y0)\bigr\rvert\times\nonumber\\
&&\qquad\times \prod_{i=1}^M P_i(y_i\vert y_0y_1\cdots y_{i-1}0)\nonumber\\
&&\quad+\bigl\lvert 2P_{M+1}(0\vert y1)-1\bigr\rvert \times\nonumber\\
&&\qquad\times\Biggl\lvert \prod_{i=1}^M P_i(y_i\vert y_0y_1\cdots y_{i-1}1)\nonumber\\
&&\qquad\qquad-\prod_{i=1}^M P_i(y_i\vert y_0y_1\cdots y_{i-1}0)\Biggr\rvert\nonumber\\
&\leq&2\bigl\lvert P_{M+1}(0\vert y1)-P_{M+1}(0\vert y0)\bigr\rvert\times\nonumber\\
&&\qquad\times \prod_{i=1}^M P_i(y_i\vert y_0y_1\cdots y_{i-1}0)\nonumber\\
&&\quad+\Biggl\lvert \prod_{i=1}^M P_i(y_i\vert y_0y_1\cdots y_{i-1}1)\nonumber\\
&&\qquad\qquad\!\!\!\!-\prod_{i=1}^M P_i(y_i\vert y_0y_1\cdots y_{i-1}0)\Biggr\rvert,
\end{eqnarray}
for all $y$ and for $a\in\{-1,1\}$. Thus, from (\ref{boundeq8}), (\ref{boundeq10}) and (\ref{boundeq11}), we have
 \begin{eqnarray}
\label{boundeq12}
W&\leq&\sum_y\Bigl[2\bigl\lvert P_{M+1}(0\vert y1)-P_{M+1}(0\vert y0)\bigr\rvert\times\nonumber\\
&&\qquad\times \prod_{i=1}^M P_i(y_i\vert y_0y_1\cdots y_{i-1}0)\Bigr]\nonumber\\
&&\quad+\sum_y\Biggl[\biggl\lvert \prod_{i=1}^M P_i(y_i\vert y_0y_1\cdots y_{i-1}1)\nonumber\\
&&\qquad\qquad-\prod_{i=1}^M P_i(y_i\vert y_0y_1\cdots y_{i-1}0)\biggr\rvert\Biggr]\nonumber\\
&\leq&2B_{M+1} \sum_y\prod_{i=1}^M P_i(y_i\vert y_0y_1\cdots y_{i-1}0)\Bigr] +2\sum_{i=1}^MB_i\nonumber\\
&=&2B_{M+1}\sum_y P(y\vert 0)+2\sum_{i=1}^MB_i\nonumber\\
&=&2\sum_{i=1}^{M+1}B_i,
\end{eqnarray}
as claimed, where in the second line we used (\ref{boundeq5}) with $i=M+1$ and that $\bigl\lvert P_{M+1}(0\vert y1)-P_{M+1}(0\vert y0)\bigr\rvert= \bigl\lvert P_{M+1}(1\vert y1)-P_{M+1}(1\vert y0)\bigr\rvert$, and the assumption that (\ref{boundeq6}) holds for the case $N=M$; where in the third line we used (\ref{boundeq3}); and where in the last line we used $\sum_y P(y\vert 0)=1$.

\subsection{multiphoton attacks on the relativistic quantum bit commitment protocol of Ref. \cite{LCCLWCLLSLZZCPZCP14}}

Ref. \cite{LCCLWCLLSLZZCPZCP14} demonstrated Kent's
  \cite{K12} quantum relativistic protocol for bit commitment. We show
  below that there are multiphoton attacks that apply to the protocol
  of Ref. \cite{LCCLWCLLSLZZCPZCP14}, which imply that this
  protocol is not perfectly hiding.

In the notation of Ref. \cite{LCCLWCLLSLZZCPZCP14} Alice is the
committer. However, we follow our usual convention, taking
Bob to be the committer.

The protocol of Ref. \cite{LCCLWCLLSLZZCPZCP14} is a relativistic
quantum protocol. The attacks discussed below are implemented in the
quantum stage of the protocol, which is nonrelativistic. For this
reason, here we only need to discuss the quantum stage.

The nonrelativistic quantum stage of the protocol of Ref.
\cite{LCCLWCLLSLZZCPZCP14} is equivalent to the quantum stage of the
protocol of Ref. \cite{LKBHTKGWZ13}, discussed in section
\ref{Lunghiprotocol}. Alice and Bob use setup I discussed in the main
text and illustrated in Fig. \ref{fig1} of the main text. Alice's photon source
is a weak coherent source with small average photon number
$\mu$. Alice sends Bob $N$ photon pulses, each encoding in the
polarization a qubit state chosen randomly from the BB84 set. Bob
chooses a random bit $\beta$. Immediately after their reception, Bob
measures each of the $N$ photon pulses in the qubit orthogonal basis
$\mathcal{B}_\beta$, where $\mathcal{B}_0$ and $\mathcal{B}_1$ are the
computational and Hadamard bases, respectively. In order to deal with
losses in the quantum channel, for each pulse sent by Alice, Bob sends
Alice a message $m=1$ if the pulse produced a valid measurement
outcome and $m=0$ otherwise.

We note that in the protocol of
  Ref. \cite{LCCLWCLLSLZZCPZCP14}, there are two setups equivalent to
  setup I (see Fig. \ref{fig1} of the main text) that work in parallel. For
  this reason Bob has four detectors in that protocol. Despite this
  difference with the setup of Ref. \cite{LKBHTKGWZ13}, the analyses
  of section \ref{LKBHTKGWZ13} apply to the protocol of
  Ref. \cite{LCCLWCLLSLZZCPZCP14} too, because Alice may simply apply
  parallel attacks on both setups, or she can choose to apply attacks
  in only one of the setups.

Differently to the protocol of Ref. \cite{LKBHTKGWZ13}, in the
protocol of Ref. \cite{LCCLWCLLSLZZCPZCP14}, Bob's committed bit is
$\beta$. In the multiphoton attacks discussed in section
\ref{LKBHTKGWZ13} for the protocol of Ref. \cite{LKBHTKGWZ13}, Alice
tries to guess Bob's bit $\beta$. Thus, as discussed below, some of
the attacks of section \ref{LKBHTKGWZ13} apply to the protocol of
Ref. \cite{LCCLWCLLSLZZCPZCP14}.

Unlike Ref. \cite{LKBHTKGWZ13},
Ref. \cite{LCCLWCLLSLZZCPZCP14} does not implement the symmetrization
of losses strategy. Instead, Ref. \cite{LCCLWCLLSLZZCPZCP14} applies
the countermeasure in which Bob reports a valid measurement outcome,
i.e. sets $m=1$, for each pulse for which at least one detector
clicks. Thus, multiphoton attack I does not apply.

However, the versions of the multiphoton attack II discussed in
sections \ref{app1} and \ref{double} apply to the protocol of
Ref. \cite{LCCLWCLLSLZZCPZCP14}. This is because it cannot be
guaranteed that Bob's detection efficiencies are exactly equal. In the
actual experiment, the measured values for Bob's detection
efficiencies were $50.4\pm1.1\%$, $50.4\pm0.4\%$, $52.4\pm1.0\%$ and
$50.2\pm1.1\%$ \cite{YangLiu}. However,
  Ref. \cite{LCCLWCLLSLZZCPZCP14} also makes the important point that
  Bob could include calibrated attenuators in front of his detectors
  to make their detection efficiencies equal, in order to defend
  against attacks exploiting unequal detector
  efficiencies.    As far as we are aware, this is the
  first discussion of this valuable countermeasure (albeit not in the
  context of the specific attacks we consider here).    We
  note, though, that even if this countermeasure were
  implemented, experimental uncertainties  would remain.  
  There is no way to guarantee that Bob's detector efficiencies are exactly equal for a
  given frequency, nor that they are precisely
    frequency-independent or that Alice uses precisely the stipulated
  frequency.

The upper bound on Alice's probability to guess Bob's bit $\beta$
derived in section \ref{bounds} assumes that Bob sets $m=1$ for pulses
producing a click in at least one of Bob's detectors and that Bob's
detector efficiencies are not exactly equal. Thus, this bound also
applies to the protocol of Ref. \cite{LCCLWCLLSLZZCPZCP14}. However,
this particular bound does not appear strong enough to provide any
security guarantee for the parameters of Ref. \cite{LCCLWCLLSLZZCPZCP14}.
It would also be very interesting to explore whether
practical security could be attained by a combination of the 
use of attenuators and other methods, using the results of section
\ref{bounds}. For example, Alice
could be required to use a single photon source, and Bob
could test a subset of the received pulses to ensure that
the number of multiphoton pulses is suitably small.

\subsection{A multiphoton attack on the quantum coin flipping protocol of Ref. \cite{PJLCLTKD14}}
\label{coinflipattack}

\subsubsection{The quantum coin flipping protocol of Ref. \cite{PJLCLTKD14}}

Ref. \cite{PJLCLTKD14} demonstrated a nonrelativistic quantum strong
coin flipping protocol and argued that the implementation guaranteed  with information theoretic security an upper bound $p_c$ on the probability
that an all-powerful malicious party can bias the coin. This bound 
depends on various
experimental parameters, including the honest abort probability and
the distance between Bob and Alice, and is strictly smaller than
unity for some parameter values (see Figure 2 of
Ref. \cite{PJLCLTKD14}, for example).

Ref. \cite{PJLCLTKD14}
does not say whether a double detection event is registered as a
measurement outcome, and if so, how it is registered. Assuming that Bob only assigns single clicks as valid measurement outcomes, we discuss below how a version of the multiphoton attack I discussed in the main text applies to the protocol of Ref. \cite{PJLCLTKD14}, where Alice succeeds in setting the outcome of the coin flip with probability very close to unity.

As discussed below, if Bob reports single and double clicks as valid measurement outcomes, he can only reduce Alice's success probability in this attack to a value slightly below $\frac{7}{8}$. It is not clear whether Ref. \cite{PJLCLTKD14} obtained an upper bound on the probability $P_\text{Alice}$ that dishonest Alice can bias the coin flip that is below $\frac{7}{8}$ for any values of the possible experimental parameters, which would imply that this attack implies violation of the bound. In particular, Figure 2 of Ref. \cite{PJLCLTKD14} suggests that the smallest value for this bound demonstrated experimentally was approximately 0.91, achieved for a communication distance of 15 km. Nevertheless, Equation (1) in the Supplementary Information of Ref. \cite{PJLCLTKD14} states that $P_\text{Alice}\leq \frac{3}{4}+\frac{1}{2}\sqrt{y(1-y)}$, where the parameter $y$ satisfies $y\in(\frac{1}{2},1)$. There exist $y\in(\frac{1}{2},1)$ for which $\frac{3}{4}+\frac{1}{2}\sqrt{y(1-y)}<\frac{7}{8}$. Thus, if any parameter $y$ from the range $(\frac{1}{2},1)$ can be obtained in principle experimentally then our attack implies a violation of the security bound derived in Ref. \cite{PJLCLTKD14}.

The quantum strong coin flipping protocol of Ref. \cite{PJLCLTKD14} is
the following. First, Alice sends to Bob $N$ photon pulses prepared in
states $\lvert \Phi_{\alpha_i,\gamma_i}\rangle$, encoding the bit $\gamma_i$ in
a basis labelled by the bit $\alpha_i$, and their labels $i$, for
$i\in[N]=\{1,\ldots,N\}$ and for a predetermined integer $N$. The
states are given by
$\lvert \Phi_{\alpha_i,0}\rangle=\sqrt{y}\lvert 0\rangle
+(-1)^{\alpha_i}\sqrt{1-y}\lvert 1\rangle$
and
$\lvert \Phi_{\alpha_i,1}\rangle=\sqrt{1-y}\lvert 0\rangle
-(-1)^{\alpha_i}\sqrt{y}\lvert 1\rangle$,
for $\alpha_i\in\{0,1\}$, for some predetermined parameter
$y\in\bigl(\frac{1}{2},1\bigr)$. We note that
$\mathcal{B}_{\alpha_i}=\{\lvert \Phi_{\alpha_i,r}\rangle\}_{r=0}^1$ forms an
orthonormal basis, for $\alpha_i\in\{0,1\}$. Second, Bob measures the
$i$th pulse in the basis $\mathcal{B}_{\beta_i}$, where the bit $\beta_i$ is
chosen randomly by Bob, and obtains some bit outcome $o_i$ or does not
register any measurement outcome, for $i\in[N]$. We note that due to
losses and detection efficiencies smaller than unity, not
all pulses produce a detection in Bob's detectors, i.e. Bob does not obtain a measurement outcome for
all $i\in[N]$. Let $j\in[N]$ be the first pulse for which Bob
registers a measurement outcome, which is a bit $o_j$. Third, Bob
generates a random bit $b$ and communicates $j$ and $b$ to Alice. Fourth, Alice communicates $\alpha_j$ and $\gamma_j$ to Bob. Fifth,
Bob checks whether $\alpha_j=\beta_j$, in which case Bob aborts if
and only if $o_j\neq \gamma_j$. If $\alpha_j\neq \beta_j$, Bob does not
abort. If Bob does not abort then Bob and Alice agree that the outcome
of the coin flip is $a=\gamma_j\oplus b$, i.e. the XOR of the bit $b$ given
by Bob and the bit $\gamma_j$ given by Alice.  Ref. \cite{PJLCLTKD14}  implemented the technique of symmetrization of losses presented in Ref. \cite{NJMKW12} and discussed in the main text.

In the attack we describe below, we assume that in the protocol of Ref. \cite{PJLCLTKD14} double clicks are not assigned as valid measurement outcomes by Bob and that the symmetrization of losses technique is implemented as described in the main text. Thus, we assume that Bob assigns a valid measurement outcome to the detection event $(c,\bar{c})$ with some probability $S_{c\bar{c}\beta}>0$ when he measured a pulse in the basis $\mathcal{B}_\beta$, in such a way that $S_{c\bar{c}\beta}$ satisfies the conditions of Lemma \ref{lemma1} in the main text, for $c,\beta\in\{0,1\}$. Let $S_\text{min}=\min\{S_{c\bar{c}\beta}\}_{c,\beta\in\{0,1\}}$. In particular, it is stated by Ref. \cite{PJLCLTKD14} that there was not any important deviation between the frequencies of detection events for the two measurement bases applied by Bob. However, Ref. \cite{PJLCLTKD14} reports a ratio of approximately 0.68 in the number of detections observed by Bob in his two detectors, giving a value of $S_\text{min}=0.68$. Thus, in the protocol implemented with Alice, Bob assigned valid measurement outcomes to detection events of his detector with higher number of detections with a probability of 0.68, and to detection events of his detector with lower number of detections with unit probability.

The protocol uses a slight variation of setup I given in the main text (see Fig. \ref{fig1} of the main text). The experimental setup of Ref. \cite{PJLCLTKD14} consists in a plug and play system with a two-way approach: light pulses of 1550 nm are sent from Bob to Alice, and Alice uses a phase modulator to prepare the qubit states. The pulses are then reflected using a Faraday mirror and attenuated to the desired average photon number before being sent back to Bob. Finally, Bob uses a phase modulator to choose his measurement basis and register the detection events using two threshold single photon detectors. In practice, we can suppose that the experimental setup is equivalent to setup I given in Fig. \ref{fig1} of the main text.

\subsubsection{multiphoton attack I}
In order to present an attack to the quantum coin flipping protocol
above, we first make some observations. As in the experimental
demonstration of Ref. \cite{PJLCLTKD14}, we assume that Bob uses
threshold single photon detectors for registering the outcome of his
measurement. A measurement by Bob uses two single photon detectors
D$_0$ and D$_1$. An outcome $o_i$ corresponds to a detector D$_i$
registering a detection, for D$_i\in\{0,1\}$. 

Assuming that Bob only assigns valid measurement outcomes to detection events in which only one of his detectors click and that he applies the symmetrization of losses technique as described above, we present a version of the multiphoton attack I presented in the main text to the protocol of Ref. \cite{PJLCLTKD14}. We assume that Bob
is honest and that Alice is cheating. Alice chooses a set
$\Omega\subseteq [N]$ with $M$ elements, for some positive integer
$M<N$. For $i\notin\Omega$, Alice does not send any pulses to Bob. For
$i\in\Omega$, Alice sends a pulse with $k$ photons in the state
$\lvert \Phi_{\alpha_i,\gamma_i}\rangle$, where the bits $\alpha_i$ and
$\gamma_i$ are chosen randomly by Alice. Alice chooses her
desired coin flip outcome $a$.  Bob sends the index $j\in[N]$
and the bit $b$ to Alice. Let $\tilde{\alpha}_j$ and $\tilde{\gamma}_j$ be
the bits that Alice sends Bob, indicating her supposed chosen basis
and encoded bit, respectively.  Alice sets $\tilde{\gamma}_j=b\oplus a$. If
$\tilde{\gamma}_j=\gamma_j$ then Alice sets $\tilde{\alpha}_j=\alpha_j$,
otherwise she sets $\tilde{\alpha}_j=\alpha_j\oplus 1$. Alice sends
$\tilde{\alpha}_j$ and $\tilde{\gamma}_j$ to Bob.

We show that this attack allows Alice to succeed with very high
probability. First, we note that if Bob does not abort and if he assigns at least one measurement outcome, the outcome of
the coin flip is $\tilde{\gamma}_j\oplus b$, which equals the bit $a$
chosen by Alice. It remains to be shown that the probability that Bob
aborts or that he does not assign any measurement outcome is negligible. In order to illustrate the attack more easily,
consider an ideal situation in which there are not dark counts at
Bob's detectors, Alice's state preparations are perfect and Bob's
measurement outcomes do not have errors. Let us also consider the
limit $k\rightarrow \infty$, i.e. Alice prepares pulses with an
infinite number of photons. Thus, we see that Bob can only register
detection events at his detectors for pulses with labels from the set
$\Omega$. For $i\in\Omega$, if $\beta_i\neq\alpha_i$, i.e. if Bob
measures in a basis different to the one of preparation by Alice, then
an infinite number of photons go to the detector D$_0$ and an infinite
number of photons go to the detector D$_1$; hence, both detectors
register a detection, and Bob does not assign a measurement outcome to
the pulse with label $i$. On the other hand, for $i\in\Omega$, if
$\beta_i=\alpha_i$, i.e. if Bob measures in the basis of preparation
by Alice, then all photons go to the detector D$_{\gamma_{i}}$; hence, the
detector D$_{\gamma_{i}}$ registers a detection, and Bob assigns the
measurement outcome $o_i=\gamma_i$ with a probability greater or equal than $S_\text{min}$. Thus, the first pulse for which Bob
registers a measurement outcome, which has label $j$, satisfies
$j\in\Omega$, $\beta_j=\alpha_j$ and
$o_j=\gamma_j$.
The probability for this to hold is the probability that Bob chooses
$\beta_i=\alpha_i$ and Bob assigns the detector click as a valid measurement outcome for at least one $i\in\Omega$, which is easily seen to be greater or equal than $1-\bigl(1-\frac{S_\text{min}}{2}\bigr)^M$. Therefore, we see that in the case that $\tilde{\gamma}_j=\gamma_j$,
Alice reports $\tilde{\alpha}_j=\alpha_j=\beta_j$
and $\tilde{\gamma}_j=\gamma_j=o_j$;
hence, Bob does not abort.  In the case that $\tilde{\gamma}_j=\gamma_j\oplus
1$, Alice reports $\tilde{\alpha}_j=\alpha_j\oplus 1=\beta_j\oplus
1$ and $\tilde{\gamma}_j=\gamma_j\oplus 1=o_j\oplus
1$; hence, Bob does not abort, as in this case
$\beta_j=\tilde{\alpha}_j\oplus
1$. We see that Bob does not abort in any case, and Alice succeeds in
her cheating strategy if Bob chooses $\beta_i=\alpha_i$ and Bob assigns the detector click as a valid measurement outcome
for at least one $i\in\Omega$,
which occurs with probability greater or equal than $1-\bigl(1-\frac{S_\text{min}}{2}\bigr)^M$. Thus, since $S_\text{min}>0$, for $M$
large enough, Alice can make his cheating probability arbitrarily
close to unity.

In practice, Alice cannot send pulses with infinite number of photons,
the state preparation and measurements will have some errors, Bob's
detectors have nonzero dark count probabilities. However, it is easy
to see that an adaptation of the attack discussed above can be applied
in a realistic scenario. For example, given the parameters of the
protocol agreed by Alice and Bob, Alice can simulate Bob's detection
probabilities and choose the set $\Omega$ to lie within the first few
pulses, and pulses with labels not from the set $\Omega$ can be chosen
by Alice to have specific average number of photons, in such a way
that Bob's detection probabilities are very close to what is expected
in the agreed protocol. In practice, $N$ is chosen very large, for
example $N$ is of the order of $10^{10}$ in the experimental
demonstration of Ref. \cite{PJLCLTKD14}. Thus, Bob can choose $M$ to
be large but small compared to $N$, for example, $M=40$, which gives
$\bigl(1-\frac{S_\text{min}}{2}\bigr)^M=6\times 10^{-8}$, for the value $S_\text{min}=0.68$ reported by Ref. \cite{PJLCLTKD14}. Alice cannot
set the number of photons $k$ of the dishonest pulses to be infinite,
but she can set the average photon number to a value $\mu>>1$ with a
coherent source, as in the attack that we have simulated
experimentally and discussed in the main text, for example.

\subsubsection{multiphoton attack II}

As noted above, multiphoton attack I would apply if Bob discarded double clicks as invalid measurements. We thus assume that Bob assigns
a random measurement outcome to detection events in which
both of his detectors register a detection. It is straightforward to
see that in the ideal attack we discussed above with $M=1$ and assuming there is no symmetrization of losses (i.e. that $S_\text{min}=1$), with this partial
countermeasure, Bob aborts with probability $\frac{1}{8}$,
corresponding to the case $\beta_j=\alpha_j\oplus 1=\tilde{\alpha}_j$,
$\gamma_j=b\oplus a\oplus 1=\tilde{\gamma}_j\oplus 1$ and
$o_j=\gamma_j=\tilde{\gamma}_j\oplus 1$. Thus, in this case, assuming that symmetrization of losses is implemented, the probability that Bob does not abort and assigns a valid measurement outcome is equal or greater than $\frac{7}{8}S_\text{min}$. Thus, in situations where $S_\text{min}$ is close to unity, Alice can succeed with this attack with probability close to $\frac{7}{8}$. It does not help Alice in this case to set $M>>1$ , as in this case it is not difficult to see that the probability that Bob aborts is equal or greater than $1-\Bigl(1-\frac{S_\text{min}}{8}\Bigr)^M$, which approaches unity as $M$ increases.

\subsubsection{Partial countermeasures and more general attacks}

In the attacks discussed above Alice does not send any pulses for $i\notin \Omega$. But this could be easily discovered by Bob if $M=\lvert \Omega\rvert $ is small, as for the case $M=1$. Alice can refine her attacks by sending pulses with the statistics of the protocol agreed with Bob for $i\notin\Omega$ and to define $\Omega=\{1,2,\ldots,M\}$, i.e to let $\Omega$ correspond to the first $M$ pulses. The attacks presented above apply straightforwardly in this case and the lower bounds on Alice's probabilities to succeed in these attacks derived above hold in this case too.

In addition to reporting single and double clicks, a possible extra countermeasure by Bob to apply against the previous attacks is to choose the index $j$ randomly from the set $[N]$, instead of choosing $j$ as the index of the first successfully measured pulse, as in the protocol of Ref. \cite{PJLCLTKD14}. In this way, Alice cannot effectively choose a small set $\Omega$ for her dishonest multiphoton pulses that guarantees with high probability that $j\in\Omega$. On the other hand if Alice chooses a large set $\Omega$ then Bob aborts with high probability, as discussed above. 

Broadly, what the previous attack illustrates is that, because Alice
has the ability to prepare pulses with many photons and because Bob's
single photon detectors are threshold detectors, the probabilities of
Bob's measurement outcomes are not in general proportional to
$\langle \Phi_{\alpha_i,\gamma_i}\rvert \rho \lvert
\Phi_{\alpha_i,\gamma_i}\rangle$,
for photon pulses prepared by Alice to encode a qubit state $\rho$ in
each photon,
contrary to the assumption made in the security proof of
Ref. \cite{PJLCLTKD14}.

\subsection{A multiphoton attack on the quantum bit commitment protocol of Ref. \cite{NJMKW12}}

\label{NJMKW12}

\subsubsection{The quantum bit commitment protocol of Ref. \cite{NJMKW12}}

Ref. \cite{NJMKW12} demonstrated quantum bit commitment in the noisy storage model. Below, we present an attack to this protocol and show that it is insecure: we show that it is not hiding.

In the protocol of Ref. \cite{NJMKW12} Alice is the committer. Although this notation is different to the one introduced in section \ref{QBC}, we follow this notation here because it is consistent with an extension of setup I, which we introduce below. The protocol of Ref. \cite{NJMKW12} includes a subroutine that uses a setup different to the setup I described in the main text. We call this \emph{setup II} (see Fig. \ref{setupII}).

\begin{figure}
\includegraphics[scale=0.49]{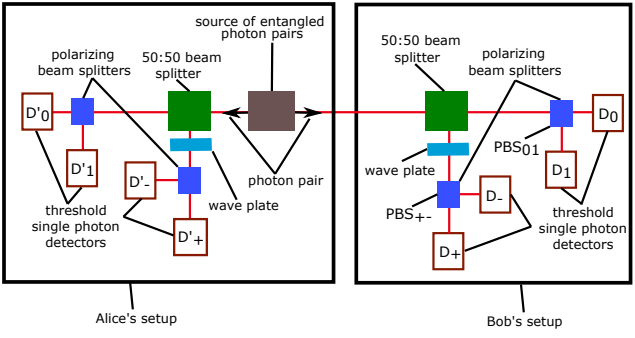}
\caption{\label{setupII} \textbf{Setup II.} Alice's setup consists in an
  source of entangled photon pairs, a 50:50 beam splitter, a half-wave plate, two polarizing beam splitters and four threshold single photon detectors D'$_0$, D'$_1$, D'$_+$ and D'$_-$. Bob's setup
  consists in a 50:50 beam splitter, a wave plate, two polarizing beam splitters (PBS$_{01}$ and PBS$_{+-}$) and four threshold single photon detectors D$_0$, D$_1$, D$_+$ and D$_-$.}
\end{figure}

In setup II, Alice has a source of pair of entangled photons. For each pair generated by Alice, she sends a photon to Bob and she measures the other photon randomly in one of two qubit orthogonal bases, $\mathcal{B}_0$ and $\mathcal{B}_1$. Bob also measures his photon randomly in the bases $\mathcal{B}_0$ and $\mathcal{B}_1$. Without loss of generality we suppose that $\mathcal{B}_0=\{\lvert 0\rangle,\lvert 1\rangle\}$ is the computational basis and $\mathcal{B}_1=\{\lvert \tilde{+}\rangle,\lvert \tilde{-}\rangle\}$ is a qubit orthogonal basis where the Bloch vector of the qubit state $\lvert\tilde{+}\rangle$ has an angle $\theta\in[0,\frac{\pi}{2})$ from the $x$ axis towards the $z$ axis in the Bloch sphere.  At each site, the random measurement is implemented with a $50:50$ beam splitter, followed by two polarizing beam splitters and four single photon detectors.  The quantum channel between the $50:50$ beam splitter and one of the polarizing beam splitters contains a wave plate that rotates the polarization an angle $\frac{\pi}{2}-\theta$ from the $z$ axis towards the $x$ axis in the Bloch sphere. Thus, one of the polarizing beam splitters measures the received photon pulse in the basis $\mathcal{B}_0$ and the other one measures the received photon pulse in the basis $\mathcal{B}_1$. In particular, Bob's polarizing beam splitters PBS$_{01}$ and PBS$_{+-}$ implement qubit measurements in the bases $\mathcal{B}_0$ and $\mathcal{B}_1$, respectively. The case $\theta=0$ corresponds to $\mathcal{B}_0$ and $\mathcal{B}_1$ being the computational and Hadamard bases, respectively, which is the particular case considered in Ref. \cite{NJMKW12}.

Alice and Bob have each four single photon detectors, each one corresponding to one of the four generated states.  Let D$_0$, D$_1$, D$_+$ and D$_-$ be Bob's detectors corresponding to the states $\lvert 0\rangle$,  $\lvert 1\rangle$, $\lvert \tilde{+}\rangle$ and $\lvert \tilde{-}\rangle$, respectively. The detectors are threshold detectors, i.e. they cannot distinguish the number of photons producing a detection. 

Let $0\leq d_i<<1$ and $\eta_i\in(0,1)$ be the dark count probability and the detection efficiency of Bob's detector D$_i$, respectively, which we assume are known by Alice, for $i\in\{0,1,+,-\}$. Let $\eta_\text{min}=\min\{\eta_0,\eta_1,\eta_+,\eta_-\}$.

It is explicitly stated in Ref. \cite{NJMKW12} that only detection events where a single detector clicks are considered as valid measurement outcomes.
Thus, in a valid round a pair of photon pulses generated by Alice produces a single click in one of Alice's detectors and a single click in one of Bob's detectors.
In order to deal with losses in the quantum channel, for each photon pulse sent by Alice, Bob sends a message $m=1$ to Alice indicating that a single click in one of his detectors is produced or a message $m=0$ indicating the opposite. The symmetrization of losses technique is applied by Bob. More precisely, the following reporting strategy is applied by Bob in Ref. \cite{NJMKW12}.

\begin{definition}[Symmetrization of losses in setup II (SLII)]
\label{symmetrizationsetup2}
Bob tests his setup by preparing and measuring states as in the
protocol agreed with Alice, a large number of times $N$ in parallel. Then, for $i\in\{0,1,+,-\}$, Bob computes the frequency $F_i$ of detection events in which only the detector D$_i$ clicks, which provides a good estimate of the corresponding probability $P_i$ if $NF_i>>1$. Bob then
computes numbers $S_i\in(0,1]$ satisfying
\begin{equation}
\label{SLIIeq}
S_i F_i=F_\text{min},
\end{equation}
for $i\in\{0,1,+,-\}$, where $F_\text{min}=\min\{F_0,F_1,F_+,F_-\}$. Bob sends Alice the message $m=1$ and assigns a valid measurement outcome corresponding to the detector D$_i$ with a probability $S_i$ if only the detector D$_i$ clicks, for $i\in\{0,1,+,-\}$.
\end{definition}

Similarly to Lemma \ref{lemma1} of the main text, the following lemma
shows that the \hyperref[symmetrizationsetup2]{SLII}
reporting strategy guarantees to Bob that Alice cannot obtain any
information about Bob's assigned measurement basis $\mathcal{B}_\beta$
if Alice's pulse does not have more than one photon and $d_{i}=0$, for
$\beta\in\{0,1\}$, for arbitrary $\eta_{i}\in(0,1)$ and for
$i\in\{0,1,+,-\}$. Furthermore, it guarantees that Alice cannot obtain
much information about $\beta$ if Alice's pulse does not have more
than one photon and $0<d_{i}\leq \delta$, for $0<\delta<< 1$ and
$i\in\{0,1,+,-\}$.

\begin{lemma}
\label{lemmaSLII}
Consider setup II illustrated in Fig. \ref{setupII}. Let
$d_{i}\leq \delta$, for some $0\leq \delta<1$ and for
$i\in\{0,1,+,-\}$. Suppose that Alice sends Bob a single photon pulse ($k=1$)
in arbitrary qubit state $\rho$ or that she does not send him any
photons ($k=0$).  Suppose that Bob applies the \hyperref[symmetrizationsetup2]{SLII} reporting strategy with 
\begin{equation}
\label{SLIIeq2} 
S_i=\frac{\eta_\text{min}}{\eta_i},
\end{equation}
for $i\in\{0,1,+,-\}$. Then 
\begin{equation}
\label{a3new}
\bigl\lvert P_{\text{report}}(1,1\lvert \rho, k) -P_{\text{report}}(1,0\lvert  \rho, k)\bigr\rvert\leq 6\delta,
\end{equation}
for $k\in\{0,1\}$, where $P_{\text{report}}(1,\beta\lvert \rho, k)$ is the probability that Bob reports the message $m=1$ to Alice and assigns a measurement in the basis $\mathcal{B}_\beta$, for $\beta\in\{0,1\}$.
\end{lemma}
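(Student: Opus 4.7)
The plan is to write
\[
P_{\text{report}}(1,0\lvert\rho,k)=S_0 P_0+S_1 P_1,\qquad P_{\text{report}}(1,1\lvert\rho,k)=S_+ P_++S_- P_-,
\]
where $P_i$ denotes the probability that only detector $\text{D}_i$ clicks, and then bound the difference between the two sums by expanding the $P_i$ explicitly and collecting the dark-count-sized corrections. The case $k=0$ is immediate: no photon reaches any detector, so $P_i=d_i\prod_{j\neq i}(1-d_j)\leq\delta$; since $S_i\leq 1$, each $P_{\text{report}}(1,\beta\lvert\rho,0)$ is at most $2\delta$, so their difference is at most $2\delta\leq 6\delta$.

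For $k=1$, I would use the physical structure of setup II: the $50{:}50$ beam splitter routes the single incoming photon to the PBS$_{01}$ arm or the PBS$_{+-}$ arm with probability $1/2$ each, and within each arm the Born rule places the photon at the ``up'' detector ($\text{D}_0$ or $\text{D}_+$) with probability $q_\beta=\langle\psi_{0\beta}\lvert\rho\rvert\psi_{0\beta}\rangle$ and at the ``down'' detector with probability $1-q_\beta$. Each detector then fires independently via its efficiency and its dark count. Writing $\pi_0=q_0/2$, $\pi_1=(1-q_0)/2$, $\pi_+=q_1/2$, $\pi_-=(1-q_1)/2$ for the photon-arrival probabilities and $T_i=\prod_{j\neq i}(1-d_j)$, a case analysis over the photon's destination yields the decomposition
\[
P_i=\pi_i\eta_i T_i+d_i T_i B,\qquad B=\sum_{X\in\{0,1,+,-\}}\pi_X(1-\eta_X)\leq 1,
\]
whose first summand is the clean ``photon triggers $\text{D}_i$ with nothing else clicking'' contribution and whose second summand collects all configurations in which the $\text{D}_i$ click originates from its dark count.

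Applying $S_i=\eta_{\min}/\eta_i$ and summing over each basis group, the clean contributions become $\eta_{\min}(\pi_0 T_0+\pi_1 T_1)$ and $\eta_{\min}(\pi_+ T_++\pi_- T_-)$: the SLII rescaling cancels each $\eta_i$ and leaves only the universal constant $\eta_{\min}$. Using the $50{:}50$-beam-splitter identity $\pi_0+\pi_1=\pi_++\pi_-=1/2$ together with $T_i-T=T_i d_i$, where $T=\prod_j(1-d_j)$, the two clean contributions differ by at most $\eta_{\min}\sum_i\pi_i T_i d_i\leq\eta_{\min}\delta\leq\delta$. The remaining dark-count contributions $(\eta_{\min}/\eta_i)d_i T_i B$ are each bounded by $d_i\leq\delta$, so their difference is at most $4\delta$. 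Combining, $\lvert P_{\text{report}}(1,1\lvert\rho,1)-P_{\text{report}}(1,0\lvert\rho,1)\rvert\leq 5\delta\leq 6\delta$.

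The main obstacle is the computational bookkeeping in the case analysis: one must enumerate the four possibilities for the photon's destination when computing ``only $\text{D}_i$ clicks'' and then carefully separate the resulting products into the clean and dark-count pieces. The conceptual content is short -- the SLII normalisation $S_i\eta_i=\eta_{\min}$ combined with the beam-splitter identity $\pi_0+\pi_1=\pi_++\pi_-$ exactly symmetrises the photon-detection part, leaving only corrections of order $\delta$ from dark counts to be controlled.
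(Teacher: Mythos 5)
Your proposal is correct and follows essentially the same route as the paper: write $P_{\text{report}}(1,\beta\vert\rho,k)$ as $\sum_i S_iP_i$ over the relevant detector pair, compute the single-click probabilities $P_i$ explicitly for $k\in\{0,1\}$, observe that the normalisation $S_i\eta_i=\eta_{\text{min}}$ together with the $50{:}50$ splitter exactly symmetrises the photon-detection contribution, and bound the leftover dark-count terms by $O(\delta)$. Your decomposition $P_i=\pi_i\eta_i T_i+d_iT_iB$ packages the paper's term-by-term bookkeeping more transparently and in fact yields a slightly tighter constant ($5\delta$, or $3\delta$ with the sharper estimate $\lvert a-b\rvert\leq\max\{a,b\}$, versus the paper's $6\delta$), but the underlying argument is the same.
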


Note that we have slightly changed the notation $P_{\text{report}}(1\lvert\beta, \rho, k)$ used for setup I to $P_{\text{report}}(1,\beta\lvert \rho, k)$ for setup II. This is because in setup I Bob chooses the measurement basis by appropriately setting the wave plate. However, in setup II, the assigned measurement basis is not chosen by Bob; it is an outcome, which depends on the detectors that are activated.

The proof of Lemma \ref{lemmaSLII} uses Lemma \ref{lemmasetup2}, given below. Thus, it is presented below.

As stated in Ref. \cite{NJMKW12}, it is a necessary condition for security against Alice that she cannot learn any information about the measurement bases obtained by Bob. It is claimed by Ref. \cite{NJMKW12} that the \hyperref[symmetrizationsetup2]{SLII} reporting strategy guarantees this condition. Below we show that this claim is wrong: that Alice can obtain a lot of information about Bob's measurement bases.

\subsubsection{multiphoton attack I}

We describe a version of multiphoton attack I that applies in setup II. The attack works equally well if Bob's setup is the same as in setup II, independently of whether Alice's photon source is a source of pairs of entangled photons or not. What matters is that Alice sends Bob photon pulses and that Bob measures randomly in one of two bases, $\mathcal{B}_0$ and $\mathcal{B}_1$, using Bob's setup of four threshold single photon detectors illustrated in Fig. \ref{setupII}.

\begin{definition}[multiphoton attack I in setup II (MPAII)]
\label{MPAII}
Suppose that Alice and Bob use setup II, illustrated in Fig. \ref{setupII}.
Alice prepares a \emph{dishonest pulse} of $k$ photons encoding a $k-$qubit state $\rho$ in the polarization degrees of freedom, for some nonnegative integer $k$ chosen by Alice. The quantum state $\rho$ is chosen by Alice and can be an arbitrary entangled state, which can also be entangled with an ancilla held by Alice. When specified, we will consider the particular case $\rho=\rho_\text{qubit}^{\otimes k}$ in which the dishonest pulse consist in $k$ photons, each of them encoding the qubit state $\rho_\text{qubit}$ with Bloch vector $\vec{r}=(r_x,r_y,r_z)$. Let $P_\text{report}(1,\beta\vert\rho,k)$ be the probability that Bob reports the message $m=1$ to Alice and that Bob assigns a valid measurement outcome in the basis $\mathcal{B}_\beta$, for $\beta\in\{0,1\}$. We assume that Alice knows the value of $P_\text{report}(1,\beta\vert\rho,k)$, for $\beta\in\{0,1\}$. If $P_\text{report}(1,0\vert\rho,k)\geq P_\text{report}(1,1\vert\rho,k)$, and if Bob reports to Alice the message $m=1$ , Alice guesses that Bob's obtained basis is $\mathcal{B}_0$. On the other hand, if $P_\text{report}(1,1\vert\rho,k)> P_\text{report}(1,0\vert\rho,k)$, and if Bob reports to Alice the message $m=1$, Alice guesses that Bob's obtained basis is $\mathcal{B}_1$. Thus, the probability that Alice guesses Bob's obtained basis $\mathcal{B}_\beta$ for the dishonest pulse is given by
\begin{equation}
\label{guessing}
P_{\text{guess}}=\frac{\max_{\beta\in\{0,1\}}\{P_\text{report}(1,\beta\vert\rho,k)\}}{P_\text{report}(1,0\vert \rho,k)+P_\text{report}(1,1\vert\rho,k)}.
\end{equation}
\end{definition}

We note that we must condition on Bob assigning a measurement outcome, either in the basis $\mathcal{B}_0$ or in the basis $\mathcal{B}_1$. We show in Lemma \ref{guessingprobability} below that $P_\text{report}(1,0\vert\rho,k)\neq P_\text{report}(1,1\vert \rho,k)$ for a range of parameters. Thus, it follows from (\ref{guessing}) that $P_{\text{guess}}>\frac{1}{2}$. We also show that for a range of parameters it holds that $P_{\text{guess}}\rightarrow 1$ if $k\rightarrow \infty$. Thus, Alice can guess Bob's assigned measurement basis for the dishonest pulse with great probability, which violates security against Alice.

The \hyperref[MPAII]{MPAII} attack can be easily extended as follows. Let $N$ be the number of photon pulses that Alice sends Bob in the predetermined protocol. Alice chooses a subset $\Omega$ of labels for these $N$ pulses and prepares a dishonest pulse as in the \hyperref[MPAII]{MPAII} attack if its label belongs to $\Omega$. For each pulse with label from the set $\Omega$, Alice guesses the measurement basis assigned by Bob as in the \hyperref[MPAII]{MPAII} attack.

Let $P_\text{det}(c_0c_1c_+c_-\vert \rho,k)$ be the probability that the detector D$_0$ clicks if $c_0=1$ and does not click if $c_0=0$, that the detector D$_1$ clicks if $c_1=1$ and does not click if $c_1=0$, that the detector D$_+$ clicks if $c_+=1$ and does not click if $c_+=0$, and that the detector D$_-$ clicks if $c_-=1$ and does not click if $c_-=0$, for $c_0,c_1,c_+,c_-\in\{0,1\}$, when Alice sends Bob a photon pulse of $k$ photons encoding the $k-$qubit state $\rho$ in the polarization degrees of freedom. 

In order to present Lemma \ref{guessingprobability}, we first need to introduce the following lemma.

\begin{lemma}
\label{lemmasetup2}
Consider Alice's \hyperref[MPAII]{MPAII} attack, which uses setup II illustrated in Fig. \ref{setupII}, in the particular case $\rho=\rho_\text{qubit}^{\otimes k}$ for some qubit state $\rho_\text{qubit}$ with Bloch vector $\vec{r}=(r_x,r_y,r_z)$. Suppose that Bob uses the \hyperref[symmetrizationsetup2]{SLII} reporting strategy, with $S_i=\frac{\eta_\text{min}}{\eta_i}$, for $i\in\{0,1,+,-\}$. The probability $P_\text{report}(1,\beta\vert\rho,k)$ that Bob sends the message $m=1$ to Alice and assigns a measurement outcome in the basis $\mathcal{B}_\beta$, for $\beta\in\{0,1\}$, is given by
\begin{eqnarray}
\label{bases}
P_\text{report}(1,0\vert\rho,k)&=&S_0P_0+S_1P_1,\nonumber\\
P_\text{report}(1,1\vert \rho,k)&=&S_+P_++S_-P_-,
\end{eqnarray}
where 
\begin{eqnarray}
\label{clicks}
P_0&=&(1-d_1)(1-d_+)(1-d_-)\Biggl\{
\biggl[1-\frac{1}{2}\bigl[\eta_-+(1-q_0)\eta_1\nonumber\\
&&\qquad+q_+(\eta_+-\eta_-)\bigr]\biggr]^k
-(1-d_0)\biggl[1-\frac{1}{2}\bigl[\eta_1+\eta_-\nonumber\\
&&\qquad\qquad+q_0(\eta_0-\eta_1)+q_+(\eta_+-\eta_-)\bigr]\biggr]^k
\Biggr\},\nonumber\\
P_1&=&(1-d_0)(1-d_+)(1-d_-)\Biggl\{
\biggl[1-\frac{1}{2}\bigl[\eta_-+q_0\eta_0\nonumber\\
&&\qquad+q_+(\eta_+-\eta_-)\bigr]\biggr]^k
-(1-d_1)\biggl[1-\frac{1}{2}\bigl[\eta_1+\eta_-\nonumber\\
&&\qquad\qquad+q_0(\eta_0-\eta_1)+q_+(\eta_+-\eta_-)\bigr]\biggr]^k
\Biggr\},\nonumber\\
P_+&=&(1-d_0)(1-d_1)(1-d_-)\Biggl\{
\biggl[1-\frac{1}{2}\bigl[\eta_1+q_0(\eta_0-\eta_1)\nonumber\\
&&\qquad+(1-q_+)\eta_-\bigr]\biggr]^k
-(1-d_+)\biggl[1-\frac{1}{2}\bigl[\eta_1+\eta_-\nonumber\\
&&\qquad\qquad+q_0(\eta_0-\eta_1)+q_+(\eta_+-\eta_-)\bigr]\biggr]^k
\Biggr\},\nonumber\\
P_-&=&(1-d_0)(1-d_1)(1-d_+)\Biggl\{
\biggl[1-\frac{1}{2}\bigl[\eta_1+q_0(\eta_0-\eta_1)\nonumber\\
&&\qquad+q_+\eta_+\bigr]\biggr]^k
-(1-d_-)\biggl[1-\frac{1}{2}\bigl[\eta_1+\eta_-\nonumber\\
&&\qquad\qquad+q_0(\eta_0-\eta_1)+q_+(\eta_+-\eta_-)\bigr]\biggr]^k
\Biggr\},
\end{eqnarray}
and where
\begin{eqnarray}
\label{qs}
q_0&=&\frac{1}{2}\bigl(1+r_z\bigr),\nonumber\\
q_+&=&\frac{1}{2}\bigl(1+r_x\cos\theta +r_z\sin\theta\bigr).
\end{eqnarray}
\end{lemma}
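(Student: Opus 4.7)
The plan is to reduce $P_\text{report}(1,\beta\vert\rho,k)$ to a linear combination of single-click probabilities via the \hyperref[symmetrizationsetup2]{SLII} reporting strategy, then evaluate each single-click probability by an inclusion--exclusion argument that exploits the independence of photons at the 50:50 beam splitter together with the assumed independence of dark counts and photo-detections.

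First, \hyperref[symmetrizationsetup2]{SLII} reports $m=1$ with probability $S_i$ precisely when only detector $\text{D}_i$ clicks; detectors $\text{D}_0,\text{D}_1$ correspond to $\mathcal{B}_0$ and $\text{D}_+,\text{D}_-$ to $\mathcal{B}_1$. Setting $P_i=\Pr[\text{only D}_i\text{ clicks}]$ immediately yields $P_\text{report}(1,0\vert\rho,k)=S_0P_0+S_1P_1$ and $P_\text{report}(1,1\vert\rho,k)=S_+P_++S_-P_-$, which is (\ref{bases}). Second, the 50:50 beam splitter routes each photon independently with probability $1/2$ to either polarizing beam splitter, and each PBS then projects by the Born rule. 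Hence a single photon reaches $\text{D}_0,\text{D}_1,\text{D}_+,\text{D}_-$ with probabilities $p_0=q_0/2,\,p_1=(1-q_0)/2,\,p_+=q_+/2,\,p_-=(1-q_+)/2$, where $q_0=\langle 0\vert\rho_\text{qubit}\vert 0\rangle$ and $q_+=\langle\tilde{+}\vert\rho_\text{qubit}\vert\tilde{+}\rangle$. Evaluating these Born probabilities on the Bloch sphere, with $\vert\tilde{+}\rangle$ at angle $\theta$ from the $x$-axis toward the $z$-axis, yields (\ref{qs}).

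Third, for any subset $T\subseteq\{0,1,+,-\}$ the event that no detector in $T$ clicks factorizes into (a) no dark count in any $\text{D}_i$, $i\in T$, contributing $\prod_{i\in T}(1-d_i)$, and (b) no photon causes a photo-detection in any $\text{D}_i$, $i\in T$. Since the $k$ photons are independent and a single photon fails to trigger any detector in $T$ with probability $1-\sum_{i\in T}p_i\eta_i$, factor (b) equals $\bigl(1-\sum_{i\in T}p_i\eta_i\bigr)^k$. The single-click probability is then $P_i=\Pr[\text{D}_j\text{ silent for all }j\neq i]-\Pr[\text{all four silent}]$, obtained by applying the factorization to $T=\{0,1,+,-\}\setminus\{i\}$ and $T=\{0,1,+,-\}$ and subtracting; the prefactor $\prod_{j\neq i}(1-d_j)$ is common to both terms, accounting for the dark-count products in (\ref{clicks}).

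The remaining step is algebraic rewriting: substituting the values of $p_i$ and regrouping, e.g., $\tfrac{1}{2}[(1-q_0)\eta_1+q_+\eta_++(1-q_+)\eta_-]=\tfrac{1}{2}[\eta_-+(1-q_0)\eta_1+q_+(\eta_+-\eta_-)]$ for $T=\{1,+,-\}$, and analogously $\tfrac{1}{2}[q_0\eta_0+(1-q_0)\eta_1+q_+\eta_++(1-q_+)\eta_-]=\tfrac{1}{2}[\eta_1+\eta_-+q_0(\eta_0-\eta_1)+q_+(\eta_+-\eta_-)]$ for $T=\{0,1,+,-\}$, reproduces exactly the two exponential bases inside the braces for $P_0$ in (\ref{clicks}); the three analogous identities for the other three-element subsets yield $P_1,P_+,P_-$. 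The main obstacle is therefore purely notational bookkeeping --- tracking which detectors lie in each $T$ and which dark-count factor goes where --- with no nontrivial inequality or optimization required.
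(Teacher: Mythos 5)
Your proof is correct and follows essentially the same route as the paper's: both reduce $P_\text{report}(1,\beta\vert\rho,k)$ to $S_iP_i$ via the definition of SLII, express the single-click probability $P_i$ as the probability that the complementary three detectors are silent minus the probability that all four are silent, and use independence of dark counts and photo-detections. The only difference is presentational: where the paper sums the multinomial distribution of photon counts over the four detectors and collapses it with the binomial theorem, you obtain the factor $\bigl(1-\sum_{i\in T}p_i\eta_i\bigr)^k$ directly from per-photon independence, which is a slightly cleaner way to arrive at the identical expressions in (\ref{clicks}).
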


We note that we do not lose generality by considering that $\mathcal{B}_0$ is the computational basis and $\mathcal{B}_1$ is a qubit orthogonal basis in the $x-z$ plane in the Bloch sphere. This is because any pair of qubit orthogonal bases $\mathcal{B}_0$ and $\mathcal{B}_1$ define a plane in the Bloch sphere. Without loss of generality we can take this plane to be the $x-z$ plane. Then, without loss of generality we can also suppose that $\mathcal{B}_0$ is the basis along the $z$ axis, i.e. the computational basis.

\begin{proof}[Proof of Lemma \ref{lemmasetup2}]
Consider Bob's setup in Fig. \ref{setupII}. In the considered \hyperref[MPAII]{MPAII} attack, Alice sends Bob a pulse of $k$ photons, where each photon encodes the qubit state $\rho_\text{qubit}$, for some nonnegative integer $k$. That is, $\rho=\rho_\text{qubit}^{\otimes k}$. Let $k_{01}$ be the number of photons that are transmitted through the 50:50 beam splitter towards the polarizing beam splitter PBS$_{01}$ and let $k_{+-}=k-k_{01}$ be the number of photons that are reflected from the 50:50 beam splitter towards the polarizing beam splitter PBS$_{+-}$, for $k_{01}\in\{0,1,\ldots,k\}$. Let $k_0$ and $k_1=k_{01}-k_0$ be the number of photons that go towards the detectors D$_0$ and D$_1$, respectively, for $k_0\in\{0,1,\ldots,k_{01}\}$. Let $k_+$ and $k_-=k-k_{01}-k_+$ be the number of photons that go towards the detectors D$_+$ and D$_-$, respectively, for $k_+\in\{0,1,\ldots,k-k_{01}\}$.

The probabilities that a photon is transmitted through the 50:50 beam splitter towards the polarizing beam splitter PBS$_{01}$ and reflected from the 50:50 beam splitter towards the polarizing beam splitter PBS$_{+-}$ are both $\frac{1}{2}$. We note that we do not lose generality by supposing that the 50:50 beam splitter has transmission and reflection probabilities exactly equal to $\frac{1}{2}$. If these probabilities were different, these values could be absorbed in the efficiencies of the detectors, leaving the equivalent transmission and reflection probabilities of the 50:50 beam splitter effectively equal to $\frac{1}{2}$. The probabilities that a photon directed towards the polarizing beam splitter PBS$_{01}$ goes to the detectors D$_0$ and D$_1$ are $q_0=\langle 0\rvert \rho_\text{qubit}\lvert 0\rangle= \frac{1}{2}\bigl(1+r_z\bigr)$ and $q_1=1-q_0$, respectively, where $r_z$ is the $z$ component of the Bloch vector $\vec{r}$ of the qubit state $\rho$. The probabilities that a photon directed towards the polarizing beam splitter PBS$_{+-}$ goes to the detectors D$_+$ and D$_-$ are $q_+=\langle \tilde{+}\rvert \rho_\text{qubit}\lvert \tilde{+}\rangle= \frac{1}{2}\bigl(1+r_x\cos\theta+r_z\sin\theta\bigr)$ and $q_-=1-q_+$, respectively, where $r_x$ is the $x$ component of the Bloch vector $\vec{r}$ of the qubit state $\rho_{\text{qubit}}$ and where $\theta$ is the angle from the $x$ axis towards the $z$ axis of the Bloch vector $\vec{r}$ in the Bloch sphere.

Let $P_i(0\vert \rho k k_{01}k_0k_+)$ and $P_i(1\vert \rho k k_{01}k_0k_+)=1-P_i(0\vert \rho k k_{01}k_0k_+)$ be the probabilities that the detector D$_i$ does not click and clicks, respectively, given the values of $\rho$, $k$, $k_{01},k_0$ and $k_+$, for $i\in\{0,1,+,-\}$. Let $P(k_{01}k_0k_+\vert \rho,k)$ be the probability of the values $k_{01}$, $k_0$ and $k_+$, given $\rho$ and $k$. We have that
\begin{eqnarray}
\label{detections}
&&P_\text{det}(c_0c_1c_+c_-\vert \rho,k)\nonumber\\
&&\quad=\sum_{k_{01}=0}^k\sum_{k_0=0}^{k_{01}}\sum_{k_+=0}^{k-k_{01}}\bigl[P(k_{01}k_0k_+\vert \rho,k)P_0(c_0\vert \rho k k_{01}k_0k_+) \times\nonumber\\
&&\qquad\qquad\times P_1(c_1\vert \rho k k_{01}k_0k_+)  P_+(c_+\vert \rho k k_{01}k_0k_+) \times\nonumber\\
&&\qquad\qquad\qquad\times P_-(c_-\vert \rho k k_{01}k_0k_+)\bigr],\nonumber\\
\end{eqnarray}
for $c_0,c_1,c_+,c_-\in\{0,1\}$, where
\begin{eqnarray}
\label{detections2}
P(k_{01}k_0k_+\vert \rho, k)&\!=\!&\begin{pmatrix}
k \\ k_{01}
\end{pmatrix}\begin{pmatrix}
k_{01} \\ k_{0}
\end{pmatrix}\begin{pmatrix}
k-k_{01} \\ k_+
\end{pmatrix}\Bigl(\frac{1}{2}\Bigr)^k \times\nonumber\\
&&\!\!\!\!\!\quad\times (q_0)^{k_0}(1-q_0)^{k_{01}-k_0}\!
(q_+)^{k_+}\times\nonumber\\
&&\!\!\!\!\!\quad\quad\times(1-q_+)^{k-k_{01}-k_+}\!,\nonumber\\
P_0(0\vert \rho kk_{01}k_0k_+)&=&(1-d_0)(1-\eta_0)^{k_0},\nonumber\\
P_1(0\vert \rho kk_{01}k_0k_+)&=&(1-d_1)(1-\eta_1)^{k_{01}-k_0},\nonumber\\
P_+(0\vert \rho kk_{01}k_0k_+)&=&(1-d_+)(1-\eta_+)^{k_+},\nonumber\\
P_-(0\vert \rho kk_{01}k_0k_+)&=&(1-d_-)(1-\eta_-)^{k-k_{01}-k_+},\nonumber\\
P_i(1\vert \rho kk_{01}k_0k_+)&=&1-P_i(0\vert \rho k k_{01}k_0k_+),
\end{eqnarray}
for $i\in\{0,1,+,-\}$. Let $P_i$ be the probability that only the detector D$_i$ clicks, for $i\in\{0,1,+,-\}$. We have that
\begin{eqnarray}
\label{detections3}
P_0&=&P_\text{det}(1000\vert\rho,k),\nonumber\\
P_1&=&P_\text{det}(0100\vert\rho,k),\nonumber\\
P_+&=&P_\text{det}(0010\vert\rho,k),\nonumber\\
P_-&=&P_\text{det}(0001\vert\rho,k).
\end{eqnarray}

The probability $P_\text{report}(1,0\vert\rho,k)$ that Bob sends the message $m=1$ to Alice and assigns a measurement outcome in the basis $\mathcal{B}_0$ is the probability that Bob assigns a measurement outcome corresponding to the state $\lvert 0\rangle$ (detector D$_0$) plus the probability that Bob assigns a measurement outcome corresponding to the state $\lvert 1\rangle$ (detector D$_1$). Similarly, the probability $P_\text{report}(1,1\vert\rho,k)$ that Bob sends the message $m=1$ to Alice and assigns a measurement outcome in the basis $\mathcal{B}_1$ is the probability that Bob assigns a measurement outcome corresponding to the state $\lvert \tilde{+}\rangle$ (detector D$_+$) plus the probability that Bob assigns a measurement outcome corresponding to the state $\lvert \tilde{-}\rangle$ (detector D$_-$). Since in the \hyperref[symmetrizationsetup2]{SLII} reporting strategy used by Bob, Bob assigns a measurement outcome to the detector D$_i$ with probability $S_i$ if only the detector D$_i$ clicks then we have that $P_\text{report}(1,0\vert\rho,k)$ and $P_\text{report}(1,1\vert\rho,k)$ are given by (\ref{bases}). Finally, from (\ref{detections}) -- (\ref{detections3}), it follows straightforwardly using the binomial theorem that $P_i$ is given by (\ref{clicks}), for $i\in\{0,1,+,-\}$, as claimed.
\end{proof}

The following lemma states Alice's guessing probability in the \hyperref[MPAII]{MPAII} attack when Bob uses the \hyperref[symmetrizationsetup2]{SLII} reporting strategy, given some specific experimental parameters.

\begin{lemma}
\label{guessingprobability}
Suppose that Alice applies the \hyperref[MPAII]{MPAII} attack with $\rho=\rho_\text{qubit}^{\otimes k}$ and that Bob uses the \hyperref[symmetrizationsetup2]{SLII} reporting strategy with $S_i=\frac{\eta_\text{min}}{\eta_i}$, for some qubit state $\rho_\text{qubit}$, for some nonnegative integer $k$, and for $i\in\{0,1,+,-\}$. Consider the following parameters: $\rho_\text{qubit}=\lvert0 \rangle\langle 0\rvert$, $\theta=0$, $0\leq d_i=d<<1$ and $\eta_i=\eta\in(0,1)$, for $i\in\{0,1,+,-\}$. We assume that Alice knows these values. Alice computes the probability $P_\text{report}(1,\beta\vert \rho,k)$ that Bob sends the message $m=1$ to Alice and assigns a measurement outcome in the basis $\mathcal{B}_\beta$, for $\beta\in\{0,1\}$. Then $S_i=1$, for $i\in\{0,1,+,-\}$ and 
\begin{eqnarray}
\label{basesequation}
P_\text{report}(1,0\vert\rho,k)&=&P_0+P_1,\nonumber\\
P_\text{report}(1,1\vert\rho,k)&=&P_++P_-,
\end{eqnarray}
where
\begin{eqnarray}
\label{probabilitiesequation}
P_0&=&(1-d)^3\biggl[\Bigl(1-\frac{\eta}{2}\Bigr)^k-(1-d)(1-\eta)^k\biggr],\nonumber\\
P_1&=&d(1-d)^3(1-\eta)^k,\nonumber\\
P_+&=&P_-=(1-d)^3\biggl[\Bigl(1-\frac{3\eta}{4}\Bigr)^k-(1-d)(1-\eta)^k\biggr],\nonumber\\
\end{eqnarray}
for any nonnegative integer $k$. It also holds that $P_\text{report}(1,0\vert\rho,k)=P_\text{report}(1,1\vert \rho,k)$ if $k\in\{0,1\}$ and that $P_\text{report}(1,0\vert \rho,k)>P_\text{report}(1,1\vert \rho,k)$ if $k\geq 2$. Alice guesses that Bob assigns a measurement outcome in the basis $\mathcal{B}_0$ if Bob sends her the message $m=1$. Alice's probability to guess the measurement basis assigned by Bob is given by
\begin{equation}
\label{guessingequation}
P_\text{guess}=\frac{P_\text{report}(1,0\vert \rho,k)}{P_\text{report}(1,0\vert \rho,k)+P_\text{report}(1,1\vert \rho,k)},
\end{equation}
which satisfies $P_\text{guess}=\frac{1}{2}$ if $k\in\{0,1\}$, $P_\text{guess}>\frac{1}{2}$ if $k\geq 2$ and $P_\text{guess}\rightarrow 1$ as $k\rightarrow \infty$. 
\end{lemma}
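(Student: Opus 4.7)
The proof will proceed by direct substitution into Lemma \ref{lemmasetup2} followed by a short convexity argument. First, I would substitute the stipulated parameters into equations (\ref{qs}): with $\vec{r}=(0,0,1)$ and $\theta=0$ we immediately obtain $q_0=\tfrac{1}{2}(1+1)=1$ and $q_+=\tfrac{1}{2}(1+0\cdot 1+1\cdot 0)=\tfrac{1}{2}$. Since all four efficiencies coincide at $\eta$, we have $\eta_{\text{min}}=\eta$ and therefore $S_i=\eta_{\text{min}}/\eta_i=1$ for $i\in\{0,1,+,-\}$, as claimed. Plugging $q_0=1$, $q_+=\tfrac{1}{2}$, $\eta_i=\eta$ and $d_i=d$ into the four expressions in (\ref{clicks}) gives the arguments $1-\eta/2$, $1-\eta$, $1-3\eta/4$, $1-\eta$ in the four brackets respectively (a one-line simplification per line), from which (\ref{probabilitiesequation}) follows. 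Combining with (\ref{bases}) and $S_i=1$ yields (\ref{basesequation}).

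Next, I would compute the difference $\Delta(k)\equiv P_\text{report}(1,0\vert\rho,k)-P_\text{report}(1,1\vert\rho,k)=P_0+P_1-2P_+$. Collecting the coefficient of $(1-\eta)^k$ in (\ref{probabilitiesequation}) gives $-(1-d)+d+2(1-d)=1$, so
\begin{equation}
\Delta(k)=(1-d)^3\,f(k),\qquad f(k)\equiv (1-\eta/2)^k+(1-\eta)^k-2(1-3\eta/4)^k.
\nonumber
\end{equation}
The key observation is that $1-3\eta/4=\tfrac{1}{2}\bigl[(1-\eta/2)+(1-\eta)\bigr]$, i.e.\ the middle base is the arithmetic mean of the outer two. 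Writing $a=1-\eta/2$ and $c=1-\eta$ (with $a\neq c$ since $\eta>0$), we have $f(k)=a^k+c^k-2\bigl(\tfrac{a+c}{2}\bigr)^k$. For $k\in\{0,1\}$ the map $x\mapsto x^k$ is affine, so $f(0)=f(1)=0$; for $k\geq 2$, strict convexity of $x\mapsto x^k$ on $(0,1)$ together with $a\neq c$ yields $\bigl(\tfrac{a+c}{2}\bigr)^k<\tfrac{a^k+c^k}{2}$, hence $f(k)>0$. This establishes $P_\text{report}(1,0\vert\rho,k)=P_\text{report}(1,1\vert\rho,k)$ for $k\in\{0,1\}$ and $P_\text{report}(1,0\vert\rho,k)>P_\text{report}(1,1\vert\rho,k)$ for $k\geq 2$. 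Given the tie at $k\in\{0,1\}$ and the strict inequality thereafter, Alice's deterministic guess $\beta=0$ upon receiving $m=1$ is optimal (or tied with random guessing), and formula (\ref{guessingequation}) follows from (\ref{guessing}).

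Finally, for the asymptotic statement $P_\text{guess}\to 1$, I would argue that the leading term of $P_0$ as $k\to\infty$ is $(1-\eta/2)^k$, while each of $P_+$ and $P_-$ grows only as $(1-3\eta/4)^k$, and $P_1$ decays as $(1-\eta)^k$. Since $(1-3\eta/4)/(1-\eta/2)<1$ and $(1-\eta)/(1-\eta/2)<1$ whenever $\eta\in(0,1)$, the ratio $P_\text{report}(1,1\vert\rho,k)/P_\text{report}(1,0\vert\rho,k)=2P_+/(P_0+P_1)$ tends to $0$, so $P_\text{guess}\to 1$ from (\ref{guessingequation}).

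The proof is essentially mechanical substitution and bookkeeping; the only genuinely non-trivial ingredient is the strict convexity argument in the middle step, which is what promotes the $k\in\{0,1\}$ equalities to strict inequalities for $k\geq 2$. I expect no serious obstacle, though care is needed to verify that the cancellations in the coefficient of $(1-\eta)^k$ in $\Delta(k)$ are correct and that the identification $1-3\eta/4=\tfrac{1}{2}[(1-\eta/2)+(1-\eta)]$, which is what makes the convexity trick applicable, is stated explicitly.
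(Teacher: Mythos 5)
Your proposal is correct and follows essentially the same route as the paper's proof: direct substitution of $q_0=1$, $q_+=\tfrac12$, $S_i=1$ into Lemma \ref{lemmasetup2}, the observation that $1-\tfrac{3\eta}{4}$ is the arithmetic mean of $1-\tfrac{\eta}{2}$ and $1-\eta$ so that strict convexity of $x\mapsto x^k$ for $k\geq 2$ gives $f(k)>0$, and a ratio-of-leading-terms argument for the limit $P_\text{guess}\to 1$. The coefficient bookkeeping for $(1-\eta)^k$ in $\Delta(k)$ checks out, so no gaps.
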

\begin{proof}
Since $\eta_i=\eta\in(0,1)$ and $S_i=\frac{\eta_\text{min}}{\eta_i}$, we see from Bob's \hyperref[symmetrizationsetup2]{SLII} reporting strategy that $S_i=1$, for $i\in\{0,1,+,-\}$. Since in Alice's \hyperref[MPAII]{MPAII} considered attack, $\rho= \rho_\text{qubit}^{\otimes k}$ with $\rho_\text{qubit}=\lvert 0\rangle\langle 0\rvert$, the Bloch vector $\vec{r}$ of the qubit state $\rho_\text{qubit}$ is $\vec{r}=(0,0,1)$. From this, and since $\theta=0$, we have $q_0=1$ and $q_+=\frac{1}{2}$ in Equation (\ref{qs}) of Lemma \ref{lemmasetup2}. By substituting $q_0=1$, $q_+=\frac{1}{2}$, $S_i=1$, $d_i=d$ and $\eta_i=\eta$ in Equations (\ref{bases}) and (\ref{clicks}) of Lemma \ref{lemmasetup2}, it is straightforward to obtain (\ref{basesequation}) and (\ref{probabilitiesequation}). 

It is easy to see from (\ref{basesequation}) and (\ref{probabilitiesequation}) that $P_\text{report}(1,0\vert \rho,k)=P_\text{report}(1,1\vert \rho,k)$ if $k\in\{0,1\}$. To see that $P_\text{report}(1,0\vert \rho,k)>P_\text{report}(1,1\vert \rho,k)$ if $k\geq 2$, we note from (\ref{basesequation}) and (\ref{probabilitiesequation}) that
\begin{equation}
\label{difference}
P_\text{report}(1,0\vert \rho,k)-P_\text{report}(1,1\vert \rho,k)=(1-d)^3f(k),
\end{equation}
where 
\begin{equation}
\label{function}
f(k)=\Bigl(1-\frac{\eta}{2}\Bigr)^k+(1-\eta)^k -2\Bigl(1-\frac{3\eta}{4}\Bigr)^k,
\end{equation}
for any nonnegative integer $k$. The function $g(x)=x^k$ is convex for $x>0$ and $k\geq 2$, since $g''(x)=k(k-1)x^{k-2}>0$ for $x>0$ and $k\geq 2$. Thus, we have
\begin{eqnarray}
\label{function2}
f(k)&=&2\biggl[\frac{1}{2}\Bigl(1-\frac{\eta}{2}\Bigr)^k+\frac{1}{2}(1-\eta)^k -\Bigl(1-\frac{3\eta}{4}\Bigr)^k\biggr]\nonumber\\
&>&2\Biggl[\biggl(\frac{1}{2}\Bigl(1-\frac{\eta}{2}\Bigr)+\frac{1}{2}(1-\eta)\biggr)^k -\Bigl(1-\frac{3\eta}{4}\Bigr)^k\Biggr]\nonumber\\
&=&2\biggl[\Bigl(1-\frac{3\eta}{4}\Bigr)^k-\Bigl(1-\frac{3\eta}{4}\Bigr)^k\biggr]\nonumber\\
&=&0,
\end{eqnarray}
for $k\geq 2$, where in the second line we used that $\eta\in(0,1)$ and that the function $g(x)=x^k$ is convex for $x>0$ and $k\geq 2$. Thus, from (\ref{difference}) -- (\ref{function2}), we see that $P_\text{report}(1,0\vert\rho,k)>P_\text{report}(1,1\vert\rho,k)$ if $k\geq 2$. It follows from 
(\ref{guessingequation}) that $P_\text{guess}>\frac{1}{2}$ if $k\geq 2$, as claimed.

Now we show that $P_\text{guess}\rightarrow 1$ if $k\rightarrow \infty$. It follows from (\ref{basesequation}) -- (\ref{guessingequation}) that 
\begin{eqnarray}
\label{guessingequation2}
&&P_\text{guess}\nonumber\\
&&\quad=1-\frac{P_++P_-}{P_0+P_1+P_++P_-}\nonumber\\
&&\quad =1-\frac{2\Bigl[\bigl(1-\frac{3\eta}{4}\bigr)^k-(1-d)(1-\eta)^k\Bigr]}{\bigl(1-\frac{\eta}{2}\bigr)^k+(1-4(1-d))(1-\eta)^k+2\bigl(1-\frac{3\eta}{4}\bigr)^k}\nonumber\\
&&\quad =1-\frac{2\biggl[1-(1-d)\Bigl(\frac{1-\eta}{1-\frac{3\eta}{4}}\Bigr)^k\biggr]}{2+\Bigl(\frac{1-\frac{\eta}{2}}{1-\frac{3\eta}{4}}\Bigr)^k+(1-4(1-d))\Bigl(\frac{1-\eta}{1-\frac{3\eta}{4}}\Bigr)^k}.\nonumber\\
\end{eqnarray}
We see that $\Bigl(\frac{1-\frac{\eta}{2}}{1-\frac{3\eta}{4}}\Bigr)^k\rightarrow \infty$ as $k\rightarrow \infty$ and $\Bigl(\frac{1-\eta}{1-\frac{3\eta}{4}}\Bigr)^k\rightarrow 0$ as $k\rightarrow \infty$. Thus, we see from (\ref{guessingequation2}) that $P_\text{guess}\rightarrow 1$ as $k\rightarrow \infty$, as claimed.
\end{proof}

It is important to interpret correctly the value of $P_\text{guess}$ in Lemma \ref{guessingprobability}. As explained above this probability is conditioned on Bob setting $m=1$, i.e. on Bob assigning a valid measurement outcome in any of the two bases. Thus, although $P_\text{guess}$ can be very close to unity for $k$ large enough, this does not mean that Alice's dishonest pulse allows her to guess Bob's assigned measurement basis with great probability. In fact, it is easy to see from Equations (\ref{basesequation}) and (\ref{probabilitiesequation}) that $P_\text{report}(1,0\vert \rho,k)+P_\text{report}(1,1\vert \rho,k)\rightarrow 0$ as $k\rightarrow \infty$. This means that the probability that Bob assigns $m=1$ is very small if $k$ is large. However, Alice can implement the generalization of the \hyperref[MPAII]{MPAII} attack described above in which Alice sends Bob various dishonest pulses. In the extreme case that each of the $N$ pulses that Alice sends Bob is a dishonest pulse with large of number of photons $k$, for a small fraction of the pulses Bob will send Alice the message $m=1$. But for each of these pulses, the probability that Alice guesses the measurement basis assigned by Bob is very close to unity.

A possible countermeasure by Bob against the \hyperref[MPAII]{MPAII} attack is that Bob aborts if a fraction $f<\delta_\text{report}$ of the pulses sent by Alice produces the value $m=1$, for some predetermined $\delta_\text{report}\in(0,1)$. Investigating this countermeasure in detail is left as an open problem. However, neither this nor other countermeasures were proposed in Ref. \cite{NJMKW12}. We must then conclude that the protocol of Ref. \cite{NJMKW12} is insecure.

In Fig. \ref{plots}, we plot Alice's guessing probability $P_\text{guess}$ in the \hyperref[MPAII]{MPAII} attack when Bob uses the \hyperref[symmetrizationsetup2]{SLII} reporting strategy, with the parameters of Lemma \ref{guessingprobability}, as a function of the number of photons $k$ of Alice's dishonest pulse. We consider the case that Bob's dark count probabilities are $d=10^{-5}$ and consider various values for the efficiencies $\eta\in(0,1)$ of Bob's detectors.

\begin{figure}
\includegraphics[scale=0.57]{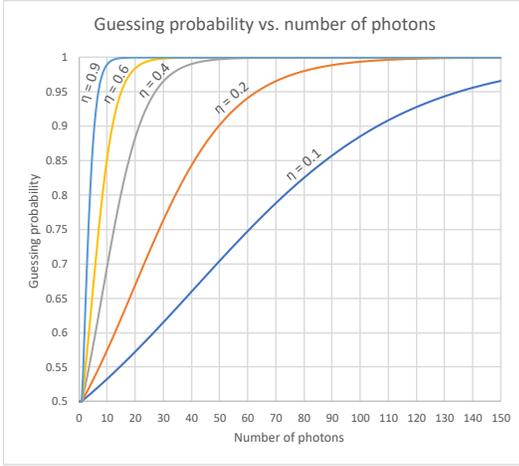}
\caption{\label{plots} \textbf{Alice's guessing probability in multiphoton attack I.} We plot Alice's guessing probability $P_\text{guess}$ in the \hyperref[MPAII]{MPAII} attack when Bob uses the \hyperref[symmetrizationsetup2]{SLII} reporting strategy, with the parameters of Lemma \ref{guessingprobability}, as a function of the number of photons $k$, for $k\in\{0,1,\ldots,150\}$. We consider the case $d=10^{-5}$ and consider five different values for the detection efficiencies $\eta$.}
\end{figure}

\subsubsection{Proof of Lemma \ref{lemmaSLII}}
\begin{proof}
We can suppose that Alice implements the \hyperref[MPAII]{MPAII} attack in the particular case $k\in\{0,1\}$, that Alice sends Bob an empty pulse if $k=0$, and that $\rho=\rho_\text{qubit}$ is a qubit state with Bloch vector $\vec{r}=(r_x,r_y,r_z)$ if $k=1$. Thus, the probability  $P_{\text{report}}(1,\beta\lvert  \rho, k)$ that Bob sets $m=1$ and assigns a valid measurement outcome in the basis $\mathcal{B}_\beta$ is given by Eq. (\ref{bases}) of Lemma \ref{lemmasetup2}, for $\beta,k\in\{0,1\}$.

We consider the case $k=0$. From Eqs. (\ref{bases}) -- (\ref{qs}) of Lemma \ref{lemmasetup2}, we have
\begin{eqnarray}
\label{repprobsdiff}
&&\bigl\lvert P_{\text{report}}(1,0\lvert  \rho, k)- P_{\text{report}}(1,1\lvert  \rho, k)\bigr\rvert\nonumber\\
&&\quad=\eta_\text{min}\biggl\lvert\frac{d_0(1-d_1)(1-d_+)(1-d_-)}{\eta_0}\nonumber\\
&&\qquad\qquad\quad+\frac{d_1(1-d_0)(1-d_+)(1-d_-)}{\eta_1}\nonumber\\
&&\qquad\qquad\quad-\frac{d_+(1-d_0)(1-d_1)(1-d_-)}{\eta_+}\nonumber\\
&&\qquad\qquad\quad-\frac{d_-(1-d_0)(1-d_1)(1-d_+)}{\eta_-}
\biggr\rvert\nonumber\\
&&\quad \leq 2\delta\nonumber\\
&&\quad \leq 6\delta ,
\end{eqnarray}
where in the second line we used that $0<\eta_\text{min}\leq \eta_i$ and $0\leq d_i\leq \delta<1$, for $i\in\{0,1,+,-\}$.

We consider the case $k=1$. From Eqs. (\ref{bases}) -- (\ref{qs}) of Lemma \ref{lemmasetup2}, we have
\begin{eqnarray}
\label{repprobsdiff2}
&&\bigl\lvert P_{\text{report}}(1,0\lvert  \rho, k)- P_{\text{report}}(1,1\lvert \rho, k)\bigr\rvert\nonumber\\
&&\quad=\eta_\text{min}\Biggl\lvert\frac{(1-d_1)(1-d_+)(1-d_-)q_0\eta_0}{\eta_0}\nonumber\\
&&\qquad\qquad\quad+\frac{(1-d_0)(1-d_+)(1-d_-)(1-q_0)\eta_1}{\eta_1}\nonumber\\
&&\qquad\qquad\quad-\frac{(1-d_0)(1-d_1)(1-d_-)q_+\eta_+}{\eta_+}\nonumber\\
&&\qquad\qquad\quad-\frac{(1-d_0)(1-d_1)(1-d_+)(1-q_+)\eta_-}{\eta_-}
\nonumber\\
&&\qquad\qquad\quad+\biggl[\frac{d_0(1-d_1)(1-d_+)(1-d_-)}{\eta_0}\nonumber\\
&&\qquad\qquad\qquad\quad+\frac{d_1(1-d_0)(1-d_+)(1-d_-)}{\eta_1}\nonumber\\
&&\qquad\qquad\qquad\quad-\frac{d_+(1-d_0)(1-d_1)(1-d_-)}{\eta_+}\nonumber\\
&&\qquad\qquad\qquad\quad-\frac{d_-(1-d_0)(1-d_1)(1-d_+)}{\eta_-}\biggr]\times\nonumber\\
&&\qquad\qquad\qquad\qquad\times \biggl[1-\frac{1}{2}\Bigl(q_0\eta_0+(1-q_0)\eta_1\nonumber\\
&&\qquad\qquad\qquad\qquad\qquad\quad+q_+\eta_++(1-q_+)\eta_-\Bigr)\biggr]\Biggr\rvert\nonumber\\
&&\quad =\eta_\text{min}\Biggl\lvert(1-d_+)(1-d_-)\bigl(1-d_0(1-q_0)-d_1q_0\bigr)\nonumber\\
&&\qquad\qquad\quad-(1-d_0)(1-d_1)\bigl(1-d_+(1-q_+)-d_-q_+\bigr)\nonumber\\
&&\qquad\qquad +\biggl[\frac{d_0(1-d_1)(1-d_+)(1-d_-)}{\eta_0}\nonumber\\
&&\qquad\qquad\qquad +\frac{d_1(1-d_0)(1-d_+)(1-d_-)}{\eta_1}\nonumber\\
&&\qquad\qquad\qquad -\frac{d_+(1-d_0)(1-d_1)(1-d_-)}{\eta_+}\nonumber\\
&&\qquad\qquad\qquad -\frac{d_-(1-d_0)(1-d_1)(1-d_+)}{\eta_-}\biggr]\times\nonumber\\
&&\qquad\qquad\qquad\quad\times\biggl[1-\frac{1}{2}\Bigl(q_0\eta_0+(1-q_0)\eta_1\nonumber\\
&&\qquad\qquad\qquad\qquad\qquad\qquad\quad+q_+\eta_++(1-q_+)\eta_-\Bigr)\biggr]\Biggr\rvert\nonumber\\
&&\quad \leq 1-(1-\delta)^2(1-2\delta)+2\delta\nonumber\\
&&\quad= 6\delta -5\delta^2+2\delta^3\nonumber\\
&&\quad \leq 6\delta,
\end{eqnarray}
where in the third line we used that $0\leq d_i\leq \delta<1$, $0\leq q_0\leq 1$, $0\leq q_+\leq 1$ and $0<\eta_\text{min}\leq \eta_i<1$, for $i\in\{0,1,+,-\}$; and in the last line we used that $\delta^3\leq \delta^2$ since $0\leq \delta<1$.

The claimed result (\ref{a3new}) follows from (\ref{repprobsdiff}) and (\ref{repprobsdiff2}).
\end{proof}

\subsubsection{Reporting single and double clicks}

As in setup I discussed in the main text, a better reporting strategy in setup II is that Bob sets $m=1$ if one or two detectors click. More precisely, we consider the following reporting strategy.

\begin{definition}[Reporting single and double clicks with setup II (RSDCII)]
\label{RLODCII}
Bob sends Alice the message $m=1$ if at least one detector from the pair D$_0$, D$_1$ (D$_+$, D$_-$) clicks and none detector from the pair D$_+$, D$_-$ (D$_0$, D$_1$) click. If only the detector D$_i$ clicks, Bob sets $m=1$ and assigns the measurement outcome corresponding to the detector D$_i$, for $i\in\{0,1,+,-\}$. If D$_0$ and D$_{1}$ click, but D$_+$ and D$_-$ do not click, Bob sets $m=1$ and randomly assigns a measurement outcome corresponding to the detector D$_0$ and D$_{1}$. Similarly, if D$_+$ and D$_{-}$ click, but D$_0$ and D$_1$ do not click, Bob sets $m=1$ and randomly assigns a measurement outcome corresponding to the detector D$_+$ and D$_{-}$. In all other cases Bob sets $m=0$.
\end{definition}

We see that with this reporting strategy, Bob assigns a measurement outcome in the basis $\mathcal{B}_0$ with unit probability if at least one of the detectors D$_0$ and D$_1$ click and none of the detectors D$_+$ and D$_-$ click. Similarly, Bob assigns a measurement outcome in the basis $\mathcal{B}_1$ with unit probability if at least one of the detectors D$_+$ and D$_-$ click and none of the detectors D$_0$ and D$_1$ click.

As the following lemma shows, if Bob applies the \hyperref[RLODCII]{RSDCII} reporting strategy, with Bob's detectors having exactly
equal efficiencies and their dark count probabilities being independent
of his measurement basis, Alice cannot learn any information about
$\beta$ from the message $m$.

\begin{lemma}
\label{setup2equaldet}
Consider setup II of Fig. \ref{setupII} in which Bob uses the \hyperref[RLODCII]{RSDCII} reporting strategy. Suppose that Alice sends Bob a dishonest pulse of $k$ photons, for some nonnegative integer $k$ chosen by Alice. Let $\rho$ be an arbitrary quantum entangled state of the $k$ qubits encoded in the polarization of the $k$ photons, which is in an arbitrary entangled state with an ancilla held by Alice, and which is chosen by Alice. Let $P_\text{report}(1,\beta\vert \rho,k)$ be the probability that Bob sends the message $m=1$ to Alice and assigns a valid measurement outcome in the basis $\mathcal{B}_\beta$, for $\beta\in\{0,1\}$. Suppose that $\eta_0=\eta_1\in(0,1)$ and $\eta_+=\eta_-\in(0,1)$, for $i\in\{0,1,+,-\}$. 
Then 
\begin{eqnarray}
\label{report}
&&P_\text{report}(1,0\vert\rho,k)=(1-d_+)(1-d_-)\Bigl(1-\frac{\eta_+}{2}\Bigr)^k \nonumber\\
&&\qquad-(1-d_0)(1-d_1)(1-d_+)(1-d_-)\Bigl(1-\frac{\eta_0+\eta_+}{2}\Bigr)^k,\nonumber\\
&&P_\text{report}(1,1\vert\rho,k)=(1-d_0)(1-d_1)\Bigl(1-\frac{\eta_0}{2}\Bigr)^k \nonumber\\
&&\qquad-(1-d_0)(1-d_1)(1-d_+)(1-d_-)\Bigl(1-\frac{\eta_0+\eta_+}{2}\Bigr)^k,\nonumber\\
\end{eqnarray}
for any quantum state $\rho$ of $k$ qubits, for any nonnegative integer $k$, and for any qubit orthogonal bases $\mathcal{B}_0$ and $\mathcal{B}_1$.
\end{lemma}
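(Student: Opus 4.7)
The plan is to adapt the argument of Lemma \ref{lemma0} from setup I to setup II, exploiting the fact that under the \hyperref[RLODCII]{RSDCII} strategy, the event ``$m=1$ with basis $\beta=0$ assigned'' corresponds exactly to ``at least one of D$_0$, D$_1$ clicks and neither of D$_+$, D$_-$ clicks,'' and symmetrically for $\beta=1$. I would first rewrite
\begin{equation}
P_\text{report}(1,0\vert\rho,k)=\sum_{k_{01}=0}^{k}P(k_{01}\vert\rho,k)\,\bigl[1-A(k_{01})\bigr]\,B(k-k_{01}),
\end{equation}
where $P(k_{01}\vert\rho,k)$ is the probability that $k_{01}$ of the $k$ photons are transmitted by the $50{:}50$ beam splitter toward PBS$_{01}$, $A(k_{01})$ is the probability that neither D$_0$ nor D$_1$ clicks given $k_{01}$ photons reach PBS$_{01}$, and $B(k-k_{01})$ is the probability that neither D$_+$ nor D$_-$ clicks given $k-k_{01}$ photons reach PBS$_{+-}$. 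The analogous expression holds for $P_\text{report}(1,1\vert\rho,k)$ with the roles of the two branches swapped.

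The central observation, which is the natural extension of the calculation in Lemma \ref{lemma0}, is that under $\eta_0=\eta_1$ the factor $A(k_{01})$ collapses to something that depends only on the total number of photons $k_{01}$ incident on the PBS$_{01}$ side, not on how they are subsequently split between D$_0$ and D$_1$. Concretely, summing over the number $k_0\in\{0,\ldots,k_{01}\}$ of photons routed to D$_0$ and using independence of dark counts and photo-detections (as in Lemma \ref{lemma2}), one gets $A(k_{01})=(1-d_0)(1-d_1)(1-\eta_0)^{k_{01}}$, since the polarization-dependent weights in the PBS$_{01}$ sum to $1$. The same argument with $\eta_+=\eta_-$ gives $B(k-k_{01})=(1-d_+)(1-d_-)(1-\eta_+)^{k-k_{01}}$.

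The remaining input is that the $50{:}50$ beam splitter is polarization-insensitive, so each photon is routed independently to either branch with probability $1/2$, giving $P(k_{01}\vert\rho,k)=\binom{k}{k_{01}}(1/2)^k$ independently of $\rho$, even for entangled $\rho$ possibly entangled with an Alice-held ancilla. Substituting everything in and applying the binomial theorem twice, once to evaluate $\sum_{k_{01}}\binom{k}{k_{01}}(1/2)^k(1-\eta_+)^{k-k_{01}}=(1-\eta_+/2)^k$ and once to evaluate $\sum_{k_{01}}\binom{k}{k_{01}}(1/2)^k(1-\eta_0)^{k_{01}}(1-\eta_+)^{k-k_{01}}=(1-(\eta_0+\eta_+)/2)^k$, yields the first line of (\ref{report}); the second line follows by the same calculation with $0\leftrightarrow +$ and $1\leftrightarrow -$.

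The main obstacle, as in Lemma \ref{lemma0}, is justifying that the intra-branch splitting step can be handled independently of $\rho$ for arbitrarily entangled multiphoton polarization states: one needs the assumption, already implicit in the detection model of Lemma \ref{lemmasetup2} and carried over here, that each photon at PBS$_{01}$ (resp.\ PBS$_{+-}$) selects its detector according to the polarization Born probabilities while dark counts and detector inefficiencies act independently on each detector. Once this modeling assumption is in place, the equal-efficiency hypothesis makes the polarization weights inside each branch drop out of $A$ and $B$, and the whole calculation reduces to two applications of the binomial theorem; the independence of $P(k_{01}\vert\rho,k)$ from $\rho$ then completes the argument.
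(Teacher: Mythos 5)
Your proposal is correct and follows essentially the same route as the paper's proof: both decompose over the number $k_{01}$ of photons sent to the PBS$_{01}$ branch with binomial weights $\binom{k}{k_{01}}(1/2)^k$ independent of $\rho$, use the equal-efficiency hypotheses $\eta_0=\eta_1$ and $\eta_+=\eta_-$ to make the intra-branch no-click probabilities depend only on $k_{01}$ (so the Born-rule splitting probabilities sum out to $1$), and finish with two applications of the binomial theorem. The only cosmetic difference is that the paper sums over the full triple $(k_{01},k_0,k_+)$ at once rather than first forming the branch marginals $A$ and $B$.
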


\begin{corollary}
\label{corollary1}
In Lemma \ref{setup2equaldet}, if $(1-d_0)(1-d_1)=(1-d_+)(1-d_-)$ and $\eta_i=\eta$, for $i\in\{0,1,+,-\}$, then $P_\text{report}(1,0\vert \rho,k)=P_\text{report}(1,1\vert\rho,k)$, for any quantum state $\rho$ of $k$ qubits and for any nonnegative integer $k$ chosen by Alice, and for any qubit orthogonal bases $\mathcal{B}_0$ and $\mathcal{B}_1$. Thus, Alice cannot obtain any information about Bob's assigned measurement basis in this case.
\end{corollary}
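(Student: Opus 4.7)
The plan is to derive the corollary as a direct substitution into the closed-form expressions for $P_\text{report}(1,0\vert\rho,k)$ and $P_\text{report}(1,1\vert\rho,k)$ already established in Lemma \ref{setup2equaldet} (equation (\ref{report})). Since the corollary inherits the hypotheses of that lemma (which already assumed $\eta_0=\eta_1$ and $\eta_+=\eta_-$), I do not need to redo any of the combinatorial or binomial-theorem bookkeeping; I only need to specialize the right-hand sides of (\ref{report}) under the two additional conditions of the corollary.

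First I would set $\eta_0=\eta_+=\eta$, so that every exponential base $(1-\eta_+/2)$, $(1-\eta_0/2)$ collapses to the single base $(1-\eta/2)$, and the two-basis base $\bigl(1-(\eta_0+\eta_+)/2\bigr)$ becomes $(1-\eta)$. At this point the two expressions read
\begin{align*}
P_\text{report}(1,0\vert\rho,k)&=(1-d_+)(1-d_-)(1-\eta/2)^k\\
&\quad -(1-d_0)(1-d_1)(1-d_+)(1-d_-)(1-\eta)^k,\\
P_\text{report}(1,1\vert\rho,k)&=(1-d_0)(1-d_1)(1-\eta/2)^k\\
&\quad -(1-d_0)(1-d_1)(1-d_+)(1-d_-)(1-\eta)^k.
\end{align*}
The second terms on the right-hand sides are identical, so their contributions cancel in the difference. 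The first terms differ only by the prefactors $(1-d_+)(1-d_-)$ and $(1-d_0)(1-d_1)$, which are equal by hypothesis. Hence the two probabilities coincide, as claimed, for every $k\in\{0,1,2,\ldots\}$ and every $k$-qubit state $\rho$.

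The final step is to interpret the equality cryptographically: since $P_\text{report}(1,0\vert\rho,k)=P_\text{report}(1,1\vert\rho,k)$ for every strategy $(\rho,k)$ available to Alice, the conditional distribution of the basis label $\beta\in\{0,1\}$ given Bob's message $m=1$ is independent of $\rho$ and $k$, and coincides with Bob's prior. Thus Alice learns nothing about $\beta$, irrespective of her multiphoton or entangled preparation. There is no real obstacle in this argument; the only thing to be careful about is that the corollary's hypothesis is strictly weaker than requiring $d_0=d_1=d_+=d_-$ (it only requires equality of the two products), and the proof should make explicit that it is precisely this product structure that appears in (\ref{report}), which is why the weaker hypothesis suffices.
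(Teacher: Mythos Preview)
Your proposal is correct and matches the paper's approach: the corollary is stated in the paper without a separate proof, precisely because it follows by direct substitution of the extra hypotheses into the closed-form expressions (\ref{report}) of Lemma~\ref{setup2equaldet}, exactly as you do. One small wording remark: in setup~II the basis label $\beta$ is not chosen by Bob with a prior but is an outcome determined by which detector pair fires, so the cleaner phrasing is that, conditioned on $m=1$, the assigned basis is uniformly distributed regardless of Alice's preparation $(\rho,k)$; the mathematics of your argument is unaffected.
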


Exactly equal detection efficiencies cannot be guaranteed in practice. However, as discussed in the main text, attenuators can be used to make the detector efficiencies approximately equal. Furthermore, Bob can effectively make his detectors to have
approximately equal dark count probabilities  by simulating dark counts in the detectors with lower dark count probabilities so that they approximate the dark count probability of the detector with highest dark count probability.

\begin{proof}[Proof of Lemma \ref{setup2equaldet}]
  The probability that Bob sets $m=1$ and assigns a measurement
  outcome in the basis $\mathcal{B}_0$ is the probability that he
  assigns a valid measurement outcome to the state corresponding to
  the detector D$_0$ plus the probability that he assigns a valid
  measurement outcome to the state corresponding to the detector
  D$_1$. Similarly, the probability that Bob sets $m=1$ and assigns a
  measurement outcome in the basis $\mathcal{B}_1$ is the probability
  that he assigns a valid measurement outcome to the state
  corresponding to the detector D$_+$ plus the probability that he
  assigns a valid measurement outcome to the state corresponding to
  the detector D$_-$. Thus, it follows from Bob's
  \hyperref[RLODCII]{RSDCII} reporting strategy that
  \begin{eqnarray}
\label{reportprobs}
&&P_\text{report}(1,0\vert\rho,k)\nonumber\\
&&\quad=P_{\text{det}}(1000\vert\rho,k)+P_{\text{det}}(0100\vert\rho,k)+P_{\text{det}}(1100\vert\rho,k),\nonumber\\
&&P_\text{report}(1,1\vert \rho,k)\nonumber\\
&&\quad=P_{\text{det}}(0010\vert\rho,k)+P_{\text{det}}(0001\vert\rho,k)+P_{\text{det}}(0011\vert\rho,k).\nonumber\\
\end{eqnarray}

Consider Bob's setup in Fig. \ref{setupII}. Alice sends Bob a pulse of
$k$ photons, encoding a $k-$qubit state $\rho$ in the polarization
degrees of freedom, for some nonnegative integer $k$. Let $k_{01}$ be
the number of photons that are transmitted through the 50:50 beam
splitter towards the polarizing beam splitter PBS$_{01}$ and let
$k_{+-}=k-k_{01}$ be the number of photons that are reflected from the
50:50 beam splitter towards the polarizing beam splitter PBS$_{+-}$,
for $k_{01}\in\{0,1,\ldots,k\}$. Let $k_0$ and $k_1=k_{01}-k_0$ be the
number of photons that go towards the detectors D$_0$ and D$_1$,
respectively, for $k_0\in\{0,1,\ldots,k_{01}\}$. Let $k_+$ and
$k_-=k-k_{01}-k_+$ be the number of photons that go towards the
detectors D$_+$ and D$_-$, respectively, for
$k_+\in\{0,1,\ldots,k-k_{01}\}$.

The probabilities that a photon is transmitted through the 50:50 beam
splitter towards the polarizing beam splitter PBS$_{01}$ and reflected
from the 50:50 beam splitter towards the polarizing beam splitter
PBS$_{+-}$ are both $\frac{1}{2}$. We note that we do not lose
generality by considering that the 50:50 beam splitter has
transmission and reflection probabilities exactly equal to
$\frac{1}{2}$. If these probabilities were different, these values
could be absorbed in the efficiencies of the detectors, leaving the
equivalent transmission and reflection probabilities of the 50:50 beam
splitter effectively equal to $\frac{1}{2}$.

Let $P(k_{01}k_0k_+\vert \rho,k)$ be the probability of the values $k_{01}$, $k_0$ and $k_+$, given $\rho$ and $k$. We have that
\begin{equation}
\label{Born}
P(k_{01}k_0k_+\vert \rho,k)=\Bigl(\frac{1}{2}\Bigr)^k \begin{pmatrix}
k \\ k_{01}
\end{pmatrix}P(k_{0}k_+\vert k_{01},\rho,k),
\end{equation}
where $P(k_{0}k_+\vert k_{01},\rho,k)$ is given by the Born rule and satisfies
\begin{equation}
\label{sumprobs}
\sum_{k_0=0}^{k_{01}}\sum_{k_+=0}^{k-k_{01}}P(k_{0}k_+\vert k_{01},\rho,k)=1,
\end{equation}
for any $k_{01}\in\{0,1,\ldots,k\}$, for any $k-$qubit state $\rho$ and for any nonnegative integer $k$.

Let $P_{\text{det}}(c_0c_1c_+c_-\vert k_{01},k_0,k_+,k)$ be the probability of the detection event $(c_0,c_1,c_+,c_-)$, given the values $k_{01}$, $k_0$ and $k_+$, which is independent of $\rho$, for $c_0,c_1,c_+,c_-\in\{0,1\}$. We have that
\begin{eqnarray}
\label{probs}
&&P_{\text{det}}(c_0c_1c_+c_-\vert \rho,k)\nonumber\\
&&\qquad=\sum_{k_{01}=0}^k\sum_{k_0=0}^{k_{01}}\sum_{k_+=0}^{k-k_{01}}\bigl[P_{\text{det}}(c_0c_1c_+c_-\vert k_{01},k_0,k_+,k)\times\nonumber\\&&\qquad\qquad\qquad\qquad\qquad\quad \times P(k_{01}k_0k_+\vert \rho,k)\bigr],
\end{eqnarray}
and that
\begin{eqnarray}
\label{probs2}
&&P_{\text{det}}(c_0c_1c_+c_-\vert k_{01},k_0,k_+,k)\nonumber\\
&&\quad=P_0(c_0\vert k_0)P_1(c_1\vert k_{01},k_0) P_+(c_+\vert k_+)P_-(c_-\vert k_{01},k_+,k),\nonumber\\
\end{eqnarray}
where $P_i(c_i\vert k_i)$ is the probability that the detector D$_i$ clicks if $c_i=1$ and does not click if $c_i=0$, given the value of $k_i$, which satisfies
\begin{eqnarray}
\label{probs3}
P_i(0\vert k_i)&=&(1-d_i)(1-\eta_i)^{k_i},\nonumber\\
P_i(1\vert k_i)&=&1-P_i(0\vert k_i),
\end{eqnarray}
for $i\in\{0,1,+,-\}$, $c_0,c_1,c_+,c_-\in\{0,1\}$, any $k-$qubit state $\rho$ and any nonnegative integer $k$; where we recall that $k_1=k_{01}-k_0$ and $k_-=k-k_{01}-k_+$. 

From (\ref{reportprobs}), (\ref{Born}), (\ref{probs}),  (\ref{probs2}) and (\ref{probs3}), we have
\begin{eqnarray}
\label{difference4}
&&P_\text{report}(1,0\vert\rho,k)\nonumber\\&&\quad=\sum_{k_{01}=0}^k\sum_{k_0=0}^{k_{01}}\sum_{k_+=0}^{k-k_{01}} \Bigl(\frac{1}{2}\Bigr)^k \begin{pmatrix}
k \\ k_{01}
\end{pmatrix}P(k_{0}k_+\vert k_{01},\rho,k) \times\nonumber\\
&&\qquad\times\Bigl[\bigl(1-P_0(0\vert k_0)P_1(0\vert k_{01},k_0)\bigr) \times\nonumber\\
&&\qquad\qquad\qquad\times P_+(0\vert k_+)P_-(0\vert k_{01},k_+,k)\Bigr]\nonumber
\\
&&\quad=\sum_{k_{01}=0}^k\sum_{k_0=0}^{k_{01}}\sum_{k_+=0}^{k-k_{01}} \Bigl(\frac{1}{2}\Bigr)^k \begin{pmatrix}
k \\ k_{01}
\end{pmatrix}P(k_{0}k_+\vert k_{01},\rho,k) \times\nonumber\\
&&\quad\qquad\times\bigl[(1-d_+)(1-d_-)(1-\eta_+)^{k-k_{01}}\nonumber\\
&&\qquad\qquad\quad-(1-d_0)(1-d_1)(1-d_+)(1-d_-)\times\nonumber\\
&&\quad\qquad\qquad\qquad\times(1-\eta_0)^{k_{01}}(1-\eta_+)^{k-k_{01}}\bigr]\nonumber\\
&&\quad=\Bigl(\frac{1}{2}\Bigr)^k\sum_{k_{01}=0}^k\begin{pmatrix}
k \\ k_{01}
\end{pmatrix}   \bigl[(1-d_+)(1-d_-)(1-\eta_+)^{k-k_{01}}\nonumber\\
&&\qquad\qquad\quad-(1-d_0)(1-d_1)(1-d_+)(1-d_-)\times\nonumber\\
&&\quad\qquad\qquad\qquad\times(1-\eta_0)^{k_{01}}(1-\eta_+)^{k-k_{01}}\bigr]\nonumber\\
&&\quad=   (1-d_+)(1-d_-)\Bigl(1-\frac{\eta_+}{2}\Bigr)^{k}\nonumber\\
&&\quad\quad-(1-d_0)(1-d_1)(1-d_+)(1-d_-)\Bigl(1-\frac{\eta_0+\eta_+}{2}\Bigr)^{k},\nonumber\\
\end{eqnarray}
where in the second line we used (\ref{probs3}), $\eta_+=\eta_-$ and $\eta_0=\eta_1$, in the third line we used (\ref{sumprobs}), and in the last line we used the Binomial theorem. Similarly, from (\ref{reportprobs}), (\ref{Born}), (\ref{probs}),  (\ref{probs2}) and (\ref{probs3}), we have
\begin{eqnarray}
\label{difference5}
&&P_\text{report}(1,1\vert\rho,k)\nonumber\\&&\quad=\sum_{k_{01}=0}^k\sum_{k_0=0}^{k_{01}}\sum_{k_+=0}^{k-k_{01}} \Bigl(\frac{1}{2}\Bigr)^k \begin{pmatrix}
k \\ k_{01}
\end{pmatrix}P(k_{0}k_+\vert k_{01},\rho,k) \times\nonumber\\
&&\qquad\times\Bigl[\bigl(1-P_+(0\vert k_+)P_-(0\vert k_{01},k_+,k)\bigr) \times\nonumber\\
&&\qquad\qquad\qquad\times P_0(0\vert k_0)P_1(0\vert k_{01},k_0)\Bigr]\nonumber
\\
&&\quad=\sum_{k_{01}=0}^k\sum_{k_0=0}^{k_{01}}\sum_{k_+=0}^{k-k_{01}} \Bigl(\frac{1}{2}\Bigr)^k \begin{pmatrix}
k \\ k_{01}
\end{pmatrix}P(k_{0}k_+\vert k_{01},\rho,k) \times\nonumber\\
&&\quad\qquad\times\bigl[(1-d_0)(1-d_1)(1-\eta_0)^{k_{01}}\nonumber\\
&&\qquad\qquad\quad-(1-d_0)(1-d_1)(1-d_+)(1-d_-)\times\nonumber\\
&&\quad\qquad\qquad\qquad\times(1-\eta_0)^{k_{01}}(1-\eta_+)^{k-k_{01}}\bigr]\nonumber\\
&&\quad=\Bigl(\frac{1}{2}\Bigr)^k\sum_{k_{01}=0}^k\begin{pmatrix}
k \\ k_{01}
\end{pmatrix}   \bigl[(1-d_0)(1-d_1)(1-\eta_0)^{k_{01}}\nonumber\\
&&\qquad\qquad\quad-(1-d_0)(1-d_1)(1-d_+)(1-d_-)\times\nonumber\\
&&\quad\qquad\qquad\qquad\times(1-\eta_0)^{k_{01}}(1-\eta_+)^{k-k_{01}}\bigr]\nonumber\\
&&\quad=   (1-d_0)(1-d_1)\Bigl(1-\frac{\eta_0}{2}\Bigr)^{k}\nonumber\\
&&\quad\quad-(1-d_0)(1-d_1)(1-d_+)(1-d_-)\Bigl(1-\frac{\eta_0+\eta_+}{2}\Bigr)^{k},\nonumber\\
\end{eqnarray}
where in the second line we used (\ref{probs3}), $\eta_+=\eta_-$ and $\eta_0=\eta_1$, in the third line we used (\ref{sumprobs}), and in the last line we used the Binomial theorem. Thus, the claimed result (\ref{report}) follows from (\ref{difference4}) and (\ref{difference5}).

\end{proof}

\subsubsection{multiphoton attack II}
\label{lastsec}
We recall from Lemma \ref{lemma0} in the main text that, in setup I, if Bob's detectors have exactly equal efficiencies and Bob reports double clicks with unit probability, as in reporting strategy II, then Alice cannot obtain any information about Bob's measurement basis. This motivated our definition of multiphoton attack II for setup I, given in the main text, as any strategy implemented by Alice allowing her to exploit the difference of Bob's detection efficiencies to obtain information about Bob's measurement basis, when Bob reports double clicks with unit probability, as in reporting strategy II, or with high probability. 

Similarly, in setup II, Corollary \ref{corollary1} says that if Bob applies the \hyperref[RLODCII]{RSDCII} reporting strategy, and if $(1-d_0)(1-d_1)=(1-d_+)(1-d_-)$ and $\eta_i=\eta$ for $i\in\{0,1,+,-\}$, then Alice cannot learn any information about Bob's assigned measurement basis, for a pulse that Bob reports as being successfully measured. This motivates us to define \emph{multiphoton attack II in setup II} as any strategy implemented by Alice allowing her to exploit the difference of Bob's detection efficiencies to obtain information about Bob's assigned measurement basis, for a pulse that Bob reports as successfully measured, when Bob applies the \hyperref[RLODCII]{RSDCII} reporting strategy.

It is not difficult to see that the multiphoton attack II in setup II allows Alice to obtain some information about Bob's assigned measurement basis, for any photon pulse with $k\geq 1$ photons. We show this below for the particular case in which $(1-d_0)(1-d_1)=(1-d_+)(1-d_-)$, $\eta_0=\eta_1$ and $\eta_+=\eta_-$, with $\eta_0\neq\eta_+$. In this case, we show in particular that Alice guesses $\beta$ with probability tending to unity if $k\rightarrow \infty$, conditioned on Bob reporting $m=1$. In Fig. \ref{lastfig}, we present plots for Alice's guessing probability $P_\text{guess}^{\text{II}}$ versus the number of photons $k$ of her dishonest pulse in the multiphoton attack II in setup II of Lemma \ref{lastlemma} below.

\begin{figure}
\includegraphics[scale=0.65]{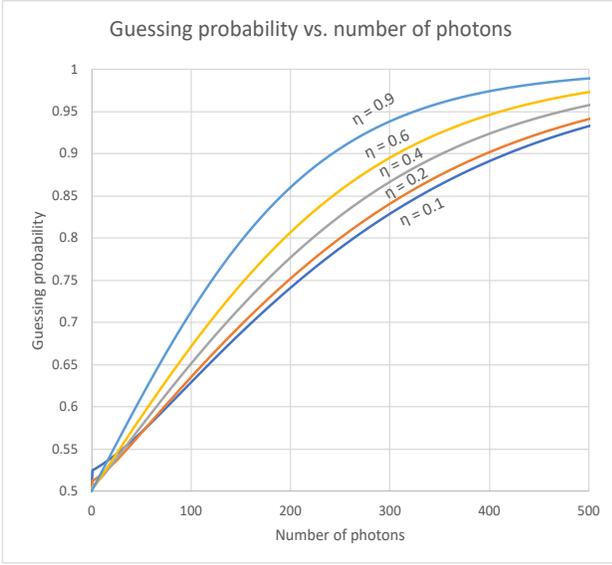}
\caption{\label{lastfig} \textbf{Alice's guessing probability in multiphoton attack II.} We plot Alice's guessing probability $P_\text{guess}^{\text{II}}$ in the multiphoton attack II in setup II of Lemma \ref{lastlemma} when Bob uses the \hyperref[RLODCII]{RSDCII} reporting strategy, as a function of the number of photons $k$, for $k\in\{0,1,\ldots,500\}$. We consider the case $d_0=d_1=d_+=d_-=10^{-5}$, $\eta_0=\eta_1=\eta+0.005$ and $\eta_+=\eta_-=\eta-0.005$, for five different values of $\eta$.}
\end{figure}

\begin{lemma}
\label{lastlemma}
Consider setup II of Fig. \ref{setupII} in which Bob uses the \hyperref[RLODCII]{RSDCII} reporting strategy. Suppose that $(1-d_0)(1-d_1)=(1-d_+)(1-d_-)\equiv a\in(0,1]$, $\eta_0=\eta_1\in(0,1)$ and $\eta_+=\eta_-\in(0,1)$, with $\eta_0\neq\eta_+$. Suppose that Alice applies the following multiphoton attack II. Alice prepares a dishonest pulse of $k$ photons encoding a $k$-qubit state $\rho$ in the polarization degrees of freedom, for some nonnegative integer $k$ chosen by Alice. Let $P_\text{report}(1,\beta\vert \rho, k)$ be the probability that Bob reports the message $m=1$ to Alice and that Bob assigns a valid measurement outcome in the basis $\mathcal{B}_\beta$, for $\beta\in\{0,1\}$. We assume that Alice knows the value of $P_\text{report}(1,\beta\vert \rho, k)$, for $\beta\in\{0,1\}$. If $P_\text{report}(1,0\vert \rho, k)\geq P_\text{report}(1,1\vert \rho, k)$, and if Bob reports to Alice the message $m=1$, Alice guesses that Bob's assigned measurement basis is $\mathcal{B}_0$. On the other hand, if $P_\text{report}(1,1\vert \rho, k) > P_\text{report}(1,0\vert \rho, k)$, and if Bob reports to Alice the message $m=1$, Alice guesses that Bob's assigned measurement basis is $\mathcal{B}_1$. Thus, the probability that Alice guesses Bob's assigned measurement basis $\mathcal{B}_\beta$ for the dishonest pulse, conditioned on Bob assigning a measurement basis (i.e. conditioned on $m=1$), is given by
\begin{equation}
\label{lasteq1}
P_\text{guess}^{\text{II}}=\frac{\max_{\beta\in\{0,1\}}\{P_\text{report}(1,\beta\vert \rho, k)\}}{P_\text{report}(1,0\vert \rho, k)+P_\text{report}(1,1\vert \rho, k)}.
\end{equation}
Then, for any $k-$qubit state $\rho$, we have $P_\text{guess}^{\text{II}}=\frac{1}{2}$ if $k=0$, $P_\text{guess}^{\text{II}}>\frac{1}{2}$ if $k\geq 1$, and $\lim_{k\rightarrow\infty}P_\text{guess}^{\text{II}}=1$.
\end{lemma}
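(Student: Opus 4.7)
The plan is to reduce everything to Lemma \ref{setup2equaldet}, which already gives closed-form expressions for $P_\text{report}(1,\beta\vert\rho,k)$ under the assumptions $\eta_0=\eta_1$ and $\eta_+=\eta_-$ that are part of the hypotheses of Lemma \ref{lastlemma}. Substituting the further assumption $(1-d_0)(1-d_1)=(1-d_+)(1-d_-)=a$ into (\ref{report}) yields
\begin{eqnarray}
P_\text{report}(1,0\vert\rho,k) &=& a\Bigl(1-\tfrac{\eta_+}{2}\Bigr)^k - a^2\Bigl(1-\tfrac{\eta_0+\eta_+}{2}\Bigr)^k,\nonumber\\
P_\text{report}(1,1\vert\rho,k) &=& a\Bigl(1-\tfrac{\eta_0}{2}\Bigr)^k - a^2\Bigl(1-\tfrac{\eta_0+\eta_+}{2}\Bigr)^k.\nonumber
\end{eqnarray}
The first crucial observation is that these expressions are independent of $\rho$, so the three claims reduce to pure calculus on $k$ with parameters $a$, $\eta_0$, $\eta_+$.

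For the $k=0$ case, both probabilities collapse to $a-a^2=a(1-a)$, so the ratio (\ref{lasteq1}) is exactly $\tfrac{1}{2}$. For $k\geq 1$, I would compute the difference
\begin{equation}
P_\text{report}(1,1\vert\rho,k)-P_\text{report}(1,0\vert\rho,k) = a\Bigl[\Bigl(1-\tfrac{\eta_0}{2}\Bigr)^k-\Bigl(1-\tfrac{\eta_+}{2}\Bigr)^k\Bigr],\nonumber
\end{equation}
which is nonzero because $\eta_0\neq\eta_+$ and $k\geq 1$. This makes one of the two report probabilities strictly larger than the other, so Alice's choice of the larger one in the definition of $P_\text{guess}^\text{II}$ yields $P_\text{guess}^\text{II}>\tfrac{1}{2}$.

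For the limit $k\to\infty$, I would assume without loss of generality that $\eta_+>\eta_0$ (the other case is symmetric), and introduce the abbreviations $\alpha=1-\tfrac{\eta_0}{2}$, $\beta=1-\tfrac{\eta_+}{2}$, $\gamma=1-\tfrac{\eta_0+\eta_+}{2}$. Since $\eta_0,\eta_+\in(0,1)$ and $\eta_+>\eta_0>0$, we have $0<\gamma<\beta<\alpha<1$, so both $(\beta/\alpha)^k$ and $(\gamma/\alpha)^k$ tend to zero. Dividing numerator and denominator of (\ref{lasteq1}) by $a\alpha^k$ gives
\begin{equation}
P_\text{guess}^\text{II} = \frac{1-a(\gamma/\alpha)^k}{1+(\beta/\alpha)^k-2a(\gamma/\alpha)^k},\nonumber
\end{equation}
which tends to $1$ as $k\to\infty$, completing the proof.

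There is no real obstacle here: once Lemma \ref{setup2equaldet} is in hand, every step is direct algebra, and the key structural point is that under the symmetry hypotheses the report probabilities cease to depend on Alice's state $\rho$ and reduce to a comparison between $(1-\eta_0/2)^k$ and $(1-\eta_+/2)^k$. The only care required is checking the strict inequalities $\gamma<\beta<\alpha<1$ to guarantee the limits go through.
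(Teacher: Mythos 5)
Your proposal is correct and follows essentially the same route as the paper: substitute the dark-count and efficiency symmetry assumptions into the closed-form report probabilities of Lemma \ref{setup2equaldet} (which are indeed $\rho$-independent), check equality at $k=0$, use $\eta_0\neq\eta_+$ to get a strict difference $a\bigl[(1-\eta_0/2)^k-(1-\eta_+/2)^k\bigr]$ for $k\geq 1$, and take the limit by dividing through by the dominant exponential. The paper performs the same normalization via the quantity $2P_\text{guess}^{\text{II}}-1$ rather than dividing the ratio directly, but the two computations are interchangeable.
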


\begin{proof}
From Lemma \ref{setup2equaldet}, we have
\begin{eqnarray}
\label{lasteq2}
P_\text{report}(1,0\vert \rho, k)&=&a\Bigl(1-\frac{\eta_+}{2}\Bigr)^k-a^2\Bigl(1-\frac{\eta_0+\eta_1}{2}\Bigr)^k,\nonumber\\
P_\text{report}(1,1\vert \rho, k)&=&a\Bigl(1-\frac{\eta_0}{2}\Bigr)^k-a^2\Bigl(1-\frac{\eta_0+\eta_1}{2}\Bigr)^k,\nonumber\\
\end{eqnarray}
for any $k-$qubit state $\rho$, for any nonnegative integer $k$, and for any qubit orthogonal bases $\mathcal{B}_0$ and $\mathcal{B}_1$.

From (\ref{lasteq2}), we have $P_\text{report}(1,0\vert \rho, 0)=P_\text{report}(1,1\vert \rho, 0)$. Thus, $P_\text{guess}^{\text{II}}=\frac{1}{2}$ if $k=0$, as claimed. We also have $P_\text{report}(1,0\vert \rho, 1)>P_\text{report}(1,1\vert \rho, 1)$ if $\eta_0>\eta_+$, and $P_\text{report}(1,1\vert \rho, 1)>P_\text{report}(1,0\vert \rho, 1)$ if $\eta_+>\eta_0$, for $k\geq 1$. Thus, from (\ref{lasteq1}), we have $P_\text{guess}^{\text{II}}>\frac{1}{2}$, for $k\geq 1$, as claimed.

We show that
\begin{equation}
\label{lasteq3}
\lim_{k\rightarrow \infty} P_\text{guess}^{\text{II}}=1.
\end{equation}
Let $\eta_0>\eta_+$. From (\ref{lasteq1}) and (\ref{lasteq2}), we have
\begin{eqnarray}
\label{lasteq4}
&&2P_\text{guess}^{\text{II}}-1\nonumber\\
&&\qquad=\frac{2P_\text{report}(1,0\vert \rho, k)}{P_\text{report}(1,0\vert \rho, k)+P_\text{report}(1,1\vert \rho, k)}-1\nonumber\\
&&\qquad=\frac{P_\text{report}(1,0\vert \rho, k)-P_\text{report}(1,1\vert \rho, k)}{P_\text{report}(1,0\vert \rho, k)+P_\text{report}(1,1\vert \rho, k)}\nonumber\\
&&\qquad= \frac{\Bigl(1-\frac{\eta_+}{2}\Bigr)^k-\Bigl(1-\frac{\eta_0}{2}\Bigr)^k}{\Bigl(1-\frac{\eta_+}{2}\Bigr)^k+\Bigl(1-\frac{\eta_0}{2}\Bigr)^k-2a\Bigl(1-\frac{\eta_0+\eta_+}{2}\Bigr)^k}\nonumber\\
&&\qquad= \frac{1-\biggl(\frac{1-\frac{\eta_0}{2}}{1-\frac{\eta_+}{2}}\biggr)^k}
{1+\biggl(\frac{1-\frac{\eta_0}{2}}{1-\frac{\eta_+}{2}}\biggr)^k-2a\biggl(\frac{1-\frac{\eta_0+\eta_+}{2}}{1-\frac{\eta_+}{2}}\biggr)^k}.
\end{eqnarray}
From (\ref{lasteq4}), we obtain (\ref{lasteq3}), as claimed. It is straightforward to see that (\ref{lasteq3}) also holds if $\eta_+>\eta_0$, by interchanging $\eta_0$ and $\eta_+$ in (\ref{lasteq4}).
\end{proof}

As we discussed previously for multiphoton attack I in setup II, it is important to interpret correctly the value of $P_\text{guess}^{\text{II}}$ in Lemma \ref{lastlemma}. This probability is conditioned on Bob setting $m=1$, i.e. on Bob assigning a valid measurement outcome in any of the two bases. Therefore, although $P_\text{guess}^{\text{II}}$ can be very close to unity for $k$ large enough, the probability that Bob assigns $m=1$ decreases with increasing $k$, as from (\ref{lasteq2}) we have $P_\text{report}(1,0\vert \rho, k)+P_\text{report}(1,1\vert \rho, k)\rightarrow 0$ as $k\rightarrow\infty$. However, Alice can extend her attack by sending a large number $N$ of dishonest pulses with large numbers of photons $k$. In this case, Bob assigns $m=1$ for a small fraction of the pulses. But for each of these pulses, Alice guesses Bob's assigned measurement basis with probability close to unity.

A possible countermeasure by Bob against this attack is that Bob aborts if a fraction $f<\delta_\text{report}$ of the pulses sent by Alice produces the value $m=1$, for some predetermined $\delta_\text{report}\in(0,1)$.
More broadly, Bob may use the statistics of his detection events across his four detectors to infer probabilistically if the number of multiphoton pulses is above a threshold, and abort in this case. In this way, dishonest Alice must limit her number of multiphoton pulses to avoid Bob aborting. Although this countermeasure can be helpful, it cannot guarantee perfect protection against arbitrary multiphoton attacks. As discussed in the main text, any photon source emits multiphoton pulses with nonzero probability. Thus, dishonest Alice can send Bob multiphoton pulses with the statistics of the source agreed for the protocol. In this way, Bob cannot detect that Alice is cheating, but Alice can still  learn Bob's measurement basis with nonzero probability. Investigating whether this countermeasure allows Bob in practice to bound Alice's cheating probability below a sufficiently small value is left as an open problem.

\subsection{A multiphoton attack on the quantum oblivious transfer protocol of Ref. \cite{ENGLWW14}}

Ref. \cite{ENGLWW14} demonstrated quantum $1$-out-of-$2$ oblivious transfer in the noisy storage model. The protocol of Ref. \cite{ENGLWW14} uses setup II, which is illustrated in Fig. \ref{setupII}. A subroutine of this protocol is equivalent to the subroutine discussed in section \ref{NJMKW12} and implemented in the protocol of
Ref. \cite{NJMKW12}. The \hyperref[symmetrizationsetup2]{SLII} reporting strategy is used by Bob, as it is claimed by Ref. \cite{ENGLWW14} that this guarantees security against
Alice. However, Ref. \cite{ENGLWW14} does not say whether Bob
only reports single clicks as valid measurement
outcomes, or whether Bob reports multiple clicks as valid measurement outcomes. If Bob only reports single clicks as valid measurement outcomes, then the multiphoton attack I in setup II (\hyperref[MPAII]{MPAII} attack) and the results presented in
section \ref{NJMKW12} apply too. In particular, as illustrated in Fig. \ref{plots} and discussed in section \ref{NJMKW12}, Alice can guess Bob's measurement bases with
high probability by implementing the \hyperref[MPAII]{MPAII}
attack with a large number of photons $k$. The results and analyses of section \ref{lastsec} apply too. 

\section{Proof of Theorem \ref{lemma3}}
\label{appE}

\begin{theorem*}
\label{lemma3repeated}
Suppose that 
\begin{eqnarray}
\label{a14}
d_{i\beta}&=&0,\\
\label{a15}
\eta_{i\beta}&=&\eta_i\in(0,1),\\
\label{new2}
\eta_0&\neq& \eta_1,
\end{eqnarray}
for $i,\beta\in\{0,1\}$, and $\mathcal{B}_0$ and $\mathcal{B}_1$ are
arbitrary distinct qubit orthogonal bases.
If Alice sends Bob a pulse
of $k$ photons encoding a state $\rho$, with $k\in\{0,1,2\}$
chosen by Alice and unknown to Bob, then the only
  probabilistic reporting strategy that guarantees to Bob that Alice
  cannot obtain any information about $\beta$ from his message $m$ is
  the trivial strategy (\ref{new6}).
\end{theorem*}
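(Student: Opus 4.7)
The plan is to assume perfect security---that is, $P_{\text{report}}(1\vert 0,\rho,k) = P_{\text{report}}(1\vert 1,\rho,k)$ for every admissible $\rho$ and every $k\in\{0,1,2\}$---and to show that this forces every coefficient $S_{c_0c_1\beta}$ to collapse to one common value $S$. It suffices to test the perfect-security condition on product states $\rho = \rho_{\text{qubit}}^{\otimes k}$, for which Lemma \ref{lemma2} provides explicit expressions for the detection probabilities in terms of $q_\beta = \langle\psi_{0\beta}\vert\rho_{\text{qubit}}\vert\psi_{0\beta}\rangle$. I would carry out the argument in the BB84 case, where $(q_0,q_1) = ((1+r_z)/2,(1+r_x)/2)$ sweeps out a two-dimensional region as the Bloch vector $\vec{r}$ varies; the same structural argument applies for general qubit orthogonal bases but the algebra is more involved, and there I would fall back to numerical verification.

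First I would exhaust $k=0$: with $d_{i\beta}=0$, an empty pulse deterministically produces the event $(0,0)$, so perfect security immediately yields $S_{000}=S_{001}=: S$. For $k=1$, substituting Lemma \ref{lemma2} into (\ref{a4}) writes $P_{\text{report}}(1\vert\beta,\rho,1)$ as an affine function of $q_\beta$. Since $q_0$ and $q_1$ can be varied independently within a 2D region of the plane, perfect security forces (i) the slope in $q_\beta$ to vanish on each side, and (ii) the two constant terms to agree. Combined with the previous step, these conditions produce $S_{010}=S_{011}=: S_{01}$ and $S_{100}=S_{101}=: S_{10}$, together with the single linear relation $(S_{10}-S)\eta_0 = (S_{01}-S)\eta_1$.

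The crucial step is $k=2$. Here $P_{\text{report}}(1\vert\beta,\rho,2)$ becomes a quadratic polynomial in $q_\beta$ whose coefficients are linear combinations of $S$, $S_{01}$, $S_{10}$ and $S_{11\beta}$. Requiring its $q_\beta^2$- and $q_\beta$-coefficients to vanish, and its constant parts to match across $\beta$, yields $S_{110}=S_{111}=: S_{11}$ together with two further linear equations in the differences $u := S_{01}-S$, $v := S_{10}-S$, $w := S_{11}-S$. Combined with the $k=1$ relation $v\eta_0 = u\eta_1$, routine but careful algebra collapses the system to the single identity
\begin{equation*}
u\,\eta_1(\eta_1-\eta_0) = 0.
\end{equation*}
Because $\eta_1>0$ and by hypothesis $\eta_0\neq\eta_1$, this forces $u=0$, and hence also $v=w=0$, so $S_{01}=S_{10}=S_{11}=S$: the strategy must be trivial. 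The hard part will be precisely this final collapse---keeping track of three coupled linear constraints at $k=2$ and verifying that they, together with the $k=1$ relation, reduce to a single nonzero multiple of $u$. The hypothesis $\eta_0\neq\eta_1$ enters exactly at the end: if $\eta_0=\eta_1$ the coefficient $\eta_1(\eta_1-\eta_0)$ vanishes and a one-parameter family of nontrivial solutions survives, consistent with the perfect protection provided by reporting strategy II when the detectors have equal efficiencies (Lemma \ref{lemma0}). For arbitrary distinct bases $\mathcal{B}_0,\mathcal{B}_1$, the analogous polynomial system has coefficients depending on the angle between the Bloch-sphere axes of the two bases; I would verify symbolically or numerically that it continues to admit only the trivial solution.
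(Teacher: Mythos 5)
Your proposal is correct and takes essentially the same approach as the paper's own proof: both restrict to product states, use Lemma \ref{lemma2} to convert the perfect-security condition into a linear system in the $S_{c_0c_1\beta}$ over $k\in\{0,1,2\}$, show that for $\eta_0\neq\eta_1$ this system forces all coefficients equal, and defer arbitrary distinct bases to a symbolic/numerical check. The only difference is organisational --- you extract the constraints as vanishing polynomial coefficients in $(q_0,q_1)$, whereas the paper evaluates at a finite set of specific states ($k=0$; $\lvert 0\rangle,\lvert 1\rangle,\lvert +\rangle,\lvert -\rangle$ at $k=1$; $\lvert 00\rangle,\lvert 11\rangle,\lvert {+}{+}\rangle$ at $k=2$) --- and your predicted final collapse $u\,\eta_1(\eta_1-\eta_0)=0$ is indeed what the combined system reduces to, matching the paper's key step (\ref{new3}).
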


\begin{proof}

We first show Theorem 1 analytically for the case that $\mathcal{B}_0$ and $\mathcal{B}_1$ are the computational and Hadamard bases, respectively, with $\lvert \psi_{00}\rangle=\lvert 0\rangle$,
$\lvert \psi_{10}\rangle=\lvert 1\rangle$,
$\lvert \psi_{01}\rangle=\lvert +\rangle$ and $\lvert \psi_{11}\rangle=\lvert -\rangle$. We then show Theorem 1 numerically for the case that $\mathcal{B}_0$ and $\mathcal{B}_1$ are arbitrary distinct qubit orthogonal bases.

\subsection{Analytical proof for the case that $\mathcal{B}_0$ and $\mathcal{B}_1$ are the computational and Hadamard bases, respectively.}

We consider the case that $\mathcal{B}_0$ and $\mathcal{B}_1$ are the computational and Hadamard bases, respectively.
Alice does not obtain any information about $\beta$ from Bob's message $m$ if and only if
\begin{equation}
\label{a12}
P_{\text{report}}(1\lvert 1, \rho, k)=P_{\text{report}}(1\lvert 0, \rho, k),
\end{equation}
for any $k-$qubit state $\rho$ and for $k\in\{0,1,2\}$.

We first note that if (\ref{new6}) of the main text holds then, from (\ref{new7}) of the main text, (\ref{a12}) holds too, for any $k-$qubit state $\rho$ and for any $k\in\{0,1,2,\ldots\}$. Thus, Alice does not obtain any information about $\beta$ from Bob's message $m$ in this case.

Now we show that Bob is guaranteed that Alice cannot obtain any information about $\beta$ from his message $m$ only if (\ref{new6}) of the main text holds. From (\ref{a4}) of the main text and (\ref{a12}), it follows that
\begin{equation}
\label{a13}
\sum_{i=0}^1\sum_{j=0}^1\bigl(S_{ij1}P_{\text{det}}(i,j\lvert 1, \rho, k)- S_{ij0}P_{\text{det}}(i,j\lvert 0, \rho, k) \bigr)=0,
\end{equation}
for any $k-$qubit state $\rho$ encoded in a pulse of $k$ photons, and for $k\in\{0,1,2\}$.

From (\ref{a13}), we establish a set of eight linear equations, by considering the case $k=0$, the case $k=1$ with $\rho\in\bigl\{\lvert 0\rangle\langle 0\rvert,\lvert 1\rangle\langle 1\rvert,\lvert +\rangle\langle +\rvert,\lvert -\rangle\langle -\rvert\bigr\}$, and the case $k=2$ with $\rho\in\bigl\{\lvert 00\rangle\langle 00\rvert,\lvert 11\rangle\langle 11\rvert,\lvert ++\rangle\langle ++\rvert\bigr\}$. Considering the state $\rho=\lvert --\rangle\langle --\rvert$ does not give any extra information. It does not simplify the proof either. As we show below, the only solution for this set of equations is given by (\ref{new6}) of the main text.

From (\ref{xab2}) of the main text, (\ref{a14}), (\ref{a15}) and (\ref{a13}), we obtain in the case $k=0$ that
\begin{equation}
\label{a16}
S_{001}=S_{000}.
\end{equation}
In the case $k=1$ with $\rho=\lvert 0\rangle\langle 0\rvert$, we obtain from (\ref{xab1main}) of the main text that $q_0=\lvert \langle0\vert0\rangle\rvert^2=1$ and $q_1=\lvert \langle+\vert0\rangle\rvert^2=\frac{1}{2}$. Thus, from (\ref{xab2}) of the main text, (\ref{a14}), (\ref{a15}) and (\ref{a13}), we obtain 
\begin{eqnarray}
\label{a17}
&&\Bigl[1-\Bigl(\frac{\eta_0+\eta_1}{2}\Bigr)\Bigr]S_{001}+\frac{\eta_0}{2}S_{101}+\frac{\eta_1}{2}S_{011}\nonumber\\
&&\qquad\qquad\qquad\quad-(1-\eta_0)S_{000}-\eta_0S_{100}=0.
\end{eqnarray}
In the case $k=1$ with $\rho=\lvert 1\rangle\langle 1\rvert$, we obtain from (\ref{xab1main}) of the main text that $q_0=\lvert \langle0\vert1\rangle\rvert^2=0$ and $q_1=\lvert \langle+\vert1\rangle\rvert^2=\frac{1}{2}$. Thus, from (\ref{xab2}) of the main text, (\ref{a14}), (\ref{a15}) and (\ref{a13}), we obtain 
\begin{eqnarray}
\label{a18}
&&\Bigl[1-\Bigl(\frac{\eta_0+\eta_1}{2}\Bigr)\Bigr]S_{001}+\frac{\eta_0}{2}S_{101}+\frac{\eta_1}{2}S_{011}\nonumber\\
&&\qquad\qquad\qquad\quad-(1-\eta_1)S_{000}-\eta_1S_{010}=0.
\end{eqnarray}
In the case $k=1$ with $\rho=\lvert +\rangle\langle +\rvert$, we obtain from (\ref{xab1main}) of the main text that $q_0=\lvert \langle0\vert+\rangle\rvert^2=\frac{1}{2}$ and $q_1=\lvert \langle+\vert+\rangle\rvert^2=1$. Thus, from (\ref{xab2}) of the main text, (\ref{a14}), (\ref{a15}) and (\ref{a13}), we obtain 
\begin{eqnarray}
\label{a19}
&&(1-\eta_0)S_{001}+\eta_0S_{101}-\Bigl[1-\Bigl(\frac{\eta_0+\eta_1}{2}\Bigr)\Bigr]S_{000}\nonumber\\
&&\qquad\qquad\qquad\qquad\quad-\frac{\eta_1}{2}S_{010}-\frac{\eta_0}{2}S_{100}=0.
\end{eqnarray}
In the case $k=1$ with $\rho=\lvert -\rangle\langle -\rvert$, we obtain from (\ref{xab1main}) of the main text that $q_0=\lvert \langle0\vert-\rangle\rvert^2=\frac{1}{2}$ and $q_1=\lvert \langle+\vert-\rangle\rvert^2=0$. Thus, from (\ref{xab2}) of the main text, (\ref{a14}), (\ref{a15}) and (\ref{a13}), we obtain 
\begin{eqnarray}
\label{a20}
&&(1-\eta_1)S_{001}+\eta_1S_{011}-\Bigl[1-\Bigl(\frac{\eta_0+\eta_1}{2}\Bigr)\Bigr]S_{000}\nonumber\\
&&\qquad\quad\qquad\qquad\qquad-\frac{\eta_1}{2}S_{010}-\frac{\eta_0}{2}S_{100}=0.
\end{eqnarray}
In the case $k=2$ with $\rho=\lvert 00\rangle\langle 00\rvert$, we obtain from (\ref{xab1main}) of the main text that $q_0=\lvert \langle0\vert0\rangle\rvert^2=1$ and $q_1=\lvert \langle+\vert0\rangle\rvert^2=\frac{1}{2}$. Thus, from (\ref{xab2}) of the main text, (\ref{a14}), (\ref{a15}) and (\ref{a13}), we obtain
\begin{eqnarray}
\label{a21}
&&\biggl(1\!-\!\frac{\eta_0\!+\!\eta_1}{2}\biggr)^2S_{001}
+\biggl(\Bigl(1\!-\!\frac{\eta_0}{2}\Bigr)^2\!-\!\Bigl(1\!-\!\frac{\eta_0\!+\!\eta_1}{2}\Bigr)^2\biggr)S_{011}\nonumber\\
&&\qquad+\biggl(\Bigl(1\!-\!\frac{\eta_1}{2}\Bigr)^2\!-\!\Bigl(1\!-\!\frac{\eta_0 \!+\!\eta_1}{2}\Bigr)^2\biggr)S_{101}+\Bigl(\frac{\eta_0\eta_1}{2}\Bigr)S_{111}\nonumber\\
&&\qquad-(1\!-\!\eta_0)^2S_{000}-\bigl(1\!-\!(1\!-\!\eta_0)^2\bigr)S_{100}=0.
\end{eqnarray}
In the case $k=2$ with $\rho=\lvert ++\rangle\langle ++\rvert$, we obtain from (\ref{xab1main}) of the main text that $q_0=\lvert \langle0\vert+\rangle\rvert^2=\frac{1}{2}$ and $q_1=\lvert \langle+\vert+\rangle\rvert^2=1$. Thus, from (\ref{xab2}) of the main text, (\ref{a14}), (\ref{a15}) and (\ref{a13}), we obtain
\begin{eqnarray}
\label{a22}
&&(1\!-\!\eta_0)^2S_{001}+\bigl(1\!-\!(1\!-\!\eta_0)^2\bigr)S_{101}-\biggl(1\!-\!\frac{\eta_0\!+\!\eta_1}{2}\biggr)^2S_{000}
\nonumber\\
&&\qquad-\biggl(\Bigl(1\!-\!\frac{\eta_0}{2}\Bigr)^2\!-\!\Bigl(1\!-\!\frac{\eta_0\!+\!\eta_1}{2}\Bigr)^2\biggr)S_{010}-\frac{\eta_0\eta_1}{2}S_{110}\nonumber\\
&&\qquad-\biggl(\Bigl(1\!-\!\frac{\eta_1}{2}\Bigr)^2\!-\!\Bigl(1\!-\!\frac{\eta_0\!+\!\eta_1}{2}\Bigr)^2\biggr)S_{100}=0.
\end{eqnarray}
In the case $k=2$ with $\rho=\lvert 11\rangle\langle 11\rvert$, we obtain from (\ref{xab1main}) of the main text that $q_0=\lvert \langle0\vert1\rangle\rvert^2=0$ and $q_1=\lvert \langle+\vert1\rangle\rvert^2=\frac{1}{2}$. Thus, from (\ref{xab2}) of the main text, (\ref{a14}), (\ref{a15}) and (\ref{a13}), we obtain
\begin{eqnarray}
\label{a23}
&&\biggl(1\!-\!\frac{\eta_0\!+\!\eta_1}{2}\biggr)^2S_{001}
+\biggl(\Bigl(1\!-\!\frac{\eta_0}{2}\Bigr)^2\!-\!\Bigl(1\!-\!\frac{\eta_0\!+\!\eta_1}{2}\Bigr)^2\biggr)S_{011}\nonumber\\
&&\qquad+\biggl(\Bigl(1\!-\!\frac{\eta_1}{2}\Bigr)^2\!-\!\Bigl(1\!-\!\frac{\eta_0 \!+\!\eta_1}{2}\Bigr)^2\biggr)S_{101}+\Bigl(\frac{\eta_0\eta_1}{2}\Bigr)S_{111}\nonumber\\
&&\qquad-(1\!-\!\eta_1)^2S_{000}-\bigl(1\!-\!(1\!-\!\eta_1)^2\bigr)S_{010}=0.
\end{eqnarray}

We solve the system of equations (\ref{a16}) -- (\ref{a23}) in two parts. First, we solve the system of equations (\ref{a16}) -- (\ref{a20}). Then, using the obtained solutions, we solve the remaining system of equations (\ref{a21}) -- (\ref{a23}).

We solve the system of equations (\ref{a16}) -- (\ref{a20}). We first eliminate the variable $S_{001}$ by substituting (\ref{a16}) in (\ref{a17}) -- (\ref{a20}). We obtain
\begin{eqnarray}
\label{a24}
\Bigl(\!\frac{\eta_0-\eta_1}{2}\!\Bigr)S_{000}\!+\!\frac{\eta_0}{2}S_{101}\!+\!\frac{\eta_1}{2}S_{011}-\eta_0S_{100}&\!=\!&0,\\
\label{a25}
\Bigl(\!\frac{\eta_1-\eta_0}{2}\!\Bigr)S_{000}\!+\!\frac{\eta_0}{2}S_{101}\!+\!\frac{\eta_1}{2}S_{011}\!-\!\eta_1S_{010}&\!=\!&0,\\
\label{a26}
\Bigl(\!\frac{\eta_1-\eta_0}{2}\!\Bigr)S_{000}\!+\!\eta_0S_{101}\!-\!\frac{\eta_1}{2}S_{010}\!-\!\frac{\eta_0}{2}S_{100}&\!=\!&0,\\
\label{a27}
\Bigl(\!\frac{\eta_0-\eta_1}{2}\!\Bigr)S_{000}\!+\!\eta_1S_{011}\!-\!\frac{\eta_1}{2}S_{010}\!-\!\frac{\eta_0}{2}S_{100}&\!=\!&0.
\end{eqnarray}
We use (\ref{a24}) to eliminate $S_{000}$ in (\ref{a25}) -- (\ref{a27}). Adding (\ref{a24}) to (\ref{a25}) and (\ref{a26}), and subtracting (\ref{a24}) from (\ref{a27}), and rearranging terms, we obtain respectively
\begin{eqnarray}
\label{a28}
\eta_1S_{011}+\eta_0S_{101}-\eta_1S_{010}-\eta_0S_{100}&=&0,\\
\label{a29}
\eta_1S_{011}+3\eta_0S_{101}-\eta_1S_{010}-3\eta_0S_{100}&=&0,\\
\label{a30}
\eta_1S_{011}-\eta_0S_{101}-\eta_1S_{010}+\eta_0S_{100}&=&0.
\end{eqnarray}
Subtracting (\ref{a28}) from (\ref{a29}), or from (\ref{a30}), and rearranging terms, we obtain 
\begin{equation}
\label{a31}
S_{101}=S_{100}.
\end{equation}
From (\ref{a28}) and (\ref{a31}) we obtain
\begin{equation}
\label{a32}
S_{011}=S_{010}.
\end{equation}
From (\ref{a24}), (\ref{a31}) and (\ref{a32}), we obtain
\begin{equation}
\label{a33}
(\eta_0-\eta_1)S_{000}+\eta_1S_{010}-\eta_0S_{100}=0.
\end{equation}
Thus, the solutions to the system of equations (\ref{a16}) -- (\ref{a20}) consist in (\ref{a16}), and (\ref{a31}) -- (\ref{a33}).

Now we use the obtained solutions, (\ref{a16}), and (\ref{a31}) -- (\ref{a33}), to solve the remaining system of equations (\ref{a21}) -- (\ref{a23}). Substituting (\ref{a16}), (\ref{a31}) and (\ref{a32}) in (\ref{a21}), and multiplying by four, we obtain
\begin{eqnarray}
\label{a34}
&&\bigl((2\!-\!(\eta_0\!+\!\eta_1))^2 - 4(1\!-\!\eta_0)^2\bigr)S_{000}\nonumber\\
&&\!\!\quad +\bigl((2\!-\!\eta_0)^2\!-\!(2\!-\!(\eta_0\!+\!\eta_1))^2\bigr)S_{010}+2\eta_0\eta_1S_{111}\nonumber\\
&&\!\!\quad+\bigl((2\!-\!\eta_1)^2\!-\!(2\!-\!(\eta_0 \!+\!\eta_1))^2 -4(1\!-\!(1\!-\!\eta_0)^2)\bigr)S_{100}=0.\nonumber\\
\end{eqnarray}
Substituting (\ref{a16}) and (\ref{a31}) in (\ref{a22}), and multiplying by four, we obtain
\begin{eqnarray}
\label{a35}
&&\bigl(4(1\!-\!\eta_0)^2-(2\!-\!(\eta_0\!+\!\eta_1))^2\bigr)S_{000}\nonumber\\
&&\quad+\bigl(4(1\!-\!(1\!-\!\eta_0)^2)-(2\!-\!\eta_1)^2\!+\!(2\!-\!(\eta_0\!+\!\eta_1))^2\bigr)S_{100}\nonumber\\
&&\quad-\bigl((2\!-\!\eta_0)^2\!-\!(2\!-\!(\eta_0\!+\!\eta_1))^2\bigr)S_{010}-2\eta_0\eta_1S_{110}=0.\nonumber\\
\end{eqnarray}
Substituting (\ref{a16}), (\ref{a31}) and (\ref{a32}) in (\ref{a23}), and multiplying by four, we obtain
\begin{eqnarray}
\label{a36}
&&\bigl((2\!-\!(\eta_0\!+\!\eta_1))^2 -4(1\!-\!\eta_1)^2\bigr)S_{000}\nonumber\\
&&\quad+\bigl((2\!-\!\eta_0)^2\!-\!(2\!-\!(\eta_0\!+\!\eta_1))^2-4(1\!-\!(1\!-\!\eta_1)^2)\bigr)S_{010}\nonumber\\
&&\quad+\bigl((2\!-\!\eta_1)^2\!-\!(2\!-\!(\eta_0 \!+\!\eta_1))^2\bigr)S_{100}+2\eta_0\eta_1S_{111}=0.\nonumber\\
\end{eqnarray}
Thus, now we have a system of four equations, (\ref{a33}) -- (\ref{a36}), with five variables, $S_{000}$, $S_{100}$, $S_{010}$, $S_{110}$ and $S_{111}$. By subtracting (\ref{a34}) from (\ref{a36}), we eliminate $S_{111}$, and we obtain
\begin{eqnarray}
\label{a37}
&&4\bigl((1\!-\!\eta_0)^2-(1\!-\!\eta_1)^2\bigr)S_{000}-4(1\!-\!(1\!-\!\eta_1)^2)S_{010}\nonumber\\
&&\qquad\qquad+4(1\!-\!(1\!-\!\eta_0)^2)S_{100}=0.
\end{eqnarray}
Thus, now we have a system of three equations, (\ref{a33}), (\ref{a35})  and (\ref{a37}), with four variables, $S_{000}$, $S_{100}$, $S_{010}$ and $S_{110}$. We use (\ref{a33}) to eliminate $S_{010}$ from (\ref{a35})  and (\ref{a37}). We multiply (\ref{a33}) by $-\bigl((2-\eta_0)^2-(2-(\eta_0+\eta_1))^2\bigr)/\eta_1$ and subtract from (\ref{a35}), and we arrange terms to obtain
\begin{equation}
\label{a38}
(\eta_1-\eta_0)S_{000}+(\eta_0-3\eta_1)S_{100}+2\eta_1S_{110}=0.
\end{equation}
We multiply (\ref{a33}) by $-4(1\!-\!(1\!-\!\eta_1)^2)/\eta_1$ and subtract from (\ref{a37}), and we arrange terms to obtain
\begin{equation}
\label{new3}
(\eta_0-\eta_1)(S_{000}-S_{100}) = 0.
\end{equation}
Thus, from (\ref{new2}) and (\ref{new3}), we get
\begin{equation}
\label{a39}
S_{000}=S_{100}.
\end{equation}
We substitute (\ref{a39}) in (\ref{a38}) and obtain
\begin{equation}
\label{a40}
S_{100}=S_{110}.
\end{equation}
Since $\eta_1>0$, from (\ref{a33}) and (\ref{a39}), we obtain
\begin{equation}
\label{a41}
S_{100}=S_{010}.
\end{equation}
Since $\eta_0>0$ and $\eta_1>0$, it is straightforward to obtain from (\ref{a34}), (\ref{a39}) and (\ref{a41}) that
\begin{equation}
\label{a42}
S_{111}=S_{100}.
\end{equation}
Thus, from (\ref{a16}), (\ref{a31}), (\ref{a32}) and (\ref{a39}) -- (\ref{a42}), we obtain the claimed result (\ref{new6}) of the main text.

We note from (\ref{new3}) that if $\eta_0=\eta_1$ then (\ref{a39}) does not follow. In this case, from (\ref{a15}), (\ref{a16}) and (\ref{a31}) -- (\ref{a35}), it follows straightforwardly that $S_{001}=S_{000}$ and $S_{c_0c_1\beta}=S$ for $(c_0,c_1)\in\{(0,1),(1,0),(1,1)\}$, $\beta\in\{0,1\}$ and any $S_{000},S\in[0,1]$.

\subsection{Computational proof for the case that $\mathcal{B}_0$ and $\mathcal{B}_1$ are arbitrary distinct qubit orthogonal bases}

\subsubsection{Obtaining a system of linear equations}

We suppose that 
\begin{eqnarray}
\label{newa14}
d_{i\beta}&=&0,\\
\label{newa15}
\eta_{i\beta}&=&\eta_i\in(0,1),\\
\label{newnew2}
\eta_0&\neq& \eta_1,
\end{eqnarray}
for $i,\beta\in\{0,1\}$, and $\mathcal{B}_0$ and $\mathcal{B}_1$ are arbitrary qubit orthogonal bases. Below we show numerically that if Alice sends Bob a pulse of $k$ photons encoding a state $\rho$, with $k\in\{0,1,2\}$ chosen by Alice and unknown to Bob, then the only probabilistic reporting strategy that guarantees to Bob that Alice cannot obtain any information about $\beta$ from his message $m$ is the trivial strategy (\ref{new6}) of the main text.

The bases $\mathcal{B}_0$ and $\mathcal{B}_1$ define a plane in the Bloch sphere. Without loss of generality, this plane can be taken as the $x-z$ plane, and $\mathcal{B}_0$ can be taken as the computational basis, with $\lvert\psi_{00}\rangle=\lvert 0\rangle$ and  $\lvert\psi_{10}\rangle=\lvert 1\rangle$. Thus, in general, the states of the basis $\mathcal{B}_1$ are given by
\begin{eqnarray}
\label{newx1}
\lvert\psi_{01}\rangle&=&\cos(a)\lvert 0\rangle+\sin(a)\lvert 1\rangle,\nonumber\\
\lvert\psi_{11}\rangle&=&\sin(a)\lvert 0\rangle-\cos(a)\lvert 1\rangle,
\end{eqnarray}
where $a=\frac{\pi}{4}-\frac{\theta}{2}$ is half the angle in the Bloch sphere between the states $\lvert\psi_{00}\rangle$ and $\lvert\psi_{01}\rangle$, and $\theta$ is the angle in the Bloch sphere between the states $\lvert\psi_{01}\rangle$ and $\lvert +\rangle=\frac{1}{\sqrt{2}}\bigl(\lvert 0\rangle+\lvert 1\rangle\bigr)$, for $\theta\in[0,\frac{\pi}{2})$ and $a\in(0,\frac{\pi}{4}]$.

We proceed as in the analytical proof of Theorem 1. From the condition that Alice cannot obtain any information about Bob's message $m$, we obtain a set of eight linear equations. We solve these equations numerically with a Mathematica program, provided as supplementary material, and we obtain that the only solution corresponds to the trivial strategy (\ref{new6}) of the main text.

Alice does not obtain any information about $\beta$ from Bob's message $m$ if and only if
\begin{equation}
\label{newa12}
P_{\text{report}}(1\lvert 1, \rho, k)=P_{\text{report}}(1\lvert 0, \rho, k),
\end{equation}
for any $k-$qubit state $\rho$ and for $k\in\{0,1,2\}$.

We note that if (\ref{new6}) of the main text holds then, from (\ref{new7}) of the main text, (\ref{newa12}) holds too, for any $k-$qubit state $\rho$ and for any $k\in\{0,1,2,\ldots\}$. Thus, Alice does not obtain any information about $\beta$ from Bob's message $m$ in this case.

Now we show, numerically, that Bob is guaranteed that Alice cannot obtain any information about $\beta$ from his message $m$ only if (\ref{new6}) of the main text holds. From (\ref{a4}) of the main text and (\ref{newa12}), it follows that
\begin{eqnarray}
\label{newa13}
&&\sum_{c_0=0}^1\sum_{c_1=0}^1\bigl[S_{c_0c_11}P_{\text{det}}(c_0,c_1\lvert 1, \rho, k)\nonumber\\
&&\quad\qquad\qquad- S_{c_0c_10}P_{\text{det}}(c_0,c_1\lvert 0, \rho, k) \bigr]=0,
\end{eqnarray}
for any $k-$qubit state $\rho$ encoded in a pulse of $k$ photons, and for $k\in\{0,1,2\}$.

From (\ref{newa13}), we establish a set of eight linear equations, by considering the case $k=0$, the case $k=1$ with $\rho\in\bigl\{\lvert 0\rangle\langle 0\rvert,\lvert 1\rangle\langle 1\rvert,\lvert\psi_{01}\rangle\langle\psi_{01}\rvert,\lvert \psi_{11}\rangle\langle \psi_{11}\rvert\bigr\}$, and the case $k=2$ with $\rho\in\bigl\{\lvert 00\rangle\langle 00\rvert,\lvert 11\rangle\langle 11\rvert,(\lvert \psi_{01}\rangle\langle \psi_{01}\rvert)\otimes(\lvert \psi_{01}\rangle\langle \psi_{01}\rvert)\bigr\}$. As we show below, the only solution for this set of equations is given by (\ref{new6}) of the main text.

From (\ref{xab2}) of the main text, (\ref{newa14}), (\ref{newa15}) and (\ref{newa13}), we obtain in the case $k=0$ that
\begin{equation}
\label{newa16}
S_{001}=S_{000}.
\end{equation}
In the case $k=1$ with $\rho=\lvert 0\rangle\langle 0\rvert$, we obtain from (\ref{xab1main}) of the main text that $q_0=\lvert \langle0\vert0\rangle\rvert^2=1$ and $q_1=\lvert \langle\psi_{01}\vert0\rangle\rvert^2=\cos^2(a)$. Thus, from (\ref{xab2}) of the main text, (\ref{newa14}), (\ref{newa15}) and (\ref{newa13}), we obtain 
\begin{eqnarray}
\label{newa17}
&&\Bigl[1-\eta_0\cos^2(a)-\eta_1\bigl(1-\cos^2(a)\bigr)\Bigr]S_{001}+\eta_0\cos^2(a)S_{101}\nonumber\\
&&\qquad+\eta_1\bigl(1-\cos^2(a)\bigr)S_{011}-(1-\eta_0)S_{000}-\eta_0S_{100}=0.\nonumber\\
\end{eqnarray}
In the case $k=1$ with $\rho=\lvert 1\rangle\langle 1\rvert$, we obtain from (\ref{xab1main}) of the main text that $q_0=\lvert \langle0\vert1\rangle\rvert^2=0$ and $q_1=\lvert \langle\psi_{01}\vert1\rangle\rvert^2=\sin^2(a)=1-\cos^2(a)$. We express $\sin^2(a)$ in terms of $\cos^2(a)$, as this makes the numerical calculation by the Mathematica program easier. Thus, from (\ref{xab2}) of the main text, (\ref{newa14}), (\ref{newa15}) and (\ref{newa13}), we obtain 
\begin{eqnarray}
\label{newa18}
&&\Bigl[1-\eta_0\bigl(1-\cos^2(a)\bigr)-\eta_1\cos^2(a)\Bigr]S_{001}\nonumber\\
&&\quad\qquad+\eta_0\bigl(1-\cos^2(a)\bigr)S_{101}+\eta_1\cos^2(a)S_{011}\nonumber\\
&&\quad\quad\qquad\qquad-(1-\eta_1)S_{000}-\eta_1S_{010}=0.
\end{eqnarray}
In the case $k=1$ with $\rho=\lvert \psi_{01}\rangle\langle \psi_{01}\rvert$, we obtain from (\ref{xab1main}) of the main text that $q_0=\lvert \langle0\vert\psi_{01}\rangle\rvert^2=\cos^2(a)$ and $q_1=\lvert \langle\psi_{01}\vert\psi_{01}\rangle\rvert^2=1$. Thus, from (\ref{xab2}) of the main text, (\ref{newa14}), (\ref{newa15}) and (\ref{newa13}), we obtain 
\begin{eqnarray}
\label{newa19}
&&(1-\eta_0)S_{001}+\eta_0S_{101}-\eta_1\bigl(1-\cos^2(a)\bigr)S_{010}\nonumber\\
&&\quad-\Bigl[1-\eta_0\cos^2(a)-\eta_1\bigl(1-\cos^2(a)\bigr)\Bigr]S_{000}\nonumber\\&&\qquad\qquad-\eta_0\cos^2(a)S_{100}=0.
\end{eqnarray}
In the case $k=1$ with $\rho=\lvert \psi_{11}\rangle\langle \psi_{11}\rvert$, we obtain from (\ref{xab1main}) of the main text that $q_0=\lvert \langle0\vert\psi_{11}\rangle\rvert^2=1-\cos^2(a)$ and $q_1=\lvert \langle\psi_{01}\vert\psi_{11}\rangle\rvert^2=0$. Thus, from (\ref{xab2}) of the main text, (\ref{newa14}), (\ref{newa15}) and (\ref{newa13}), we obtain 
\begin{eqnarray}
\label{newa20}
&&(1-\eta_1)S_{001}+\eta_1S_{011}-\eta_1\cos^2(a)S_{010}\nonumber\\
&&\qquad-\Bigl[1-\eta_0\bigl(1-\cos^2(a)\bigr)-\eta_1\cos^2(a)\Bigr]S_{000}\nonumber\\
&&\qquad\qquad-\eta_0\bigl(1-\cos^2(a)\bigr)S_{100}=0.
\end{eqnarray}
In the case $k=2$ with $\rho=\lvert 00\rangle\langle 00\rvert$, we obtain from (\ref{xab1main}) of the main text that $q_0=\lvert \langle0\vert0\rangle\rvert^2=1$ and $q_1=\lvert \langle\psi_{01}\vert0\rangle\rvert^2=\cos^2(a)$. Thus, from (\ref{xab2}) of the main text, (\ref{newa14}), (\ref{newa15}) and (\ref{newa13}), we obtain
\begin{eqnarray}
\label{newa21}
&&\biggl[1\!-\!\eta_0\cos^2(a)\!-\!\eta_1\bigl(1-\cos^2(a)\bigr)\biggr]^2S_{001}\nonumber\\
&&\qquad+\biggl[\Bigl(1\!-\!\eta_0\cos^2(a)\Bigr)^2\nonumber\\
&&\qquad\qquad\!-\!\Bigl(1\!-\!\eta_0\cos^2(a)\!-\!\eta_1\bigl(1-\cos^2(a)\bigr)\Bigr)^2\biggr]S_{011}\nonumber\\
&&\qquad+\biggl[\Bigl(1\!-\!\eta_1\bigl(1-\cos^2(a)\bigr)\Bigr)^2\nonumber\\
&&\qquad\qquad\!-\!\Bigl(1\!-\!\eta_0\cos^2(a) \!-\!\eta_1\bigl(1-\cos^2(a)\bigr)\Bigr)^2\biggr]S_{101}\nonumber\\
&&\qquad+2\eta_0\eta_1\cos^2(a)\bigl(1-\cos^2(a)\bigr)S_{111}\nonumber\\
&&\qquad-(1\!-\!\eta_0)^2S_{000}-\bigl(1\!-\!(1\!-\!\eta_0)^2\bigr)S_{100}=0.
\end{eqnarray}
In the case $k=2$ with $\rho=(\lvert \psi_{01}\rangle\langle \psi_{01}\rvert)\otimes (\lvert \psi_{01}\rangle\langle \psi_{01}\rvert)$, we obtain from (\ref{xab1main}) of the main text that $q_0=\lvert \langle0\vert\psi_{01}\rangle\rvert^2=\cos^2(a)$ and $q_1=\lvert \langle\psi_{01}\vert\psi_{01}\rangle\rvert^2=1$. Thus, from (\ref{xab2}) of the main text, (\ref{newa14}), (\ref{newa15}) and (\ref{newa13}), we obtain
\begin{eqnarray}
\label{newa22}
&&(1\!-\!\eta_0)^2S_{001}+\bigl(1\!-\!(1\!-\!\eta_0)^2\bigr)S_{101}\nonumber\\
&&\qquad-\Bigl[1\!-\!\eta_0\cos^2(a)\!-\!\eta_1\bigl(1-\cos^2(a)\bigr)\Bigr]^2S_{000}
\nonumber\\
&&\qquad-\Bigl[\Bigl(1\!-\!\eta_0\cos^2(a)\Bigr)^2\nonumber\\
&&\qquad\qquad\!-\!\Bigl(1\!-\!\eta_0\cos^2(a)\!-\!\eta_1\bigl(1-\cos^2(a)\bigr)\Bigr)^2\Bigr]S_{010}\nonumber\\
&&\qquad-2\eta_0\eta_1\cos^2(a)\bigl(1-\cos^2(a)\bigr)S_{110}\nonumber\\
&&\qquad-\biggl[\Bigl(1\!-\!\eta_1\bigl(1-\cos^2(a)\bigr)\Bigr)^2\nonumber\\
&&\qquad\qquad\!-\!\Bigl(1\!-\!\eta_0\cos^2(a)\!-\!\eta_1\bigl(1-\cos^2(a)\bigr)\Bigr)^2\biggr]S_{100}=0.\nonumber\\
\end{eqnarray}
In the case $k=2$ with $\rho=\lvert 11\rangle\langle 11\rvert$, we obtain from (\ref{xab1main}) of the main text that $q_0=\lvert \langle0\vert1\rangle\rvert^2=0$ and $q_1=\lvert \langle\psi_{01}\vert1\rangle\rvert^2=1-\cos^2(a)$. Thus, from (\ref{xab2}) of the main text, (\ref{newa14}), (\ref{newa15}) and (\ref{newa13}), we obtain
\begin{eqnarray}
\label{newa23}
&&\Bigl[1\!-\!\eta_0\bigl(1-\cos^2(a)\bigr)\!-\!\eta_1\cos^2(a)\Bigr]^2S_{001}\nonumber\\
&&\qquad+\Bigl[\Bigl(1\!-\!\eta_0\bigl(1-\cos^2(a)\bigr)\Bigr)^2\nonumber\\
&&\qquad\qquad\!-\!\Bigl(1\!-\!\eta_0\bigl(1-\cos^2(a)\bigr)\!-\!\eta_1\cos^2(a)\Bigr)^2\Bigr]S_{011}\nonumber\\
&&\qquad+\Bigl[\Bigl(1\!-\!\eta_1\cos^2(a)\Bigr)^2\nonumber\\
&&\qquad\qquad\!-\!\Bigl(1\!-\!\eta_0\bigl(1-\cos^2(a)\bigr)^2 \!-\!\eta_1\cos^2(a)\Bigr)^2\Bigr]S_{101}\nonumber\\
&&\qquad+2\eta_0\eta_1\cos^2(a)\bigl(1-\cos^2(a)\bigr)S_{111}\nonumber\\
&&\qquad-(1\!-\!\eta_1)^2S_{000}-\bigl(1\!-\!(1\!-\!\eta_1)^2\bigr)S_{010}=0.
\end{eqnarray}

\subsubsection{Solving numerically the system of linear equations}

We solve numerically the system of linear equations given by (\ref{newa16}) -- (\ref{newa23}), using a Mathematica program, which we attach as supplementary material. We use the code `Reduce', with the system of equations (\ref{newa16}) -- (\ref{newa23}). The program gives as output various solutions to the system of equations, according to different values of $\eta_0$, $\eta_1$ and $\cos^2(a)$, also output by the program. In particular, in the last line, when simplified, the program outputs the conditions $\eta_0\neq 0$, $\eta_1\neq 0$, $\eta_0\neq \eta_1$ and $\cos^2(a)\neq 1$, and the solution corresponding to the trivial probabilistic reporting strategy (\ref{new6}) of the main text, as claimed.

Furthermore, in the penultimate line, when simplified, the program outputs the conditions $\eta_0=\eta_1\neq 0$ and $\cos^2(a)\neq 1$, and the solution $S_{001}=S_{000}$ and $S_{c_0c_1\beta}=S$, for $(c_0,c_1)\in\{(0,1),(1,0),(1,1)\}$ and for any $S_{000},S$; which is the solution that we got in the proof of Theorem 1 under the conditions $\eta_0=\eta_1\neq 0$ for the particular case in which $\mathcal{B}_0$ and $\mathcal{B}_1$ are the computational and Hadamard bases. The program also outputs, in the eigth line, when simplified, the conditions $\cos^2(a)=1$, $\eta_0\neq 0$ and $\eta_1\neq 0$, and the solution $S_{c_0c_10}=S_{c_0c_11}$ for $(c_0,c_1)\neq (1,1)$; which is straightforward to deduce analytically, and which corresponds to the case $\mathcal{B}_0=\mathcal{B}_1$.

The output lines can be seen in the Mathematica program, which we attach as supplementary material. In the program, we use the variables `$N0$' and `$N1$' instead of `$\eta_0$' and `$\eta_1$', and `$Sc_0c_1\beta$' instead of `$S_{c_0c_1\beta}$', respectively, for $c_0,c_1,\beta\in\{0,1\}$.
\end{proof}

%


\begin{thebibliography}{106}%
\makeatletter
\providecommand \@ifxundefined [1]{%
 \@ifx{#1\undefined}
}%
\providecommand \@ifnum [1]{%
 \ifnum #1\expandafter \@firstoftwo
 \else \expandafter \@secondoftwo
 \fi
}%
\providecommand \@ifx [1]{%
 \ifx #1\expandafter \@firstoftwo
 \else \expandafter \@secondoftwo
 \fi
}%
\providecommand \natexlab [1]{#1}%
\providecommand \enquote  [1]{``#1''}%
\providecommand \bibnamefont  [1]{#1}%
\providecommand \bibfnamefont [1]{#1}%
\providecommand \citenamefont [1]{#1}%
\providecommand \href@noop [0]{\@secondoftwo}%
\providecommand \href [0]{\begingroup \@sanitize@url \@href}%
\providecommand \@href[1]{\@@startlink{#1}\@@href}%
\providecommand \@@href[1]{\endgroup#1\@@endlink}%
\providecommand \@sanitize@url [0]{\catcode `\\12\catcode `\$12\catcode
  `\&12\catcode `\#12\catcode `\^12\catcode `\_12\catcode `\%12\relax}%
\providecommand \@@startlink[1]{}%
\providecommand \@@endlink[0]{}%
\providecommand \url  [0]{\begingroup\@sanitize@url \@url }%
\providecommand \@url [1]{\endgroup\@href {#1}{\urlprefix }}%
\providecommand \urlprefix  [0]{URL }%
\providecommand \Eprint [0]{\href }%
\providecommand \doibase [0]{https://doi.org/}%
\providecommand \selectlanguage [0]{\@gobble}%
\providecommand \bibinfo  [0]{\@secondoftwo}%
\providecommand \bibfield  [0]{\@secondoftwo}%
\providecommand \translation [1]{[#1]}%
\providecommand \BibitemOpen [0]{}%
\providecommand \bibitemStop [0]{}%
\providecommand \bibitemNoStop [0]{.\EOS\space}%
\providecommand \EOS [0]{\spacefactor3000\relax}%
\providecommand \BibitemShut  [1]{\csname bibitem#1\endcsname}%
\let\auto@bib@innerbib\@empty
\bibitem [{\citenamefont {Gisin}\ \emph {et~al.}(2006)\citenamefont {Gisin},
  \citenamefont {Fasel}, \citenamefont {Kraus}, \citenamefont {Zbinden},\ and\
  \citenamefont {Ribordy}}]{GFKZR06}%
  \BibitemOpen
  \bibfield  {author} {\bibinfo {author} {\bibfnamefont {N.}~\bibnamefont
  {Gisin}}, \bibinfo {author} {\bibfnamefont {S.}~\bibnamefont {Fasel}},
  \bibinfo {author} {\bibfnamefont {B.}~\bibnamefont {Kraus}}, \bibinfo
  {author} {\bibfnamefont {H.}~\bibnamefont {Zbinden}},\ and\ \bibinfo {author}
  {\bibfnamefont {G.}~\bibnamefont {Ribordy}},\ }\bibfield  {title} {\bibinfo
  {title} {Trojan-horse attacks on quantum-key-distribution systems},\ }\href
  {https://doi.org/10.1103/PhysRevA.73.022320} {\bibfield  {journal} {\bibinfo
  {journal} {Phys. Rev. A}\ }\textbf {\bibinfo {volume} {73}},\ \bibinfo
  {pages} {022320} (\bibinfo {year} {2006})}\BibitemShut {NoStop}%
\bibitem [{\citenamefont {Makarov}\ \emph {et~al.}(2006)\citenamefont
  {Makarov}, \citenamefont {Anisimov},\ and\ \citenamefont {Skaar}}]{MAS06}%
  \BibitemOpen
  \bibfield  {author} {\bibinfo {author} {\bibfnamefont {V.}~\bibnamefont
  {Makarov}}, \bibinfo {author} {\bibfnamefont {A.}~\bibnamefont {Anisimov}},\
  and\ \bibinfo {author} {\bibfnamefont {J.}~\bibnamefont {Skaar}},\ }\bibfield
   {title} {\bibinfo {title} {Effects of detector efficiency mismatch on
  security of quantum cryptosystems},\ }\href
  {https://doi.org/10.1103/PhysRevA.74.022313} {\bibfield  {journal} {\bibinfo
  {journal} {Phys. Rev. A}\ }\textbf {\bibinfo {volume} {74}},\ \bibinfo
  {pages} {022313} (\bibinfo {year} {2006})}\BibitemShut {NoStop}%
\bibitem [{\citenamefont {Zhao}\ \emph {et~al.}(2008)\citenamefont {Zhao},
  \citenamefont {Fung}, \citenamefont {Qi}, \citenamefont {Chen},\ and\
  \citenamefont {Lo}}]{ZFQCL08}%
  \BibitemOpen
  \bibfield  {author} {\bibinfo {author} {\bibfnamefont {Y.}~\bibnamefont
  {Zhao}}, \bibinfo {author} {\bibfnamefont {C.-H.~F.}\ \bibnamefont {Fung}},
  \bibinfo {author} {\bibfnamefont {B.}~\bibnamefont {Qi}}, \bibinfo {author}
  {\bibfnamefont {C.}~\bibnamefont {Chen}},\ and\ \bibinfo {author}
  {\bibfnamefont {H.-K.}\ \bibnamefont {Lo}},\ }\bibfield  {title} {\bibinfo
  {title} {Quantum hacking: Experimental demonstration of time-shift attack
  against practical quantum-key-distribution systems},\ }\href
  {https://doi.org/10.1103/PhysRevA.78.042333} {\bibfield  {journal} {\bibinfo
  {journal} {Phys. Rev. A}\ }\textbf {\bibinfo {volume} {78}},\ \bibinfo
  {pages} {042333} (\bibinfo {year} {2008})}\BibitemShut {NoStop}%
\bibitem [{\citenamefont {Fung}\ \emph {et~al.}(2009)\citenamefont {Fung},
  \citenamefont {Tamaki}, \citenamefont {Qi}, \citenamefont {Lo},\ and\
  \citenamefont {Ma}}]{FTLM09}%
  \BibitemOpen
  \bibfield  {author} {\bibinfo {author} {\bibfnamefont {C.~F.}\ \bibnamefont
  {Fung}}, \bibinfo {author} {\bibfnamefont {K.}~\bibnamefont {Tamaki}},
  \bibinfo {author} {\bibfnamefont {B.}~\bibnamefont {Qi}}, \bibinfo {author}
  {\bibfnamefont {H.}~\bibnamefont {Lo}},\ and\ \bibinfo {author}
  {\bibfnamefont {X.}~\bibnamefont {Ma}},\ }\bibfield  {title} {\bibinfo
  {title} {Security proof of quantum key distribution with detection efficiency
  mismatch},\ }\href {http://www.rintonpress.com/xxqic9/qic-9-12/0131-0165.pdf}
  {\bibfield  {journal} {\bibinfo  {journal} {Quantum Inf. Comput.}\ }\textbf
  {\bibinfo {volume} {9}},\ \bibinfo {pages} {131} (\bibinfo {year}
  {2009})}\BibitemShut {NoStop}%
\bibitem [{\citenamefont {Makarov}(2009)}]{M09}%
  \BibitemOpen
  \bibfield  {author} {\bibinfo {author} {\bibfnamefont {V.}~\bibnamefont
  {Makarov}},\ }\bibfield  {title} {\bibinfo {title} {Controlling passively
  quenched single photon detectors by bright light},\ }\href
  {https://doi.org/10.1088/1367-2630/11/6/065003} {\bibfield  {journal}
  {\bibinfo  {journal} {New J. Phys.}\ }\textbf {\bibinfo {volume} {11}},\
  \bibinfo {pages} {065003} (\bibinfo {year} {2009})}\BibitemShut {NoStop}%
\bibitem [{\citenamefont {Lydersen}\ \emph {et~al.}(2010)\citenamefont
  {Lydersen}, \citenamefont {Wiechers}, \citenamefont {Wittmann}, \citenamefont
  {Elser}, \citenamefont {Skaar},\ and\ \citenamefont {Makarov}}]{LWWESM10}%
  \BibitemOpen
  \bibfield  {author} {\bibinfo {author} {\bibfnamefont {L.}~\bibnamefont
  {Lydersen}}, \bibinfo {author} {\bibfnamefont {C.}~\bibnamefont {Wiechers}},
  \bibinfo {author} {\bibfnamefont {C.}~\bibnamefont {Wittmann}}, \bibinfo
  {author} {\bibfnamefont {D.}~\bibnamefont {Elser}}, \bibinfo {author}
  {\bibfnamefont {J.}~\bibnamefont {Skaar}},\ and\ \bibinfo {author}
  {\bibfnamefont {V.}~\bibnamefont {Makarov}},\ }\bibfield  {title} {\bibinfo
  {title} {Hacking commercial quantum cryptography systems by tailored bright
  illumination},\ }\href {https://doi.org/10.1038/nphoton.2010.214} {\bibfield
  {journal} {\bibinfo  {journal} {Nature Photon.}\ }\textbf {\bibinfo {volume}
  {4}},\ \bibinfo {pages} {686–689} (\bibinfo {year} {2010})}\BibitemShut
  {NoStop}%
\bibitem [{\citenamefont {Li}\ \emph {et~al.}(2011)\citenamefont {Li},
  \citenamefont {Wang}, \citenamefont {Huang}, \citenamefont {Chen},
  \citenamefont {Yin}, \citenamefont {Li}, \citenamefont {Zhou}, \citenamefont
  {Liu}, \citenamefont {Zhang}, \citenamefont {Guo}, \citenamefont {Bao},\ and\
  \citenamefont {Han}}]{LWHJCYLZLZGBH11}%
  \BibitemOpen
  \bibfield  {author} {\bibinfo {author} {\bibfnamefont {H.-W.}\ \bibnamefont
  {Li}}, \bibinfo {author} {\bibfnamefont {S.}~\bibnamefont {Wang}}, \bibinfo
  {author} {\bibfnamefont {J.-Z.}\ \bibnamefont {Huang}}, \bibinfo {author}
  {\bibfnamefont {W.}~\bibnamefont {Chen}}, \bibinfo {author} {\bibfnamefont
  {Z.-Q.}\ \bibnamefont {Yin}}, \bibinfo {author} {\bibfnamefont {F.-Y.}\
  \bibnamefont {Li}}, \bibinfo {author} {\bibfnamefont {Z.}~\bibnamefont
  {Zhou}}, \bibinfo {author} {\bibfnamefont {D.}~\bibnamefont {Liu}}, \bibinfo
  {author} {\bibfnamefont {Y.}~\bibnamefont {Zhang}}, \bibinfo {author}
  {\bibfnamefont {G.-C.}\ \bibnamefont {Guo}}, \bibinfo {author} {\bibfnamefont
  {W.-S.}\ \bibnamefont {Bao}},\ and\ \bibinfo {author} {\bibfnamefont {Z.-F.}\
  \bibnamefont {Han}},\ }\bibfield  {title} {\bibinfo {title} {Attacking a
  practical quantum-key-distribution system with wavelength-dependent
  beam-splitter and multiwavelength sources},\ }\href
  {https://doi.org/10.1103/PhysRevA.84.062308} {\bibfield  {journal} {\bibinfo
  {journal} {Phys. Rev. A}\ }\textbf {\bibinfo {volume} {84}},\ \bibinfo
  {pages} {062308} (\bibinfo {year} {2011})}\BibitemShut {NoStop}%
\bibitem [{\citenamefont {Sajeed}\ \emph {et~al.}(2015)\citenamefont {Sajeed},
  \citenamefont {Radchenko}, \citenamefont {Kaiser}, \citenamefont {Bourgoin},
  \citenamefont {Pappa}, \citenamefont {Monat}, \citenamefont {Legr\'e},\ and\
  \citenamefont {Makarov}}]{SRKBPMLM15}%
  \BibitemOpen
  \bibfield  {author} {\bibinfo {author} {\bibfnamefont {S.}~\bibnamefont
  {Sajeed}}, \bibinfo {author} {\bibfnamefont {I.}~\bibnamefont {Radchenko}},
  \bibinfo {author} {\bibfnamefont {S.}~\bibnamefont {Kaiser}}, \bibinfo
  {author} {\bibfnamefont {J.-P.}\ \bibnamefont {Bourgoin}}, \bibinfo {author}
  {\bibfnamefont {A.}~\bibnamefont {Pappa}}, \bibinfo {author} {\bibfnamefont
  {L.}~\bibnamefont {Monat}}, \bibinfo {author} {\bibfnamefont
  {M.}~\bibnamefont {Legr\'e}},\ and\ \bibinfo {author} {\bibfnamefont
  {V.}~\bibnamefont {Makarov}},\ }\bibfield  {title} {\bibinfo {title} {Attacks
  exploiting deviation of mean photon number in quantum key distribution and
  coin tossing},\ }\href {https://doi.org/10.1103/PhysRevA.91.032326}
  {\bibfield  {journal} {\bibinfo  {journal} {Phys. Rev. A}\ }\textbf {\bibinfo
  {volume} {91}},\ \bibinfo {pages} {032326} (\bibinfo {year}
  {2015})}\BibitemShut {NoStop}%
\bibitem [{\citenamefont {Jain}\ \emph {et~al.}(2016)\citenamefont {Jain},
  \citenamefont {Stiller}, \citenamefont {Khan}, \citenamefont {Elser},
  \citenamefont {Marquardt},\ and\ \citenamefont {Leuchs}}]{JSKEML16}%
  \BibitemOpen
  \bibfield  {author} {\bibinfo {author} {\bibfnamefont {N.}~\bibnamefont
  {Jain}}, \bibinfo {author} {\bibfnamefont {B.}~\bibnamefont {Stiller}},
  \bibinfo {author} {\bibfnamefont {I.}~\bibnamefont {Khan}}, \bibinfo {author}
  {\bibfnamefont {D.}~\bibnamefont {Elser}}, \bibinfo {author} {\bibfnamefont
  {C.}~\bibnamefont {Marquardt}},\ and\ \bibinfo {author} {\bibfnamefont
  {G.}~\bibnamefont {Leuchs}},\ }\bibfield  {title} {\bibinfo {title} {Attacks
  on practical quantum key distribution systems (and how to prevent them)},\
  }\href {https://doi.org/10.1080/00107514.2016.1148333} {\bibfield  {journal}
  {\bibinfo  {journal} {Contemporary Physics}\ }\textbf {\bibinfo {volume}
  {57}},\ \bibinfo {pages} {366} (\bibinfo {year} {2016})}\BibitemShut
  {NoStop}%
\bibitem [{\citenamefont {Qian}\ \emph {et~al.}(2018)\citenamefont {Qian},
  \citenamefont {He}, \citenamefont {Wang}, \citenamefont {Chen}, \citenamefont
  {Yin}, \citenamefont {Guo},\ and\ \citenamefont {Han}}]{QHWCYGH18}%
  \BibitemOpen
  \bibfield  {author} {\bibinfo {author} {\bibfnamefont {Y.-J.}\ \bibnamefont
  {Qian}}, \bibinfo {author} {\bibfnamefont {D.-Y.}\ \bibnamefont {He}},
  \bibinfo {author} {\bibfnamefont {S.}~\bibnamefont {Wang}}, \bibinfo {author}
  {\bibfnamefont {W.}~\bibnamefont {Chen}}, \bibinfo {author} {\bibfnamefont
  {Z.-Q.}\ \bibnamefont {Yin}}, \bibinfo {author} {\bibfnamefont {G.-C.}\
  \bibnamefont {Guo}},\ and\ \bibinfo {author} {\bibfnamefont {Z.-F.}\
  \bibnamefont {Han}},\ }\bibfield  {title} {\bibinfo {title} {Hacking the
  quantum key distribution system by exploiting the avalanche-transition region
  of single-photon detectors},\ }\href
  {https://doi.org/10.1103/PhysRevApplied.10.064062} {\bibfield  {journal}
  {\bibinfo  {journal} {Phys. Rev. Applied}\ }\textbf {\bibinfo {volume}
  {10}},\ \bibinfo {pages} {064062} (\bibinfo {year} {2018})}\BibitemShut
  {NoStop}%
\bibitem [{\citenamefont {Qian}\ \emph {et~al.}(2019)\citenamefont {Qian},
  \citenamefont {He}, \citenamefont {Wang}, \citenamefont {Chen}, \citenamefont
  {Yin}, \citenamefont {Guo},\ and\ \citenamefont {Han}}]{QHWCYGH19}%
  \BibitemOpen
  \bibfield  {author} {\bibinfo {author} {\bibfnamefont {Y.-J.}\ \bibnamefont
  {Qian}}, \bibinfo {author} {\bibfnamefont {D.-Y.}\ \bibnamefont {He}},
  \bibinfo {author} {\bibfnamefont {S.}~\bibnamefont {Wang}}, \bibinfo {author}
  {\bibfnamefont {W.}~\bibnamefont {Chen}}, \bibinfo {author} {\bibfnamefont
  {Z.-Q.}\ \bibnamefont {Yin}}, \bibinfo {author} {\bibfnamefont {G.-C.}\
  \bibnamefont {Guo}},\ and\ \bibinfo {author} {\bibfnamefont {Z.-F.}\
  \bibnamefont {Han}},\ }\bibfield  {title} {\bibinfo {title} {Robust
  countermeasure against detector control attack in a practical quantum key
  distribution system},\ }\href {https://doi.org/10.1364/OPTICA.6.001178}
  {\bibfield  {journal} {\bibinfo  {journal} {Optica}\ }\textbf {\bibinfo
  {volume} {6}},\ \bibinfo {pages} {1178} (\bibinfo {year} {2019})}\BibitemShut
  {NoStop}%
\bibitem [{\citenamefont {He}\ \emph {et~al.}(2020)\citenamefont {He},
  \citenamefont {Qian}, \citenamefont {Wang}, \citenamefont {Chen},
  \citenamefont {Yin}, \citenamefont {Guo},\ and\ \citenamefont
  {Han}}]{HQWCYGH20}%
  \BibitemOpen
  \bibfield  {author} {\bibinfo {author} {\bibfnamefont {D.-Y.}\ \bibnamefont
  {He}}, \bibinfo {author} {\bibfnamefont {Y.-J.}\ \bibnamefont {Qian}},
  \bibinfo {author} {\bibfnamefont {S.}~\bibnamefont {Wang}}, \bibinfo {author}
  {\bibfnamefont {W.}~\bibnamefont {Chen}}, \bibinfo {author} {\bibfnamefont
  {Z.-Q.}\ \bibnamefont {Yin}}, \bibinfo {author} {\bibfnamefont {G.-C.}\
  \bibnamefont {Guo}},\ and\ \bibinfo {author} {\bibfnamefont {Z.-F.}\
  \bibnamefont {Han}},\ }\bibfield  {title} {\bibinfo {title} {Robust
  countermeasure against detector control attack in a practical quantum key
  distribution system: reply},\ }\href {https://doi.org/10.1364/OPTICA.403917}
  {\bibfield  {journal} {\bibinfo  {journal} {Optica}\ }\textbf {\bibinfo
  {volume} {7}},\ \bibinfo {pages} {1415} (\bibinfo {year} {2020})}\BibitemShut
  {NoStop}%
\bibitem [{\citenamefont {Xu}\ \emph {et~al.}(2020)\citenamefont {Xu},
  \citenamefont {Ma}, \citenamefont {Zhang}, \citenamefont {Lo},\ and\
  \citenamefont {Pan}}]{XMZLP20}%
  \BibitemOpen
  \bibfield  {author} {\bibinfo {author} {\bibfnamefont {F.}~\bibnamefont
  {Xu}}, \bibinfo {author} {\bibfnamefont {X.}~\bibnamefont {Ma}}, \bibinfo
  {author} {\bibfnamefont {Q.}~\bibnamefont {Zhang}}, \bibinfo {author}
  {\bibfnamefont {H.-K.}\ \bibnamefont {Lo}},\ and\ \bibinfo {author}
  {\bibfnamefont {J.-W.}\ \bibnamefont {Pan}},\ }\bibfield  {title} {\bibinfo
  {title} {Secure quantum key distribution with realistic devices},\ }\href
  {https://doi.org/10.1103/RevModPhys.92.025002} {\bibfield  {journal}
  {\bibinfo  {journal} {Rev. Mod. Phys.}\ }\textbf {\bibinfo {volume} {92}},\
  \bibinfo {pages} {025002} (\bibinfo {year} {2020})}\BibitemShut {NoStop}%
\bibitem [{\citenamefont {Kent}(1999{\natexlab{a}})}]{K99}%
  \BibitemOpen
  \bibfield  {author} {\bibinfo {author} {\bibfnamefont {A.}~\bibnamefont
  {Kent}},\ }\bibfield  {title} {\bibinfo {title} {Unconditionally secure bit
  commitment},\ }\href {https://doi.org/10.1103/PhysRevLett.83.1447} {\bibfield
   {journal} {\bibinfo  {journal} {Phys. Rev. Lett.}\ }\textbf {\bibinfo
  {volume} {83}},\ \bibinfo {pages} {1447} (\bibinfo {year}
  {1999}{\natexlab{a}})}\BibitemShut {NoStop}%
\bibitem [{\citenamefont {Kent}(2005{\natexlab{a}})}]{K05.2}%
  \BibitemOpen
  \bibfield  {author} {\bibinfo {author} {\bibfnamefont {A.}~\bibnamefont
  {Kent}},\ }\bibfield  {title} {\bibinfo {title} {Secure classical bit
  commitment using fixed capacity communication channels},\ }\href
  {http://link.springer.com/article/10.1007/s00145-005-0905-8} {\bibfield
  {journal} {\bibinfo  {journal} {J. Cryptology}\ }\textbf {\bibinfo {volume}
  {18}},\ \bibinfo {pages} {313} (\bibinfo {year}
  {2005}{\natexlab{a}})}\BibitemShut {NoStop}%
\bibitem [{\citenamefont {Kent}(2011{\natexlab{a}})}]{K11.2}%
  \BibitemOpen
  \bibfield  {author} {\bibinfo {author} {\bibfnamefont {A.}~\bibnamefont
  {Kent}},\ }\bibfield  {title} {\bibinfo {title} {Unconditionally secure bit
  commitment with flying qudits},\ }\href
  {http://iopscience.iop.org/1367-2630/13/11/113015} {\bibfield  {journal}
  {\bibinfo  {journal} {New Journal of Physics}\ }\textbf {\bibinfo {volume}
  {13}},\ \bibinfo {pages} {113015} (\bibinfo {year}
  {2011}{\natexlab{a}})}\BibitemShut {NoStop}%
\bibitem [{\citenamefont {Kent}(2012{\natexlab{a}})}]{K12}%
  \BibitemOpen
  \bibfield  {author} {\bibinfo {author} {\bibfnamefont {A.}~\bibnamefont
  {Kent}},\ }\bibfield  {title} {\bibinfo {title} {Unconditionally secure bit
  commitment by transmitting measurement outcomes},\ }\href
  {https://doi.org/10.1103/PhysRevLett.109.130501} {\bibfield  {journal}
  {\bibinfo  {journal} {Phys. Rev. Lett.}\ }\textbf {\bibinfo {volume} {109}},\
  \bibinfo {pages} {130501} (\bibinfo {year} {2012}{\natexlab{a}})}\BibitemShut
  {NoStop}%
\bibitem [{\citenamefont {Lunghi}\ \emph {et~al.}(2013)\citenamefont {Lunghi},
  \citenamefont {Kaniewski}, \citenamefont {Bussi\`{e}res}, \citenamefont
  {Houlmann}, \citenamefont {Tomamichel}, \citenamefont {Kent}, \citenamefont
  {Gisin}, \citenamefont {Wehner},\ and\ \citenamefont
  {Zbinden}}]{LKBHTKGWZ13}%
  \BibitemOpen
  \bibfield  {author} {\bibinfo {author} {\bibfnamefont {T.}~\bibnamefont
  {Lunghi}}, \bibinfo {author} {\bibfnamefont {J.}~\bibnamefont {Kaniewski}},
  \bibinfo {author} {\bibfnamefont {F.}~\bibnamefont {Bussi\`{e}res}}, \bibinfo
  {author} {\bibfnamefont {R.}~\bibnamefont {Houlmann}}, \bibinfo {author}
  {\bibfnamefont {M.}~\bibnamefont {Tomamichel}}, \bibinfo {author}
  {\bibfnamefont {A.}~\bibnamefont {Kent}}, \bibinfo {author} {\bibfnamefont
  {N.}~\bibnamefont {Gisin}}, \bibinfo {author} {\bibfnamefont
  {S.}~\bibnamefont {Wehner}},\ and\ \bibinfo {author} {\bibfnamefont
  {H.}~\bibnamefont {Zbinden}},\ }\bibfield  {title} {\bibinfo {title}
  {Experimental bit commitment based on quantum communication and special
  relativity},\ }\href {https://doi.org/10.1103/PhysRevLett.111.180504}
  {\bibfield  {journal} {\bibinfo  {journal} {Phys. Rev. Lett.}\ }\textbf
  {\bibinfo {volume} {111}},\ \bibinfo {pages} {180504} (\bibinfo {year}
  {2013})}\BibitemShut {NoStop}%
\bibitem [{\citenamefont {Liu}\ \emph {et~al.}(2014)\citenamefont {Liu},
  \citenamefont {Cao}, \citenamefont {Curty}, \citenamefont {Liao},
  \citenamefont {Wang}, \citenamefont {Cui}, \citenamefont {Li}, \citenamefont
  {Lin}, \citenamefont {Sun}, \citenamefont {Li}, \citenamefont {Zhang},
  \citenamefont {Zhao}, \citenamefont {Chen}, \citenamefont {Peng},
  \citenamefont {Zhang}, \citenamefont {Cabello},\ and\ \citenamefont
  {Pan}}]{LCCLWCLLSLZZCPZCP14}%
  \BibitemOpen
  \bibfield  {author} {\bibinfo {author} {\bibfnamefont {Y.}~\bibnamefont
  {Liu}}, \bibinfo {author} {\bibfnamefont {Y.}~\bibnamefont {Cao}}, \bibinfo
  {author} {\bibfnamefont {M.}~\bibnamefont {Curty}}, \bibinfo {author}
  {\bibfnamefont {S.-K.}\ \bibnamefont {Liao}}, \bibinfo {author}
  {\bibfnamefont {J.}~\bibnamefont {Wang}}, \bibinfo {author} {\bibfnamefont
  {K.}~\bibnamefont {Cui}}, \bibinfo {author} {\bibfnamefont {Y.-H.}\
  \bibnamefont {Li}}, \bibinfo {author} {\bibfnamefont {Z.-H.}\ \bibnamefont
  {Lin}}, \bibinfo {author} {\bibfnamefont {Q.-C.}\ \bibnamefont {Sun}},
  \bibinfo {author} {\bibfnamefont {D.-D.}\ \bibnamefont {Li}}, \bibinfo
  {author} {\bibfnamefont {H.-F.}\ \bibnamefont {Zhang}}, \bibinfo {author}
  {\bibfnamefont {Y.}~\bibnamefont {Zhao}}, \bibinfo {author} {\bibfnamefont
  {T.-Y.}\ \bibnamefont {Chen}}, \bibinfo {author} {\bibfnamefont {C.-Z.}\
  \bibnamefont {Peng}}, \bibinfo {author} {\bibfnamefont {Q.}~\bibnamefont
  {Zhang}}, \bibinfo {author} {\bibfnamefont {A.}~\bibnamefont {Cabello}},\
  and\ \bibinfo {author} {\bibfnamefont {J.-W.}\ \bibnamefont {Pan}},\
  }\bibfield  {title} {\bibinfo {title} {Experimental unconditionally secure
  bit commitment},\ }\href {https://doi.org/10.1103/PhysRevLett.112.010504}
  {\bibfield  {journal} {\bibinfo  {journal} {Phys. Rev. Lett.}\ }\textbf
  {\bibinfo {volume} {112}},\ \bibinfo {pages} {010504} (\bibinfo {year}
  {2014})}\BibitemShut {NoStop}%
\bibitem [{\citenamefont {Lunghi}\ \emph {et~al.}(2015)\citenamefont {Lunghi},
  \citenamefont {Kaniewski}, \citenamefont {Bussi\`{e}res}, \citenamefont
  {Houlmann}, \citenamefont {Tomamichel}, \citenamefont {Wehner},\ and\
  \citenamefont {Zbinden}}]{LKBHTWZ15}%
  \BibitemOpen
  \bibfield  {author} {\bibinfo {author} {\bibfnamefont {T.}~\bibnamefont
  {Lunghi}}, \bibinfo {author} {\bibfnamefont {J.}~\bibnamefont {Kaniewski}},
  \bibinfo {author} {\bibfnamefont {F.}~\bibnamefont {Bussi\`{e}res}}, \bibinfo
  {author} {\bibfnamefont {R.}~\bibnamefont {Houlmann}}, \bibinfo {author}
  {\bibfnamefont {M.}~\bibnamefont {Tomamichel}}, \bibinfo {author}
  {\bibfnamefont {S.}~\bibnamefont {Wehner}},\ and\ \bibinfo {author}
  {\bibfnamefont {H.}~\bibnamefont {Zbinden}},\ }\bibfield  {title} {\bibinfo
  {title} {Practical relativistic bit commitment},\ }\href
  {https://doi.org/10.1103/PhysRevLett.115.030502} {\bibfield  {journal}
  {\bibinfo  {journal} {Phys. Rev. Lett.}\ }\textbf {\bibinfo {volume} {115}},\
  \bibinfo {pages} {030502} (\bibinfo {year} {2015})}\BibitemShut {NoStop}%
\bibitem [{\citenamefont {Chakraborty}\ \emph {et~al.}(2015)\citenamefont
  {Chakraborty}, \citenamefont {Chailloux},\ and\ \citenamefont
  {Leverrier}}]{CCL15}%
  \BibitemOpen
  \bibfield  {author} {\bibinfo {author} {\bibfnamefont {K.}~\bibnamefont
  {Chakraborty}}, \bibinfo {author} {\bibfnamefont {A.}~\bibnamefont
  {Chailloux}},\ and\ \bibinfo {author} {\bibfnamefont {A.}~\bibnamefont
  {Leverrier}},\ }\bibfield  {title} {\bibinfo {title} {Arbitrarily long
  relativistic bit commitment},\ }\href
  {https://doi.org/10.1103/PhysRevLett.115.250501} {\bibfield  {journal}
  {\bibinfo  {journal} {Phys. Rev. Lett.}\ }\textbf {\bibinfo {volume} {115}},\
  \bibinfo {pages} {250501} (\bibinfo {year} {2015})}\BibitemShut {NoStop}%
\bibitem [{\citenamefont {Adlam}\ and\ \citenamefont
  {Kent}(2015{\natexlab{a}})}]{AK15.1}%
  \BibitemOpen
  \bibfield  {author} {\bibinfo {author} {\bibfnamefont {E.}~\bibnamefont
  {Adlam}}\ and\ \bibinfo {author} {\bibfnamefont {A.}~\bibnamefont {Kent}},\
  }\bibfield  {title} {\bibinfo {title} {Deterministic relativistic quantum bit
  commitment},\ }\href {https://doi.org/10.1142/S021974991550029X} {\bibfield
  {journal} {\bibinfo  {journal} {Int. J. Quantum Inf.}\ }\textbf {\bibinfo
  {volume} {13}},\ \bibinfo {pages} {1550029} (\bibinfo {year}
  {2015}{\natexlab{a}})}\BibitemShut {NoStop}%
\bibitem [{\citenamefont {Adlam}\ and\ \citenamefont
  {Kent}(2015{\natexlab{b}})}]{AK15.2}%
  \BibitemOpen
  \bibfield  {author} {\bibinfo {author} {\bibfnamefont {E.}~\bibnamefont
  {Adlam}}\ and\ \bibinfo {author} {\bibfnamefont {A.}~\bibnamefont {Kent}},\
  }\bibfield  {title} {\bibinfo {title} {Device-independent relativistic
  quantum bit commitment},\ }\href {https://doi.org/10.1103/PhysRevA.92.022315}
  {\bibfield  {journal} {\bibinfo  {journal} {Phys. Rev. A}\ }\textbf {\bibinfo
  {volume} {92}},\ \bibinfo {pages} {022315} (\bibinfo {year}
  {2015}{\natexlab{b}})}\BibitemShut {NoStop}%
\bibitem [{\citenamefont {Verbanis}\ \emph {et~al.}(2016)\citenamefont
  {Verbanis}, \citenamefont {Martin}, \citenamefont {Houlmann}, \citenamefont
  {Boso}, \citenamefont {Bussi\`eres},\ and\ \citenamefont
  {Zbinden}}]{VMHBBZ16}%
  \BibitemOpen
  \bibfield  {author} {\bibinfo {author} {\bibfnamefont {E.}~\bibnamefont
  {Verbanis}}, \bibinfo {author} {\bibfnamefont {A.}~\bibnamefont {Martin}},
  \bibinfo {author} {\bibfnamefont {R.}~\bibnamefont {Houlmann}}, \bibinfo
  {author} {\bibfnamefont {G.}~\bibnamefont {Boso}}, \bibinfo {author}
  {\bibfnamefont {F.}~\bibnamefont {Bussi\`eres}},\ and\ \bibinfo {author}
  {\bibfnamefont {H.}~\bibnamefont {Zbinden}},\ }\bibfield  {title} {\bibinfo
  {title} {24{-}hour relativistic bit commitment},\ }\href
  {https://doi.org/10.1103/PhysRevLett.117.140506} {\bibfield  {journal}
  {\bibinfo  {journal} {Phys. Rev. Lett.}\ }\textbf {\bibinfo {volume} {117}},\
  \bibinfo {pages} {140506} (\bibinfo {year} {2016})}\BibitemShut {NoStop}%
\bibitem [{\citenamefont {Rabin}(1981)}]{R81}%
  \BibitemOpen
  \bibfield  {author} {\bibinfo {author} {\bibfnamefont {M.~O.}\ \bibnamefont
  {Rabin}},\ }\href
  {https://www.iacr.org/museum/rabin-obt/obtrans-eprint187.pdf} {\emph
  {\bibinfo {title} {How To exchange Secrets with Oblivious Transfer}}},\
  \bibinfo {type} {Tech. Rep.}\ (\bibinfo  {institution} {Aiken Computation
  Lab, Harvard University},\ \bibinfo {year} {1981})\BibitemShut {NoStop}%
\bibitem [{\citenamefont {Even}\ \emph {et~al.}(1985)\citenamefont {Even},
  \citenamefont {Goldreich},\ and\ \citenamefont {Lempel}}]{EGL85}%
  \BibitemOpen
  \bibfield  {author} {\bibinfo {author} {\bibfnamefont {S.}~\bibnamefont
  {Even}}, \bibinfo {author} {\bibfnamefont {O.}~\bibnamefont {Goldreich}},\
  and\ \bibinfo {author} {\bibfnamefont {A.}~\bibnamefont {Lempel}},\
  }\bibfield  {title} {\bibinfo {title} {A randomized protocol for signing
  contracts},\ }\href {https://doi.org/10.1145/3812.3818} {\bibfield  {journal}
  {\bibinfo  {journal} {Commun. ACM}\ }\textbf {\bibinfo {volume} {28}},\
  \bibinfo {pages} {637} (\bibinfo {year} {1985})}\BibitemShut {NoStop}%
\bibitem [{\citenamefont {Cr{\'e}peau}(1988)}]{C88}%
  \BibitemOpen
  \bibfield  {author} {\bibinfo {author} {\bibfnamefont {C.}~\bibnamefont
  {Cr{\'e}peau}},\ }\bibfield  {title} {\bibinfo {title} {Equivalence between
  two flavours of oblivious transfers},\ }in\ \href@noop {} {\emph {\bibinfo
  {booktitle} {Advances in Cryptology --- CRYPTO '87}}},\ \bibinfo {editor}
  {edited by\ \bibinfo {editor} {\bibfnamefont {C.}~\bibnamefont {Pomerance}}}\
  (\bibinfo  {publisher} {Springer Berlin Heidelberg},\ \bibinfo {address}
  {Berlin, Heidelberg},\ \bibinfo {year} {1988})\ pp.\ \bibinfo {pages}
  {350--354}\BibitemShut {NoStop}%
\bibitem [{\citenamefont {Pital\'ua{-}Garc\'{\i}a}(2016)}]{PG15.1}%
  \BibitemOpen
  \bibfield  {author} {\bibinfo {author} {\bibfnamefont {D.}~\bibnamefont
  {Pital\'ua{-}Garc\'{\i}a}},\ }\bibfield  {title} {\bibinfo {title}
  {Spacetime-constrained oblivious transfer},\ }\href
  {https://doi.org/10.1103/PhysRevA.93.062346} {\bibfield  {journal} {\bibinfo
  {journal} {Phys. Rev. A}\ }\textbf {\bibinfo {volume} {93}},\ \bibinfo
  {pages} {062346} (\bibinfo {year} {2016})}\BibitemShut {NoStop}%
\bibitem [{\citenamefont {Pital\'ua-Garc\'{\i}a}\ and\ \citenamefont
  {Kerenidis}(2018)}]{PGK18}%
  \BibitemOpen
  \bibfield  {author} {\bibinfo {author} {\bibfnamefont {D.}~\bibnamefont
  {Pital\'ua-Garc\'{\i}a}}\ and\ \bibinfo {author} {\bibfnamefont
  {I.}~\bibnamefont {Kerenidis}},\ }\bibfield  {title} {\bibinfo {title}
  {Practical and unconditionally secure spacetime-constrained oblivious
  transfer},\ }\href {https://doi.org/10.1103/PhysRevA.98.032327} {\bibfield
  {journal} {\bibinfo  {journal} {Phys. Rev. A}\ }\textbf {\bibinfo {volume}
  {98}},\ \bibinfo {pages} {032327} (\bibinfo {year} {2018})}\BibitemShut
  {NoStop}%
\bibitem [{\citenamefont {Pital\'ua-Garc\'{\i}a}(2019)}]{PG19}%
  \BibitemOpen
  \bibfield  {author} {\bibinfo {author} {\bibfnamefont {D.}~\bibnamefont
  {Pital\'ua-Garc\'{\i}a}},\ }\bibfield  {title} {\bibinfo {title}
  {One-out-of-$m$ spacetime-constrained oblivious transfer},\ }\href
  {https://doi.org/10.1103/PhysRevA.100.012302} {\bibfield  {journal} {\bibinfo
   {journal} {Phys. Rev. A}\ }\textbf {\bibinfo {volume} {100}},\ \bibinfo
  {pages} {012302} (\bibinfo {year} {2019})}\BibitemShut {NoStop}%
\bibitem [{\citenamefont {Amiri}\ \emph {et~al.}(2021)\citenamefont {Amiri},
  \citenamefont {St\'arek}, \citenamefont {Reichmuth}, \citenamefont {Puthoor},
  \citenamefont {Mi\ifmmode~\check{c}\else \v{c}\fi{}uda}, \citenamefont
  {Mi\ifmmode~\check{s}\else \v{s}\fi{}ta}, \citenamefont
  {Du\ifmmode~\check{s}\else \v{s}\fi{}ek}, \citenamefont {Wallden},\ and\
  \citenamefont {Andersson}}]{ASMMDWA20}%
  \BibitemOpen
  \bibfield  {author} {\bibinfo {author} {\bibfnamefont {R.}~\bibnamefont
  {Amiri}}, \bibinfo {author} {\bibfnamefont {R.}~\bibnamefont {St\'arek}},
  \bibinfo {author} {\bibfnamefont {D.}~\bibnamefont {Reichmuth}}, \bibinfo
  {author} {\bibfnamefont {I.~V.}\ \bibnamefont {Puthoor}}, \bibinfo {author}
  {\bibfnamefont {M.}~\bibnamefont {Mi\ifmmode~\check{c}\else \v{c}\fi{}uda}},
  \bibinfo {author} {\bibfnamefont {L.}~\bibnamefont {Mi\ifmmode~\check{s}\else
  \v{s}\fi{}ta}, \bibfnamefont {Jr.}}, \bibinfo {author} {\bibfnamefont
  {M.}~\bibnamefont {Du\ifmmode~\check{s}\else \v{s}\fi{}ek}}, \bibinfo
  {author} {\bibfnamefont {P.}~\bibnamefont {Wallden}},\ and\ \bibinfo {author}
  {\bibfnamefont {E.}~\bibnamefont {Andersson}},\ }\bibfield  {title} {\bibinfo
  {title} {Imperfect 1-out-of-2 quantum oblivious transfer: Bounds, a protocol,
  and its experimental implementation},\ }\href
  {https://doi.org/10.1103/PRXQuantum.2.010335} {\bibfield  {journal} {\bibinfo
   {journal} {PRX Quantum}\ }\textbf {\bibinfo {volume} {2}},\ \bibinfo {pages}
  {010335} (\bibinfo {year} {2021})}\BibitemShut {NoStop}%
\bibitem [{\citenamefont {Blum}(1983)}]{B83}%
  \BibitemOpen
  \bibfield  {author} {\bibinfo {author} {\bibfnamefont {M.}~\bibnamefont
  {Blum}},\ }\bibfield  {title} {\bibinfo {title} {Coin flipping by telephone a
  protocol for solving impossible problems},\ }\href
  {https://doi.org/10.1145/1008908.1008911} {\bibfield  {journal} {\bibinfo
  {journal} {SIGACT News}\ }\textbf {\bibinfo {volume} {15}},\ \bibinfo {pages}
  {23–27} (\bibinfo {year} {1983})}\BibitemShut {NoStop}%
\bibitem [{\citenamefont {Kent}(1999{\natexlab{b}})}]{K99.2}%
  \BibitemOpen
  \bibfield  {author} {\bibinfo {author} {\bibfnamefont {A.}~\bibnamefont
  {Kent}},\ }\bibfield  {title} {\bibinfo {title} {Coin tossing is strictly
  weaker than bit commitment},\ }\href
  {https://doi.org/10.1103/PhysRevLett.83.5382} {\bibfield  {journal} {\bibinfo
   {journal} {Phys. Rev. Lett.}\ }\textbf {\bibinfo {volume} {83}},\ \bibinfo
  {pages} {5382} (\bibinfo {year} {1999}{\natexlab{b}})}\BibitemShut {NoStop}%
\bibitem [{\citenamefont {Mochon}(2007)}]{M07NATURE}%
  \BibitemOpen
  \bibfield  {author} {\bibinfo {author} {\bibfnamefont {C.}~\bibnamefont
  {Mochon}},\ }\bibfield  {title} {\bibinfo {title} {Quantum weak coin flipping
  with arbitrarily small bias}} (\bibinfo {year} {2007}),\ \bibinfo {note}
  {preprint at \url{http://arxiv.org/abs/0711.4114}}\BibitemShut {NoStop}%
\bibitem [{\citenamefont {Berl\'{i}n}\ \emph {et~al.}(2011)\citenamefont
  {Berl\'{i}n}, \citenamefont {Brassard}, \citenamefont {Bussi\`{e}res},
  \citenamefont {Godbout}, \citenamefont {Slater},\ and\ \citenamefont
  {Tittel}}]{BBBGST11}%
  \BibitemOpen
  \bibfield  {author} {\bibinfo {author} {\bibfnamefont {G.}~\bibnamefont
  {Berl\'{i}n}}, \bibinfo {author} {\bibfnamefont {G.}~\bibnamefont
  {Brassard}}, \bibinfo {author} {\bibfnamefont {F.}~\bibnamefont
  {Bussi\`{e}res}}, \bibinfo {author} {\bibfnamefont {N.}~\bibnamefont
  {Godbout}}, \bibinfo {author} {\bibfnamefont {J.~A.}\ \bibnamefont
  {Slater}},\ and\ \bibinfo {author} {\bibfnamefont {W.}~\bibnamefont
  {Tittel}},\ }\bibfield  {title} {\bibinfo {title} {Experimental
  loss{-}tolerant quantum coin flipping},\ }\href
  {https://www.nature.com/articles/ncomms1572} {\bibfield  {journal} {\bibinfo
  {journal} {Nat. Commun.}\ }\textbf {\bibinfo {volume} {2}},\ \bibinfo {pages}
  {561} (\bibinfo {year} {2011})}\BibitemShut {NoStop}%
\bibitem [{\citenamefont {Pappa}\ \emph {et~al.}(2014)\citenamefont {Pappa},
  \citenamefont {Jouguet}, \citenamefont {Lawson}, \citenamefont {Chailloux},
  \citenamefont {Legr\'e}, \citenamefont {Trinkler}, \citenamefont
  {Kerenidis},\ and\ \citenamefont {Diamanti}}]{PJLCLTKD14}%
  \BibitemOpen
  \bibfield  {author} {\bibinfo {author} {\bibfnamefont {A.}~\bibnamefont
  {Pappa}}, \bibinfo {author} {\bibfnamefont {P.}~\bibnamefont {Jouguet}},
  \bibinfo {author} {\bibfnamefont {T.}~\bibnamefont {Lawson}}, \bibinfo
  {author} {\bibfnamefont {A.}~\bibnamefont {Chailloux}}, \bibinfo {author}
  {\bibfnamefont {M.}~\bibnamefont {Legr\'e}}, \bibinfo {author} {\bibfnamefont
  {P.}~\bibnamefont {Trinkler}}, \bibinfo {author} {\bibfnamefont
  {I.}~\bibnamefont {Kerenidis}},\ and\ \bibinfo {author} {\bibfnamefont
  {E.}~\bibnamefont {Diamanti}},\ }\bibfield  {title} {\bibinfo {title}
  {Experimental plug and play quantum coin flipping},\ }\href
  {https://www.nature.com/articles/ncomms4717} {\bibfield  {journal} {\bibinfo
  {journal} {Nat. Commun.}\ }\textbf {\bibinfo {volume} {5}},\ \bibinfo {pages}
  {3717} (\bibinfo {year} {2014})}\BibitemShut {NoStop}%
\bibitem [{\citenamefont {Aharonov}\ \emph {et~al.}(2016)\citenamefont
  {Aharonov}, \citenamefont {Chailloux}, \citenamefont {Ganz}, \citenamefont
  {Kerenidis},\ and\ \citenamefont {Magnin}}]{ACGKM16}%
  \BibitemOpen
  \bibfield  {author} {\bibinfo {author} {\bibfnamefont {D.}~\bibnamefont
  {Aharonov}}, \bibinfo {author} {\bibfnamefont {A.}~\bibnamefont {Chailloux}},
  \bibinfo {author} {\bibfnamefont {M.}~\bibnamefont {Ganz}}, \bibinfo {author}
  {\bibfnamefont {I.}~\bibnamefont {Kerenidis}},\ and\ \bibinfo {author}
  {\bibfnamefont {L.}~\bibnamefont {Magnin}},\ }\bibfield  {title} {\bibinfo
  {title} {A simpler proof of the existence of quantum weak coin flipping with
  arbitrarily small bias},\ }\href {https://doi.org/10.1137/14096387X}
  {\bibfield  {journal} {\bibinfo  {journal} {SIAM J. Comput.}\ }\textbf
  {\bibinfo {volume} {45}},\ \bibinfo {pages} {633} (\bibinfo {year}
  {2016})}\BibitemShut {NoStop}%
\bibitem [{\citenamefont {Bozzio}\ \emph {et~al.}(2020)\citenamefont {Bozzio},
  \citenamefont {Chabaud}, \citenamefont {Kerenidis},\ and\ \citenamefont
  {Diamanti}}]{BCKD20}%
  \BibitemOpen
  \bibfield  {author} {\bibinfo {author} {\bibfnamefont {M.}~\bibnamefont
  {Bozzio}}, \bibinfo {author} {\bibfnamefont {U.}~\bibnamefont {Chabaud}},
  \bibinfo {author} {\bibfnamefont {I.}~\bibnamefont {Kerenidis}},\ and\
  \bibinfo {author} {\bibfnamefont {E.}~\bibnamefont {Diamanti}},\ }\bibfield
  {title} {\bibinfo {title} {Quantum weak coin flipping with a single photon},\
  }\href {https://doi.org/10.1103/PhysRevA.102.022414} {\bibfield  {journal}
  {\bibinfo  {journal} {Phys. Rev. A}\ }\textbf {\bibinfo {volume} {102}},\
  \bibinfo {pages} {022414} (\bibinfo {year} {2020})}\BibitemShut {NoStop}%
\bibitem [{\citenamefont {Pital\'ua-Garc\'ia}(2021)}]{PG21}%
  \BibitemOpen
  \bibfield  {author} {\bibinfo {author} {\bibfnamefont {D.}~\bibnamefont
  {Pital\'ua-Garc\'ia}},\ }\bibfield  {title} {\bibinfo {title}
  {Unconditionally secure relativistic multi{-}party biased coin flipping and
  die rolling},\ }\href {https://doi.org/10.1098/rspa.2021.0203} {\bibfield
  {journal} {\bibinfo  {journal} {Proc. R. Soc. A}\ }\textbf {\bibinfo {volume}
  {477}},\ \bibinfo {pages} {20210203} (\bibinfo {year} {2021})}\BibitemShut
  {NoStop}%
\bibitem [{\citenamefont {Yao}(1982)}]{Y82}%
  \BibitemOpen
  \bibfield  {author} {\bibinfo {author} {\bibfnamefont {A.~C.}\ \bibnamefont
  {Yao}},\ }\bibfield  {title} {\bibinfo {title} {Protocols for secure
  computations},\ }in\ \href@noop {} {\emph {\bibinfo {booktitle} {23rd Annual
  Symposium on Foundations of Computer Science, 1982. SFCS '08}}}\ (\bibinfo
  {publisher} {IEEE},\ \bibinfo {address} {Chicago, IL, USA},\ \bibinfo {year}
  {1982})\ pp.\ \bibinfo {pages} {160--164}\BibitemShut {NoStop}%
\bibitem [{\citenamefont {Kent}\ \emph {et~al.}(2011)\citenamefont {Kent},
  \citenamefont {Munro},\ and\ \citenamefont {Spiller}}]{KMS11}%
  \BibitemOpen
  \bibfield  {author} {\bibinfo {author} {\bibfnamefont {A.}~\bibnamefont
  {Kent}}, \bibinfo {author} {\bibfnamefont {W.~J.}\ \bibnamefont {Munro}},\
  and\ \bibinfo {author} {\bibfnamefont {T.~P.}\ \bibnamefont {Spiller}},\
  }\bibfield  {title} {\bibinfo {title} {Quantum tagging: Authenticating
  location via quantum information and relativistic signaling constraints},\
  }\href {http://link.aps.org/doi/10.1103/PhysRevA.84.012326} {\bibfield
  {journal} {\bibinfo  {journal} {Physical Review A}\ }\textbf {\bibinfo
  {volume} {84(1)}},\ \bibinfo {pages} {012326} (\bibinfo {year}
  {2011})}\BibitemShut {NoStop}%
\bibitem [{\citenamefont {Buhrman}\ \emph {et~al.}(2011)\citenamefont
  {Buhrman}, \citenamefont {Chandran}, \citenamefont {Fehr}, \citenamefont
  {Gelles}, \citenamefont {Goyal}, \citenamefont {Ostrovsky},\ and\
  \citenamefont {Schaffner}}]{pbqc}%
  \BibitemOpen
  \bibfield  {author} {\bibinfo {author} {\bibfnamefont {H.}~\bibnamefont
  {Buhrman}}, \bibinfo {author} {\bibfnamefont {N.}~\bibnamefont {Chandran}},
  \bibinfo {author} {\bibfnamefont {S.}~\bibnamefont {Fehr}}, \bibinfo {author}
  {\bibfnamefont {R.}~\bibnamefont {Gelles}}, \bibinfo {author} {\bibfnamefont
  {V.}~\bibnamefont {Goyal}}, \bibinfo {author} {\bibfnamefont
  {R.}~\bibnamefont {Ostrovsky}},\ and\ \bibinfo {author} {\bibfnamefont
  {C.}~\bibnamefont {Schaffner}},\ }\bibfield  {title} {\bibinfo {title}
  {Position-based quantum cryptography: Impossibility and constructions},\
  }\Eprint {https://arxiv.org/abs/1009.2490} {arXiv:1009.2490}  (\bibinfo
  {year} {2011}),\ \bibinfo {note} {e-print arXiv:1009.2490}\BibitemShut
  {NoStop}%
\bibitem [{\citenamefont {Kent}(2011{\natexlab{b}})}]{K11.1}%
  \BibitemOpen
  \bibfield  {author} {\bibinfo {author} {\bibfnamefont {A.}~\bibnamefont
  {Kent}},\ }\bibfield  {title} {\bibinfo {title} {Quantum tagging for tags
  containing secret classical data},\ }\href
  {http://link.aps.org/doi/10.1103/PhysRevA.84.022335} {\bibfield  {journal}
  {\bibinfo  {journal} {Physical Review A}\ }\textbf {\bibinfo {volume}
  {84(2)}},\ \bibinfo {pages} {022335} (\bibinfo {year}
  {2011}{\natexlab{b}})}\BibitemShut {NoStop}%
\bibitem [{\citenamefont {Wallden}\ \emph {et~al.}(2015)\citenamefont
  {Wallden}, \citenamefont {Dunjko}, \citenamefont {Kent},\ and\ \citenamefont
  {Andersson}}]{WDKA15}%
  \BibitemOpen
  \bibfield  {author} {\bibinfo {author} {\bibfnamefont {P.}~\bibnamefont
  {Wallden}}, \bibinfo {author} {\bibfnamefont {V.}~\bibnamefont {Dunjko}},
  \bibinfo {author} {\bibfnamefont {A.}~\bibnamefont {Kent}},\ and\ \bibinfo
  {author} {\bibfnamefont {E.}~\bibnamefont {Andersson}},\ }\bibfield  {title}
  {\bibinfo {title} {Quantum digital signatures with quantum-key-distribution
  components},\ }\href {https://doi.org/10.1103/PhysRevA.91.042304} {\bibfield
  {journal} {\bibinfo  {journal} {Phys. Rev. A}\ }\textbf {\bibinfo {volume}
  {91}},\ \bibinfo {pages} {042304} (\bibinfo {year} {2015})}\BibitemShut
  {NoStop}%
\bibitem [{\citenamefont {Amiri}\ \emph {et~al.}(2016)\citenamefont {Amiri},
  \citenamefont {Wallden}, \citenamefont {Kent},\ and\ \citenamefont
  {Andersson}}]{AWKA16}%
  \BibitemOpen
  \bibfield  {author} {\bibinfo {author} {\bibfnamefont {R.}~\bibnamefont
  {Amiri}}, \bibinfo {author} {\bibfnamefont {P.}~\bibnamefont {Wallden}},
  \bibinfo {author} {\bibfnamefont {A.}~\bibnamefont {Kent}},\ and\ \bibinfo
  {author} {\bibfnamefont {E.}~\bibnamefont {Andersson}},\ }\bibfield  {title}
  {\bibinfo {title} {Secure quantum signatures using insecure quantum
  channels},\ }\href {https://doi.org/10.1103/PhysRevA.93.032325} {\bibfield
  {journal} {\bibinfo  {journal} {Phys. Rev. A}\ }\textbf {\bibinfo {volume}
  {93}},\ \bibinfo {pages} {032325} (\bibinfo {year} {2016})}\BibitemShut
  {NoStop}%
\bibitem [{\citenamefont {Wiesner}(1983)}]{wiesner1983conjugate}%
  \BibitemOpen
  \bibfield  {author} {\bibinfo {author} {\bibfnamefont {S.}~\bibnamefont
  {Wiesner}},\ }\bibfield  {title} {\bibinfo {title} {Conjugate coding},\
  }\href@noop {} {\bibfield  {journal} {\bibinfo  {journal} {ACM Sigact News}\
  }\textbf {\bibinfo {volume} {15}},\ \bibinfo {pages} {78} (\bibinfo {year}
  {1983})}\BibitemShut {NoStop}%
\bibitem [{\citenamefont {Gavinsky}(2012)}]{G12}%
  \BibitemOpen
  \bibfield  {author} {\bibinfo {author} {\bibfnamefont {D.}~\bibnamefont
  {Gavinsky}},\ }\bibfield  {title} {\bibinfo {title} {Quantum money with
  classical verification},\ }in\ \href {https://doi.org/10.1109/CCC.2012.10}
  {\emph {\bibinfo {booktitle} {2012 IEEE 27th Conference on Computational
  Complexity}}}\ (\bibinfo {year} {2012})\ pp.\ \bibinfo {pages}
  {42--52}\BibitemShut {NoStop}%
\bibitem [{\citenamefont {Pastawski}\ \emph {et~al.}(2012)\citenamefont
  {Pastawski}, \citenamefont {Yao}, \citenamefont {Jiang}, \citenamefont
  {Lukin},\ and\ \citenamefont {Cirac}}]{PYLLC12}%
  \BibitemOpen
  \bibfield  {author} {\bibinfo {author} {\bibfnamefont {F.}~\bibnamefont
  {Pastawski}}, \bibinfo {author} {\bibfnamefont {N.~Y.}\ \bibnamefont {Yao}},
  \bibinfo {author} {\bibfnamefont {L.}~\bibnamefont {Jiang}}, \bibinfo
  {author} {\bibfnamefont {M.~D.}\ \bibnamefont {Lukin}},\ and\ \bibinfo
  {author} {\bibfnamefont {J.~I.}\ \bibnamefont {Cirac}},\ }\bibfield  {title}
  {\bibinfo {title} {Unforgeable noise{-}tolerant quantum tokens},\ }\href
  {https://doi.org/10.1073/pnas.1203552109} {\bibfield  {journal} {\bibinfo
  {journal} {Proc. Natl. Acad. Sci. USA}\ }\textbf {\bibinfo {volume} {109}},\
  \bibinfo {pages} {16079} (\bibinfo {year} {2012})}\BibitemShut {NoStop}%
\bibitem [{\citenamefont {Georgiou}\ and\ \citenamefont
  {Kerenidis}(2015)}]{GK15}%
  \BibitemOpen
  \bibfield  {author} {\bibinfo {author} {\bibfnamefont {M.}~\bibnamefont
  {Georgiou}}\ and\ \bibinfo {author} {\bibfnamefont {I.}~\bibnamefont
  {Kerenidis}},\ }\bibfield  {title} {\bibinfo {title} {New constructions for
  quantum money},\ }in\ \href@noop {} {\emph {\bibinfo {booktitle}
  {LIPIcs-Leibniz International Proceedings in Informatics}}},\ Vol.~\bibinfo
  {volume} {44}\ (\bibinfo {organization} {Schloss Dagstuhl-Leibniz-Zentrum
  fuer Informatik},\ \bibinfo {year} {2015})\BibitemShut {NoStop}%
\bibitem [{\citenamefont {Bozzio}\ \emph {et~al.}(2018)\citenamefont {Bozzio},
  \citenamefont {Orieux}, \citenamefont {Vidarte}, \citenamefont {Zaquine},
  \citenamefont {Kerenidis},\ and\ \citenamefont {Diamanti}}]{BOVZKD18}%
  \BibitemOpen
  \bibfield  {author} {\bibinfo {author} {\bibfnamefont {M.}~\bibnamefont
  {Bozzio}}, \bibinfo {author} {\bibfnamefont {A.}~\bibnamefont {Orieux}},
  \bibinfo {author} {\bibfnamefont {L.~T.}\ \bibnamefont {Vidarte}}, \bibinfo
  {author} {\bibfnamefont {I.}~\bibnamefont {Zaquine}}, \bibinfo {author}
  {\bibfnamefont {I.}~\bibnamefont {Kerenidis}},\ and\ \bibinfo {author}
  {\bibfnamefont {E.}~\bibnamefont {Diamanti}},\ }\bibfield  {title} {\bibinfo
  {title} {Experimental investigation of practical unforgeable quantum money},\
  }\href {https://doi.org/10.1038/s41534-018-0058-2} {\bibfield  {journal}
  {\bibinfo  {journal} {npj Quantum Inf.}\ }\textbf {\bibinfo {volume} {4}},\
  \bibinfo {pages} {5} (\bibinfo {year} {2018})}\BibitemShut {NoStop}%
\bibitem [{\citenamefont {Kent}(2019)}]{KSmoney}%
  \BibitemOpen
  \bibfield  {author} {\bibinfo {author} {\bibfnamefont {A.}~\bibnamefont
  {Kent}},\ }\bibfield  {title} {\bibinfo {title} {S-money: virtual tokens for
  a relativistic economy},\ }\href {https://doi.org/10.1098/rspa.2019.0170}
  {\bibfield  {journal} {\bibinfo  {journal} {Proc. R. Soc. A}\ }\textbf
  {\bibinfo {volume} {475}},\ \bibinfo {pages} {20190170} (\bibinfo {year}
  {2019})}\BibitemShut {NoStop}%
\bibitem [{\citenamefont {Bozzio}\ \emph {et~al.}(2019)\citenamefont {Bozzio},
  \citenamefont {Diamanti},\ and\ \citenamefont {Grosshans}}]{BDG19}%
  \BibitemOpen
  \bibfield  {author} {\bibinfo {author} {\bibfnamefont {M.}~\bibnamefont
  {Bozzio}}, \bibinfo {author} {\bibfnamefont {E.}~\bibnamefont {Diamanti}},\
  and\ \bibinfo {author} {\bibfnamefont {F.}~\bibnamefont {Grosshans}},\
  }\bibfield  {title} {\bibinfo {title} {Semi-device-independent quantum money
  with coherent states},\ }\href {https://doi.org/10.1103/PhysRevA.99.022336}
  {\bibfield  {journal} {\bibinfo  {journal} {Phys. Rev. A}\ }\textbf {\bibinfo
  {volume} {99}},\ \bibinfo {pages} {022336} (\bibinfo {year}
  {2019})}\BibitemShut {NoStop}%
\bibitem [{\citenamefont {Kent}\ and\ \citenamefont
  {Pital\'ua-Garc\'{\i}a}(2020)}]{KPG20}%
  \BibitemOpen
  \bibfield  {author} {\bibinfo {author} {\bibfnamefont {A.}~\bibnamefont
  {Kent}}\ and\ \bibinfo {author} {\bibfnamefont {D.}~\bibnamefont
  {Pital\'ua-Garc\'{\i}a}},\ }\bibfield  {title} {\bibinfo {title} {Flexible
  quantum tokens in spacetime},\ }\href
  {https://doi.org/10.1103/PhysRevA.101.022309} {\bibfield  {journal} {\bibinfo
   {journal} {Phys. Rev. A}\ }\textbf {\bibinfo {volume} {101}},\ \bibinfo
  {pages} {022309} (\bibinfo {year} {2020})}\BibitemShut {NoStop}%
\bibitem [{\citenamefont {Kent}\ \emph {et~al.}(2021)\citenamefont {Kent},
  \citenamefont {Lowndes}, \citenamefont {Pital\'ua-Garc\'ia},\ and\
  \citenamefont {Rarity}}]{KLPGR21}%
  \BibitemOpen
  \bibfield  {author} {\bibinfo {author} {\bibfnamefont {A.}~\bibnamefont
  {Kent}}, \bibinfo {author} {\bibfnamefont {D.}~\bibnamefont {Lowndes}},
  \bibinfo {author} {\bibfnamefont {D.}~\bibnamefont {Pital\'ua-Garc\'ia}},\
  and\ \bibinfo {author} {\bibfnamefont {J.}~\bibnamefont {Rarity}},\
  }\bibfield  {title} {\bibinfo {title} {Practical quantum tokens without
  quantum memories and experimental tests},\ }\Eprint
  {https://arxiv.org/abs/2104.11717} {arXiv:2104.11717}  (\bibinfo {year}
  {2021})\BibitemShut {NoStop}%
\bibitem [{\citenamefont {Mayers}(1997)}]{M97}%
  \BibitemOpen
  \bibfield  {author} {\bibinfo {author} {\bibfnamefont {D.}~\bibnamefont
  {Mayers}},\ }\bibfield  {title} {\bibinfo {title} {Unconditionally secure
  quantum bit commitment is impossible},\ }\href
  {https://doi.org/10.1103/PhysRevLett.78.3414} {\bibfield  {journal} {\bibinfo
   {journal} {Phys. Rev. Lett.}\ }\textbf {\bibinfo {volume} {78}},\ \bibinfo
  {pages} {3414} (\bibinfo {year} {1997})}\BibitemShut {NoStop}%
\bibitem [{\citenamefont {Lo}\ and\ \citenamefont {Chau}(1997)}]{LC97}%
  \BibitemOpen
  \bibfield  {author} {\bibinfo {author} {\bibfnamefont {H.-K.}\ \bibnamefont
  {Lo}}\ and\ \bibinfo {author} {\bibfnamefont {H.~F.}\ \bibnamefont {Chau}},\
  }\bibfield  {title} {\bibinfo {title} {Is quantum bit commitment really
  possible?},\ }\href {https://doi.org/10.1103/PhysRevLett.78.3410} {\bibfield
  {journal} {\bibinfo  {journal} {Phys. Rev. Lett.}\ }\textbf {\bibinfo
  {volume} {78}},\ \bibinfo {pages} {3410} (\bibinfo {year}
  {1997})}\BibitemShut {NoStop}%
\bibitem [{\citenamefont {Lo}\ and\ \citenamefont {Chau}(1998)}]{LC98}%
  \BibitemOpen
  \bibfield  {author} {\bibinfo {author} {\bibfnamefont {H.-K.}\ \bibnamefont
  {Lo}}\ and\ \bibinfo {author} {\bibfnamefont {H.}~\bibnamefont {Chau}},\
  }\bibfield  {title} {\bibinfo {title} {Why quantum bit commitment and ideal
  quantum coin tossing are impossible},\ }\href
  {http://www.sciencedirect.com/science/article/pii/S0167278998000530}
  {\bibfield  {journal} {\bibinfo  {journal} {Physica D: Nonlinear Phenomena}\
  }\textbf {\bibinfo {volume} {120}},\ \bibinfo {pages} {177 } (\bibinfo {year}
  {1998})},\ \bibinfo {note} {proceedings of the Fourth Workshop on Physics and
  Consumption}\BibitemShut {NoStop}%
\bibitem [{\citenamefont {Kitaev}(2002)}]{Kitaev02}%
  \BibitemOpen
  \bibfield  {author} {\bibinfo {author} {\bibfnamefont {A.~Y.}\ \bibnamefont
  {Kitaev}},\ }\bibfield  {title} {\bibinfo {title} {Quantum coin-flipping}}
  (\bibinfo {year} {2002}),\ \bibinfo {note} {talk at QIP 2003 (unpublished);
  proof reproduced in \cite{ABDR04}}\BibitemShut {NoStop}%
\bibitem [{\citenamefont {Ambainis}\ \emph {et~al.}(2004)\citenamefont
  {Ambainis}, \citenamefont {Buhrman}, \citenamefont {Dodis},\ and\
  \citenamefont {Rohrig}}]{ABDR04}%
  \BibitemOpen
  \bibfield  {author} {\bibinfo {author} {\bibfnamefont {A.}~\bibnamefont
  {Ambainis}}, \bibinfo {author} {\bibfnamefont {H.}~\bibnamefont {Buhrman}},
  \bibinfo {author} {\bibfnamefont {Y.}~\bibnamefont {Dodis}},\ and\ \bibinfo
  {author} {\bibfnamefont {H.}~\bibnamefont {Rohrig}},\ }\bibfield  {title}
  {\bibinfo {title} {Multiparty quantum coin flipping},\ }in\ \href@noop {}
  {\emph {\bibinfo {booktitle} {Proceedings of the 19th IEEE Annual Conference
  on Computational Complexity}}}\ (\bibinfo  {publisher} {IEEE Computer
  Society},\ \bibinfo {year} {2004})\ pp.\ \bibinfo {pages} {250--259},\
  \bibinfo {note} {quant-ph/0304112}\BibitemShut {NoStop}%
\bibitem [{\citenamefont {Lo}(1997)}]{L97}%
  \BibitemOpen
  \bibfield  {author} {\bibinfo {author} {\bibfnamefont {H.-K.}\ \bibnamefont
  {Lo}},\ }\bibfield  {title} {\bibinfo {title} {Insecurity of quantum secure
  computations},\ }\href {https://doi.org/10.1103/PhysRevA.56.1154} {\bibfield
  {journal} {\bibinfo  {journal} {Phys. Rev. A}\ }\textbf {\bibinfo {volume}
  {56}},\ \bibinfo {pages} {1154} (\bibinfo {year} {1997})}\BibitemShut
  {NoStop}%
\bibitem [{\citenamefont {Rudolp}(2002)}]{R02NATURE}%
  \BibitemOpen
  \bibfield  {author} {\bibinfo {author} {\bibfnamefont {T.}~\bibnamefont
  {Rudolp}},\ }\bibfield  {title} {\bibinfo {title} {The laws of physics and
  cryptographic security}} (\bibinfo {year} {2002}),\ \bibinfo {note} {preprint
  at \url{http://arxiv.org/abs/quant-ph/0202143}}\BibitemShut {NoStop}%
\bibitem [{\citenamefont {Colbeck}\ and\ \citenamefont {Kent}(2006)}]{CK06}%
  \BibitemOpen
  \bibfield  {author} {\bibinfo {author} {\bibfnamefont {R.}~\bibnamefont
  {Colbeck}}\ and\ \bibinfo {author} {\bibfnamefont {A.}~\bibnamefont {Kent}},\
  }\bibfield  {title} {\bibinfo {title} {Variable-bias coin tossing},\ }\href
  {https://doi.org/10.1103/PhysRevA.73.032320} {\bibfield  {journal} {\bibinfo
  {journal} {Phys. Rev. A}\ }\textbf {\bibinfo {volume} {73}},\ \bibinfo
  {pages} {032320} (\bibinfo {year} {2006})}\BibitemShut {NoStop}%
\bibitem [{\citenamefont {Colbeck}(2007)}]{C07}%
  \BibitemOpen
  \bibfield  {author} {\bibinfo {author} {\bibfnamefont {R.}~\bibnamefont
  {Colbeck}},\ }\bibfield  {title} {\bibinfo {title} {Impossibility of secure
  two-party classical computation},\ }\href
  {https://doi.org/10.1103/PhysRevA.76.062308} {\bibfield  {journal} {\bibinfo
  {journal} {Phys. Rev. A}\ }\textbf {\bibinfo {volume} {76}},\ \bibinfo
  {pages} {062308} (\bibinfo {year} {2007})}\BibitemShut {NoStop}%
\bibitem [{\citenamefont {Buhrman}\ \emph {et~al.}(2012)\citenamefont
  {Buhrman}, \citenamefont {Christandl},\ and\ \citenamefont
  {Schaffner}}]{BCS12}%
  \BibitemOpen
  \bibfield  {author} {\bibinfo {author} {\bibfnamefont {H.}~\bibnamefont
  {Buhrman}}, \bibinfo {author} {\bibfnamefont {M.}~\bibnamefont
  {Christandl}},\ and\ \bibinfo {author} {\bibfnamefont {C.}~\bibnamefont
  {Schaffner}},\ }\bibfield  {title} {\bibinfo {title} {Complete insecurity of
  quantum protocols for classical two-party computation},\ }\href
  {https://doi.org/10.1103/PhysRevLett.109.160501} {\bibfield  {journal}
  {\bibinfo  {journal} {Phys. Rev. Lett.}\ }\textbf {\bibinfo {volume} {109}},\
  \bibinfo {pages} {160501} (\bibinfo {year} {2012})}\BibitemShut {NoStop}%
\bibitem [{\citenamefont {Damgard}\ \emph {et~al.}(2008)\citenamefont
  {Damgard}, \citenamefont {Fehr}, \citenamefont {Salvail},\ and\ \citenamefont
  {Schaffner}}]{DFSS08}%
  \BibitemOpen
  \bibfield  {author} {\bibinfo {author} {\bibfnamefont {I.}~\bibnamefont
  {Damgard}}, \bibinfo {author} {\bibfnamefont {S.}~\bibnamefont {Fehr}},
  \bibinfo {author} {\bibfnamefont {L.}~\bibnamefont {Salvail}},\ and\ \bibinfo
  {author} {\bibfnamefont {C.}~\bibnamefont {Schaffner}},\ }\bibfield  {title}
  {\bibinfo {title} {Cryptography in the bounded-quantum-storage model},\
  }\href {http://epubs.siam.org/doi/abs/10.1137/060651343} {\bibfield
  {journal} {\bibinfo  {journal} {SIAM J. Comput.}\ }\textbf {\bibinfo {volume}
  {37}},\ \bibinfo {pages} {1865} (\bibinfo {year} {2008})}\BibitemShut
  {NoStop}%
\bibitem [{\citenamefont {Wehner}\ \emph {et~al.}(2008)\citenamefont {Wehner},
  \citenamefont {Schaffner},\ and\ \citenamefont {Terhal}}]{WST08}%
  \BibitemOpen
  \bibfield  {author} {\bibinfo {author} {\bibfnamefont {S.}~\bibnamefont
  {Wehner}}, \bibinfo {author} {\bibfnamefont {C.}~\bibnamefont {Schaffner}},\
  and\ \bibinfo {author} {\bibfnamefont {B.~M.}\ \bibnamefont {Terhal}},\
  }\bibfield  {title} {\bibinfo {title} {Cryptography from noisy storage},\
  }\href {https://doi.org/10.1103/PhysRevLett.100.220502} {\bibfield  {journal}
  {\bibinfo  {journal} {Phys. Rev. Lett.}\ }\textbf {\bibinfo {volume} {100}},\
  \bibinfo {pages} {220502} (\bibinfo {year} {2008})}\BibitemShut {NoStop}%
\bibitem [{\citenamefont {Ng}\ \emph {et~al.}(2012)\citenamefont {Ng},
  \citenamefont {Joshi}, \citenamefont {{Chen Ming}}, \citenamefont
  {Kurtsiefer},\ and\ \citenamefont {Wehner}}]{NJMKW12}%
  \BibitemOpen
  \bibfield  {author} {\bibinfo {author} {\bibfnamefont {N.}~\bibnamefont
  {Ng}}, \bibinfo {author} {\bibfnamefont {S.}~\bibnamefont {Joshi}}, \bibinfo
  {author} {\bibfnamefont {C.}~\bibnamefont {{Chen Ming}}}, \bibinfo {author}
  {\bibfnamefont {C.}~\bibnamefont {Kurtsiefer}},\ and\ \bibinfo {author}
  {\bibfnamefont {S.}~\bibnamefont {Wehner}},\ }\bibfield  {title} {\bibinfo
  {title} {Experimental implementation of bit commitment in the noisy-storage
  model},\ }\href {https://doi.org/10.1038/ncomms2268} {\bibfield  {journal}
  {\bibinfo  {journal} {Nat. Commun.}\ }\textbf {\bibinfo {volume} {3}},\
  \bibinfo {pages} {1326} (\bibinfo {year} {2012})}\BibitemShut {NoStop}%
\bibitem [{\citenamefont {Erven}\ \emph {et~al.}(2014)\citenamefont {Erven},
  \citenamefont {Ng}, \citenamefont {Gigov}, \citenamefont {Laflamme},
  \citenamefont {Wehner},\ and\ \citenamefont {Weihs}}]{ENGLWW14}%
  \BibitemOpen
  \bibfield  {author} {\bibinfo {author} {\bibfnamefont {C.}~\bibnamefont
  {Erven}}, \bibinfo {author} {\bibfnamefont {N.}~\bibnamefont {Ng}}, \bibinfo
  {author} {\bibfnamefont {N.}~\bibnamefont {Gigov}}, \bibinfo {author}
  {\bibfnamefont {R.}~\bibnamefont {Laflamme}}, \bibinfo {author}
  {\bibfnamefont {S.}~\bibnamefont {Wehner}},\ and\ \bibinfo {author}
  {\bibfnamefont {G.}~\bibnamefont {Weihs}},\ }\bibfield  {title} {\bibinfo
  {title} {An experimental implementation of oblivious transfer in the noisy
  storage mode},\ }\href {https://doi.org/10.1038/ncomms4418} {\bibfield
  {journal} {\bibinfo  {journal} {Nat. Commun.}\ }\textbf {\bibinfo {volume}
  {5}},\ \bibinfo {pages} {3418} (\bibinfo {year} {2014})}\BibitemShut
  {NoStop}%
\bibitem [{\citenamefont {Kent}(2011{\natexlab{c}})}]{K11.3}%
  \BibitemOpen
  \bibfield  {author} {\bibinfo {author} {\bibfnamefont {A.}~\bibnamefont
  {Kent}},\ }\bibfield  {title} {\bibinfo {title} {Location-oblivious data
  transfer with flying entangled qudits},\ }\href
  {https://doi.org/10.1103/PhysRevA.84.012328} {\bibfield  {journal} {\bibinfo
  {journal} {Physical Review A}\ }\textbf {\bibinfo {volume} {84(1)}},\
  \bibinfo {pages} {012328} (\bibinfo {year} {2011}{\natexlab{c}})}\BibitemShut
  {NoStop}%
\bibitem [{\citenamefont {Kent}(2012{\natexlab{b}})}]{K12.1}%
  \BibitemOpen
  \bibfield  {author} {\bibinfo {author} {\bibfnamefont {A.}~\bibnamefont
  {Kent}},\ }\bibfield  {title} {\bibinfo {title} {Quantum tasks in {M}inkowski
  space},\ }\href {http://stacks.iop.org/0264-9381/29/i=22/a=224013} {\bibfield
   {journal} {\bibinfo  {journal} {Class. Quantum Grav.}\ }\textbf {\bibinfo
  {volume} {29}},\ \bibinfo {pages} {224013} (\bibinfo {year}
  {2012}{\natexlab{b}})}\BibitemShut {NoStop}%
\bibitem [{\citenamefont {Kent}(2013)}]{K13}%
  \BibitemOpen
  \bibfield  {author} {\bibinfo {author} {\bibfnamefont {A.}~\bibnamefont
  {Kent}},\ }\bibfield  {title} {\bibinfo {title} {A no-summoning theorem in
  relativistic quantum theory},\ }\href
  {http://link.springer.com/article/10.1007%2Fs11128-012-0431-6} {\bibfield
  {journal} {\bibinfo  {journal} {Quantum Information Processing}\ }\textbf
  {\bibinfo {volume} {12(2)}},\ \bibinfo {pages} {1023} (\bibinfo {year}
  {2013})},\ \bibinfo {note} {e-print arXiv:1101.4612}\BibitemShut {NoStop}%
\bibitem [{\citenamefont {Hayden}\ and\ \citenamefont {May}(2016)}]{HM16}%
  \BibitemOpen
  \bibfield  {author} {\bibinfo {author} {\bibfnamefont {P.}~\bibnamefont
  {Hayden}}\ and\ \bibinfo {author} {\bibfnamefont {A.}~\bibnamefont {May}},\
  }\bibfield  {title} {\bibinfo {title} {Summoning information in spacetime, or
  where and when can a qubit be?},\ }\href
  {https://doi.org/10.1088/1751-8113/49/17/175304} {\bibfield  {journal}
  {\bibinfo  {journal} {J. Phys. A: Math. Theor.}\ }\textbf {\bibinfo {volume}
  {49}},\ \bibinfo {pages} {175304} (\bibinfo {year} {2016})}\BibitemShut
  {NoStop}%
\bibitem [{\citenamefont {Adlam}\ and\ \citenamefont {Kent}(2016)}]{AK15.3}%
  \BibitemOpen
  \bibfield  {author} {\bibinfo {author} {\bibfnamefont {E.}~\bibnamefont
  {Adlam}}\ and\ \bibinfo {author} {\bibfnamefont {A.}~\bibnamefont {Kent}},\
  }\bibfield  {title} {\bibinfo {title} {Quantum paradox of choice: More
  freedom makes summoning a quantum state harder},\ }\href
  {https://doi.org/10.1103/PhysRevA.93.062327} {\bibfield  {journal} {\bibinfo
  {journal} {Phys. Rev. A}\ }\textbf {\bibinfo {volume} {93}},\ \bibinfo
  {pages} {062327} (\bibinfo {year} {2016})}\BibitemShut {NoStop}%
\bibitem [{\citenamefont {Beaudry}\ \emph {et~al.}(2008)\citenamefont
  {Beaudry}, \citenamefont {Moroder},\ and\ \citenamefont
  {L\"utkenhaus}}]{BML08}%
  \BibitemOpen
  \bibfield  {author} {\bibinfo {author} {\bibfnamefont {N.~J.}\ \bibnamefont
  {Beaudry}}, \bibinfo {author} {\bibfnamefont {T.}~\bibnamefont {Moroder}},\
  and\ \bibinfo {author} {\bibfnamefont {N.}~\bibnamefont {L\"utkenhaus}},\
  }\bibfield  {title} {\bibinfo {title} {Squashing models for optical
  measurements in quantum communication},\ }\href
  {https://doi.org/10.1103/PhysRevLett.101.093601} {\bibfield  {journal}
  {\bibinfo  {journal} {Phys. Rev. Lett.}\ }\textbf {\bibinfo {volume} {101}},\
  \bibinfo {pages} {093601} (\bibinfo {year} {2008})}\BibitemShut {NoStop}%
\bibitem [{\citenamefont {Zhang}\ \emph {et~al.}(2021)\citenamefont {Zhang},
  \citenamefont {Coles}, \citenamefont {Winick}, \citenamefont {Lin},\ and\
  \citenamefont {L\"utkenhaus}}]{ZCWLL21}%
  \BibitemOpen
  \bibfield  {author} {\bibinfo {author} {\bibfnamefont {Y.}~\bibnamefont
  {Zhang}}, \bibinfo {author} {\bibfnamefont {P.~J.}\ \bibnamefont {Coles}},
  \bibinfo {author} {\bibfnamefont {A.}~\bibnamefont {Winick}}, \bibinfo
  {author} {\bibfnamefont {J.}~\bibnamefont {Lin}},\ and\ \bibinfo {author}
  {\bibfnamefont {N.}~\bibnamefont {L\"utkenhaus}},\ }\bibfield  {title}
  {\bibinfo {title} {Security proof of practical quantum key distribution with
  detection-efficiency mismatch},\ }\href
  {https://doi.org/10.1103/PhysRevResearch.3.013076} {\bibfield  {journal}
  {\bibinfo  {journal} {Phys. Rev. Research}\ }\textbf {\bibinfo {volume}
  {3}},\ \bibinfo {pages} {013076} (\bibinfo {year} {2021})}\BibitemShut
  {NoStop}%
\bibitem [{\citenamefont {Tull}(2020)}]{T20}%
  \BibitemOpen
  \bibfield  {author} {\bibinfo {author} {\bibfnamefont {S.}~\bibnamefont
  {Tull}},\ }\bibfield  {title} {\bibinfo {title} {A categorical reconstruction
  of quantum theory},\ }\bibfield  {journal} {\bibinfo  {journal} {Logical
  Methods in Computer Science}\ }\textbf {\bibinfo {volume} {16}},\ \href
  {https://doi.org/10.23638/LMCS-16(1:4)2020} {10.23638/LMCS-16(1:4)2020}
  (\bibinfo {year} {2020})\BibitemShut {NoStop}%
\bibitem [{\citenamefont {Lo}\ \emph {et~al.}(2012)\citenamefont {Lo},
  \citenamefont {Curty},\ and\ \citenamefont {Qi}}]{LCQ12}%
  \BibitemOpen
  \bibfield  {author} {\bibinfo {author} {\bibfnamefont {H.-K.}\ \bibnamefont
  {Lo}}, \bibinfo {author} {\bibfnamefont {M.}~\bibnamefont {Curty}},\ and\
  \bibinfo {author} {\bibfnamefont {B.}~\bibnamefont {Qi}},\ }\bibfield
  {title} {\bibinfo {title} {Measurement-device-independent quantum key
  distribution},\ }\href {https://doi.org/10.1103/PhysRevLett.108.130503}
  {\bibfield  {journal} {\bibinfo  {journal} {Phys. Rev. Lett.}\ }\textbf
  {\bibinfo {volume} {108}},\ \bibinfo {pages} {130503} (\bibinfo {year}
  {2012})}\BibitemShut {NoStop}%
\bibitem [{\citenamefont {Bennett}\ and\ \citenamefont
  {Brassard}(1984)}]{BB84}%
  \BibitemOpen
  \bibfield  {author} {\bibinfo {author} {\bibfnamefont {C.~H.}\ \bibnamefont
  {Bennett}}\ and\ \bibinfo {author} {\bibfnamefont {G.}~\bibnamefont
  {Brassard}},\ }\bibfield  {title} {\bibinfo {title} {Quantum cryptography:
  {P}ublic key distribution and coin tossing},\ }in\ \href@noop {} {\emph
  {\bibinfo {booktitle} {Proceedings of IEEE International Conference on
  Computers, Systems, and Signal Processing, Bangalore, India}}}\ (\bibinfo
  {publisher} {IEEE},\ \bibinfo {address} {New York},\ \bibinfo {year} {1984})\
  pp.\ \bibinfo {pages} {175--179}\BibitemShut {NoStop}%
\bibitem [{\citenamefont {Michler}(2009)}]{Michlerbook}%
  \BibitemOpen
  \bibfield  {author} {\bibinfo {author} {\bibfnamefont {P.}~\bibnamefont
  {Michler}},\ }\href@noop {} {\emph {\bibinfo {title} {Single semiconductor
  quantum dots}}}\ (\bibinfo  {publisher} {NanoScience and Technology, Springer
  Berlin},\ \bibinfo {year} {2009})\BibitemShut {NoStop}%
\bibitem [{\citenamefont {Senellart}\ \emph {et~al.}(2017)\citenamefont
  {Senellart}, \citenamefont {Solomon},\ and\ \citenamefont {White}}]{SSW17}%
  \BibitemOpen
  \bibfield  {author} {\bibinfo {author} {\bibfnamefont {P.}~\bibnamefont
  {Senellart}}, \bibinfo {author} {\bibfnamefont {G.}~\bibnamefont {Solomon}},\
  and\ \bibinfo {author} {\bibfnamefont {A.}~\bibnamefont {White}},\ }\bibfield
   {title} {\bibinfo {title} {High{-}performance semiconductor quantum{-}dot
  single{-}photon sources},\ }\href {https://doi.org/10.1038/nnano.2017.218}
  {\bibfield  {journal} {\bibinfo  {journal} {Nature Nanotech.}\ }\textbf
  {\bibinfo {volume} {12}},\ \bibinfo {pages} {1026} (\bibinfo {year}
  {2017})}\BibitemShut {NoStop}%
\bibitem [{\citenamefont {Huttner}\ \emph {et~al.}(1995)\citenamefont
  {Huttner}, \citenamefont {Imoto}, \citenamefont {Gisin},\ and\ \citenamefont
  {Mor}}]{HIGM95}%
  \BibitemOpen
  \bibfield  {author} {\bibinfo {author} {\bibfnamefont {B.}~\bibnamefont
  {Huttner}}, \bibinfo {author} {\bibfnamefont {N.}~\bibnamefont {Imoto}},
  \bibinfo {author} {\bibfnamefont {N.}~\bibnamefont {Gisin}},\ and\ \bibinfo
  {author} {\bibfnamefont {T.}~\bibnamefont {Mor}},\ }\bibfield  {title}
  {\bibinfo {title} {Quantum cryptography with coherent states},\ }\href
  {https://doi.org/10.1103/PhysRevA.51.1863} {\bibfield  {journal} {\bibinfo
  {journal} {Phys. Rev. A}\ }\textbf {\bibinfo {volume} {51}},\ \bibinfo
  {pages} {1863} (\bibinfo {year} {1995})}\BibitemShut {NoStop}%
\bibitem [{\citenamefont {Hwang}(2003)}]{WY03}%
  \BibitemOpen
  \bibfield  {author} {\bibinfo {author} {\bibfnamefont {W.-Y.}\ \bibnamefont
  {Hwang}},\ }\bibfield  {title} {\bibinfo {title} {Quantum key distribution
  with high loss: Toward global secure communication},\ }\href
  {https://doi.org/10.1103/PhysRevLett.91.057901} {\bibfield  {journal}
  {\bibinfo  {journal} {Phys. Rev. Lett.}\ }\textbf {\bibinfo {volume} {91}},\
  \bibinfo {pages} {057901} (\bibinfo {year} {2003})}\BibitemShut {NoStop}%
\bibitem [{\citenamefont {Wehner}\ \emph {et~al.}(2010)\citenamefont {Wehner},
  \citenamefont {Curty}, \citenamefont {Schaffner},\ and\ \citenamefont
  {Lo}}]{WCSL10}%
  \BibitemOpen
  \bibfield  {author} {\bibinfo {author} {\bibfnamefont {S.}~\bibnamefont
  {Wehner}}, \bibinfo {author} {\bibfnamefont {M.}~\bibnamefont {Curty}},
  \bibinfo {author} {\bibfnamefont {C.}~\bibnamefont {Schaffner}},\ and\
  \bibinfo {author} {\bibfnamefont {H.-K.}\ \bibnamefont {Lo}},\ }\bibfield
  {title} {\bibinfo {title} {Implementation of two-party protocols in the
  noisy-storage model},\ }\href {https://doi.org/10.1103/PhysRevA.81.052336}
  {\bibfield  {journal} {\bibinfo  {journal} {Phys. Rev. A}\ }\textbf {\bibinfo
  {volume} {81}},\ \bibinfo {pages} {052336} (\bibinfo {year}
  {2010})}\BibitemShut {NoStop}%
\bibitem [{\citenamefont {Achilles}\ \emph {et~al.}(2003)\citenamefont
  {Achilles}, \citenamefont {Silberhorn}, \citenamefont {\'{S}liwa},
  \citenamefont {Banaszek},\ and\ \citenamefont {Walmsley}}]{ASSBW03}%
  \BibitemOpen
  \bibfield  {author} {\bibinfo {author} {\bibfnamefont {D.}~\bibnamefont
  {Achilles}}, \bibinfo {author} {\bibfnamefont {C.}~\bibnamefont
  {Silberhorn}}, \bibinfo {author} {\bibfnamefont {C.}~\bibnamefont
  {\'{S}liwa}}, \bibinfo {author} {\bibfnamefont {K.}~\bibnamefont
  {Banaszek}},\ and\ \bibinfo {author} {\bibfnamefont {I.~A.}\ \bibnamefont
  {Walmsley}},\ }\bibfield  {title} {\bibinfo {title} {Fiber-assisted detection
  with photon number resolution},\ }\href
  {https://doi.org/10.1364/OL.28.002387} {\bibfield  {journal} {\bibinfo
  {journal} {Opt. Lett.}\ }\textbf {\bibinfo {volume} {28}},\ \bibinfo {pages}
  {2387} (\bibinfo {year} {2003})}\BibitemShut {NoStop}%
\bibitem [{\citenamefont {Fitch}\ \emph {et~al.}(2003)\citenamefont {Fitch},
  \citenamefont {Jacobs}, \citenamefont {Pittman},\ and\ \citenamefont
  {Franson}}]{FJPF03}%
  \BibitemOpen
  \bibfield  {author} {\bibinfo {author} {\bibfnamefont {M.~J.}\ \bibnamefont
  {Fitch}}, \bibinfo {author} {\bibfnamefont {B.~C.}\ \bibnamefont {Jacobs}},
  \bibinfo {author} {\bibfnamefont {T.~B.}\ \bibnamefont {Pittman}},\ and\
  \bibinfo {author} {\bibfnamefont {J.~D.}\ \bibnamefont {Franson}},\
  }\bibfield  {title} {\bibinfo {title} {Photon-number resolution using
  time-multiplexed single-photon detectors},\ }\href
  {https://doi.org/10.1103/PhysRevA.68.043814} {\bibfield  {journal} {\bibinfo
  {journal} {Phys. Rev. A}\ }\textbf {\bibinfo {volume} {68}},\ \bibinfo
  {pages} {043814} (\bibinfo {year} {2003})}\BibitemShut {NoStop}%
\bibitem [{\citenamefont {Kardynał}\ \emph {et~al.}(2008)\citenamefont
  {Kardynał}, \citenamefont {Yuan},\ and\ \citenamefont {Shields}}]{KYS08}%
  \BibitemOpen
  \bibfield  {author} {\bibinfo {author} {\bibfnamefont {B.~E.}\ \bibnamefont
  {Kardynał}}, \bibinfo {author} {\bibfnamefont {Z.~L.}\ \bibnamefont
  {Yuan}},\ and\ \bibinfo {author} {\bibfnamefont {A.~J.}\ \bibnamefont
  {Shields}},\ }\bibfield  {title} {\bibinfo {title} {An
  avalanche‐photodiode{-}based photon-number-resolving detector},\ }\href
  {https://www.nature.com/articles/nphoton.2008.101} {\bibfield  {journal}
  {\bibinfo  {journal} {Nature Photon.}\ }\textbf {\bibinfo {volume} {2}},\
  \bibinfo {pages} {425} (\bibinfo {year} {2008})}\BibitemShut {NoStop}%
\bibitem [{\citenamefont {Liu}\ \emph {et~al.}(2021)\citenamefont {Liu},
  \citenamefont {Li}, \citenamefont {Ragy}, \citenamefont {Zhao}, \citenamefont
  {Bai}, \citenamefont {Liu}, \citenamefont {Brown}, \citenamefont {Zhang},
  \citenamefont {Colbeck}, \citenamefont {Fan}, \citenamefont {Zhang},\ and\
  \citenamefont {Pan}}]{LLRZBLBZCFZP21}%
  \BibitemOpen
  \bibfield  {author} {\bibinfo {author} {\bibfnamefont {W.-Z.}\ \bibnamefont
  {Liu}}, \bibinfo {author} {\bibfnamefont {M.-H.}\ \bibnamefont {Li}},
  \bibinfo {author} {\bibfnamefont {S.}~\bibnamefont {Ragy}}, \bibinfo {author}
  {\bibfnamefont {S.-R.}\ \bibnamefont {Zhao}}, \bibinfo {author}
  {\bibfnamefont {B.}~\bibnamefont {Bai}}, \bibinfo {author} {\bibfnamefont
  {Y.}~\bibnamefont {Liu}}, \bibinfo {author} {\bibfnamefont {P.~J.}\
  \bibnamefont {Brown}}, \bibinfo {author} {\bibfnamefont {J.}~\bibnamefont
  {Zhang}}, \bibinfo {author} {\bibfnamefont {R.}~\bibnamefont {Colbeck}},
  \bibinfo {author} {\bibfnamefont {J.}~\bibnamefont {Fan}}, \bibinfo {author}
  {\bibfnamefont {Q.}~\bibnamefont {Zhang}},\ and\ \bibinfo {author}
  {\bibfnamefont {J.-W.}\ \bibnamefont {Pan}},\ }\bibfield  {title} {\bibinfo
  {title} {Device{-}independent randomness expansion against quantum side
  information},\ }\href {https://doi.org/10.1038/s41567-020-01147-2} {\bibfield
   {journal} {\bibinfo  {journal} {Nature Phys.}\ }\textbf {\bibinfo {volume}
  {17}},\ \bibinfo {pages} {448–451} (\bibinfo {year} {2021})}\BibitemShut
  {NoStop}%
\bibitem [{\citenamefont {Zhao}\ \emph {et~al.}(2015)\citenamefont {Zhao},
  \citenamefont {Yin}, \citenamefont {Wang}, \citenamefont {Chen},
  \citenamefont {Chen}, \citenamefont {Guo},\ and\ \citenamefont
  {Han}}]{ZYWCCGH15}%
  \BibitemOpen
  \bibfield  {author} {\bibinfo {author} {\bibfnamefont {L.}~\bibnamefont
  {Zhao}}, \bibinfo {author} {\bibfnamefont {Z.}~\bibnamefont {Yin}}, \bibinfo
  {author} {\bibfnamefont {S.}~\bibnamefont {Wang}}, \bibinfo {author}
  {\bibfnamefont {W.}~\bibnamefont {Chen}}, \bibinfo {author} {\bibfnamefont
  {H.}~\bibnamefont {Chen}}, \bibinfo {author} {\bibfnamefont {G.}~\bibnamefont
  {Guo}},\ and\ \bibinfo {author} {\bibfnamefont {Z.}~\bibnamefont {Han}},\
  }\bibfield  {title} {\bibinfo {title} {Measurement-device-independent quantum
  coin tossing},\ }\href {https://doi.org/10.1103/PhysRevA.92.062327}
  {\bibfield  {journal} {\bibinfo  {journal} {Phys. Rev. A}\ }\textbf {\bibinfo
  {volume} {92}},\ \bibinfo {pages} {062327} (\bibinfo {year}
  {2015})}\BibitemShut {NoStop}%
\bibitem [{\citenamefont {Huang}\ \emph {et~al.}(2018)\citenamefont {Huang},
  \citenamefont {Barz}, \citenamefont {Andersson},\ and\ \citenamefont
  {Makarov}}]{HBAM18}%
  \BibitemOpen
  \bibfield  {author} {\bibinfo {author} {\bibfnamefont {A.}~\bibnamefont
  {Huang}}, \bibinfo {author} {\bibfnamefont {S.}~\bibnamefont {Barz}},
  \bibinfo {author} {\bibfnamefont {E.}~\bibnamefont {Andersson}},\ and\
  \bibinfo {author} {\bibfnamefont {V.}~\bibnamefont {Makarov}},\ }\bibfield
  {title} {\bibinfo {title} {Implementation vulnerabilities in general quantum
  cryptography},\ }\href {https://doi.org/10.1088/1367-2630/aade06} {\bibfield
  {journal} {\bibinfo  {journal} {New J. Phys.}\ }\textbf {\bibinfo {volume}
  {20}},\ \bibinfo {pages} {103016} (\bibinfo {year} {2018})}\BibitemShut
  {NoStop}%
\bibitem [{\citenamefont {Lo}\ and\ \citenamefont {Chau}(1999)}]{LC99}%
  \BibitemOpen
  \bibfield  {author} {\bibinfo {author} {\bibfnamefont {H.-K.}\ \bibnamefont
  {Lo}}\ and\ \bibinfo {author} {\bibfnamefont {H.~F.}\ \bibnamefont {Chau}},\
  }\bibfield  {title} {\bibinfo {title} {Unconditional security of quantum key
  distribution over arbitrarily long distances},\ }\href
  {https://doi.org/10.1126/science.283.5410.2050} {\bibfield  {journal}
  {\bibinfo  {journal} {Science}\ }\textbf {\bibinfo {volume} {283}},\ \bibinfo
  {pages} {2050} (\bibinfo {year} {1999})}\BibitemShut {NoStop}%
\bibitem [{\citenamefont {Navarrete}\ \emph {et~al.}(2021)\citenamefont
  {Navarrete}, \citenamefont {Pereira}, \citenamefont {Curty},\ and\
  \citenamefont {Tamaki}}]{NPCT21}%
  \BibitemOpen
  \bibfield  {author} {\bibinfo {author} {\bibfnamefont {A.}~\bibnamefont
  {Navarrete}}, \bibinfo {author} {\bibfnamefont {M.}~\bibnamefont {Pereira}},
  \bibinfo {author} {\bibfnamefont {M.}~\bibnamefont {Curty}},\ and\ \bibinfo
  {author} {\bibfnamefont {K.}~\bibnamefont {Tamaki}},\ }\bibfield  {title}
  {\bibinfo {title} {Practical quantum key distribution that is secure against
  side channels},\ }\href {https://doi.org/10.1103/PhysRevApplied.15.034072}
  {\bibfield  {journal} {\bibinfo  {journal} {Phys. Rev. Applied}\ }\textbf
  {\bibinfo {volume} {15}},\ \bibinfo {pages} {034072} (\bibinfo {year}
  {2021})}\BibitemShut {NoStop}%
\bibitem [{\citenamefont {Barrett}\ \emph {et~al.}(2005)\citenamefont
  {Barrett}, \citenamefont {Hardy},\ and\ \citenamefont {Kent}}]{BHK05}%
  \BibitemOpen
  \bibfield  {author} {\bibinfo {author} {\bibfnamefont {J.}~\bibnamefont
  {Barrett}}, \bibinfo {author} {\bibfnamefont {L.}~\bibnamefont {Hardy}},\
  and\ \bibinfo {author} {\bibfnamefont {A.}~\bibnamefont {Kent}},\ }\bibfield
  {title} {\bibinfo {title} {No signaling and quantum key distribution},\
  }\href {http://link.aps.org/doi/10.1103/PhysRevLett.95.010503} {\bibfield
  {journal} {\bibinfo  {journal} {Physical Review Letters}\ }\textbf {\bibinfo
  {volume} {95(1)}},\ \bibinfo {pages} {010503} (\bibinfo {year}
  {2005})}\BibitemShut {NoStop}%
\bibitem [{\citenamefont {Kundu}\ \emph {et~al.}(2020)\citenamefont {Kundu},
  \citenamefont {Sikora},\ and\ \citenamefont {Tan}}]{KST20}%
  \BibitemOpen
  \bibfield  {author} {\bibinfo {author} {\bibfnamefont {S.}~\bibnamefont
  {Kundu}}, \bibinfo {author} {\bibfnamefont {J.}~\bibnamefont {Sikora}},\ and\
  \bibinfo {author} {\bibfnamefont {E.~Y.-Z.}\ \bibnamefont {Tan}},\ }\bibfield
   {title} {\bibinfo {title} {A device{-}independent protocol for xor oblivious
  transfer},\ }\Eprint {https://arxiv.org/abs/2006.06671} {arXiv:2006.06671}
  (\bibinfo {year} {2020})\BibitemShut {NoStop}%
\bibitem [{\citenamefont {Pearle}(1970)}]{Pearle70}%
  \BibitemOpen
  \bibfield  {author} {\bibinfo {author} {\bibfnamefont {P.~M.}\ \bibnamefont
  {Pearle}},\ }\bibfield  {title} {\bibinfo {title} {Hidden{-}variable example
  based upon data rejection},\ }\href {https://doi.org/10.1103/PhysRevD.2.1418}
  {\bibfield  {journal} {\bibinfo  {journal} {Phys. Rev. D}\ }\textbf {\bibinfo
  {volume} {2}},\ \bibinfo {pages} {1418} (\bibinfo {year} {1970})}\BibitemShut
  {NoStop}%
\bibitem [{\citenamefont {Kent}(2005{\natexlab{b}})}]{K05}%
  \BibitemOpen
  \bibfield  {author} {\bibinfo {author} {\bibfnamefont {A.}~\bibnamefont
  {Kent}},\ }\bibfield  {title} {\bibinfo {title} {Causal quantum theory and
  the collapse locality loophole},\ }\href
  {https://doi.org/10.1103/PhysRevA.72.012107} {\bibfield  {journal} {\bibinfo
  {journal} {Phys. Rev. A}\ }\textbf {\bibinfo {volume} {72}},\ \bibinfo
  {pages} {012107} (\bibinfo {year} {2005}{\natexlab{b}})}\BibitemShut
  {NoStop}%
\bibitem [{\citenamefont {Hensen}\ \emph {et~al.}(2015)\citenamefont {Hensen},
  \citenamefont {Bernien}, \citenamefont {Dr\'eau}, \citenamefont {Reiserer},
  \citenamefont {Kalb}, \citenamefont {Blok}, \citenamefont {Ruitenberg},
  \citenamefont {Vermeulen}, \citenamefont {Schouten}, \citenamefont
  {Abell\'an}, \citenamefont {Amaya}, \citenamefont {Pruneri}, \citenamefont
  {Mitchell}, \citenamefont {Markham}, \citenamefont {Twitchen}, \citenamefont
  {Elkouss}, \citenamefont {Wehner}, \citenamefont {Taminiau},\ and\
  \citenamefont {Hanson}}]{HBDRKBRVSAAPMMTEWTH15}%
  \BibitemOpen
  \bibfield  {author} {\bibinfo {author} {\bibfnamefont {B.}~\bibnamefont
  {Hensen}}, \bibinfo {author} {\bibfnamefont {H.}~\bibnamefont {Bernien}},
  \bibinfo {author} {\bibfnamefont {A.~E.}\ \bibnamefont {Dr\'eau}}, \bibinfo
  {author} {\bibfnamefont {A.}~\bibnamefont {Reiserer}}, \bibinfo {author}
  {\bibfnamefont {N.}~\bibnamefont {Kalb}}, \bibinfo {author} {\bibfnamefont
  {M.~S.}\ \bibnamefont {Blok}}, \bibinfo {author} {\bibfnamefont
  {J.}~\bibnamefont {Ruitenberg}}, \bibinfo {author} {\bibfnamefont {R.~F.~L.}\
  \bibnamefont {Vermeulen}}, \bibinfo {author} {\bibfnamefont {R.~N.}\
  \bibnamefont {Schouten}}, \bibinfo {author} {\bibfnamefont {C.}~\bibnamefont
  {Abell\'an}}, \bibinfo {author} {\bibfnamefont {W.}~\bibnamefont {Amaya}},
  \bibinfo {author} {\bibfnamefont {V.}~\bibnamefont {Pruneri}}, \bibinfo
  {author} {\bibfnamefont {M.~W.}\ \bibnamefont {Mitchell}}, \bibinfo {author}
  {\bibfnamefont {M.}~\bibnamefont {Markham}}, \bibinfo {author} {\bibfnamefont
  {D.~J.}\ \bibnamefont {Twitchen}}, \bibinfo {author} {\bibfnamefont
  {D.}~\bibnamefont {Elkouss}}, \bibinfo {author} {\bibfnamefont
  {S.}~\bibnamefont {Wehner}}, \bibinfo {author} {\bibfnamefont {T.~H.}\
  \bibnamefont {Taminiau}},\ and\ \bibinfo {author} {\bibfnamefont
  {R.}~\bibnamefont {Hanson}},\ }\bibfield  {title} {\bibinfo {title}
  {Loophole-free {B}ell inequality violation using electron spins separated by
  1.3 kilometres},\ }\href {https://doi.org/10.1038/nature15759} {\bibfield
  {journal} {\bibinfo  {journal} {Nature}\ }\textbf {\bibinfo {volume} {526}},\
  \bibinfo {pages} {682} (\bibinfo {year} {2015})}\BibitemShut {NoStop}%
\bibitem [{\citenamefont {Kent}(2020)}]{K20}%
  \BibitemOpen
  \bibfield  {author} {\bibinfo {author} {\bibfnamefont {A.}~\bibnamefont
  {Kent}},\ }\bibfield  {title} {\bibinfo {title} {Stronger tests of the
  collapse{-}locality loophole in {B}ell experiments},\ }\href
  {https://doi.org/10.1103/PhysRevA.101.012102} {\bibfield  {journal} {\bibinfo
   {journal} {Phys. Rev. A}\ }\textbf {\bibinfo {volume} {101}},\ \bibinfo
  {pages} {012102} (\bibinfo {year} {2020})}\BibitemShut {NoStop}%
\bibitem [{\citenamefont {Kilian}(1988)}]{K88}%
  \BibitemOpen
  \bibfield  {author} {\bibinfo {author} {\bibfnamefont {J.}~\bibnamefont
  {Kilian}},\ }\bibfield  {title} {\bibinfo {title} {Founding crytpography on
  oblivious transfer},\ }in\ \href@noop {} {\emph {\bibinfo {booktitle}
  {Proceedings of the Twentieth Annual ACM Symposium on Theory of Computing,
  STOC '88}}}\ (\bibinfo  {publisher} {ACM},\ \bibinfo {address} {New York},\
  \bibinfo {year} {1988})\ pp.\ \bibinfo {pages} {20--31}\BibitemShut {NoStop}%
\bibitem [{\citenamefont {Aharonov}\ \emph {et~al.}(2000)\citenamefont
  {Aharonov}, \citenamefont {Ta-Shma}, \citenamefont {Vazirani},\ and\
  \citenamefont {A.Yao}}]{ATVY00}%
  \BibitemOpen
  \bibfield  {author} {\bibinfo {author} {\bibfnamefont {D.}~\bibnamefont
  {Aharonov}}, \bibinfo {author} {\bibfnamefont {A.}~\bibnamefont {Ta-Shma}},
  \bibinfo {author} {\bibfnamefont {U.}~\bibnamefont {Vazirani}},\ and\
  \bibinfo {author} {\bibnamefont {A.Yao}},\ }\bibfield  {title} {\bibinfo
  {title} {Quantum bit escrow},\ }in\ \href
  {https://doi.org/10.1145/335305.335404} {\emph {\bibinfo {booktitle}
  {Proceedings of the 32nd ACM Symposium on Theory of Computing}}}\ (\bibinfo
  {publisher} {ACM},\ \bibinfo {address} {New York, USA},\ \bibinfo {year}
  {2000})\ pp.\ \bibinfo {pages} {705--714}\BibitemShut {NoStop}%
\bibitem [{\citenamefont {Spekkens}\ and\ \citenamefont
  {Rudolph}(2002)}]{SR02}%
  \BibitemOpen
  \bibfield  {author} {\bibinfo {author} {\bibfnamefont {R.~W.}\ \bibnamefont
  {Spekkens}}\ and\ \bibinfo {author} {\bibfnamefont {T.}~\bibnamefont
  {Rudolph}},\ }\bibfield  {title} {\bibinfo {title} {Quantum protocol for
  cheat-sensitive weak coin flipping},\ }\href
  {https://doi.org/10.1103/PhysRevLett.89.227901} {\bibfield  {journal}
  {\bibinfo  {journal} {Phys. Rev. Lett.}\ }\textbf {\bibinfo {volume} {89}},\
  \bibinfo {pages} {227901} (\bibinfo {year} {2002})}\BibitemShut {NoStop}%
\bibitem [{\citenamefont {Nayak}\ and\ \citenamefont {Shor}(2003)}]{NS03}%
  \BibitemOpen
  \bibfield  {author} {\bibinfo {author} {\bibfnamefont {A.}~\bibnamefont
  {Nayak}}\ and\ \bibinfo {author} {\bibfnamefont {P.}~\bibnamefont {Shor}},\
  }\bibfield  {title} {\bibinfo {title} {Bit-commitment-based quantum coin
  flipping},\ }\href {https://doi.org/10.1103/PhysRevA.67.012304} {\bibfield
  {journal} {\bibinfo  {journal} {Phys. Rev. A}\ }\textbf {\bibinfo {volume}
  {67}},\ \bibinfo {pages} {012304} (\bibinfo {year} {2003})}\BibitemShut
  {NoStop}%
\bibitem [{\citenamefont {Ambainis}(2004)}]{A04}%
  \BibitemOpen
  \bibfield  {author} {\bibinfo {author} {\bibfnamefont {A.}~\bibnamefont
  {Ambainis}},\ }\bibfield  {title} {\bibinfo {title} {A new protocol and lower
  bounds for quantum coin flipping},\ }\href
  {https://doi.org/https://doi.org/10.1016/j.jcss.2003.07.010} {\bibfield
  {journal} {\bibinfo  {journal} {J. Comput. Syst. Sci.}\ }\textbf {\bibinfo
  {volume} {68}},\ \bibinfo {pages} {398} (\bibinfo {year} {2004})},\ \bibinfo
  {note} {special Issue on STOC 2001}\BibitemShut {NoStop}%
\bibitem [{\citenamefont {Chailloux}\ and\ \citenamefont
  {Kerenidis}(2009)}]{CK09}%
  \BibitemOpen
  \bibfield  {author} {\bibinfo {author} {\bibfnamefont {A.}~\bibnamefont
  {Chailloux}}\ and\ \bibinfo {author} {\bibfnamefont {I.}~\bibnamefont
  {Kerenidis}},\ }\bibfield  {title} {\bibinfo {title} {Optimal quantum strong
  coin flipping},\ }in\ \href@noop {} {\emph {\bibinfo {booktitle} {50th Annual
  IEEE Symposium on Foundations of Computer Science, 2009 FOCS '09}}}\
  (\bibinfo  {publisher} {IEEE},\ \bibinfo {address} {New York},\ \bibinfo
  {year} {2009})\ pp.\ \bibinfo {pages} {527--533}\BibitemShut {NoStop}%
\bibitem [{\citenamefont {Mitzenmacher}\ and\ \citenamefont
  {Upfal}(2005)}]{Mitzenmacherbook}%
  \BibitemOpen
  \bibfield  {author} {\bibinfo {author} {\bibfnamefont {M.}~\bibnamefont
  {Mitzenmacher}}\ and\ \bibinfo {author} {\bibfnamefont {E.}~\bibnamefont
  {Upfal}},\ }\href@noop {} {\emph {\bibinfo {title} {Probability and
  Computing: Randomized Algorithms and Probabilistic Analysis}}}\ (\bibinfo
  {publisher} {Cambridge University Press},\ \bibinfo {address} {Cambridge,
  UK},\ \bibinfo {year} {2005})\BibitemShut {NoStop}%
\bibitem [{\citenamefont {Matsui}(1994)}]{Pilinguplemma}%
  \BibitemOpen
  \bibfield  {author} {\bibinfo {author} {\bibfnamefont {M.}~\bibnamefont
  {Matsui}},\ }\bibfield  {title} {\bibinfo {title} {Linear cryptanalysis
  method for {DES} cipher},\ }in\ \href
  {https://doi.org/10.1007/3-540-48285-7_33} {\emph {\bibinfo {booktitle}
  {Advances in Cryptology --- EUROCRYPT '93}}},\ \bibinfo {editor} {edited by\
  \bibinfo {editor} {\bibfnamefont {T.}~\bibnamefont {Helleseth}}}\ (\bibinfo
  {publisher} {Springer Berlin Heidelberg},\ \bibinfo {address} {Berlin,
  Heidelberg},\ \bibinfo {year} {1994})\ pp.\ \bibinfo {pages}
  {386--397}\BibitemShut {NoStop}%
\bibitem [{\citenamefont {Liu}\ and\ \citenamefont {Zhang}()}]{YangLiu}%
  \BibitemOpen
  \bibfield  {author} {\bibinfo {author} {\bibfnamefont {Y.}~\bibnamefont
  {Liu}}\ and\ \bibinfo {author} {\bibfnamefont {Q.}~\bibnamefont {Zhang}},\
  }\bibinfo {note} {private communication}\BibitemShut {NoStop}%
\end{thebibliography}
\end{document}